\newif\ifabstract
\newif\iffull
\newtheorem{theorem}{Theorem}
\newtheorem{lemma}{Lemma}
\newtheorem{corollary}{Corollary}
\newtheorem{proposition}[theorem]{Proposition}
\theoremstyle{definition}
\newtheorem{observation}{Observation}
\renewcommand{\emph}{\textbf}
\newcounter{section-preserve}
\newcounter{theorem-preserve}
\newcommand{\blank}[1]{}
\newtoks\magicAppendix
\newtoks\magictoks
\newif\iflater
\long\def\later#1{\magictoks={#1}%
	\edef\magictodo{\noexpand\magicAppendix={\the\magicAppendix 
			\the\magictoks%
	}}
	\magictodo}
\long\def\both#1{\magictoks={#1}%
	\edef\magictodo{\noexpand\magicAppendix={\the\magicAppendix 
			\noexpand\setcounter{theorem-preserve}{\noexpand\arabic{theorem}}%
			\noexpand\setcounter{theorem}{\arabic{theorem}}%
			\noexpand\setcounter{section-preserve}{\noexpand\arabic{section}}%
			\noexpand\setcounter{section}{\arabic{section}}%
			\noexpand\let\noexpand\oldsection=\noexpand\thesection
			\noexpand\def\noexpand\thesection{\thesection}
			\noexpand\let\noexpand\oldlabel=\noexpand\label
			\noexpand\let\noexpand\label=\noexpand\blank
			\the\magictoks%
			\noexpand\setcounter{theorem}{\noexpand\arabic{theorem-preserve}}%
			\noexpand\setcounter{section}{\noexpand\arabic{section-preserve}}%
			\noexpand\let\noexpand\thesection=\noexpand\oldsection
			\noexpand\let\noexpand\label=\noexpand\oldlabel
	}}
	\magictodo
	\the\magictoks}
\def\magicappendix{\latertrue \the\magicAppendix}
\long\def\both#1{#1}
\let\later=\both
\def\magicappendix{}
\author{Hugo A.	Akitaya}{Carleton University, Canada}{hugoakitaya@gmail.com}{0000-0002-6827-2200}
{Supported by NSERC.}
\author{Erik D. Demaine}{Massachusetts Institute of Technology, USA}{edemaine@mit.edu}{0000-0003-3803-5703}{}
\author{Andrei Gonczi}{Tufts University, USA}{andrei.gonczi@tufts.edu}{0000-0002-5939-2366`}{}
\author{Dylan H. Hendrickson}{Massachusetts Institute of Technology, USA}{dylanhen@mit.edu}{}{}
\author{Adam Hesterberg}{Harvard University, USA}{achesterberg@gmail.com}{}{}
\author{Matias Korman}{Tufts University, USA}{matias.korman@tufts.edu}{}{}
\author{Oliver Korten}{Columbia University, USA}{oliver.korten@columbia.edu}{}{}
\author{Jayson Lynch}{University of Waterloo, Canada}{jayson.lynch@uwaterloo.ca}{0000-0003-0801-1671}{Supported by NSERC.}
\author{Irene Parada}{TU Eindhoven, The Netherlands}{i.m.de.parada.munoz@tue.nl}{https://orcid.org/0000-0003-3147-0083}{}
\author{Vera Sacrist\'an}{Universitat Polit\`ecnica de Catalunya, Spain}{vera.sacristan@upc.edu}{0000-0003-0203-256X}
{Partially supported by MTM2015-63791-R (MINECO/FEDER) and Gen. Cat. DGR 2017SGR1640.}
\authorrunning{H. A. Akitaya et al.}
\author{Hugo A.	Akitaya\thanks{Carleton University, Canada}
\and
Erik D. Demaine\thanks{Massachusetts Institute of Technology, USA}
\and
Andrei Gonczi\thanks{Tufts University, USA}
\and
Dylan H. Hendrickson\footnotemark[3]
\and
Adam Hesterberg\thanks{Harvard University, USA}
\and
Matias Korman\footnotemark[4]
\and
Oliver Korten\thanks{Columbia University, USA}
\and
Jayson Lynch\thanks{University of Waterloo, Canada}
\and
Irene Parada\thanks{TU Eindhoven, The Netherlands}
\and
Vera Sacrist\'an\thanks{Universitat Polit\`ecnica de Catalunya, Spain}
}
\date{}
\title{Characterizing Universal Reconfigurability of Modular Pivoting  Robots\iffull\thanks{Research supported in part by NSERC, MTM2015-63791-R (MINECO/FEDER) and Gen. Cat. DGR 2017SGR1640.}\fi}
\newcommand{\ackn}{
This research started at the 34th Bellairs Winter Workshop on Computational Geometry in 2019. We want to thank all participants for the fruitful discussions and a stimulating environment.
}
\keywords{reconfiguration, geometric algorithm, PSPACE-hardness, 
pivoting hexagons, pivoting squares, modular robots}
\begin{document}

\maketitle

\begin{abstract}
We give both efficient algorithms and hardness results for reconfiguring between two connected configurations 
of modules in the hexagonal grid. 
The reconfiguration moves that we consider are ``pivots'', where a hexagonal module rotates
around a vertex shared with another module. 
Following prior work on modular robots, we define two natural sets of hexagon pivoting moves of increasing power:
restricted and monkey moves. 
When we allow both moves, we present the first universal reconfiguration algorithm,
which transforms between any two connected configurations using $O(n^3)$ monkey moves.
This result strongly contrasts the analogous problem for squares, where there are rigid
examples that do not have a single pivoting move preserving connectivity.
On the other hand, if we only allow restricted moves,
we prove that the reconfiguration problem becomes PSPACE-complete.
Moreover, we show that, in contrast to hexagons, 
the reconfiguration problem for pivoting squares is PSPACE-complete regardless of the set of pivoting moves allowed. In the process, we strengthen the reduction framework of Demaine {\em et al.} [FUN'18] that we consider of independent interest.
\end{abstract} 


\section{Introduction}

Reconfiguration problems encompass a large family of problems in which we need to provide a sequence of steps to transform one object into another. In this paper we consider the problem of reconfiguring a collection of modular robots in a lattice using some prespecified moves. 
Many variants of this problem have been studied both in the robotics and in the computational geometry communities. 
In this paper we study the reconfiguration problem for (edge-)connected  configurations of hexagons and of squares.  
As necessary for many applications, we require that (edge-)connectivity is maintained at all times (fulfilling the so-called \emph{single backbone condition}~\cite{Dumitrescu-Suzuki-Yamashita-2004}). 
The moves allowed are pivots: 
a module can rotate around vertices
shared with other modules 
and at the end of a move the pivoting module must lie in a lattice cell. 
The interior of two modules can never intersect. 


Hexagons have only two types of pivoting moves, illustrated in Figure~\ref{fig:hex_moves}.
In a \emph{restricted} move a module $a$ adjacent to a module $s$ pivots around a vertex $v$ shared by $a$ and $s$ and ends the pivoting move in the other cell that has $v$ on the boundary. 
The \emph{restricted model} of pivoting only allows this move. 
In a \emph{monkey} move a module $a$ adjacent to a module $s$ starts pivoting around a vertex $v$ shared by $a$ and $s$ as in the restricted move, 
but halfway through the rotation another vertex $w$ of $a$ coincides with the vertex of a module $s'$. 
Then $a$ continues the move pivoting around $w$ in the same direction (clockwise or counterclockwise) as before until reaching a cell adjacent to~$s'$. 
The \emph{monkey model} of pivoting includes both the restricted move and the monkey move. 
Informally, the monkey move allows a module to keep pivoting in the same direction 
when a restricted move is not possible. 
Further insights into the differences between both hexagonal models are presented in Section~\ref{sec:models}.




\begin{figure}[ht]
\centering
\begin{subfigure}{.35\textwidth}
\centering
\includegraphics[page=2]{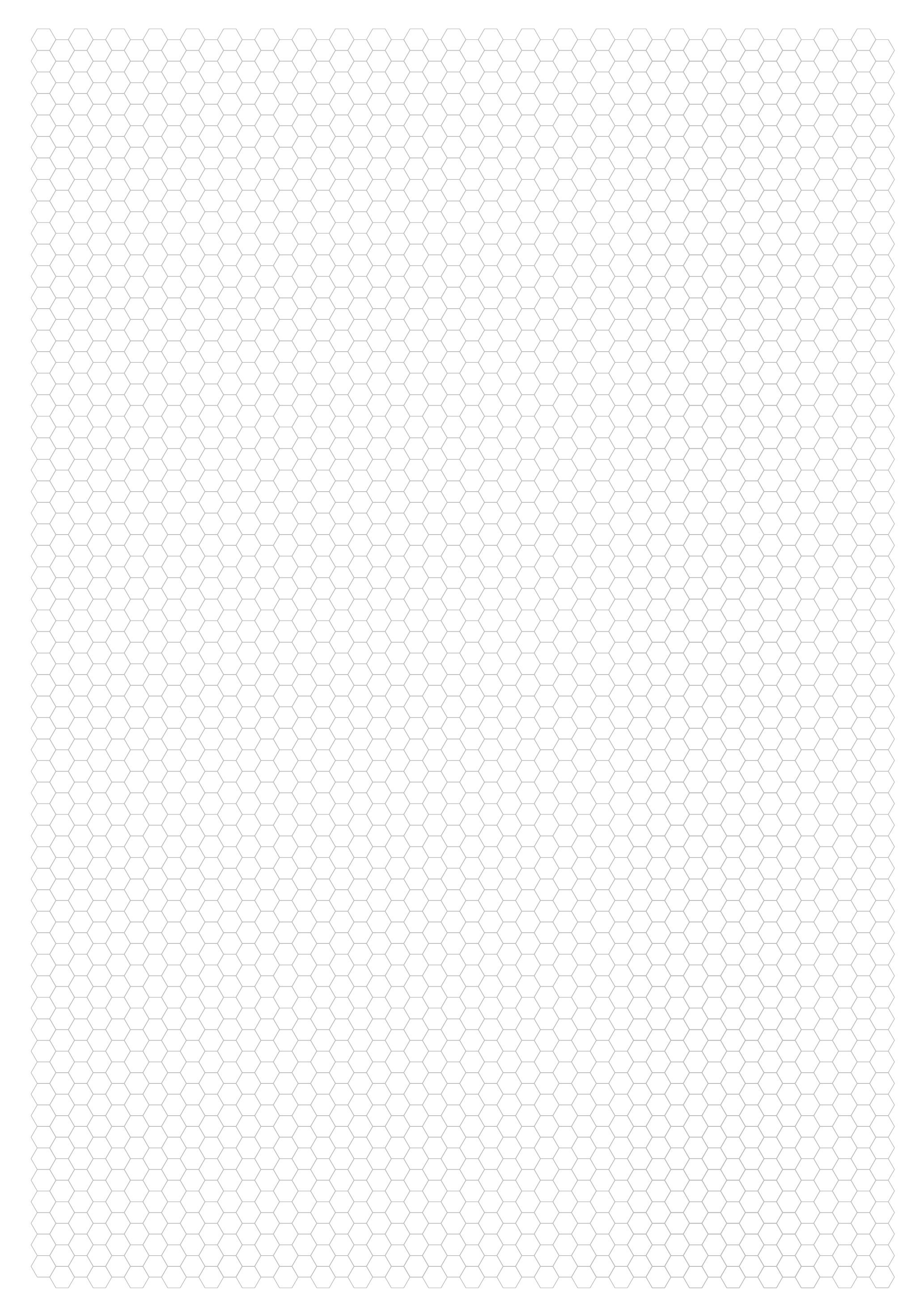}
\caption{Restricted move}
\end{subfigure}
\hfill
\begin{subfigure}{.55\textwidth}
\centering
\includegraphics[page=4]{hex_figs_intro.pdf}
\caption{Monkey move}
\end{subfigure}
\caption{Pivoting moves for hexagonal modules and their free-space requirements.}
\label{fig:hex_moves}
\end{figure}

{In the square grid two modules that share a vertex might not share an edge. 
Thus, for square modules there is a greater variety of pivoting moves. 
The different three sets of moves are 
illustrated in Figure~\ref{fig:square_moves}.  
The \emph{restricted model} includes only restricted moves, 
the \emph{leapfrog model} includes both restricted and leapfrog moves, and 
the \emph{monkey model} includes all moves. \ifabstract Section~\ref{sec_proto} contains a discussion of which of the above models have been actually implemented in practice. \fi
} 

\begin{figure}[ht]
\centering
\begin{minipage}{.67\textwidth}
\centering
\includegraphics[page=3]{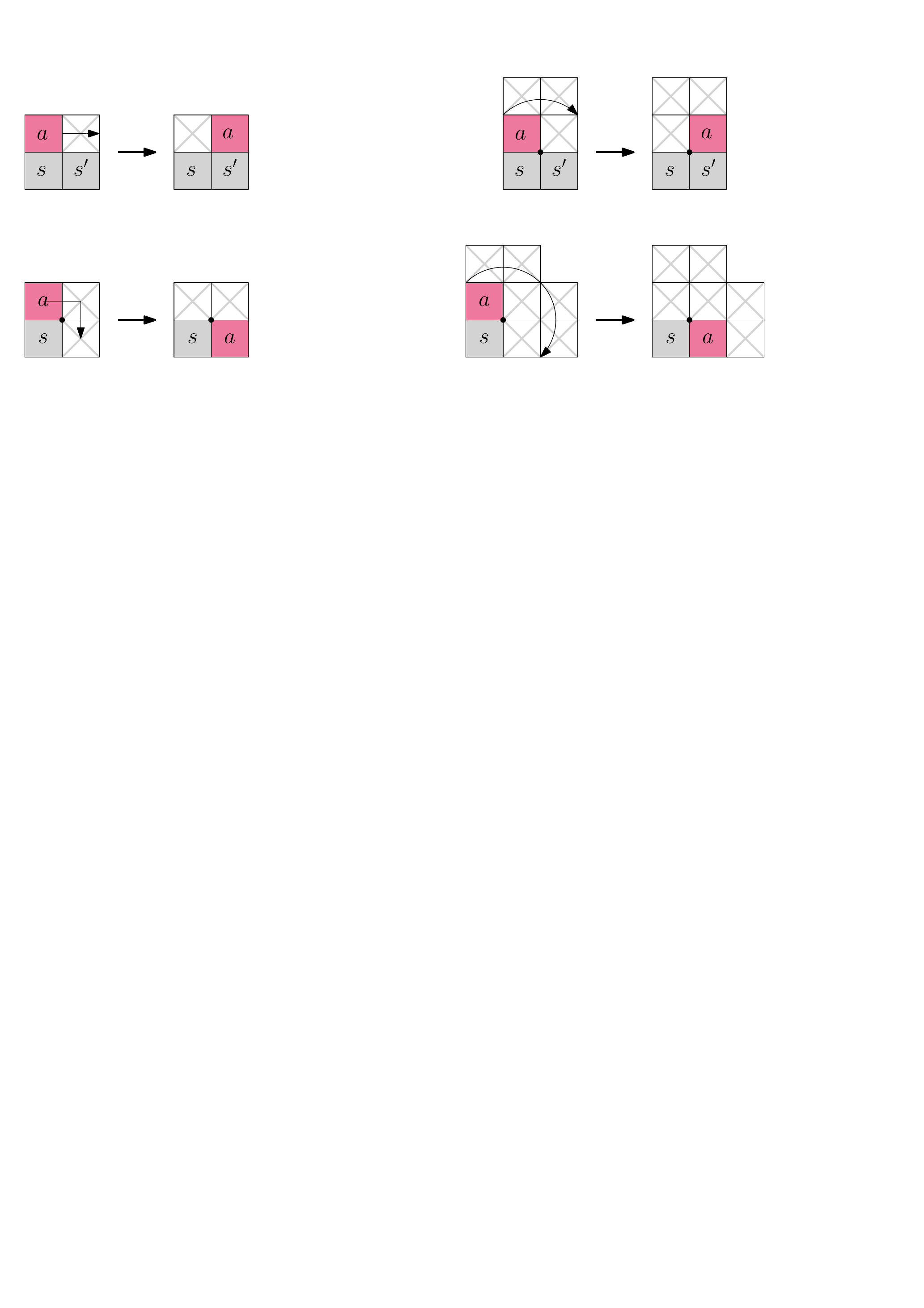}
\subcaption{Restricted moves}\label{fig:sq_res}
\end{minipage}
\hfill
\begin{minipage}{.3\textwidth}
\centering
\includegraphics[page=4]{square_figs_intro.pdf}
\subcaption{Leapfrog move}\label{fig:sq_leap}
\end{minipage}
\medskip

\begin{minipage}{\textwidth}
\centering
\includegraphics[page=5]{square_figs_intro.pdf}
\subcaption{Monkey moves}\label{fig:sq_mon}
\end{minipage}
\caption{Pivoting moves for square modules and their free-space requirements.}
\label{fig:square_moves}
\end{figure}

\later{
\iffull\paragraph*{Prototypes.}\label{sec_proto}
\fi
\ifabstract
\section{Prototypes}\label{sec_proto}
\fi
Most lattice-based modular robots are based on a hexagonal or on a square/cubic lattice. 
HexBots~\cite{HexBot09}
are hexagonal pivoting modules able to perform restricted moves with the free-space requirements depicted in Figure~\ref{fig:hex_moves}. 
Fracta~\cite{Fractum} 
modules organize in a hexagonal lattice and there they can perform both restricted and monkey moves with the free-space requirements illustrated in Figure~\ref{fig:hex_moves}.
Moreover, the millimeter-scale Catoms~\cite{bkirby-iros07} 
are able to perform pivoting moves with lighter free-space requirements given their cylindrical shape. Metamorphic units~\cite{metamorphic96} 
have the lightest free-space requirements for moving, only requiring the starting and the ending grid positions to be empty.   
Other modular robots 
including 
M-Lattice~\cite{MLattice_planning},
M-TRAN~\cite{M-tran},
PolyBot~\cite{PolyBot},
SuperBot~\cite{SuperBot},
with various shapes and capabilities can 
form meta-modules that organize and move in a hexagonal lattice with little free-space requirements~\cite{HurtadoMRA15}. 
 
XBots~\cite{heuristics-square} and M-blocks~\cite{M-blocks} are pivoting cubes moving in 2D and 3D, respectively. 
They able to perform restricted and leapfrog moves as well as \emph{straight} monkey moves (left half of Figure~\ref{fig:sq_mon}). 
 I(CES)-Cubes~\cite{ICubes} 
 and 
 Microunits~\cite{micro00} 
 can perform restricted moves with the free-space requirements illustrated in Figure~\ref{fig:square_moves}. 
Other units that organize in the square lattice  
have lighter free-space requirements~\cite{EMCube,Chiang01,Vertical98}. 
Using square/cubic meta-modules, multiple other prototypes 
have the lightest free-space requirements, only requiring the starting and the ending grid positions to be empty~\cite{metamodule1,metamodule2}.
}

\paragraph*{Related work and contribution.}
The most fundamental question within the modular robot setting is whether universal reconfiguration is possible. That is, can we have an algorithm to transform any (connected) configuration of $n$ modules into another configuration with the same number of modules? 

Indeed, efficient algorithms are known for universal reconfiguration of modular robots using moves that have significantly lighter free-space requirements~\cite{squeezing11,pushing-squares,Dumitrescu-Suzuki-Yamashita-2004,MeltGrow}. 
Relaxing the connectivity requirement 
has also lead to reconfigurability results~\cite{nadia}.

Unfortunately, the setting of this paper (pivoting robots) has proven to be more challenging. Instead, previous work has revolved around providing sufficient conditions. Nguyen, Guibas and Kim~\cite{density} showed that 
reconfiguration of hexagonal robots using only restricted moves is always possible between configurations without  
the forbidden pattern illustrated in Figure~\ref{fig:patterns} (left). 
Similarly, for pivoting squares, Sung et~al.~\cite{M-blocks} presented an algorithm 
for reconfiguring between configurations without the patterns shown in Figure~\ref{fig:patterns} (right). 
%
%
These algorithms fail to provide reconfiguration guarantees as soon as the configuration contains a single copy of the forbidden pattern. In an attempt to remove global requirements, a recent result~\cite{musketeers} introduced a different type of necessary condition: they provide an efficient algorithm for reconfiguring between any two configurations that have $5$ modules on the external boundary that can freely move (for pivoting squares in the monkey model). Other algorithms to reconfigure pivoting squares and hexagons are heuristics that 
do not provide termination guarantees~\cite{heuristics-square,heuristics}. 
\begin{figure}[thb]
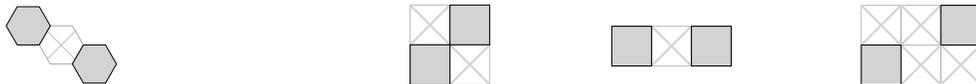

    \centering
\begin{minipage}{.3\textwidth}
\centering
\includegraphics[page=6,scale=0.6]{hex_figs_intro.pdf}
\end{minipage}
\hfill
\begin{minipage}{.57\textwidth}
\includegraphics[page=8]{square_figs_intro.pdf}
\end{minipage}
\caption{Forbidden patterns in previous algorithms for hexagonal and square pivoting modules}
\label{fig:patterns}
\end{figure}

Despite the many attempts, 
universal reconfiguration 
remains unsolved in the setting 
of edge-connected pivoting robots. In this paper we answer this question for all five pivoting models for hexagons and squares. Specifically, we answer it positively for the hexagonal monkey model by giving a 
universal reconfiguration algorithm in Section~\ref{sec:algo}. For all other models we show that it is PSPACE-hard to determine if we can reconfigure from one configuration of modules into another one. In the process, we prove a stronger PSPACE-hardness result about a restricted form of motion planning with reversible, deterministic gadgets from~\cite{motionplanning2} (our reduction highly limits in which direction each edge can be traversed, effectively reducing the number of cases to consider). This framework has already proven useful in other swarm robot motion planning models~\cite{balanza2019full, caballero2020relocating} and we believe the improvements here will aid in future PSPACE-completeness proofs. The framework is described in Section~\ref{sec:Balanced} and is afterwards used for hexagonal restricted robots in Section~\ref{hexmodel}, and for all square models in Sections~\ref{sec_squarereduc} and \ref{sec:PSPCAE-sq-lf-monkey}. A summarizing table of our results is shown in Table~\ref{tb:results}.

\begin{table}[htb]
\def\halfdown#1{\smash{\raisebox{-1.5ex}{#1}}}
\def\halfup#1{\smash{\raisebox{1.75ex}{#1}}}
\def\rowspaceup{\rule{0pt}{1.05\normalbaselineskip}}
\centering
\begin{tabular}{l|c|c|c}
{\textbf{Model}}  
& \textbf{Restricted} & \textbf{Leapfrog} & \textbf{Monkey} \\ \hline
{\textbf{Hexagons}}\rowspaceup   &   PSPACE-hard (Thm.~\ref{theo_hexrestrictedhard}) & \cellcolor{gray!30}{N/A}   &  $O(n^3)$ universal (Thm.~\ref{thm:alg}) \\ 
[1ex] \hline
\halfdown{\textbf{Squares}} \rowspaceup  &   PSPACE-hard    &  PSPACE-hard     & PSPACE-hard (Thm.~\ref{thm:square-monkeyleap})  \\
 &  (Thm.~\ref{theo_squarerestrictedhard})   &  (Thm.~\ref{thm:square-monkeyleap})   & $O(n^2)$ if $+5$ modules~\cite{musketeers}  \\ 
\end{tabular}
\medskip

\caption{{\iffull Summary of reconfiguration results for pivoting squares and hexagons.\fi}
{\ifabstract Summary of results.\fi}
The leapfrog moves only makes sense for square modules.}
\label{tb:results}
\end{table}

\iffull
One way to reduce the number of moves (but requiring a constant number of modules on the external boundary that are free to move) would be to try and extend the algorithm in~\cite{musketeers} for pivoting squares in the monkey model to hexagons in the monkey model. Unfortunately, it seems that this technique cannot be directly translated, as we discuss in the conclusions (Section~\ref{sec:conclusion}).

\fi

\later{
\section{Hexagonal models}\label{sec:models}
In this section we give some further insights 
for both hexagonal models. 

\paragraph*{Restricted model.}
Figure~\ref{fig:hexa-rigid1} shows an example of a rigid configuration, i.e., a configuration in which no module can move. It is inspired by the one in~\cite{density}, but it is different in that it does not assume the existence of a fixed non-movable module.

\begin{figure}[hbt]
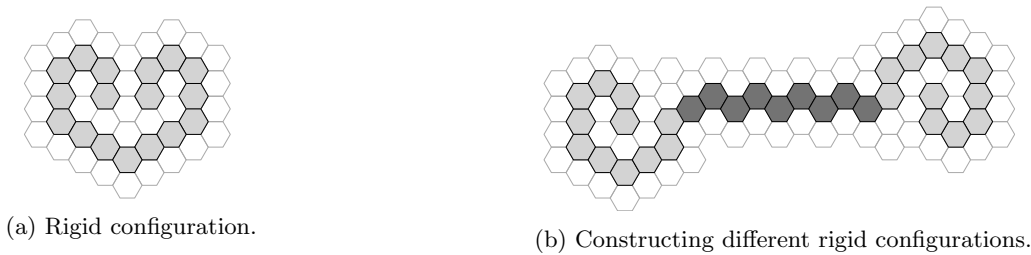

    \centering
\begin{minipage}{.3\textwidth}
\centering
\includegraphics[page=10,scale=0.4]{hex_figs_intro.pdf}
\subcaption{Rigid configuration.}\label{fig:hexa-rigid1}
\end{minipage}
\hfill
\begin{minipage}{.65\textwidth}
\centering
\includegraphics[page=9,scale=0.4]{hex_figs_intro.pdf}
\subcaption{Constructing different rigid configurations.}\label{fig:hexa-singletons}
\end{minipage}
\caption{Configuration that cannot be reconfigured using only the restricted move.}
\label{fig:hexa-rigid}
\end{figure}

In the restricted model the reconfiguration space is complex. 
For any positive integer $n$, the reconfiguration graph ${\cal G}_n$ in the restricted model 
has a node for each connected configuration with $n$ modules, 
and an edge between two nodes if the corresponding configurations can be reconfigured into each other through a single restricted move.
Using arguments similar to those in~\cite{musketeers} 
we can show that ${\cal G}_n$ contains are exponentially many singletons and exponentially many components of exponential size. 
In the construction illustrated in Figure~\ref{fig:hexa-singletons} 
no module can make a restricted move without disconnecting the configuration. 
Moreover, the monotone path with $p$ dark modules 
can be arranged in $2^{p-1}$ different ways, as every pair of adjacent such modules can be connected in two ways while preserving monotonicity in the horizontal direction. 
The two spirals with light modules have a constant number of modules ($11$ each).
Thus, ${\cal G}_n$ has $\Omega (2^{n})$ singletons. 
A similar idea can be used to lower bound the number of larger connected components in ${\cal G}_n$.
\begin{proposition}\label{prop:exponential exponential}
	The reconfiguration graph  ${\cal G}_n$ of connected pivoting hexagons in the restricted model has an exponential number of connected components of exponential size. 
\end{proposition}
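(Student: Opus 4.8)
The plan is to combine two independent exponential ``budgets'': a completely \emph{frozen} part of the configuration that selects which connected component we are in, and a completely \emph{free} part that populates that component. The $\Omega(2^n)$ singletons described just before the statement already give exponentially many components, but of size $1$; the new ingredient needed for the proposition is a free part that blows each of them up to exponential size without ever allowing the frozen part to change.

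First I would extract from that discussion a \emph{rigid frame} $F$: a connected configuration of $\Theta(n)$ hexagons admitting no restricted move at all, built as in Figure~\ref{fig:hexa-singletons} from two constant-size spirals anchoring a taut monotone path. The path carries $p = \Theta(n)$ ``bend'' modules, and, as already observed, each bend can be realized in one of two ways without spoiling monotonicity, rigidity, or connectivity; so there are $2^{p}$ such frames, pairwise non-reconfigurable because none of them admits a single move. Since every move fixes the whole frame, the multiset of bend bits is an invariant of every move, and hence two configurations built on frames with different bend settings lie in different components of ${\cal G}_n$: this already gives at least $2^{p}$ distinct components.

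Next I would attach the free part. Along one long straight rigid stretch of $F$ I would hollow out $q = \Theta(n)$ pairwise disjoint constant-size \emph{pockets}, each a small block of empty cells flanked on all sides by frame modules, and place in each pocket one extra \emph{pebble} module. Using the free-space requirements of Figure~\ref{fig:hex_moves}, the pockets are designed so that: (i) inside its pocket each pebble has a private restricted move toggling it between (at least) two positions; (ii) no other move ever becomes available --- the perforated frame is still rigid, no pebble can leave its pocket or reach a neighbouring one, and no frame module gains a move; and (iii) distinct pebble toggles do not affect one another. Granting (i)--(iii), every configuration reachable from such a configuration keeps the entire frame fixed and differs only by the subset of pebbles that have been toggled, so the connected component has at least $2^{q}$ vertices. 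Since each pebble is a leaf of its configuration and $F$ is connected, connectivity is maintained throughout and all these configurations are legitimate vertices of ${\cal G}_n$. Taking $p = q = \Theta(n)$ --- the spirals and pockets add only $O(n)$ further modules, so $n = \Theta(p+q)$ --- yields $2^{\Theta(n)}$ components, each of size $\ge 2^{\Theta(n)}$, which is the claim.

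The technical heart, and the step I expect to be the main obstacle, is verifying (i)--(iii): that carving the pockets does not create any new move for a frame module, that each pebble is genuinely confined and genuinely mobile inside its pocket, and that the toggles are independent. As in the analyses of~\cite{density,musketeers}, this is a finite case analysis over the local hexagonal geometry of the frame near each pocket, and the bookkeeping (rather than any conceptual difficulty) is where the work lies; a carefully chosen pocket shape, reused by translation along the straight stretch of $F$, should make all $q$ pockets verifiable at once.
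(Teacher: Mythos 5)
Your high-level structure---a product of two independent exponential ``budgets,'' one for counting components (the $\Theta(n)$ bend bits of a rigid monotone path, frozen by spirals) and one for sizing each component ($\Theta(n)$ modules that each toggle locally between two positions)---is exactly the paper's strategy. The two proofs differ only in how the second budget is realized. The paper attaches $\Theta(n)$ copies of the (already verified) rigid spiral along the path; each spiral has a tip that pivots back and forth between precisely two positions and nothing else in the spiral ever moves, so claims (i)--(iii) come for free from the spiral's known behaviour. You instead propose thickening the frame and carving $\Theta(n)$ pockets, each housing a pebble. That would work in principle, but it buys you nothing and costs you more: a monotone path plus two anchor spirals is not thick enough to contain pockets, so you would first need to design a wider rigid frame, re-verify its rigidity, design the pocket geometry, and then discharge (i)--(iii) by case analysis---the very step you flag as ``the technical heart'' and do not carry out. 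The paper's choice of spiral tips makes that whole verification vacuous. So: same argument skeleton, but your realization of the pebble gadget is a genuine gap until the pocket construction is exhibited and the local case analysis done, whereas the paper's realization closes that gap by reusing a primitive whose behaviour is already established.
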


\begin{proof}
	Consider the configuration shown in Figure~\ref{fig:hexa-exponential-exponential}. Notice that each of the spirals has a tip (the pink module) that can only pivot back and forth between two grid positions. Assume that a constant fraction of the $n$ modules (filled in dark gray in the figure) form the path connecting the two rightmost spirals. Then the number of spirals is also a constant fraction of $n$, since each spiral has constant size. Therefore, the number of configurations reachable from the one depicted in Figure~\ref{fig:hexa-exponential-exponential} is exponential in $n$. Furthermore, since the number of (dark gray) modules in the path is a constant fraction of $n$, and every pair of adjacent such modules can be connected in at least two ways (the right module in each pair could be North-East or South-East of the left module), ${{\cal G}}_n$ contains an exponential number of such connected components.
	\begin{figure}[hbt]
		\centering
		\includegraphics[page=14, width=\textwidth]{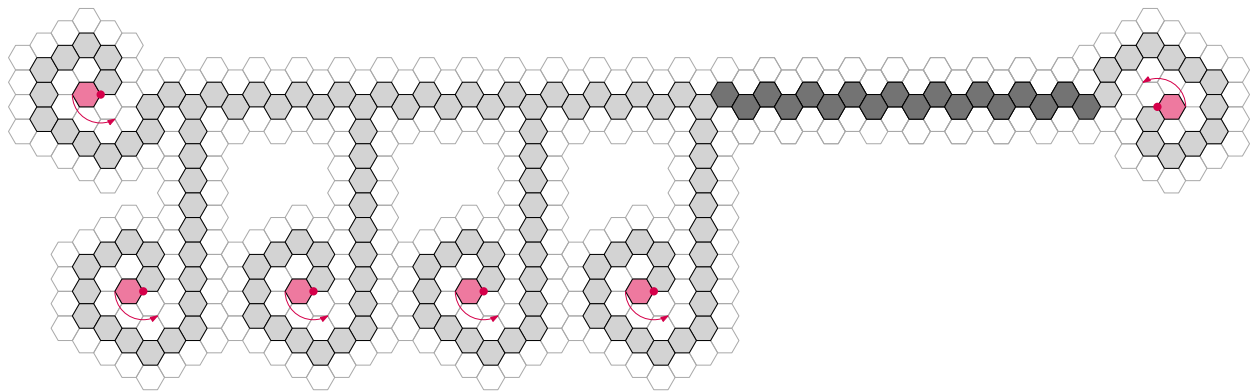}
		\caption{A configuration that in the restricted model has an exponential number of 
		connected configurations in ${\cal G}_n$ (by pivoting the pink modules),
		and also an exponential number of non reachable configurations (by changing the shape of the path of dark modules).}
		\label{fig:hexa-exponential-exponential}
	\end{figure}
\end{proof}

In Section~\ref{sec:PSPACE} we show that the reconfiguration problem in this model is PSPACE-hard.

\paragraph*{Monkey model.}
In contrast to the restricted model, in the monkey model for pivoting hexagons there are no rigid configurations. 
The proof is similar to the one in~\cite{musketeers} for the square monkey model when forbidding certain patterns. For completeness we present it here. 

	Let $G$ be the contact graph of a given connected configuration of hexagons. The \emph{cactus graph} $T(G)$ of $G$ is defined as follows. For each simple cycle $C$ in $G$, consider the set $Region(C)$ of grid positions that lie in $C$ or are enclosed by $C$. A simple cycle $C$ is said to be maximal if $Region(C)$ is maximal with respect to inclusion. We define $T(G)$ as the connected subgraph of $G$ that contains all the leaves of $G$, all the maximal cycles of $G$, and all the connections among them.
Figure~\ref{fig:cactus-graph} illustrates this definition. 
The cactus graph has been used in other reconfiguration papers for square modules~\cite{pushing-squares,musketeers}. 

\begin{figure}[tbh]
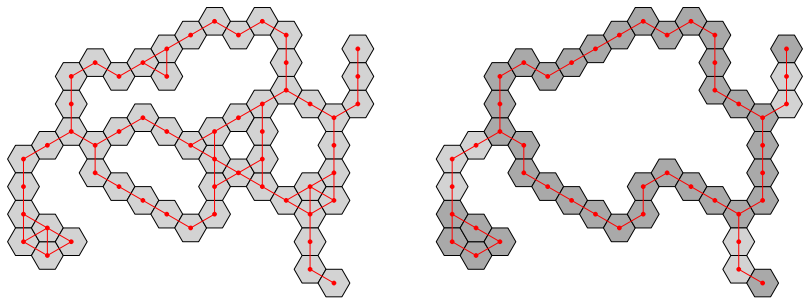

	\centering
	\includegraphics[page=16,scale=0.75]{hex_figs_intro.pdf}
	\qquad\includegraphics[page=17,scale=0.75]{hex_figs_intro.pdf}
	\caption{The contact graph $G$ of a connected configuration (left) and the corresponding cactus graph $T(G)$ (right); leaves and maximal cycles are dark-shaded.}
	\label{fig:cactus-graph}
\end{figure}

A \emph{corner} of a connected configuration of hexagons is a module that is adjacent to at least three empty grid positions through three consecutive edges. 

\begin{lemma}\label{lem:corner}
	Let $C$ be any connected configuration of hexagons and let $G$ be the contact graph of $C$. 
	There always exists a corner in $C$ that is not a cut vertex of $G$.
\end{lemma}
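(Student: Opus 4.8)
The plan is to find a corner of $C$ that is not a cut vertex by looking at the geometry of the outer boundary of $C$. First I would consider the unbounded face of the configuration and walk along the boundary of the union of the hexagonal cells occupied by $C$. Since $C$ is a finite connected set of hexagons, this outer boundary is a closed walk, and I want to extract from it a cell that both sticks out enough to be a corner (adjacent to three consecutive empty cells) and whose removal leaves $C$ connected.

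The key idea is to use the cactus graph $T(G)$ introduced above. In a cactus, every leaf of $T(G)$ is either a degree-one vertex of $G$ or lies on a unique maximal cycle, and leaves are never cut vertices. So it suffices to show that some "extreme" module is simultaneously a corner and a non-cut-vertex. Concretely, I would pick a direction (say, take the lexicographically extreme occupied cell, e.g. the topmost among the leftmost cells), which forces several of its incident grid positions to be empty; one checks that in the hexagonal grid such an extreme cell is automatically adjacent to three consecutive empty positions, hence is a corner. The remaining work is to argue it is not a cut vertex — but an arbitrary extreme cell might still be a cut vertex (it could be the single attachment point of a long dangling arm). So instead I would argue about leaves of $T(G)$: take a leaf block $B$ of the cactus graph (either a pendant vertex or a pendant maximal cycle). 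Within $B$, the cells not equal to the single articulation point connecting $B$ to the rest form a connected set whose removal from $B$ — and hence from $G$ — does not disconnect $G$. Among those cells, I choose the geometrically extreme one in a direction pointing "away" from the articulation point; that extremal choice guarantees the three-consecutive-empty-neighbor corner condition, while membership in $B \setminus \{\text{articulation vertex}\}$ guarantees it is not a cut vertex of $G$.

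The main steps, in order: (1) if $G$ has a vertex of degree $\le 1$, handle it directly — a degree-$0$ graph is a single hexagon (trivially a non-cut corner), and a degree-$1$ vertex is a leaf, not a cut vertex, and a hexagon with five empty neighbors is certainly a corner. (2) Otherwise pass to $T(G)$ and select a leaf block $B$ with articulation vertex $v$ (exists since $T(G)$ is a nontrivial cactus, unless $G$ is a single cycle, which is a separate easy case where every vertex works). (3) Show $B \setminus \{v\}$ is nonempty and that deleting any subset of $B \setminus \{v\}$ from $G$ keeps $G$ connected. (4) Among the cells of $B \setminus \{v\}$, pick one that is extremal with respect to a coordinate functional chosen so that "outward from $v$" is the increasing direction; verify by a short local case analysis on the hexagonal grid that an extremal cell of a connected hexagon set has three consecutive empty incident grid positions, i.e.\ is a corner.

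The hard part will be step (4): pinning down the right notion of "extremal away from $v$" and proving cleanly that extremality in the hexagonal lattice forces the three-consecutive-empty-neighbors condition without it degenerating (e.g.\ when $B$ is a single maximal cycle and all of $B \setminus \{v\}$ hugs the boundary of a hole). I expect this to require care about the six neighbor directions of a hexagon and a small figure; everything else is standard cactus/biconnected-component bookkeeping.
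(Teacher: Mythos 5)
Your proposal follows essentially the same route as the paper's own proof: pass to the cactus graph $T(G)$, take a leaf of it (a degree-one vertex of $G$ or a pendant maximal cycle), and locate a corner of $C$ in that leaf which is distinct from the unique attachment vertex $v$, hence not a cut vertex. The paper is equally terse on your flagged step (4) — it simply asserts that a leaf cycle must contain such a corner — so your acknowledged gap is not a divergence from the paper; one small caution is that your step (3) claim that deleting \emph{any subset} of $B\setminus\{v\}$ preserves connectivity of $G$ is false when $B$ is a cycle whose enclosed region contains further modules, though the single-vertex version you actually need (no vertex of $B\setminus\{v\}$ is a cut vertex) is what holds and is what the lemma requires.
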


\begin{proof}
	Let $T(G)$ be the cactus graph of $G$. 
	If $T(G)$ has a node $m$ of degree one, then $m$ corresponds to a corner in $C$ that is not a cut vertex of $G$, so the lemma holds. Otherwise, we view $T(G)$ as a tree of cycles, and arbitrarily pick a leaf cycle in $T(G)$. 
	The leaf cycle must have a corner $c$ different from the (unique) node that connects it to the rest of $T(G)$. Then $c$ cannot be a cut vertex. 
\end{proof}


\begin{proposition}\label{prop:no singletons}
Let $C$ be any connected configuration of hexagons and let $G$ be the contact graph of $C$. 
There exist a module $c$ in $C$ that can pivot clockwise.  
Moreover, $c$ can keep pivoting clockwise until returning to its starting position.
\end{proposition}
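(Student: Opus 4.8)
The plan is to take the corner provided by Lemma~\ref{lem:corner} and show it can ``roll'' clockwise all the way around the rest of the configuration.

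First I would dispose of the trivial case $|C|=1$ (nothing to prove) and otherwise let $c$ be a corner of $C$ that is not a cut vertex of $G$, and set $C':=C\setminus\{c\}$, which is connected. Since $c$ is a corner it has at least three consecutive empty neighboring cells, and since $C$ is connected with $|C|\ge 2$ it has at least one neighbor in $C'$. Hence, reading the six edges of $c$ in clockwise order, there is a transition from an edge facing an empty cell to an edge facing a module; let $v$ be the vertex of $c$ separating such a pair, with empty cell $P$ beyond the first edge and a module $s$ beyond the second. A short check of the geometry (Figure~\ref{fig:hex_moves}) shows that $c$ can then pivot clockwise around $v$ into $P$, using $s$ as the support module; the free-space requirement of a restricted move is met because $c$ has three consecutive empty cells around it. After this move $c$ is still adjacent to $s\in C'$, so connectivity of the whole configuration is preserved. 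This already proves the first claim.

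For the second claim I would set up a rolling process. Observe that the (at least three) empty cells adjacent to $c$ all lie in one empty region $R$ of the grid, since consecutive empty neighbors of a hexagon are themselves adjacent; $R$ is either the unbounded region or a hole of $C'$, and in both cases the part of $\partial C'$ facing $R$ is a closed walk. Starting from the move above, I would keep $c$ rolling clockwise while hugging $\partial C'$ inside $R$: whenever $c$ occupies an empty cell adjacent to a module $q\in C'$, it pivots clockwise around the appropriate vertex it shares with $q$; if the target cell is empty this is a restricted move, and if it is occupied by a module $s'$ then half-way through the rotation $c$ reaches a vertex shared with $s'$ and completes a monkey move around $s'$ (Figure~\ref{fig:hex_moves}). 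In either case $c$ lands in an empty cell of $R$ still adjacent to a module of $C'$, so the invariant, and the connectivity of the configuration, are maintained; the cells swept during the move lie in $R$ and are therefore empty, which is exactly what the free-space requirements ask for. Finally I would argue termination: the rolling step is deterministic and reversible (rolling $c$ counterclockwise by one step undoes a clockwise step, a redirected monkey step being undone by the corresponding counterclockwise monkey step), so the step map is a bijection on the finite set of reachable states (cell of $R$ on $\partial C'$ together with the current contact edge); the orbit of the starting state is therefore a cycle and $c$ returns to where it began. This mirrors the argument of~\cite{musketeers} for squares.

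The hard part will be the bookkeeping in the rolling step: verifying, case by case on exactly how $c$ meets $\partial C'$ (convex turn, straight stretch, concave turn), that a clockwise roll is always realizable as a legal restricted or monkey move and that $c$ never collides with $C'$, i.e.\ that it genuinely stays on the side of the boundary facing $R$. Everything else --- existence of the corner, the counting of empty edges for the first move, and the reversibility-plus-finiteness argument for returning to the start --- is routine once this local analysis is in place.
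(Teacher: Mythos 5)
Your overall plan matches the paper's: invoke Lemma~\ref{lem:corner} to get a non-cut corner $c$, roll $c$ clockwise along the boundary while it stays attached to $C'$, and conclude that it returns. Your argument for the first pivot is sound (the three consecutive empty cells of a corner supply the needed free space), and your termination argument --- that a deterministic, reversible step on a finite state space has only cyclic orbits --- is a clean explicit version of something the paper's proof only gestures at.

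The gap is in the rolling invariant, and it is not mere bookkeeping. You maintain only that $c$ occupies an empty cell of $R$ edge-adjacent to $C'$, and then assert that ``the cells swept during the move lie in $R$.'' That assertion does not follow from the invariant. A hexagonal restricted pivot from cell $p$ around vertex $v$ to the neighboring cell $p'$ rotates $c$ by $120^\circ$, and during that rotation $c$ also sweeps through the third cell $p''$ that meets $p$ and $p'$ at the other endpoint of the shared edge $pp'$ (this is the extra empty cell visible in Figure~\ref{fig:hex_moves}). Your invariant guarantees $p'\in R$ whenever $p'$ is empty, since $p'$ is adjacent to $p\in R$, but it says nothing about $p''$. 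If $p''$ is occupied, the restricted move is illegal, and the monkey move does not rescue you either: it fires when the \emph{target} $p'$ is blocked, not when the swept cell $p''$ is. The paper closes exactly this hole with a stronger invariant --- after every clockwise pivot, $c$ is again a \emph{corner}, i.e.\ it still has three consecutive empty neighbors --- and the case analysis in Figure~\ref{fig:hexa-no-singletons} is devoted to verifying both that a clockwise pivot is available and that corner-ness survives it. That invariant is the crux, because it is what simultaneously guarantees the free-space requirement of the next move and licenses the induction. Your sketch neither states nor preserves corner-ness, so the rolling step is not yet justified. If you replace ``$c$ is in $R$ and adjacent to $C'$'' by ``$c$ is a corner adjacent to $C'$'' and then carry out the local case check that this is preserved, the rest of your argument (including the bijection-based return-to-start, which is a nice addition) goes through.
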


\begin{proof}
	Let $C$ be a connected configuration of pivoting hexagons. By Lemma~\ref{lem:corner}, there exists a module $c$ in~$C$ such that $c$ is a corner and the removal of $c$ does not disconnect $C$. 
	Figure~\ref{fig:hexa-no-singletons} shows the case analysis of the proof that $c$ (the pink module) can pivot clockwise. 
	Moreover, after pivoting, in all the cases it is still a corner. 
	Thus, the invariant is maintained and $c$ can keep pivoting clockwise.
\begin{figure}[hbt]
	\centering
	\includegraphics[page=3,width=\textwidth]{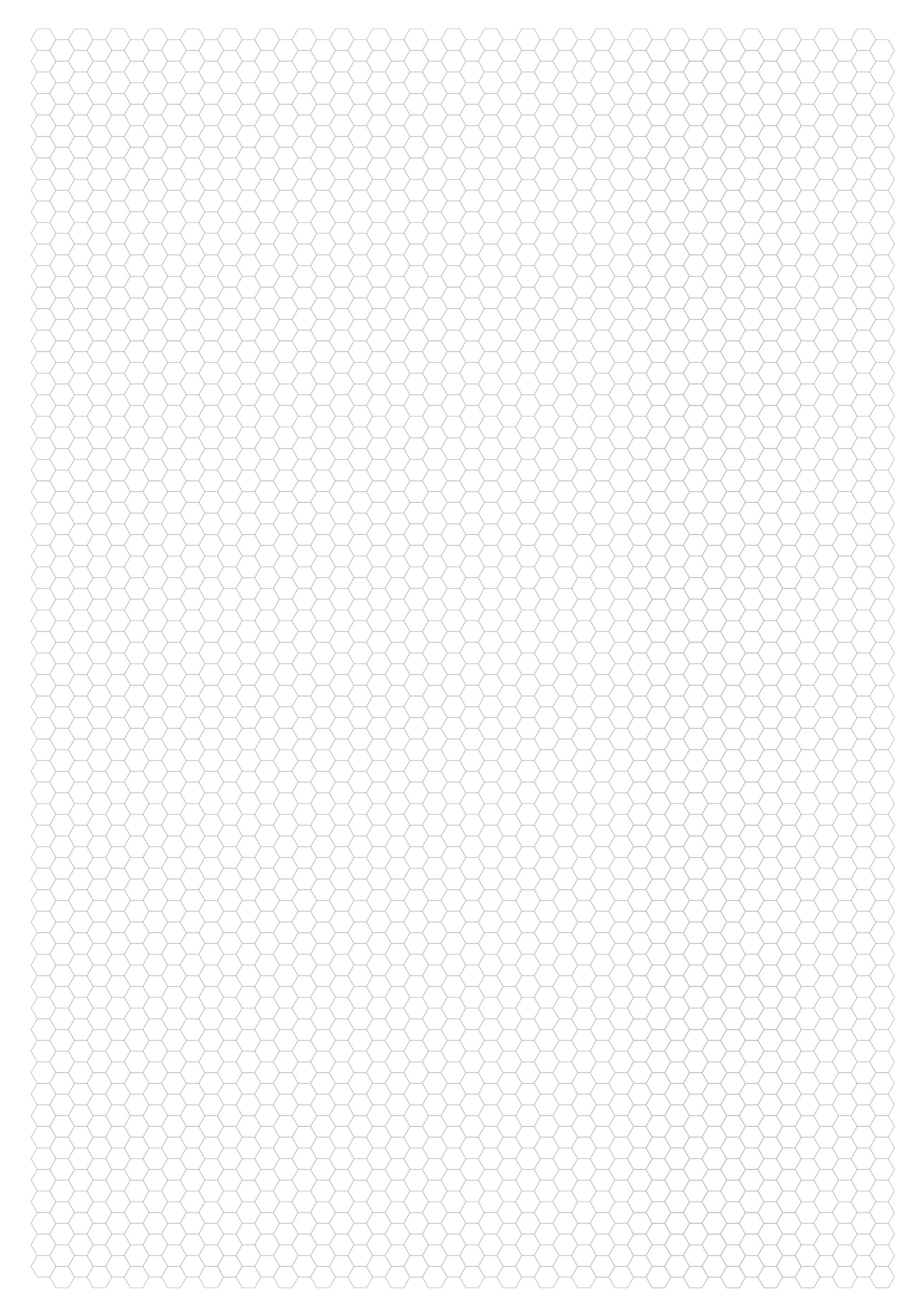}
	\caption{Proof of Proposition~\ref{prop:no singletons}. The cells marked with a cross are empty. The pink cell is the corner $c$ not disconnecting~$C$.}
	\label{fig:hexa-no-singletons}
\end{figure}
\end{proof} 

In Section~\ref{sec:algo} we show that, in the monkey model, not only there is always a module that can pivot but we can always reconfigure. 




}

\section{Polynomial algorithm for the hexagonal  monkey model}
\label{sec:algo}

\newcommand{\deflate}{\textsf{Deflate}}
\newcommand{\bridge}{\textsf{Bridge}}
\newcommand{\bbup}{\textsf{Bubble-Up}}
\newcommand{\inc}{\textsf{Incorporate}}
\newcommand{\lbridge}{\textsf{Local-Bridge}}
\newcommand{\destr}{\textsf{Merge}}
\newcommand{\shift}{\textsf{Shift}}
\newcommand{\infl}{\textsf{Inflate}}
\newcommand{\pass}{\textsf{Pass}}
\newcommand{\pivot}{\textsf{Pivot}}
\newcommand{\crew}{crew}
\newcommand{\cg}{contact graph}

This section describes an algorithm that computes a sequence of $O(n^3)$ moves in the monkey model that transforms a given configuration with $n$ modules into another.
Our approach uses a \emph{canonical configuration} defined as the configuration with $n$ modules whose {\cg} is a path and each module is only adjacent to modules above and/or below it. 
Since each move is reversible, an algorithm that takes a configuration and transforms it into the canonical configuration within $O(n^3)$ moves can be used to compute $O(n^3)$ moves between any pair of configurations. 
The main strategy \iffull of our algorithm \fi is to increase the connectivity of the {\cg}\footnote{Increasing connectivity of the {\cg} of the configuration is a concept that has recently proven useful in the different setting of reconfiguring \emph{sliding squares}~\cite{compacting_squares}.}.
Note that if the {\cg} is 2-connected, every convex corner of the configuration is movable, including the modules that are extremal in a grid direction.
Then, there is a module that can move to become the new topmost module by attaching itself to a previous topmost module.
We proceed in this manner inductively building the canonical configuration.

\medskip\noindent\textbf{Definitions and Preliminaries.}
The {\cg} $G$ is the adjacency graph of the modules in a configuration.
Since connectivity is important for the problem, we use the \emph{block tree} $\mathcal{B}$ of the {\cg} $G$.
A graph is \emph{2-connected} if it contains no cut vertices.
A \emph{block} (also \iffull called \fi 2-connected component) of $G$ is a maximal subgraph of $G$ that is 2-connected.
We call a block containing a single edge a \emph{trivial block}.
We define $\mathcal{B}$ to be a bipartite tree whose nodes are the blocks of $G$ in one partite set, and its cut vertices in the other partite set.
There is an edge between two nodes if the corresponding cut vertex is contained in the corresponding block.
The deletion of a cut vertex $v$ of a connected $G$ splits it into $2$ or more components. A subgraph induced by a such component union with $\{v\}$ is called a \emph{split component} of $v$. 
\iffull The block tree can be computed in $O(|V(G)|)$ time in a planar graph $G$ by finding cut vertices and recursively splitting split components.\fi
Similarly, a \emph{2-cut} is a pair of vertices $\{v_1, v_2\}$ whose deletion increases the number of components of~$G$.
Its \emph{2-split components} are the subgraphs induced by each of components obtained by the deletion of $\{v_1, v_2\}$ union with $\{v_1, v_2\}$.

We now give some more specific definitions used in the algorithm. 
Note that a module correspond to a vertex in $G$. 
We abuse notation referring to them interchangeably.
We label the topmost rightmost module of $G$ the \emph{root}.
We root $\mathcal{B}$ at the node containing the root module.
A cut vertex (2-cut) defines one \emph{parent} (2-)split component, containing the root module, and one or more \emph{child} (2-)split components.
Such a cut vertex (2-cut) is called the \emph{parent} of its children (2-)split components.
A 2-split component $\ell$ is \emph{trivial} if $|V(\ell)|$ is $3$ or $4$.
The parent of such a component is also called trivial.
Note that because $G$ is a subset of the triangular grid, one of its faces is either the external face (whose edges form the  \emph{boundary}), a triangle, or encloses an empty position of the grid, which we call a \emph{pocket}. 
In a 2-connected block $\ell$, if $v$ is a vertex in the boundary of $\ell$ and it is not incident to any pocket, deleting $v$ can cause \textit{only adjacent} vertices to become cut vertices.
We call a 2-cut $\{v_1, v_2\}$ \emph{adjacent} if $v_1$ and $v_2$ are adjacent.
Note that when $\{v_1, v_2\}$ is an adjacent trivial 2-cut, the faces of the  trivial 2-split component are triangles.
Our algorithm uses the following fact about adjacent nontrivial 2-cuts.

\begin{observation}
\label{obs:adj2cut}
If $\{v_1, v_2\}$ is an adjacent nontrivial 2-cut, then $\{v_1, v_2\}$ has only two 2-split components.
Furthermore, if $v_1$ is movable, $\{v_1, v_2\}$ is the only 2-cut containing $v_1$.
\end{observation}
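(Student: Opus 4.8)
The plan is to reduce both claims to the local structure of $v_1$ in the triangular grid. As the paragraph preceding the observation indicates, such an adjacent $2$-cut arises by deleting a boundary vertex $v_1$ of a $2$-connected block $\ell$ that is not incident to any pocket; I would use this to say that every bounded face of $\ell$ at $v_1$ is a triangle, so the neighbors of $v_1$ in $\ell$ form a single contiguous cyclic arc $a_1,\dots,a_k$ with $a_ja_{j+1}\in E(\ell)$ for $1\le j<k$ (a block is an induced subgraph of the grid, so these grid edges survive), and I would write $v_2=a_i$. The one elementary fact I would lean on is that in a connected graph deleting a vertex $w$ leaves every component incident to a neighbor of $w$; since $\ell$ is $2$-connected, deleting any single vertex keeps it connected, so after additionally deleting $v_1$ every component of $\ell\setminus\{v_1,v_2\}$ (and later of $\ell\setminus\{v_1,x\}$) meets a neighbor of $v_1$. (The $2$-cut and its $2$-split components are read inside $\ell$.)

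For the count I would argue: if $i\in\{1,k\}$ the surviving neighbors of $v_1$ are $a_2,\dots,a_k$ or $a_1,\dots,a_{k-1}$, joined in $\ell\setminus\{v_1,v_2\}$ by the path through the consecutive-edge sequence $a_ja_{j+1}$ and hence lying in one component, so $\ell\setminus\{v_1,v_2\}$ is connected --- contradicting that $\{v_1,v_2\}$ is a $2$-cut. Hence $1<i<k$ (so $k\ge 3$), and the surviving neighbors split into $A=\{a_1,\dots,a_{i-1}\}$ and $B=\{a_{i+1},\dots,a_k\}$, each inside one component by the same path argument. Those two components are distinct, for otherwise all of $v_1$'s neighbors would again lie in one component and $\ell\setminus\{v_1,v_2\}$ would be connected; and since every component meets a neighbor of $v_1$ and all such neighbors lie in $A\cup B$, there are exactly two components, i.e.\ exactly two $2$-split components. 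Nontriviality is not needed for this count; it just records the case the algorithm cares about.

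For the uniqueness I would then add the hypothesis that $v_1$ is movable, hence a (convex) corner, so $\deg_G(v_1)\le 3$; combined with $k\ge 3$ this gives $\deg_\ell(v_1)=k=3$ with consecutive neighbors $a_1,a_2,a_3$ and $v_2=a_2$. For any $2$-cut $\{v_1,x\}$ of $\ell$: if $x\notin\{a_1,a_2,a_3\}$ the path $a_1a_2a_3$ survives in $\ell\setminus\{v_1,x\}$; if $x=a_1$ the edge $a_2a_3$ survives; if $x=a_3$ the edge $a_1a_2$ survives. In each such case all neighbors of $v_1$ lie in one component, so $\ell\setminus\{v_1,x\}$ is connected and $\{v_1,x\}$ is not a $2$-cut. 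Thus $x=a_2$, i.e.\ $\{v_1,v_2\}$ is the unique $2$-cut containing $v_1$.

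I expect the main obstacle to be fixing the ambient hypotheses rather than the combinatorics (which is the short case check above). The count genuinely needs that $v_1$'s neighborhood in $\ell$ is a single arc --- precisely what the pocket-free deletion order of the algorithm guarantees --- and the uniqueness genuinely needs that a movable module is a corner of degree at most $3$ (the conclusion fails already for a degree-$4$ corner, which can sit in two distinct adjacent $2$-cuts). Once those two structural facts are in hand, the remaining work is just tracking which sub-arcs of $v_1$'s neighbor-arc stay connected after a deletion.
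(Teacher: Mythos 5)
Your proof is correct and, in fact, quite a bit more explicit than the paper's justification, which is only the single paragraph following the observation. The routes differ in one interesting way. For the two-component count, the paper appeals to the maximum degree of the triangular grid together with nontriviality: if an adjacent 2-cut had three 2-split components, two of them would be forced to be trivial, so a nontrivial 2-cut has at most two. Your argument instead imports the structural hypothesis from the preceding paragraph --- $v_1$ is a boundary vertex of a 2-connected block not incident to a pocket, hence its neighbors in $\ell$ form a single contiguous arc --- and then simply observes that deleting $v_2$ from that arc yields at most two sub-arcs. This buys you a cleaner argument (and, as you note, one that does not use nontriviality at all), but at the cost of needing the boundary/pocket-free hypothesis; the paper's degree-plus-triviality argument is phrased so as not to lean on that. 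For the uniqueness claim, both you and the paper reduce to $\deg_\ell(v_1)=3$ with $v_2$ the middle neighbor; the paper phrases the conclusion via ``any cycle through $v_1$ connecting the 2-split components must go through the other two neighbors,'' while you run a three-line case check showing $\ell\setminus\{v_1,x\}$ stays connected for every $x\ne v_2$. These are the same idea, yours just spelled out. One tiny nit: you justify the arc-edges $a_ja_{j+1}\in E(\ell)$ by saying ``a block is an induced subgraph of the grid''; that is true, but the reason is that a grid edge between two vertices of $\ell$ would create a second common vertex with another block and force the blocks to coincide --- worth stating if you want the step airtight. Overall: correct, same conclusions, a somewhat different (and more carefully hypothesized) route on the first claim.
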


The previous observation comes from the maximum degree of the triangular grid.
For an adjacent 2-cut to have $3$ 2-split components, two of them must be trivial.
The fact that $v_1$ needs 3 adjacent empty positions around it to be movable implies that $v_1$ must be adjacent to 2 modules other than $v_2$. Any cycle through $v_1$ connecting the (2-)split components of $\{v_1, v_2\}$ must go through the two modules adjacent to $v_1$ that are not $v_2$.


The main technical part of our algorithm is a procedure called \destr\ that increases the 2-connectivity of $G$, i.e., decreases the number of nodes in $\mathcal{B}$.
For that, we want to move modules in order to create new paths between blocks of $G$ without destroying previously existing blocks.
We define a \emph{2-free} module to be a movable module whose deletion preserves 2-connectivity in the block containing it (a module that is not in a 2-cut).
A \emph{{\crew}} $c = (m_1, \ldots, m_k)$ is a sequence of modules that induce a connected component of $G$ such that $m_1$ is 2-free, and $m_i$, $i \in \{2,\ldots,k\}$ is 2-free after the deletion of all $m_j$, $j \in \{1,\ldots,i-1\}$.
For  given 2-connected subgraph $\ell$ of $G$, let $\overline{\ell}$ be the induced subgraph of $G$ given by $V(G)\setminus V(\ell)$.
A \emph{bridge} from $\ell$ is a set of modules that were previously a {\crew} that moved to create a path between $\ell$ and~$\overline{\ell}$, thus potentially not being 2-free anymore.
We say a set of modules \emph{bridges from $\ell$} if they move to create a new path between $\ell$ and $\overline{\ell}$.
One of the goals of the algorithm is to get a {\crew} of size $3$ in a group of grid positions called \emph{flower} that is otherwise empty.
That allows us to maneuver the modules in the {\crew} to move them and create a bridge while not creating new blocks.
Let a \emph{flower} be a set of grid positions defined by a \emph{center} cell and the six adjacent positions.
A flower is \emph{adjacent} to a grid position if it does not contain it but it contains a grid position that is adjacent to it.
A flower is \emph{valid} for a 2-connected configuration $\ell$ and a disjoint {\crew} $c$ if it  contains \textit{exactly} the modules in $c$ (all modules in $c$ and no other modules),and  is adjacent to a module in $\ell$.

The following are definitions that help us describe positions in the configuration. We might reflect and/or rotate the configuration in order to fit our description w.l.o.g., and the following definition always refer to the current frame of reference.
A \emph{row} containing a position $p$ are all positions $p + (- \frac{\sqrt{3}}{2}, \frac{1}{2})i$ for some integer $i$.
An \emph{ascending} (\emph{descending}) path in a row $\rho$ is a path  $(m_1, \ldots, m_k)$ induced by modules in $\rho$ such that $m_{i+1}$ is the top-left (bottom-right) neighbor of $m_i$.
A \emph{extreme path} is a path induced by modules that are in the convex hull.
Due to the geometry of the grid, extreme paths can only have $6$ possible directions.
A \emph{SW extreme path} of a configuration $\ell$ is an ascending or descending path in the lower hull of $\ell$.
Given a position $p$ in the grid, we use a sequence of arrow superscripts on $p$ to describe positions nearby. 
For example, $p^{\uparrow\nearrow}$ refers to the position to the top-right of the position above $p$, i.e., $p + (\frac{\sqrt{3}}{2}, \frac{3}{2})$.
We overload this notation to refer to the current positions of modules, replacing $p$ by a module.

\medskip\noindent\textbf{Main algorithm.}
Here we give the general descriptions of our algorithm and the procedures it uses.
We proceed to give details of the procedures and finally we analyse it.
We split the {\cg} into two parts: the canonical path $P$ which is a canonical configuration, and the remainder of the graph $G$.
We initialize $G$ to be the entire {\cg} and $P$ empty.
Let $\mathcal{B}$ be the block tree of $G$ rooted at the block containing the topmost rightmost module.
We divide our algorithm into three phases. 
\emph{Phase 1} is a prepossessing procedure that eliminate all trivial leaves of  $\mathcal{B}$.
Then, assume that every leaf of $\mathcal{B}$ contains at least 3 modules and no further procedure will change that.
\emph{Phase 2} transforms $G$ into 2-connected.
While~$\mathcal{B}$ is not a single node, let $\ell$ be a leaf of $\mathcal{B}$.
We will apply \destr$(\ell)$, outlined in Algorithm~\ref{alg:destroy}, that will cause $\ell$ to merge with other nodes of $\mathcal{B}$ until $G$ becomes 2-connected. 
\emph{Phase 3} builds $P$.
We decrease the size of $G$ while adding modules to $P$ by moving a {\crew} on its boundary so that each of its members in turn move to become the new topmost module in the {\cg}. We use a slightly modified version of \destr\ to produce such a {\crew} without breaking the 2-connectivity of $G$.

\subsection{Phase 1: Removing trivial leaves}
\label{sec:phase1}

Phase~1 
reconfigures a connected configuration into \iffull a configuration\fi \ifabstract one \fi without vertices of degree~1 (which are in trivial leaves) in $G$. 
\iffull We note that 
there exist \fi
{\ifabstract There are\fi}
configurations in which it is not enough to just pivot the degree-1 modules, i.e., this task requires coordination with other modules. 
\iffull Figure~\ref{fig:notrivial} shows such a configuration: $m$ (the degree-1 module is the only trivial leaf) can only pivot to five other grid positions, and in all of them it is adjacent to exactly one other module. \fi
\ifabstract See Figure~\ref{fig:notrivial}.\fi
\iffull However, in the next lemma we show that we can remove all trivial leaves by pivoting the modules in them and making some local transformations.\fi 

\begin{figure}[ht]
	\centering
	\includegraphics[page=5]{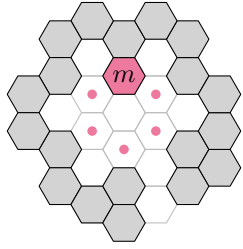}
	\caption{Configuration with one trivial leaf ($m$) that cannot be removed by pivoting it.}
	\label{fig:notrivial}
\end{figure}

\later{
{\ifabstract
\section{Omitted parts from Section~\ref{sec:algo} (Algorithm for the hexagonal monkey model)}
\label{sec:algo-appendix}
\subsection{Phase 1}
\label{sec:app-phase-1}
\fi}
}
\both{
\begin{lemma}
\label{lem:phase1}
A connected configuration of $n>2$ hexagons can be transformed in $O(n^2)$ moves into a configuration without trivial leaves in the contact graph
without breaking connectivity.
\end{lemma}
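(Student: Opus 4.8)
The plan is to iteratively eliminate trivial leaves of $\mathcal{B}$ one at a time, each elimination costing $O(n)$ moves, so the total is $O(n^2)$. Fix a trivial leaf, i.e., a degree-1 module $m$ attached to the rest of the configuration at a single module $s$. The key idea is that $m$ together with a short "escort" of nearby modules can be pivoted so that $m$ either (a) ends up adjacent to two modules, destroying the trivial leaf, or (b) is relocated further along the boundary toward a region where such an attachment is available; since $m$ cannot get stuck forever on a finite configuration, case (a) must eventually occur.

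First I would set up the local picture. Since $m$ is degree-1, the six cells around $s$ contain $m$ and at most five others; walk clockwise around $s$ from $m$. If the cell clockwise-adjacent to $m$ around $s$ is empty, pivot $m$ around the shared vertex of $m$ and $s$ into that cell; this is a restricted move and it cannot disconnect anything because $m$ is a leaf. Repeat: either $m$ eventually lands next to a second module (done — the leaf is gone and connectivity is preserved since we only ever moved a leaf), or $m$ comes all the way around $s$. In the latter case $s$ has neighbors on "both sides" of $m$'s sweep; then $m$ can perform a monkey move, continuing past $s$'s neighbor $s'$ and ending adjacent to $s'$. After a monkey move $m$ may now have degree $2$, in which case we are done; if not, $m$ is now a leaf hanging off a different module, and we have effectively migrated the leaf outward along the boundary. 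The crucial finiteness argument: define a potential (e.g., lexicographically by the topmost-then-rightmost empty cell $m$ could reach, or simply track that $m$'s reachable region strictly grows / a boundary-traversal argument), and argue that $m$ cannot cycle forever — it must reach a configuration where a neighbor is available, because the boundary of the configuration is finite and traversing it far enough always brings $m$ adjacent to a convex corner or a spot with two neighbors. A clean way: the configuration has $n$ cells, so after $O(n)$ pivots/monkey-moves $m$ has either visited every boundary-accessible empty cell adjacent to the configuration or become degree $\ge 2$; in a connected configuration with $n>2$ modules there is always an empty cell with two occupied neighbors, so the first alternative forces the second within $O(n)$ moves. Handling case (b) where the step requires "local transformations" (as the paper hints): if the only way forward is blocked by the configuration's own geometry, use a constant number of auxiliary pivots of the one or two modules adjacent to $s$ near $m$ — these are themselves movable (or we shift along the boundary) — to open a passage; each such local fix is $O(1)$ moves and does not create new trivial leaves because we only displace modules within a block by short monkey-move "slides" along the boundary.

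After removing one trivial leaf in $O(n)$ moves, I would argue the process does not create new trivial leaves that were not there before, or if it does, the total number of leaf-elimination rounds is still $O(n)$: each round strictly decreases the number of degree-1 vertices in $G$ when counted with an appropriate measure (e.g., number of trivial leaves plus total length of "tendrils"), since a leaf is either destroyed or merged into a longer non-trivial structure and migration is monotone along the boundary. Summing $O(n)$ rounds of $O(n)$ moves each gives the claimed $O(n^2)$ bound. Connectivity is maintained throughout because every move is either a pivot of a degree-1 module (never disconnecting) or a boundary slide of a module that is not a cut vertex in the relevant block.

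The main obstacle I expect is case (b): showing that migrating leaves along the boundary terminates and does not ping-pong, and that the constant-size "local transformations" needed to unblock $m$ are always available without themselves spawning new trivial leaves. Making the potential/termination argument fully rigorous — in particular ruling out that two nearby tendrils mutually obstruct each other — is the delicate part; I would handle it by processing leaves in a canonical order (say, the topmost-rightmost leaf first) and showing that resolving it never lengthens or creates a tendril above-and-to-the-right of it, so an outer induction on position closes the argument.
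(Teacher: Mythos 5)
Your high-level skeleton matches the paper's: iterate over degree-1 modules, spend $O(n)$ moves eliminating each while preserving connectivity and never promoting a degree-$\ge 2$ module to degree $1$, giving $O(n^2)$ total. You also correctly start by pivoting $m$ clockwise. But the heart of the lemma is precisely what you flag as ``the delicate part,'' and as stated your argument for it has a real gap. Your termination claim --- ``after $O(n)$ pivots $m$ has either visited every boundary-accessible empty cell or become degree $\ge 2$; \dots\ there is always an empty cell with two occupied neighbors, so the first alternative forces the second'' --- is false: Figure~\ref{fig:notrivial} is exactly a configuration where $m$ can reach \emph{every} position available to it by pivoting and is adjacent to only one module in all of them. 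So $m$ exhausting its reachable set does not imply it finds a two-neighbor spot. You then appeal to unspecified ``constant-size local transformations'' to open a passage, but you give no argument for why such moves (i) are always available, (ii) preserve connectivity (the modules you would need to pivot, such as $m$'s supporting neighbor, may be cut vertices), and (iii) do not spawn new trivial leaves elsewhere. These three properties are exactly what the lemma asserts and exactly what is nontrivial.

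The paper's resolution, which is missing from your proposal, is an extremal device: after pivoting $m$ around, place it adjacent to a module $p$ that is at \emph{maximum contact-graph distance from a fixed root $r$}, and then run a finite case analysis on the occupancy of the distance-$2$ neighborhood of $m$. The maximality is not cosmetic --- it is used substantively in the hardest case (both $\beta,\beta'$ and $\delta,\delta'$ occupied, $\gamma,\gamma'$ occupied) to argue that pivoting $p$ itself cannot disconnect the configuration, because $\delta$ and $\delta'$ are no farther from $r$ than $p$ and hence are connected to $r$ by paths avoiding $p$, $\beta$, $\beta'$. Your proposed potentials (topmost-rightmost cell, geometric coordinates) do not give this connectivity guarantee; contact-graph distance from $r$ does. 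Without that choice and the accompanying case analysis, the claim that the unblocking moves exist, are $O(1)$, preserve connectivity, and strictly reduce the number of degree-$1$ modules is unsupported. So the proposal identifies the right shape of the proof and the right move-count bookkeeping, but the core of the argument remains to be done.
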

}

\ifabstract
\begin{proof}[Proof sketch.]
Let $m$ be a degree-1 module.
If it is possible to move $m$ to a place it is adjacent to more than one modules, then we do so.
Else, we move $m$ so that its shortest path to the root module is maximized.
The full proof (Appendix~\ref{sec:app-phase-1}) uses a detailed case analysis to show that, because of the specific position chosen for $m$, there is a nearby movable module with which $m$ can coordinate to locally reduce the total number of new trivial leaves.
\end{proof}
\fi

\later{
\begin{proof}
Given a connected configuration $C$ with  a module $m$ of degree $1$ in the contact graph, we will show that in $O(n)$ pivoting moves,  without breaking connectivity,  
we can transform~$C$ into a configuration with strictly less modules of degree $1$.

We start by letting $m$ pivot clockwise. 
If $m$ can reach a position in which it is adjacent to at least two modules we are done. 
However, this might not be the case as we saw before. 
We fix a root module $r$ different from $m$. 
If $m$ can only reach positions adjacent to one module, 
we place $m$ in the grid position adjacent to a module $p$ that is the furthest possible from $r$ in the contact graph. 
Without loss of generality we assume that $p$ is on top of $m$. 

Apart from the cell containing $p$, 
all the other grid positions adjacent to $m$ are empty. 
We now distinguish some cases depending on which grid cells at distance~2 from $m$ are occupied. 
To simplify the description, we give them names, as illustrated in Figure~\ref{fig:deg1} (top). 
We define 
$\alpha=m^{\uparrow\uparrow}$, 
$\beta=m^{\uparrow\nearrow}$, 
$\beta'=m^{\uparrow\nwarrow}$, 
$\gamma=m^{\nearrow\nearrow}$, 
$\gamma'=m^{\nwarrow\nwarrow}$, 
$\delta=m^{\searrow\nearrow}$,
$\delta'=m^{\swarrow\nwarrow}$, 
$\varepsilon=m^{\searrow\searrow}$,
$\varepsilon'=m^{\swarrow\swarrow}$, 
$\zeta=m^{\searrow\downarrow}$, and 
$\zeta'=m^{\swarrow\downarrow}$. 
In all the cases the transformations 
(i) preserve connectivity, 
(ii) a module with degree~1 in the contact graph becomes a module with degree at least $2$, and 
(iii) no module with degree at least~2 in the contact graph of becomes a module with degree~1. 

\begin{figure}[ht]
	\centering
	\includegraphics[page=2,width=\linewidth]{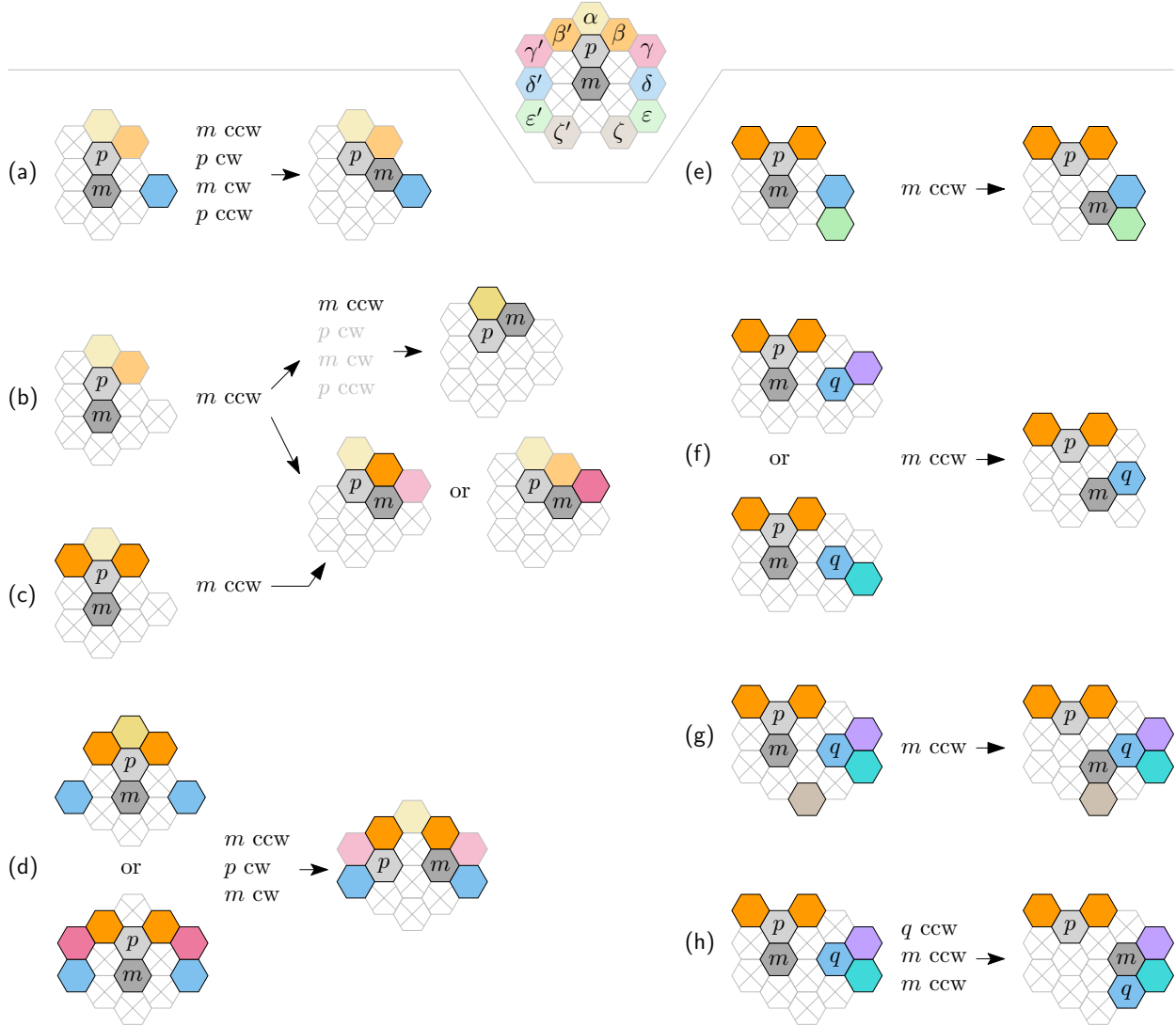}
	\caption{Reducing the number of trivial leaves without disconnecting the configuration.}
	\label{fig:deg1}
\end{figure}

We first consider the cases in which at least one of $\beta$ and $\beta'$ is empty. 
By symmetry, we can assume without loss of generality that $\beta'$ is empty. 
If $\delta$ is occupied, 
pivoting $m$ ccw, $p$ cw, $m$ cw, and finally $p$ ccw, 
effectively moves $m$ between $p$ and the module in $\delta$ without altering the position of other modules; 
see Figure~\ref{fig:deg1}~(a). 
Otherwise, if $\delta$ is empty, 
we pivot $m$ ccw. 
If $m$ becomes adjacent to another module apart from $p$ we are done. 
If that is not the case we know that $\beta$ and $\gamma$ are empty, so $\alpha$ must be occupied. 
We pivot again $m$ ccw. 
This move can be either a restricted move, 
making $m$ adjacent to both $p$ and the module in $\alpha$, 
or a monkey move. 
In the former situation we are done 
and in the latter pivoting $p$ cw, $m$ cw, and $p$ ccw also makes $m$ adjacent to both $p$ and the module in $\alpha$. 
See Figure~\ref{fig:deg1}~(b) for an illustration. 

In the remaining cases both $\beta$ and $\beta'$ are occupied. 
If at least one of $\delta$ and $\delta'$ is empty, 
$m$ has enough space to make a restricted move and become also adjacent to $\beta$ or $\beta'$; see Figure~\ref{fig:deg1}~(c). 
Thus, in the remaining we can also assume that both $\delta$ and $\delta'$ are occupied. 
If possible, we want to place $m$ and $p$ adjacent to the modules in $\beta$ and $\delta$ and in $\beta'$ and $\delta'$, respectively. 
This transformation can be implemented pivoting $m$ ccw, $p$ cw, and $m$ cw; 
see Figure~\ref{fig:deg1}~(d). 
Doing that we do not create any new trivial leaf in the contact graph, 
but we have to be careful not to break connectivity when pivoting $p$. 
If $\alpha$ is occupied it is clear that pivoting $p$ does not disconnect the configuration. 
We now show that if both $\gamma$ and $\gamma'$ are occupied also then pivoting $p$ does not disconnect the configuration.
By the choice of $p$, 
the modules in $\delta$ and $\delta'$ cannot be further to $r$ than $p$ in the contact graph of $C$. 
Thus, neither $p$ nor the modules in $\beta$ and $\beta'$ are in the shortest paths from $r$ to $\delta$ and from $r$ to $\delta'$. 
The modules in these shortest paths together with the modules in $\gamma$ and $\gamma'$ connect $\beta$ and $\beta'$. 
This implies that pivoting ($m$ and) $p$ does not disconnect the configuration. 

For the cases left, 
apart from $\beta$, $\beta'$, $\delta$ and $\delta'$ being occupied, 
we can assume that $\gamma$ and $\gamma'$ are empty. 
Let $q$ be the module in $\delta$. 
If $\varepsilon$ 
is occupied $m$ can pivot ccw 
and become adjacent to this module and to $q$.  
Moreover, after moving $m$, $p$ is still adjacent to at least two modules; see Figure~\ref{fig:deg1}~(e). 
Thus, we can assume that $\varepsilon$ 
is empty and therefore $q$ has at most~2 adjacent modules. 
If $q$ has only one adjacent module it is a trivial in the block graph of $C$ and we can get rid of it by pivoting $m$ ccw; see Figure~\ref{fig:deg1}~(f). 
Otherwise, if $q$ has two adjacent modules,
we distinguish between $\zeta$ being occupied and $\zeta$ being empty. 
In the former case, pivoting $m$ ccw makes it adjacent to $q$ and the module in $\zeta$; see Figure~\ref{fig:deg1}~(g). 
In the latter case, pivoting $q$ ccw and $m$ ccw twice 
results in $m$, $q$, and the modules adjacent to $q$ having all at least two adjacent modules; see Figure~\ref{fig:deg1}~(h).
\end{proof}
}

\subsection{Phase 2: Merging leaves}
\label{sec:phase2}

The goal of Phase 2 is to take a connected configuration with no degree 1 vertices, and transform it into a 2-connected configuration in $O(n^3)$ moves. The main technical tool of this phase is the procedure \destr\, outlined in Algorithm~\ref{alg:destroy}, which allows us to reduce the number of 2-connected components by merging them.
Its input is a child (2-)split component of a cut vertex $v$ (adjacent 2-cut $\{v_1, v_2\}$).
We first apply the necessary rotations so that $v$ ($\{v_1, v_2\}$) is farthest to the row $\rho_0$ containing the extreme SW path of $\ell$.
We then assume that $\rho_0$ does not include $v$ ($\{v_1, v_2\}$) and neither does the row above it except for the base case when $|V(\ell)| = 3$ and $\ell$ is a split component, or when $|V(\ell)| = 5$ and $\ell$ is a 2-split component.
The output of the algorithm is a set of modules that, after $O(|V(\ell)|^2)$ moves, bridges from $\ell$.

Refer to Algorithm~\ref{alg:destroy}.
\destr\ uses several others sub-procedures which we outline here.
\iffull We give their details later. \fi
We call $m$ the \emph{ascending} module, which by its definition in line~\ref{destr:m} is movable.
It is either 2-free, in which case we will try to move it by cw pivots to its highest possible position in $\rho_0$ before it leaves $\ell$;
or it is part of a 2-cut, in which case we make it 2-free using sub-procedures.
The end goal is to either bridge using $m$ while it ascends in $\rho_0$ if it gets blocks by a vertex $m^* \notin \ell$, or accumulate 2-free modules at the top of the configuration where a valid flower will form.
Then, the sub-procedure \bridge\ moves the valid flower around $\ell$ until it hits $\overline{\ell}$ where we create a bridge with the {\crew}.
There are $3$ main Cases given by lines~\ref{merg:1}, \ref{merg:2} and \ref{merg:3}.

Assume we are in Case 1.
If $m$ is in a trivial 2-cut, it will try to move up as explained before.
Let $m'$ be the module at $m^\nearrow$.
By Observation~\ref{obs:adj2cut}, $\{m, m'\}$ is the only 2-cut containing $m$.
If $m$ succeeds in moving up, at least one unit, that leaves $m'$ a cut vertex.
Then, in line~\ref{merg:triv},  we move the (up to $2$) modules that are in the 2-split component of $\{m, m'\}$, restoring 2-connectivity.
During $m$'s ascension in $\rho_0$, we identify whether a valid flower gets formed.
In the positive case, \bridge\ will accomplish our goal.
During its ascension, $m$ might be blocked by a module $m^* \in \overline{\ell}$.
If certain conditions are satisfied, the sub-procedure \lbridge\ uses $m$ \iffull and some nearby modules in $\ell$ \fi to create a bridge to $m^*$.
Else, the sub-procedure \inc\ moves $m$ to the row $\rho_1$ above it, or out of $\ell$, and we can find a new ascending module.

Now assume that $m$ is part of a nontrivial 2-cut (Case 2).
Then, either $m$ is part of an adjacent 2-cut or it is incident to a pocket.
In the case $m$ or an adjacent module is part of an adjacent 2-cut we recuse in the child 2-split component , which makes $m$ 2-free.
Else (Case 3), we either use \deflate, which decreases the number of empty positions enclosed by the pocket, or \bbup, which moves one of such empty positions up.
In some situations, \deflate\ produces a 2-free module in $\rho_0$ that will be the next ascending module.

\renewcommand{\emph}{}
\smallskip
\begin{algorithm}[ht]
\SetAlgoLined
\While{True}{
    Let $m$ be the topmost module in a extreme SW path of $\ell$\;\label{destr:m}
    Let $\rho_{-1}$, $\rho_0$ and $\rho_1$ be the rows below, of, and above $m$ it respectively\;
    \uIf{$m$ is 2-free or part of an adjacent trivial 2-cut}{ \label{merg:1}
        Pivot $m$ cw to the highest position in $\rho_0$ before it leaves $\ell$\;
        \uIf{$m$ was part of a trivial 2-cut}{
            Pivot cw once the other modules in the trivial child\;\label{merg:triv}
        }
        \uIf{$m$ is ever in a {\crew} $c$ of size $3$ in a valid flower $F$ during its ascension}{
            Return \bridge$(F, \ell-c)$\;
        }
        \uElseIf{$m$ bridges from $\ell$}{
            Return $m$\;
        }
        \uElseIf{the requirements of \lbridge$(m)$ are met}{
            Return \lbridge$(m)$\;
        }
        \uElse{
            \inc$(m)$\;
        }
    }
    \uElseIf{$\{m, m^\nearrow\}$ or $\{m^\searrow, m^{\searrow\nearrow}\}$ is a nontrivial 2-cut}{  \label{merg:2}
        Let $\ell'$ be the child 2-split component of the highest such 2-cut\;
        $c' := $ \destr$(\ell')$\;\label{merge:recursion}
        \uIf{$c'$ bridges between $\ell$ and $\overline{\ell}$}{
            Return $c'$;
            \Comment{$c'$ already merges $\ell$ into another block.}\\
        }
    }
    \Else  
    { \label{merg:3}
        \deflate$(m^\nearrow)$ or \bbup$(m^\nearrow)$;
        \Comment{$m^\nearrow$ is empty}\\
    }
}
\caption{\destr($\ell$)}\label{alg:destroy}
\end{algorithm}
\smallskip
\renewcommand{\emph}{\textbf}

\iffull We proceed with details of the sub-procedures.\fi


\medskip\noindent\textbf{\bridge$(\ell, F)$.}
The operation takes a 2-connected $\ell$ and a valid flower $F$ containing a {\crew} $c = (m_1, m_2, m_3)$ where $m_1$ was an ascending module. 
It returns $c$ after a sequence of moves that transforms $c$ into a bridge from $\ell$.
Compute a maximal sequence of flowers $(F_1 = F, \ldots, F_k)$, where each subsequent flower is adjacent to $\ell$, containing no modules except for $c$, and  obtained by moving the previous flower by one grid unit around the boundary of $\ell$.
We choose to move cw or ccw around $\ell$ based on the following condition.
If $\ell$ has a parent cut vertex, then choose arbitrarily.
Else, if $\ell$ has a parent  adjacent 2-cut $\{v_1, v_2\}$ where $v_1$ is movable, we chose the direction towards $v_2$ so that $F_k$ is not adjacent to $v_1$.
Since $G$ is connected and planar, and there are vertices in $\overline{\ell}$, $F_k$ is adjacent to a module $m^*$ in $\overline{\ell}$.
We show how to compute the sequence of moves to bring the the {\crew} with the sequence of flowers $(F_1, \ldots, F_k)$ and finally bridge between $\ell$ and $m^*$ in $F_k$.

\begin{figure}[ht]
    \centering
    \includegraphics[scale = 0.28]{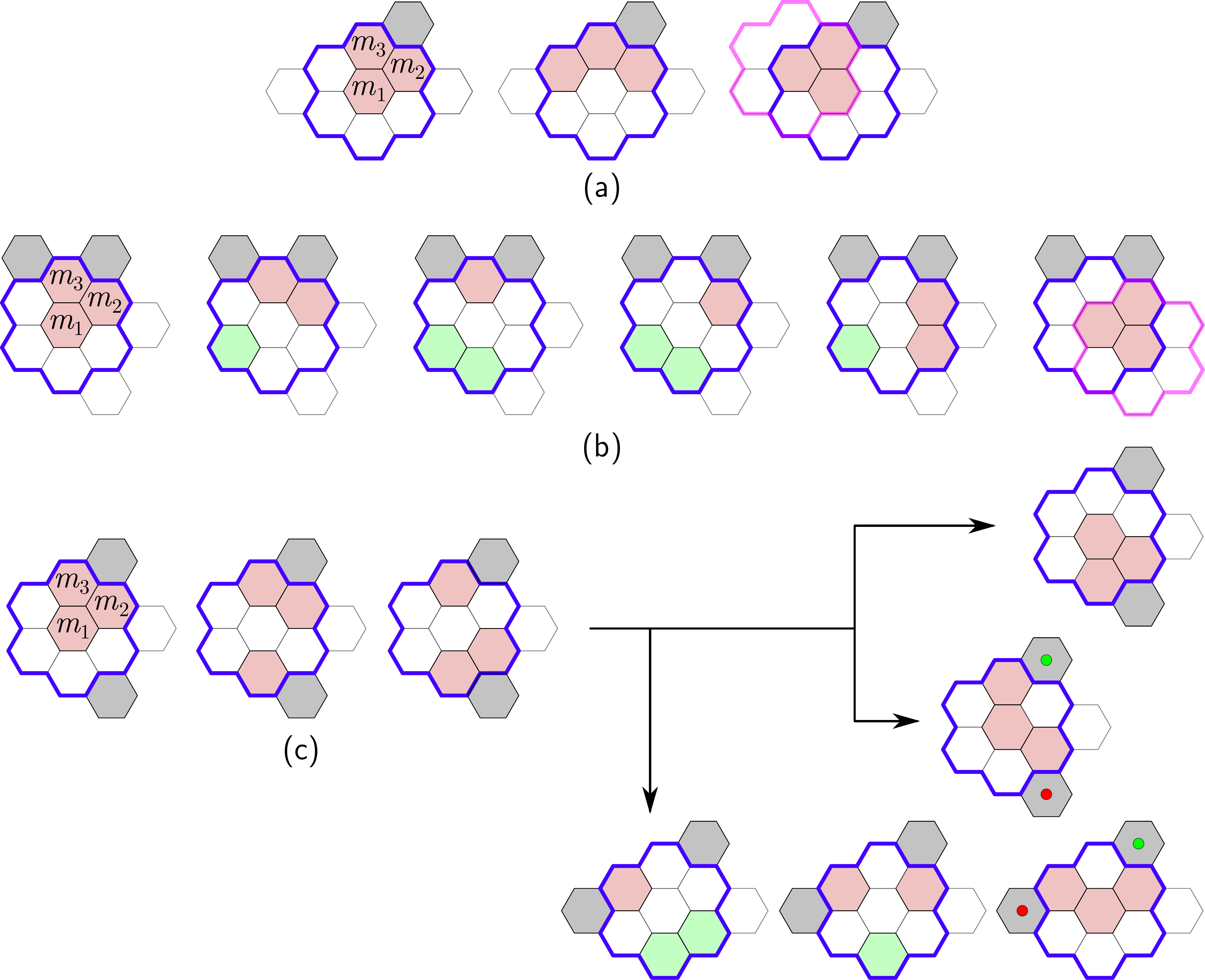}
    \caption{Maneuvers used to rotate 
    around $m_1$ 
    a {\crew} that induces a cycle. A possible next flower is shown in pink.}
    \label{fig:bridge-walk}
\end{figure}

If the modules in $c$ induce a connected graph, this graph is either a triangle, a straight path or a ``bent'' path.
A configuration of $c$ in a valid flower is \emph{useful} if a module of $c$ is adjacent to $\ell$ and $c$ induces a triangle or a ``bent'' path with $m_1$ in the center of the flower (i.e., both endpoints are adjacent to modules outside the flower).
We show how to reach every useful configuration of $c$ in a valid flower $F_i$.
This is enough to accomplish our objective since: (i) by definition, $F_i\cap F_{i+1}$ is adjacent to $\ell$ and there is a useful configuration contained in the intersection of $F_i$ and $F_{i+1}$ (Figure~\ref{fig:bridge-walk}~(a)--(b)); and (ii) if $m^-\in \ell$ is the only module adjacent to $F_k$ and $m^*$ is the only module of $\overline{\ell}$ adjacent to $F_k$ and they are across from the center of $F_k$, e.g. $m^-$ ($m^*$) is at the topmost (bottommost) position adjacent to $F_k$, then we can move $F_k$ one more unit along the boundary of $\ell$, contradicting the maximality of the sequence.
By (i) we can transition between flowers $F_i$ and $F_{i+1}$ through a useful configuration, making both valid.
By (ii) there is a useful configuration at $F_k$ that bridges between $m^-$ and $m^*$.

\ifabstract
In Appendix~\ref{sec:app-bridge}, we present four \emph{maneuvers} shown in Figure~\ref{fig:bridge-walk} along with omitted proofs. Note that, by the fact that $c$ is a \crew, we have a guarantee that some positions adjacent to the flower are empty. We then use them to show the following lemma. 
\fi
\later{
\ifabstract
\subsection{\bridge}
\label{sec:app-bridge}
\medskip\noindent\textbf{Maneuvers used in \bridge. }
\fi
We present four \emph{maneuvers} and then argue that they are sufficient.
Refer to Figure~\ref{fig:bridge-walk}.
First note that, after the movement of $c$, the labels $m_1$, $m_2$, $m_3$ change according to the definition of {\crew}.
For a useful crew in $F_i$, we label $m_1$ to be the module in the center of $F_i$.
We first argue for useful crews inducing a triangle.
W.l.o.g., let $m_3$ be a module above $m_1$, and let $m_2$ a module to the top right of $m_1$.
By definition of {\crew}, the position to the bottom right of $m_2$ is empty, and at least one position outside $F_i$ and adjacent to $m_3$ contains $m^-$.
We define this labeling to be \emph{counterclockwise} because $m_2$ appears immediately after $m_3$ in a cyclic order around $m_1$.
Let's consider a different configuration of $c$ in $F_i$.


\emph{Maneuver 1.} Consider the case when the target configuration of $c$ induces a triangle, can be labeled clockwise, and is defined by the rotation of $c$ by $60^\circ$ around $m_1$ (Figure~\ref{fig:bridge-walk}~(a)).
This implies that $m_3^{\swarrow\swarrow}$ is empty.
Then we can pivot $m_1$ and $m_2$ clockwise once.

\emph{Maneuver 2.} 
Consider the case when the target configuration of $c$ induces a triangle, can be labeled counterclockwise, and is defined by the rotation of $c$ by $-60^\circ$ around $m_1$ (Figure~\ref{fig:bridge-walk}~(b)).
This implies that $m_2^{\downarrow\downarrow}$ is empty and that at least one of the positions adjacent to $m_2$ is full, since that will be the target position of the new $m_3$.
Then we can pivot $m_1$ clockwise until it leaves $F_i$ or it moves to the lower left position of $F_i$.
Pivot $m_2$ clockwise until it leaves $F_i$ or it moves to the lower left or bottom position of $F_i$.
Now, $m_3$ can move to the top right position of $F_i$, $m_2$ can move to the position below $m_3$ and $m_1$ can move to the center of $F_i$.

\emph{Maneuver 3.} 
Consider the case when the target configuration of $c$ induces a triangle, can be labeled clockwise, and is defined by the rotation of $c$ by $-120^\circ$ around $m_1$ (Figure~\ref{fig:bridge-walk}~(c)).
Then, we move $m_1$ 1 unit down with a constant number of moves if any of the 3 positions directly below $F_i$ are full.
Else we move $m_1$ outside of $F_i$.
Next, move $m_2$ and $m_3$ 1 unit down by cw and ccw pivots respectively.
If needed, return $m_1$ to $F_i$ to obtain $c'$.

\emph{Maneuvers 4 and 5.}
These maneuvers reach the target useful configurations in Figure~\ref{fig:bridge-walk}~(c) that are paths.
The moves are similar to Maneuver 3.

}

\both{
\begin{lemma}
\label{lem:crewCycle}
Every useful configuration of a {\crew} $c$ in a valid flower can be reached from any useful configuration.
\end{lemma}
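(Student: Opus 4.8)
The plan is to prove Lemma~\ref{lem:crewCycle} by exhibiting, for every pair of useful configurations of a \crew\ $c$ in valid flowers, a sequence of the five maneuvers (Figure~\ref{fig:bridge-walk}) connecting them, and to do this by a graph-connectivity argument on a small ``configuration graph''. First I would observe that there are only finitely many useful configurations of $c$ up to the symmetry of the flower: since $c$ consists of exactly three modules, it either induces a triangle (one up-to-rotation shape, six rotations) or a bent path with $m_1$ in the center (one shape up to reflection, twelve placements), so in total a constant number of useful configurations per flower, and a constant number of ``transition'' useful configurations shared between a flower $F_i$ and an adjacent flower $F_{i+1}$. I would then define a graph $H$ whose vertices are all useful configurations of $c$ in all the flowers $F_1,\dots,F_k$ of the sequence computed in \bridge, with an edge whenever one of the five maneuvers (or its reverse — recall each pivoting move is reversible) transforms one into the other. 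The lemma is precisely the statement that $H$ is connected.

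Next I would argue connectivity in two layers. Within a single flower $F_i$: the triangle-shaped useful configurations are the rotations of $c$ by multiples of $60^\circ$ about $m_1$, and Maneuvers~1, 2, and~3 realize exactly the $+60^\circ$, $-60^\circ$, and $-120^\circ$ rotations (the remaining rotations being reachable by composing these and their reverses), so all triangle configurations in $F_i$ lie in one component of $H$; Maneuvers~4 and~5 then connect the path-shaped useful configurations in $F_i$ to the triangle ones, again by the same down-shift moves used in Maneuver~3. The key point making the maneuvers applicable is that $c$ is a \crew: by the definition of \crew, after a module moves the remaining ones must still induce a connected set and each must be 2-free in sequence, which forces the specific empty positions adjacent to the flower (e.g.\ $m_3^{\swarrow\swarrow}$ empty in Maneuver~1, $m_2^{\downarrow\downarrow}$ empty in Maneuver~2, the three positions below $F_i$ in Maneuver~3) that each maneuver needs; I would verify these emptiness preconditions case by case against the \crew\ definition. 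Across flowers: by item~(i) in the discussion preceding the lemma, $F_i\cap F_{i+1}$ is adjacent to $\ell$ and contains a useful configuration of $c$, so that configuration is a common vertex of $H$ lying in (the useful configurations of) both $F_i$ and $F_{i+1}$; since within-flower useful configurations are all connected, the whole of $H$ is connected, proving the lemma.

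I expect the main obstacle to be the bookkeeping in the within-flower step: one must confirm that \emph{every} useful triangle configuration in $F_i$ is actually reached, not merely the three named rotations, which requires composing maneuvers with their reverses and checking that each intermediate configuration is itself useful (so that the \crew\ property, and hence the emptiness preconditions, are maintained throughout). A secondary subtlety is handling the boundary cases of the flower sequence — the first flower $F_1$ (which by hypothesis already contains the \crew\ produced during $m$'s ascension) and the last flower $F_k$ (which by item~(ii) admits a useful configuration bridging $m^-$ to $m^*$) — where the adjacency to $\ell$ and to $\overline{\ell}$ must be reconciled with the direction (cw vs.\ ccw) chosen in \bridge; here I would lean on the maximality of the flower sequence, exactly as in item~(ii), to rule out the degenerate ``across-from-center'' placement and thereby guarantee a useful configuration of the required type exists in $F_k$.
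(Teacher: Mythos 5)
Your overall skeleton (show within-flower connectivity via the five maneuvers, then chain across flowers through the useful configuration in $F_i \cap F_{i+1}$) matches the paper's strategy, but you have left the key difficulty unresolved rather than merely postponed. You correctly flag the problem yourself: ``one must confirm that \emph{every} useful triangle configuration in $F_i$ is actually reached, not merely the three named rotations.'' The paper's proof of Lemma~\ref{lem:crewCycle} is, essentially, entirely devoted to exactly this step, and your proposal does not supply it. Concretely, a $\pm 60^\circ$ rotation of $c$ about the centre module may fail to be a useful configuration for one of two reasons, which the paper enumerates: either no module of the rotated $c$ can serve as $m_2$ (two crew members would be sandwiched between two other modules, so neither is 2-free after deleting $m_1$), or no module can serve as $m_3$ (the two non-centre crew members would not be adjacent to anything outside the flower). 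In either case the maneuver you would like to apply has an unsatisfied precondition, so ``compose $\pm 60^\circ$ rotations'' simply does not go through. The paper handles the first obstruction by routing through Maneuvers~3 and~4 (a $-120^\circ$ jump that skips the bad intermediate), and handles the second — where the only neighbours of the flower are at $m_1^{\uparrow\uparrow}$ and $m_1^{\downarrow\downarrow}$ — by an explicit ad hoc sequence of moves that effects a $120^\circ$ rotation. Without these two cases your argument is a plan, not a proof.

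A secondary, smaller discrepancy: the lemma is stated for a single valid flower, whereas your graph $H$ ranges over all flowers $F_1,\dots,F_k$. The cross-flower part of your argument (items (i) and (ii)) is really the surrounding text in the description of \bridge, not part of this lemma. This is harmless, but it slightly obscures what needs proving: the content of the lemma is precisely the within-one-flower connectivity, i.e., the case analysis you deferred. Also note the labeling subtlety the paper is careful about: the $+60^\circ$ rotation produces a counterclockwise-labeled target and is realized by the \emph{reverse} of Maneuver~3, not by Maneuver~1 directly; similarly $-60^\circ$ into a clockwise-labeled target uses reverse-Maneuver~3 followed by Maneuver~4. Your claim that ``Maneuvers~1, 2, and 3 realize exactly the $+60^\circ$, $-60^\circ$, and $-120^\circ$ rotations'' is true for the orientations those maneuvers target, but the orientation you need may be the other one, so some of these must be built from reverses and compositions — another piece of bookkeeping that your proposal leaves implicit.
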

}
\later{
\begin{proof}
Note that the rotation by $60^\circ$ of $c$ into a counterclockwise-labeled is possible by the reverse of Maneuvers 3.
Also note that the rotation by $-60^\circ$ of $c$ into a clockwise-labeled configuration is possible by executing the reverse of Maneuvers 3 followed by Maneuver 4. 
Then, the claim is true for when all possible positions can be transformed in $c$ via $\pm 60^\circ$ rotations, because we can use the reverse of Maneuvers 4 or 5 if necessary to go to a configuration where $c$ is a triangle and the rotate incrementally by $60^\circ$.
If such a rotation is not possible there are two possible motives.
After rotating $c$ by $\pm 60^\circ$ around $m_1$, either: no module can be labeled $m_2$, i.e., two modules in the {\crew} would be between two other modules, leaving no space for any to move after deleting $m_1$; or no module can be labeled $m_3$, i.e., the two modules in the {\crew} not in the center of $F_i$ are not adjacent to any module outside $F_i$.
In the first case, Maneuvers 3 and 4 are possible, which connects the current configuration with the next possible useful configuration that is a rotation.
The last case, implies that the only two modules adjacent to $F_i$ are at $m_1^{\uparrow\uparrow}$ and $m_1^{\downarrow\downarrow}$.
Then, we can move $m_2$ $2$ units down, $m_1$ to $m_2$'s initial position, then $m_2$ to the bottom position of $F_i$, $m_1$ moves to the bottom right position of $F_i$ and finally $m_3$ to the center of $F_i$.
That effectively rotates $c$ by $120^\circ$.
\end{proof}
}

\both{
\begin{lemma}
\label{lem:bridge}
\bridge$(\ell,F)$ performs $O(|V(\ell)|)$ moves and bridges from $\ell$ while not breaking connectivity. After its execution, $\ell$ is still 2-connected. If $\{v_1, v_2\}$ is the parent 2-cut of $\ell$ and $v_1$ is movable, then $v_1$ remains movable.
\end{lemma}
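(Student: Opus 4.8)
The plan is to analyze \bridge\ step by step, tracking four invariants throughout: (a) connectivity of the whole configuration is preserved; (b) the modules not in $c$ are never moved, so $\ell$ stays exactly as it was (in particular 2-connected); (c) $c$ always occupies exactly the current flower $F_i$ plus, at the transition moments, a useful configuration contained in $F_i\cap F_{i+1}$; and (d) if $\ell$ has a parent adjacent 2-cut $\{v_1,v_2\}$ with $v_1$ movable, then no flower in the sequence is adjacent to $v_1$, so $v_1$ is never touched and stays movable. First I would establish the move count: the flower sequence $(F_1,\dots,F_k)$ has $k=O(|V(\ell)|)$ since each $F_{i}$ is adjacent to $\ell$ and consecutive flowers differ by one grid unit along the boundary, whose length is $O(|V(\ell)|)$; by Lemma~\ref{lem:crewCycle} each transition $F_i\to F_{i+1}$ costs $O(1)$ moves, and the final bridging maneuver in $F_k$ costs $O(1)$ moves, giving $O(|V(\ell)|)$ total.

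Next I would argue correctness of the walk. By definition of a valid flower, $F_1$ contains exactly $c$ and is adjacent to $\ell$; inductively, since $F_i\cap F_{i+1}$ is adjacent to $\ell$ and (by the observation preceding the lemma) contains a useful configuration of $c$, Lemma~\ref{lem:crewCycle} lets us move $c$ from a useful configuration in $F_i$ to a useful configuration in $F_i\cap F_{i+1}$ and then into $F_{i+1}$, so every $F_{i+1}$ is valid. The key point for connectivity is that throughout these maneuvers every module of $c$ that is not currently pivoting remains attached either to another module of $c$ or to $\ell$ (a useful configuration has a module adjacent to $\ell$, and the maneuvers of Lemma~\ref{lem:crewCycle} are built from single pivots of 2-free modules of the induced subgraph of $c$ alone, so $\ell$ and hence the rest of $G$ stays connected to whatever part of $c$ is not moving). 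Because only modules of $c$ ever move and their vacated positions are inside the flowers, which are empty apart from $c$, no interior-overlap or connectivity violation with $\ell$ can occur.

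Then I would handle the endpoint. By the maximality argument already given in the text — if the only module $m^-\in\ell$ adjacent to $F_k$ and the only module $m^*\in\overline\ell$ adjacent to $F_k$ were across the center from each other, we could extend the sequence by one more flower — there is a useful configuration of $c$ in $F_k$ in which $c$ is adjacent to both $m^-\in\ell$ and $m^*\in\overline\ell$; placing $c$ there (via Lemma~\ref{lem:crewCycle}) yields a path through $c$ from $\ell$ to $\overline\ell$, i.e., $c$ bridges from $\ell$, establishing the main claim. Since $c$ is returned as the bridge and no module of $\ell$ ever moved, $\ell$ is still 2-connected afterward; and for the last statement, the direction of the walk was chosen (in the case $\ell$ has a parent adjacent 2-cut $\{v_1,v_2\}$ with $v_1$ movable) precisely so that the walk proceeds toward $v_2$ and $F_k$ is not adjacent to $v_1$, so every flower $F_i$, $1\le i\le k$, avoids the neighborhood of $v_1$; hence no pivot ever used a position needed for $v_1$ to move, and $v_1$ remains movable.

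The main obstacle I anticipate is the last invariant: showing that the chosen walk direction genuinely keeps every $F_i$ away from the positions whose emptiness certifies that $v_1$ is movable — i.e., that a flower adjacent to $\ell$ and moving toward $v_2$ cannot sneak around and become adjacent to $v_1$ or block one of the three empty cells around $v_1$. This needs the geometry of adjacent nontrivial 2-cuts from Observation~\ref{obs:adj2cut} (only two 2-split components, $v_1$ adjacent to exactly two modules besides $v_2$) together with planarity: the boundary walk of $\ell$ on the side containing $v_2$ simply does not pass near $v_1$. A secondary, more bookkeeping-heavy point is verifying that in every one of the $O(1)$-move maneuvers the non-pivoting modules of $c$ never separate from $\ell$ — this is the routine case analysis underlying Lemma~\ref{lem:crewCycle}, which I would cite rather than re-derive.
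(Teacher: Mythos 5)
Your plan matches the paper's proof essentially step for step: only crew modules move so $\ell$ is untouched and stays 2-connected, the flower sequence has length $O(|V(\ell)|)$ with $O(1)$ moves per transition by Lemma~\ref{lem:crewCycle}, properties (i) and (ii) guarantee the transitions and the final bridge, and the chosen walk direction protects $v_1$. The one place you over-commit is in the last claim: you assert that \emph{every} flower $F_i$ avoids the neighborhood of $v_1$, but the paper only needs (and only claims) that the \emph{final} bridge position is not adjacent to $v_1$ — since a flower is otherwise empty and only crew members move, a flower temporarily passing near $v_1$ would restore $v_1$'s empty neighborhood once the crew moves on, so the stronger invariant is unnecessary (and is the harder claim you correctly flag as the obstacle; the weaker one sidesteps it).
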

}

\later{
\begin{proof}
Only members of the {\crew} move and, by definition, $\ell$ remains 2-connected and no cut vertex is moved.
As argued before, by (i) and Lemma~\ref{lem:crewCycle}, given a valid flower $F_i$ we can make $F_{i+1}$ valid with a constant number of moves.
The number of flowers $k$ is $O(|V(\ell)|)$.
By (ii) and Lemma~\ref{lem:crewCycle}, a bridge in $F_k$ can be formed in $O(1)$ moves.
Furthermore, if the initial flower is at the convex hull of $\ell$, because of the chosen direction of the flower sequence, the bridge is not adjacent to $v_1$ and $v_1$ remains movable.
%
\end{proof}
}

\later{
\subsection{\deflate\ and \bbup}
\label{sec:app-def-bbup}
}

\medskip\noindent\textbf{\deflate$(p)$ and \bbup$(p)$.}
These operation takes an empty position $p$ to the top-right of a module $m$ which is a corner of a 2-connected subgraph $\ell$ of the {\cg}.
We assume that $m$ is a corner of $\ell$ in its SW extreme path, i.e., $m^\nwarrow$, $m^\swarrow$, and $m^\downarrow$ are empty.
Then, $p$ is enclosed by $\ell$ by 2-connectivity.
Refer to Figure~\ref{fig:deflate}.
\deflate\ requires that positions surrounded by a red line in Figure~\ref{fig:deflate}~(a) or (c) are as shown.
In particular, if $m^{\searrow\nearrow}$ is full, then $m^{\uparrow\nwarrow}$ is empty.
Then, the operation fills $p$ with a module adjacent to $m$ and preserve 2-connectivity of $\ell$, effectively reducing its area.
\bbup\ requires that positions surrounded by a red line in Figure~\ref{fig:deflate}~(b) are as shown.
In particular, if $m^{\searrow\nearrow}$ is full, then $m^{\uparrow\nwarrow}$ is full.
We additionally require that $\{m^{\searrow}, m^{\searrow\nearrow}\}$ is not a nontrivial 2-cut.
Then, the operation moves the empty position and $m$ to their top-left position while preserving 2-connectivity.

\begin{figure}[ht]
    \centering
    \includegraphics[width=\textwidth]{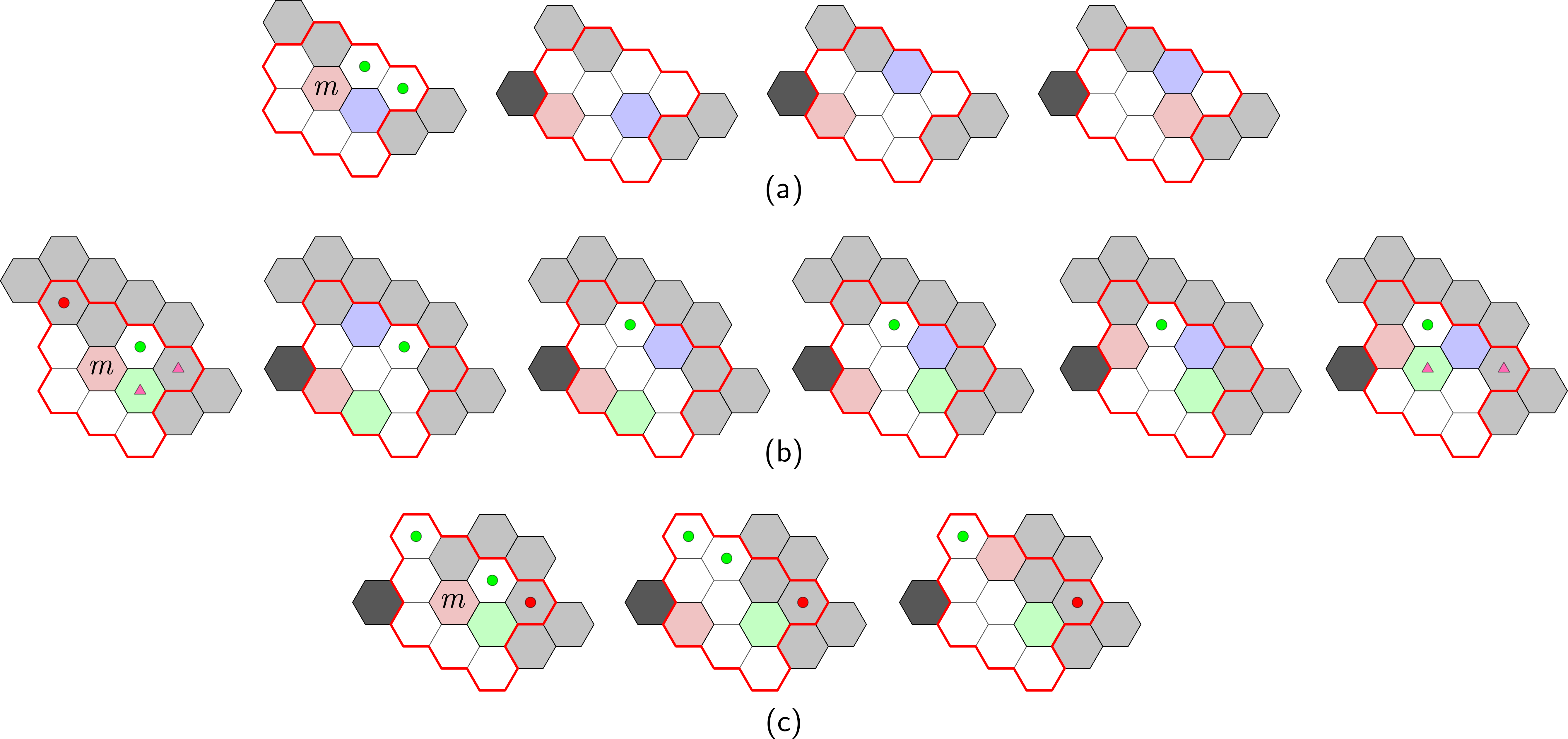}
    \caption{Operations used in \deflate$(m_1)$.}
    \label{fig:deflate}
\end{figure}

{\ifabstract The operations are formally described and the proof of Lemma~\ref{lem:deflate} is given in Section~\ref{sec:app-def-bbup}. 
Figure~\ref{fig:deflate} shows the operations assuming that $m$ performs a monkey move by pivoting cw.
\fi}
\later{
{\ifabstract
\medskip\noindent\textbf{Descriptions of \bbup\ and \deflate.}
\fi}
We now describe the operations starting with \bbup.
Let $m_{g}$, $m_b$ and $m'$ be the modules at $m^{\searrow}$, $m^{\uparrow}$ and $m^{\searrow\nearrow}$.
Refer to Figure~\ref{fig:deflate}~(b).
Move $m$ cw until it reaches $m^{\swarrow}$ or it leaves the red area. 
Do the same with $m_g$.
Pivot $m_b$ ccw which brings it to $p$.
Now bring $m_g$ and $m$ to $m_b^{\swarrow}$ and $m_b^{\swarrow\nwarrow}$.
This can be done by pivoting $m_g$ and then $m$ ccw until they reach the desired position if they left the red area, or it requires some coordination between these modules otherwise as is shown in 
Refer to Figure~\ref{fig:deflate}~(b).
If $m'$ becomes a cut vertex, it was part of a trivial 2-cut (marked with pink triangles in the figure).
Then, pivot cw the modules at $m'^\downarrow$ and $m'^\searrow$ to restore 2-connectivity.
We now describe \deflate.
Refer to Figure~\ref{fig:deflate}~(a).
In this setting, \deflate\ pivots $m$ cw once, bringing it to $m^{\nwarrow}$ or $m^{\swarrow}$ depending on the presence of a module in $m^{\swarrow\nwarrow}$.
Then we bring $m'$ up one position. This can be done by either pivoting it cw or pivoting $m'$ cw, bringing $m$ back to its original position and pivoting $m'$ ccw.
We then bring $m$ to the position below $m'$ via one or two ccw moves.
Refer to Figure~\ref{fig:deflate}~(c).
In this setting, \deflate\ does the same as \bbup\ except for the last step. Instead it brings $m_g$ to $m_b^{\downarrow}$ and $m$ to $m_b^{\nwarrow}$.
}

\both{
\begin{lemma}
\label{lem:deflate}
\deflate$(p)$ and \bbup$(p)$ perform $O(1)$ moves, do not break connectivity, and the resulting $\ell$ is 2-connected. \iflater asdf \fi
\end{lemma}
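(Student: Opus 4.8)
The bound of $O(1)$ moves is read off directly from the explicit descriptions of \deflate\ and \bbup: each operation is a fixed-length sequence of pivots whose only branching is a bounded case split (which of the pictures (a), (b), (c) of Figure~\ref{fig:deflate} applies, and whether a module ``leaves the red area'' during a run of pivots), so in every branch only a constant number of moves is performed. What remains is to certify, move by move, that (i) each pivot is a legal restricted or monkey move, i.e.\ its free-space requirements (Figure~\ref{fig:hex_moves}) are satisfied; (ii) the whole configuration stays connected after each move; and (iii) the 2-connected subgraph $\ell$ is again 2-connected at the end. For all three we exploit the preconditions: $m$ is a corner of $\ell$ on its SW extreme path, so $m^\nwarrow$, $m^\swarrow$, $m^\downarrow$ are empty and $p=m^\nearrow$ is an empty cell enclosed by $\ell$ (a pocket, by 2-connectivity), and the red-outlined cells in Figure~\ref{fig:deflate} are exactly the cells whose state the operation branches on.

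For (i) and (ii) I would walk through the described maneuvers. Conditioned on the red-region state, the first cw pivot of $m$ lands in $m^\nwarrow$ (restricted) or continues to $m^\swarrow$ (monkey), and each subsequent pivot of $m$, of $m_g$ (at $m^\searrow$), of $m_b$ (at $m^\uparrow$), or of $m'$ (at $m^{\searrow\nearrow}$) moves into a cell that the precondition/invariant guarantees is empty and pivots about a vertex shared with a module the precondition guarantees is present; this is a finite check over the constantly many intermediate pictures. Connectivity is preserved because after each pivot the moved module is still adjacent to a module that did not move, and the rest stays connected: the only modules ever taken out of a potential cut position are $m$ (and, in subcase (c), $m_g$), both lying on the boundary of the pocket $p$, so any path using them reroutes around the other side of the pocket, which is intact since $\ell$ was 2-connected; external modules attach to $\ell$ outside the constant-size neighborhood being edited, so they are unaffected. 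In \bbup\ there is one extra point: after $m$ and the empty cell move to the top-left, $m'$ may become a cut vertex, but by the hypothesis that $\{m^\searrow,m^{\searrow\nearrow}\}$ is not a nontrivial 2-cut together with Observation~\ref{obs:adj2cut}, $m'$ was then in a trivial 2-cut whose 2-split component has at most two modules, and the two trailing cw pivots of the modules at $m'^\downarrow$ and $m'^\searrow$ restore 2-connectivity — the same repair used when the ascending module in \destr\ crosses a trivial 2-cut.

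For (iii): \deflate\ fills the pocket cell $p$ while only relocating $m$ (and $m_g$ in subcase (c)) and adding contacts, so the enclosed empty region of $\ell$ strictly shrinks and no edge of $\ell$ outside the constant neighborhood of $m$ is lost; \bbup\ translates $m$ together with the pocket cell by one unit. In both cases I would conclude with a local argument: since $\ell$ was 2-connected every vertex and edge lies on a cycle, and the finitely many final pictures of Figure~\ref{fig:deflate} show that each such cycle either persists verbatim or is rerouted through the new position of $m$ (and through the restored edges at $m'$ in \bbup), so $\ell$ has no cut vertex and is 2-connected. I expect the main obstacle to be exactly this bookkeeping — enumerating the constantly many configurations compatible with the red-region preconditions and checking in each that every pivot is legal and that the final $\ell$ is cut-vertex-free, with the delicate points being the cut vertices temporarily created by moving the corner $m$ and the possible trivial 2-cut at $m'$ in \bbup.
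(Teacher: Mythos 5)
Your proposal takes essentially the same approach as the paper's own proof: bound the moves by $O(1)$, argue connectivity from the fact that the (constantly many) moved modules bound both the pocket and the outer face of the $2$-connected $\ell$, observe that \deflate\ only shrinks the pocket, and for \bbup\ invoke the hypothesis that $\{m^\searrow, m^{\searrow\nearrow}\}$ is not a nontrivial $2$-cut so that the possible trivial $2$-cut at $m'$ can be repaired by the trailing cw pivots. You are a bit more explicit than the paper about checking legality of individual pivots and about Observation~\ref{obs:adj2cut}, but there is no substantive difference in the argument.
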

}

\later{
\begin{proof}
We only move modules at $m$, $m^\swarrow$ and possibly $m^\uparrow$ which bound the external face and the pocket containing $p$ and, hence, there is no 2-cut containing only such modules.
Then the configuration remains connected.
By definition, only $O(1)$ moves were used.
It is clear that \deflate\ preserves 2-connectivity.
We now focus on \bbup.
All pockets  pockets that are merged with the outer face by the deletion of $m$, $m^\swarrow$ and $m^\uparrow$ are closed by the final positions of these modules.
The only possible 2-cuts affected would be adjacent 2-cuts.
In particular, $\{m_g, m'\}$, shown with pink triangles in Figure~\ref{fig:deflate}~(b). 
The operation would leave $m^{\searrow\nearrow}$ a cut vertex.
However by the operation's requirement, $\{m^{\searrow}, m^{\searrow\nearrow}\}$ is not a nontrivial 2-cut.
It is is a trivial 2-cut, the 2-split component is either identical to the Figure or there is only one other module adjacent to both $m_g$ and $m'$.
The positions below those modules are empty, or else the 2-cut would not be trivial. Then, these up to $2$ modules have the necessary space to move and restore 2-connectivity.
\end{proof}
}

\later{
{\ifabstract
\subsection{\shift}
\label{sec:app-shift}
\fi}
}
\medskip\noindent\textbf{\shift$(M,d)$.}
Although not used directly in \destr, this operation is used in following sub-procedures.
The input is a sequence of modules $M = (m_1, \ldots, m_t)$ and a direction $d \in \{$cw, ccw$\}$.
\shift\ applies one pivoting move in the direction $d$ for each module in the order of $M$.
If, after moving $m_i$, there is a degree-1 module $r$ adjacent to $m_{i+1}$, move $r$ in the reverse direction of $d$ before moving $m_{i+1}$.
We of course require that $m_1$ is movable and that $m_i$ becomes movable after the deletion of $m_{i-1}$ and possibly a degree-1 module adjacent to $m_i$. 
We sometimes require the following condition.

\begin{enumerate}
    \item [$(S)$] $M$ is an descending (ascending) path in the boundary of a 2-connected subgraph $\ell$, and $d = $ cw ($d = $ ccw).
\end{enumerate}

\begin{figure}[ht]
    \centering
    \includegraphics[width=\textwidth]{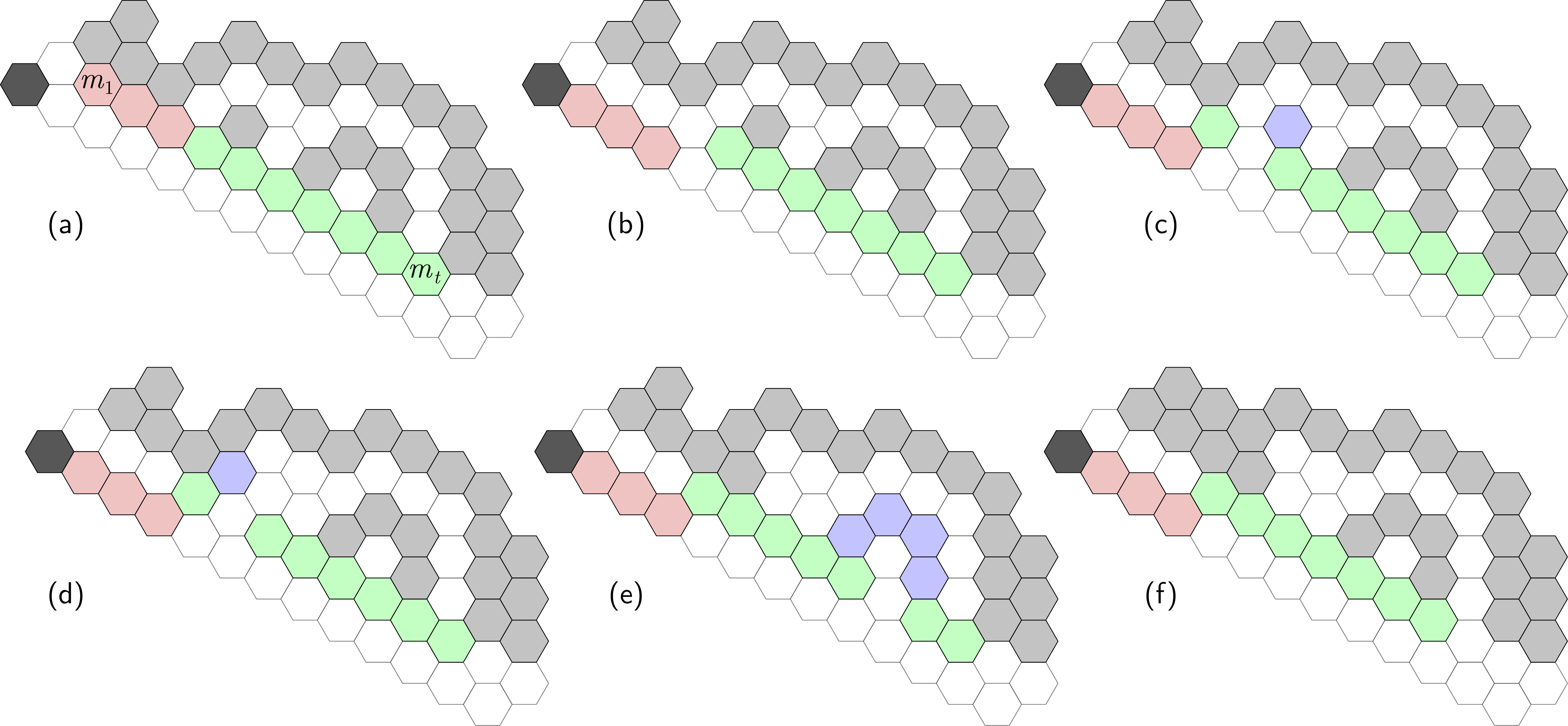}
    \caption{Illustration of \shift$(M,$cw$)$ satisfying $(S)$.}
    \label{fig:shift}
\end{figure}

\ifabstract
Using the requirements of the operation, we prove in Appendix~\ref{sec:app-shift} the following lemma.
\fi

\both{
\begin{lemma}
\label{lem:shift}
If a sequence of modules $M = (m_1, \ldots, m_t)$ satisfies $(S)$, then it does not break the connectivity and, after it terminates, all pockets of $\ell$ remain intact except for possibly one that has $m_t$ in its boundary.
\end{lemma}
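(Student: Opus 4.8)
The plan is to argue that the operation \shift$(M,\text{cw})$ under condition $(S)$ maintains an invariant guaranteeing connectivity at every intermediate step, and then separately track what happens to the pockets of $\ell$. First I would set up the geometric picture: $M = (m_1,\ldots,m_t)$ is a descending path along the boundary of the 2-connected subgraph $\ell$, so consecutive $m_i$, $m_{i+1}$ are adjacent with $m_{i+1} = m_i^{\searrow}$, and each $m_i$ is a boundary vertex of $\ell$. The key structural fact I would invoke is the one already stated in the preliminaries: a boundary vertex of a 2-connected block that is not incident to a pocket can, upon deletion, only cause \emph{adjacent} vertices to become cut vertices. So deleting $m_1$ can only make $m_2$ (its boundary neighbor along $M$) or possibly another adjacent module a cut vertex. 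Because we pivot $m_1$ cw one unit — which, by the geometry of a descending SW-type boundary path, keeps it adjacent to $m_2$ or re-attaches it nearby — the graph stays connected after step 1. The degree-1 cleanup clause handles the one annoying case: if moving $m_i$ stranded a degree-1 module $r$ hanging off $m_{i+1}$, we tuck $r$ away by a reverse pivot before proceeding, so that $m_{i+1}$ is genuinely free to move and $r$ stays attached.

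The main step is the inductive claim: after processing $m_1,\ldots,m_i$, the configuration is connected, and $m_{i+1}$ (after the optional degree-1 fix) is movable with its cw pivot keeping it attached to the already-moved prefix or to $m_{i+2}$. For the inductive step I would observe that just before we move $m_{i+1}$, the modules $m_1,\ldots,m_i$ have each shifted one unit cw; since $M$ was a descending path, the shifted prefix still forms a connected path adjacent to $m_{i+1}$'s current neighborhood, and the part of $\ell$ not in $M$ is untouched. Deleting $m_{i+1}$ can only endanger adjacent vertices (boundary, not-pocket-incident), and any such newly-degree-1 vertex is precisely what the reverse-pivot clause repairs; hence $G$ stays connected through step $i+1$. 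Condition $(S)$ is what rules out the orientation where a cw pivot would swing $m_{i+1}$ into empty space away from everything — on a descending boundary path a cw pivot moves "into" the configuration side, so adjacency is preserved. I would draw the correspondence to Figure~\ref{fig:shift} for the handful of local cases.

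For the pocket statement: each pivot of $m_i$ is a local move that only changes the faces incident to $m_i$ and $m_{i-1}$. Since all $m_i$ lie on the \emph{boundary} (external face) of $\ell$ and $M$ is traversed monotonically, moving $m_i$ can open the external face into at most the region vacated by $m_i$, but that region is immediately re-closed when the next module $m_{i+1}$ (or the returning degree-1 module) takes a position there — except at the very end of the sequence, where $m_t$ has no successor to close behind it. So the only pocket that can be altered (merged with the outer face, or newly created) is one having $m_t$ on its boundary; all other pockets of $\ell$, being bounded by modules disjoint from $M$ or fully "behind" the moving front, are untouched. The main obstacle I anticipate is making the "adjacency is preserved by a cw pivot on a descending boundary path" claim airtight: one has to check the small number of local configurations of $m_i$'s neighbors (which of $m_i^{\nwarrow}, m_i^{\swarrow}, m_i^{\downarrow}$ and the positions on the $\ell$-side are occupied) and confirm that in each the pivot — possibly a monkey move — lands $m_i$ in a cell still adjacent to the shifted prefix, and that no pocket other than one touching $m_t$ is disturbed; this is exactly the case analysis that Figure~\ref{fig:shift} is meant to cover.
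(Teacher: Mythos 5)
Your plan takes a genuinely different route from the paper: you argue step-by-step by induction, appealing to the local fact that deleting a boundary vertex of a $2$-connected graph \emph{that is not incident to a pocket} can only make adjacent vertices cut vertices, whereas the paper's proof works globally with the component structure of $\ell \setminus M$. Unfortunately the route as written has two real gaps.

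First, the local cut-vertex fact you invoke carries the hypothesis ``not incident to any pocket,'' and nothing in $(S)$ guarantees this for the $m_i$. A module in a descending boundary path can easily have a pocket sitting against its top or top-right side, and then deleting it can create non-adjacent cut vertices, so the step-by-step adjacency argument does not go through from that premise. The paper avoids this by never deleting a single vertex; instead it extracts from $(S)$ the sharper geometric consequence that \emph{no module of $M$ except possibly $m_t$ has a bottom neighbor}, which your plan never states. That observation is what drives the correct classification: when you delete all of $M$ from the $2$-connected $\ell$, every resulting component is either (i) attached to both $m_1$ and $m_t$, (ii) a singleton hanging above some $m_i$, or (iii) adjacent to at least three members of $M$. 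From this, connectivity mid-shift follows because a component of type (i) or (iii) always touches both the already-moved prefix and the yet-unmoved suffix of $M$, and the singletons of type (ii) are exactly the degree-$1$ modules $r$ that \textsf{Shift} explicitly tucks along.

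Second, your monkey-move handling is a placeholder. You write that a cw pivot ``keeps $m_1$ adjacent to $m_2$ or re-attaches it nearby'' and that each $m_{i+1}$ ``re-closes'' the vacated cell, but this is precisely where the case split matters: if $m_1$ monkey-jumps into the row below $\rho_0$, then $m_2$ may \emph{also} jump down a row rather than take $m_1$'s old cell, and in general $M$ splits into a prefix $M_1$ (the modules with a top neighbor, which all drop a row) and a suffix $M_2$ (which shifts within $\rho_0$). Both the ``predecessor's position'' invariant you rely on and the ``immediately re-closed'' pocket claim fail across the $M_1/M_2$ seam, and the only-one-pocket-affected conclusion has to be re-derived in that geometry, which the paper does by tracking the seam and the cell vacated at $m_t$. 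Your sketch defers all of this to Figure~\ref{fig:shift}, but that figure is the content, not a corollary; the induction step you propose is not self-contained without it.
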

}

\later{
\begin{proof}
Refer to Figure~\ref{fig:shift}.
W.l.o.g., $M$ is descending.
By $(S)$, no module in $M$ has a bottom neighbor except for possibly $m_t$.
If $m_1$ does not Monkey jumps, then, all other subsequent module will occupy the position of its predecessor after it moves. 
Every module above $M$ is adjacent to two of its modules except for a module above $m_1$.
Since $\ell$ is 2-connected, the deletion of $M$ can only generate components that either (i) are adjacent to $m_1$ and $m_t$, (ii) are singletons, or (iii) are adjacent to at least 3 modules in $M$.
During the movement, we can think of a prefix of $M$ have moved up, there is a gap of 1 position in the row and the rest of $M$ is unchanged. Except for the singleton components (case (ii)), all other components are still connected to either the prefix or suffix of $M$.
The singletons become degree-1 vertices that \shift\ moves up, so they too become connected.
It remains to show that the same is true when $m_1$ Monkey jumps.
If it Monkey jumps to the same row, then $m_2$ can take its former position and so on.
If it Monkey jumps to the row below, then $m_2$ either takes $m_1$'s former place, in which case we are done, or it also jumps down a row. 
Let $M_1$ be the prefix of $M$ containing modules with a top neighbor and $M_2$ the remaining suffix.
Then the modules in $M_1$ will jump down and $M_2$ will shift up in the same row.
See Figure~\ref{fig:shift}.
Then, \shift\ does not break any pocket except at $m_t$'s former position, as claimed.
\end{proof}
}

\later{\ifabstract
\subsection{\infl}
\label{sec:app-inflate}
\fi}

\medskip\noindent\textbf{\infl$(m)$.}
This operation uses \shift\ to make a concave corner convex, possibly creating a new empty space enclosed by $\ell$. This will be used in \lbridge.
The input $m$ is an ascending module in a 2-connected $\ell$.
We require that $m^{\uparrow}$ and $m^{\nearrow}$ are full, neither $\{m^{\uparrow}, m^{\uparrow\nearrow}\}$ nor $\{m^{\uparrow\uparrow}, m^{\uparrow\uparrow\nwarrow}\}$ are in adjacent nontrivial 2-cuts, and that at least one position in $\{m^{\uparrow\nwarrow}, m^{\uparrow\nearrow}, m^{\uparrow\uparrow\nwarrow}\}$ is full.
\infl\ moves the module at $m^{\uparrow}$ to $m^{\nwarrow}$ via a series of operations, returning such module. 
\ifabstract We give the full description of the operation and the proof of Lemma~\ref{lem:inflate} in Appendix~\ref{sec:app-inflate}. Refer to Figure~\ref{fig:inflate}~(a)--(c) for examples. In short, we use \shift\ to move away modules adjacent to the blue module, so that we can move it out. Then, we use \shift\ again to put the moved models, except for the blue one, into their original place. \fi

\later{
\ifabstract
\medskip\noindent\textbf{Description of \infl.}
\fi
We distinguish between two cases.
Let $m'$, $s$ and $s'$ be the modules at $m^\uparrow$, $m^{\uparrow\nearrow}$ and $m^{\uparrow\uparrow\nwarrow}$ if they exist.
First assume that either $m^{\uparrow\nwarrow}$ or $m^{\uparrow\nearrow}$ are occupied; see Figure~\ref{fig:inflate}~(a).
Then $M_1$ be the maximal descending path ending at $m^{\uparrow\nwarrow}$, and $M_2$ be the maximal ascending path ending at $m$.
Perform (i) \shift$(M_2, $ccw$)$, \shift$(M_1, $cw$)$,  pivot $m'$ ccw.
Then (ii) \shift$(M_1^{-1}, $ccw$)$, pivot $m'$ ccw, and finally \shift$(M_2^{-1}, $cw$)$.
Note that $M_1$ might be empty as shown in Figure~\ref{fig:inflate}~(c).
Now, assume that both $m^{\uparrow\nwarrow}$ and $m^{\uparrow\nearrow}$ are occupied; see Figure~\ref{fig:inflate}~(b).
Let $m''$ be the module at $m^{\uparrow\uparrow}$.
The operation is similar to the previous case and we use the same definitions (note that $M_1$ is empty).
The only difference is that we pivot $m''$ cw between the moves of $m'$, i.e., after (i) we add the additional move and proceed with (ii). 
If after the second case $s'$ is a cut vertex, it was part of a trivial 2-cut. Then there is a path of length 1 or 2 sticking out of $s'$. 
pivot cw the leaf of this path, restoring 2-connectivity.
If after the first case $s$ has degree 1, pivot it ccw.
If that doesn't restore 2-connectivity, then $s'$ does not exist, i.e., the initial position $m^{\uparrow\uparrow\nwarrow}$ is empty.
Then we replace (ii) by pivot $m'$ ccw, pivot $s$ ccw, \shift$(M_1^{-1}, $ccw$)$, and finally \shift$(M_2^{-1}, $cw$)$.
}

\begin{figure}[ht]
    \centering
    \includegraphics[scale=0.25]{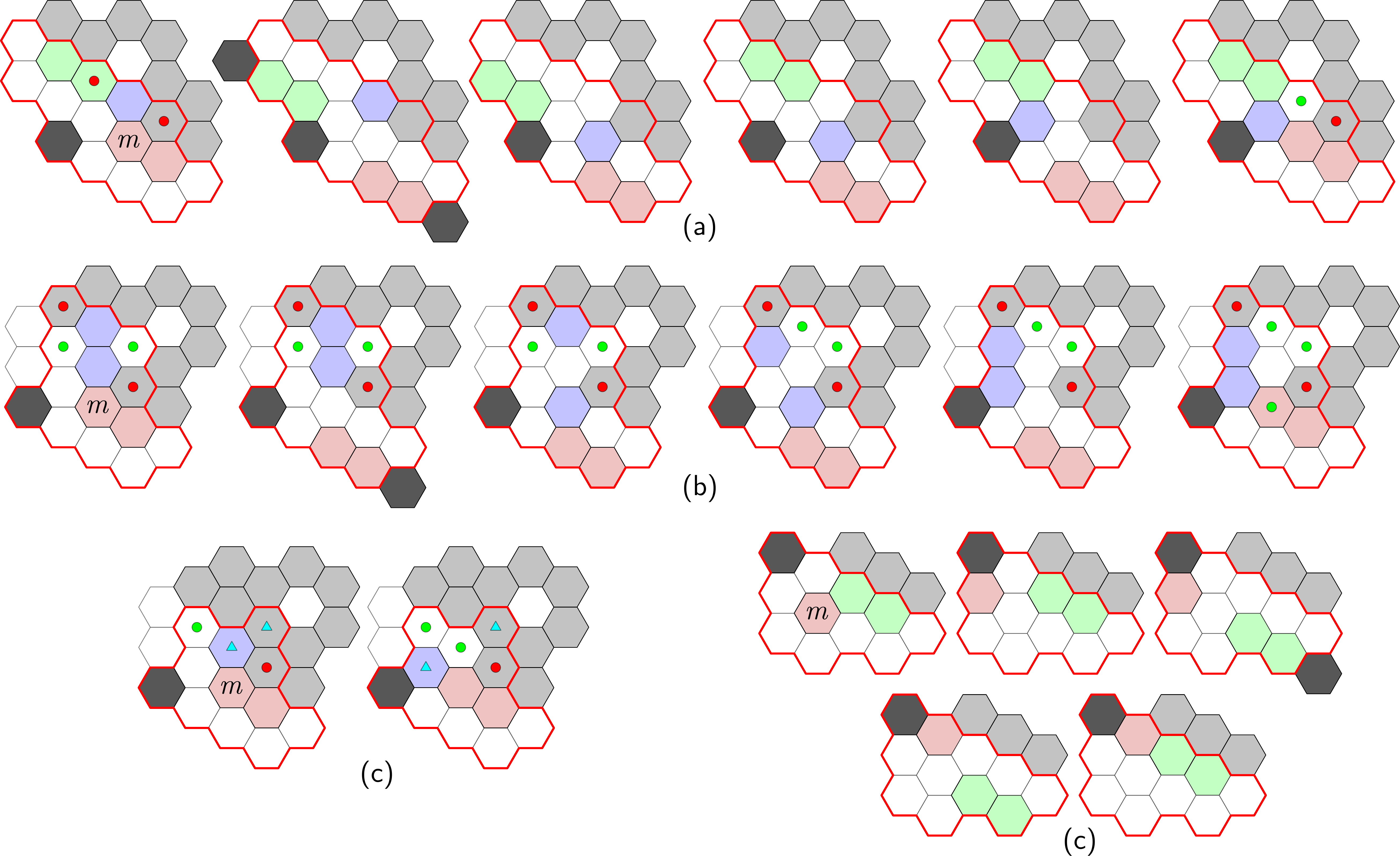}
    \caption{Operations used in \lbridge.}
    \label{fig:inflate}
\end{figure}

\both{
\begin{lemma}
\label{lem:inflate}
\infl$(m)$ performs $O(|V(\ell|)$ moves, does not break connectivity and preserves 2-connectivity of~$\ell$.
\end{lemma}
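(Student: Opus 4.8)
\textbf{Proof plan for Lemma~\ref{lem:inflate}.}
The plan is to verify the three claims---move count, connectivity, and preservation of 2-connectivity of $\ell$---by tracking the two cases in the description of \infl\ separately and invoking Lemma~\ref{lem:shift} for each \shift\ call. For the move count, recall that \infl\ makes a constant number of explicit pivots (of $m'$, $m''$, and $s$) together with at most four \shift\ calls on the paths $M_1$, $M_2$, and their reverses. Since $M_1$ and $M_2$ are respectively a descending and an ascending path in the boundary of $\ell$, their lengths are $O(|V(\ell)|)$, so each \shift\ costs $O(|V(\ell)|)$ moves by Lemma~\ref{lem:shift}, and the total is $O(|V(\ell)|)$ as claimed.

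For connectivity, I would argue that throughout the operation the moved modules are exactly the ones on the paths $M_1$, $M_2$ (which satisfy the hypothesis $(S)$ of \shift) plus the modules $m'$, $m''$, $s$, all of which lie on the boundary of $\ell$ or bound the one pocket being created. By Lemma~\ref{lem:shift}, step (i) breaks no pocket except possibly one ending at the last module of $M_1$ or $M_2$, and the subsequent pivot of $m'$ into $m^\nwarrow$ keeps the configuration connected because $m'$ remains adjacent to $m$. The reverse shifts in step (ii) restore all modules of $M_1$ and $M_2$ to their original positions, so the only net change is that $m'$ has moved from $m^\uparrow$ to $m^\nwarrow$; connectivity is maintained at every intermediate step since at each \shift\ we satisfy $(S)$ and each intermediate pivot keeps a witnessing adjacency.

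For 2-connectivity of $\ell$, the key point is that moving $m'$ out of $m^\uparrow$ and into $m^\nwarrow$ can only turn a module into a cut vertex if it was part of an \emph{adjacent} 2-cut incident to $m'$'s old neighborhood---precisely the pairs $\{m^{\uparrow}, m^{\uparrow\nearrow}\}$ and $\{m^{\uparrow\uparrow}, m^{\uparrow\uparrow\nwarrow}\}$ that the hypothesis of \infl\ excludes from being nontrivial adjacent 2-cuts, together with the trivial 2-cut possibly sticking out of $s'$ (or the degree-1 case of $s$). In each such trivial case, as in the description, there is a path of length $1$ or $2$ dangling from $s'$ (resp.\ from $s$), and pivoting its leaf cw (resp.\ pivoting $s$ ccw) restores 2-connectivity using the empty positions guaranteed below those modules; when even that fails, the stated fallback (replacing step (ii) by the alternative sequence that also pivots $s$) handles the subcase where $m^{\uparrow\uparrow\nwarrow}$ was empty. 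Since the requirement that at least one of $\{m^{\uparrow\nwarrow}, m^{\uparrow\nearrow}, m^{\uparrow\uparrow\nwarrow}\}$ is full guarantees the target position $m^\nwarrow$ closes into the boundary rather than leaving a gap, $\ell$ is 2-connected at the end.

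\textbf{Main obstacle.} The delicate part is the 2-connectivity bookkeeping: one must be sure that \emph{no other} vertex than the explicitly handled ones becomes a cut vertex, which requires carefully using the maximum degree of the triangular grid (as in Observation~\ref{obs:adj2cut}) to rule out nonadjacent 2-cuts being created, and checking that the ``free space below'' invariants needed to repair the trivial 2-cuts actually hold in every sub-configuration of Figure~\ref{fig:inflate}. Everything else is a routine case check against the figure.
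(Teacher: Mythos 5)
The move-count argument and the identification of the relevant trivial 2-cut repairs are both in line with the paper, but the connectivity argument has a genuine gap. You assert that "at each \shift\ we satisfy $(S)$"; this is false, and the paper explicitly says so. Only the \emph{first} \shift\ (on $M_2$) runs while $\ell$ is still 2-connected, so only it satisfies hypothesis $(S)$ outright. Once it has moved modules out of position, $\ell$ is no longer 2-connected, so the second \shift\ (on $M_1$) and the two reverse shifts in step (ii) do \emph{not} satisfy $(S)$, and you cannot simply invoke Lemma~\ref{lem:shift} for them.

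The paper closes this gap with a more delicate two-part argument that your proof skips. For the second \shift: because $m^{\nearrow}$ is full, the first shift destroys no pocket; hence the two shifts cannot both involve modules of the same adjacent 2-cut, so there is still a 2-connected subgraph of $\ell$ whose boundary contains $M_1$, and $(S)$ holds relative to \emph{that} subgraph. For the two reverse shifts in step (ii), connectivity is argued not via $(S)$ at all but by observing that they are exact reversals of the first two shifts and so reverse moves that preserved connectivity. Your characterization of the remainder as "a routine case check against the figure" thus hides the actual technical content of the lemma; without the above you have not established that connectivity holds during the middle three of the four \shift\ calls.
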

}

\later{
\begin{proof}
We first assume there are no adjacent 2-cuts in the vicinity.
Then we analyse all the edge cases depending on the presence of each possible adjacent 2-cut.
The size of the {\shift} operations is $O(|V(\ell|)$.
The first {\shift} operations satisfy $(S)$ and, by Lemma~\ref{lem:shift}, maintain connectivity.
The second may not satisfy $(S)$ as $\ell$ is not 2-connected anymore. 
However, since $m^{\nearrow}$ is full, no pocket was destroyed and, so it does not affects the second {\shift} because they can't both move modules of the same adjacent 2-cut, i.e., there is a 2-connected subgraph of $\ell$ containing $M_1$ in which $M_1$ satisfies $(S)$. 
Then $m'$ becomes movable or else it was a cut vertex from the start contradicting $\ell$'s 2-connectivity.
Although the next 2 shifts do not satisfy $(S)$, they are the reverse of the previous shifts and, therefore preserve connectivity.
The only module that effectively moved, comparing the initial and final configurations, is $m'$ (and $m''$ if applicable).
Since all pockets are preserved the only possible affected 2-cuts are adjacent 2-cuts containing $m'$ (and $m''$ if applicable).
Assuming there were no such adjacent 2-cuts, the proof is complete. 
Now assume that $\{m', s\}$ is a trivial 2-cut.
Then, $m$ is in the child 2-split component and there can be only another module in the same component.
Then, without any modification, the operation results in a path from $m'$ to $s$ using the modules in the child component and there are cut vertices since the other end of the path is attached to $\overline{\ell}$ and the path is effectively a bridge.
Now assume that $\{m', m^\nearrow\}$ was originally a trivial 2-cut.
Then, $s$ is the only other modules that can be in the child 2-split component and the normal operation would leave $s$ a degree-1 vertex. However, the modification at the end of the description of the operation restores 2-connectivity.
Now assume that $\{s', m'\}$ is initially a trivial 2-cut.
Note that this is the last case since if $m'$ is in any other adjacent 2-cut it is also in $\{s', m'\}$ due to the limited options of neighbors.
Then the operation leaves a path hanging from $s'$ as described in the operation. The vertex or both vertices of this path were adjacent to $m'$.
Then, the leaf of this path has enough space to move to occupy $m'$ original position, restoring 2-connectivity.
\end{proof}
}

\later{\ifabstract
\subsection{\lbridge}
\label{sec:app-lbridge}
\fi}
\medskip\noindent\textbf{\lbridge$(m)$.}
This operation is used when there is an opportunity to create a bridge when $m$ either gets blocked by $m^*$ on its way to the top of $\rho_0$ or it reaches the top and it would jump to $\overline{\ell}$.
We require that, if a cw pivot brings $m$ to $\rho_{-1}$, then $m^\nearrow$ is full and at least one position in $\{m^{\uparrow\nwarrow}, m^{\uparrow\nearrow}, m^{\uparrow\uparrow\nwarrow}\}$ is full.
We recurse in a child component, calling \destr\, if there is an adjacent nontrivial 2-cut forbidden by {\infl}.
That guarantees that we can apply {\infl} which would create a bridge from $\ell$.
If a cw pivot maintains $m$ in $\rho_{0}$, then it would land on a module $m^*\in \overline{\ell}$. We require that the maximal ascending path $M$ ending in $m^{\uparrow}$ can be shifted down by \shift$(M, ccw)$, i.e., $\rho_{0}$ must contain only $m$ below $M$; see Figure~\ref{fig:inflate}~(c).
Then, we ``squeeze'' $m$ in the space between $m^*$ and $M$ creating a bridge.
We do that by moving $m$ out of $\ell$, \shift\ $M$ down, moving $m$ back and \shift\ $M$ back.
\ifabstract Pseudocode (Algorithm~\ref{alg:local-bridge}) and the proof of the lemma can be found in Appendix~\ref{sec:app-lbridge}. \fi

\later{
\ifabstract
\medskip\noindent\textbf{Pseudocode for \lbridge.}
We provide here the pseudocode for \lbridge.
\fi

\renewcommand{\emph}{}
\smallskip
\begin{algorithm}[H]
\label{alg:local-bridge}
\SetAlgoLined
\uIf{the next cw-pivot brings $m$ to $\rho_{-1}$ and $m^\nearrow$ is full}{
    \uIf{$\{m^{\uparrow}, m^{\uparrow\nearrow}\}$ or $\{m^{\uparrow\uparrow}, m^{\uparrow\uparrow\nwarrow}\}$ is a nontrivial 2-cut}{
        Let $\ell'$ be the child 2-split component of the lowest such 2-cut\;
        $c' := $ \destr$(\ell')$\;
        \uIf{$c'$ bridges between $\ell$ and $\overline{\ell}$}{
            Return $c'$;
            \Comment{$c'$ already merges $\ell$ into another block.}\\
        }
    }
    Return \infl$(m)$\;
}

\uElseIf{a cw pivot maintains $m$ in $\rho_0$}{
    Let $M$ be the maximal ascending path ending in $m^{\uparrow}$\;\label{lb:case2}
    \uIf{$M$ is not adjacent to any module in $\rho_0$ except for $m$}{\label{lb:squeeze}
        Pivot $m$ cw, \shift$(M, ccw)$, pivot $m$ ccw, then \shift$(M,cw)$\;
        Return $m$;
        \Comment{$m$ now bridges from $\ell$.}\\
    }
}

\caption{\lbridge($m$)}
\end{algorithm}
\smallskip
\renewcommand{\emph}{\textbf}
}

\both{
\begin{lemma}
\label{lem:lbridge}
\lbridge($m$) bridges from $\ell$, does not break connectivity, and preserves 2-connectivity of $\ell$. It uses $O(|V(\ell)|) + T_m(|V(\ell')|)$ moves where $T_m(|V(\ell')|)$ is the number of operations performed by \destr\ in~$\ell'$.
\end{lemma}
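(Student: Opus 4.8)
The plan is to establish the four assertions of the lemma --- \lbridge\ bridges from $\ell$, maintains connectivity, preserves 2-connectivity of $\ell$, and runs in $O(|V(\ell)|)+T_m(|V(\ell')|)$ moves --- by following the two branches of Algorithm~\ref{alg:local-bridge}, reading the statement as one component of the mutual induction that simultaneously proves the specification of \destr. In particular, for the recursive call \destr$(\ell')$ on the (smaller) child 2-split component $\ell'$ I may assume the inductive hypothesis: that call preserves connectivity and costs exactly $T_m(|V(\ell')|)$ moves.

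\emph{Branch 1: the next cw pivot of $m$ would send it to $\rho_{-1}$.} If $\{m^{\uparrow},m^{\uparrow\nearrow}\}$ or $\{m^{\uparrow\uparrow},m^{\uparrow\uparrow\nwarrow}\}$ is a nontrivial 2-cut, I first run \destr\ on the child 2-split component $\ell'$ of the lowest one. By induction this preserves connectivity, and either it returns a set $c'$ bridging $\ell$ and $\overline\ell$ --- in which case $c'$ is by definition a bridge from $\ell$, the merge happened inside $\ell'$ so $\ell$ stays 2-connected, and the move budget holds --- or it merges $\ell'$ into the block containing $m$ without reaching $\overline\ell$, which removes the offending 2-cut and makes $m$ 2-free. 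In the latter case I verify that all preconditions of \infl\ now hold: that $m^{\uparrow},m^{\nearrow}$ are full and one of $\{m^{\uparrow\nwarrow},m^{\uparrow\nearrow},m^{\uparrow\uparrow\nwarrow}\}$ is full comes from the requirement under which \lbridge\ is invoked; the absence of a nontrivial adjacent 2-cut at $\{m^{\uparrow},m^{\uparrow\nearrow}\}$ or $\{m^{\uparrow\uparrow},m^{\uparrow\uparrow\nwarrow}\}$ comes from the recursion just performed, and here I also argue via Observation~\ref{obs:adj2cut} and the bounded degree of the triangular grid that eliminating the lowest such 2-cut cannot leave a higher one. Then \infl$(m)$ applies: by Lemma~\ref{lem:inflate} it uses $O(|V(\ell)|)$ moves, keeps the configuration connected, and keeps $\ell$ 2-connected. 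Lemma~\ref{lem:inflate} does not by itself assert that its return value is a bridge, so I derive this from the \lbridge\ preconditions and the description of \infl: since \lbridge\ is invoked precisely when $m$ is one pivot away from entering $\overline\ell$, the occupied neighbour among $\{m^{\uparrow\nwarrow},m^{\uparrow\nearrow},m^{\uparrow\uparrow\nwarrow}\}$ belongs to $\overline\ell$; \infl\ relocates the module at $m^{\uparrow}$ into the free SW-extreme-path position $m^{\nwarrow}$, which stays adjacent to $m\in\ell$, and a case check against the two configurations of \infl\ (Figure~\ref{fig:inflate}~(a)--(b), including the $M_1$-empty degeneracies) shows this module ends adjacent to that $\overline\ell$-module, so it forms a new $\ell$--$\overline\ell$ path.

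\emph{Branch 2: a cw pivot keeps $m$ in $\rho_0$ but would land it on $m^{*}\in\overline\ell$.} Let $M$ be the maximal ascending path ending at $m^{\uparrow}$; the precondition guarantees $\rho_0$ contains no module below $M$ other than $m$, and the state in which \lbridge\ is invoked guarantees $m$ is 2-free. I pivot $m$ cw out of $\ell$ to a temporary position; since $m$ is 2-free, $\ell$ stays connected and 2-connected. Then \shift$(M, ccw)$ satisfies condition $(S)$ (ascending path, ccw), so by Lemma~\ref{lem:shift} it preserves connectivity and destroys no pocket of $\ell$ except possibly one incident to the last module of $M$. I pivot $m$ ccw back into the gap just opened, where it is simultaneously adjacent to $m^{*}\in\overline\ell$ and to $M\subseteq\ell$, hence bridges from $\ell$. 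Finally \shift$(M, cw)$ undoes the downward displacement of $M$; since pivoting moves are reversible this restores connectivity and every pocket, and since only $m$ has net-moved, $\ell$ keeps its original 2-connectivity. The cost is $O(1)+2|M|=O(|V(\ell)|)$, with no recursive term.

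\emph{Main obstacle.} I expect the delicate step to be the geometric verification in Branch~1 that the module relocated by \infl\ lands adjacent to the $\overline\ell$-module named by the \lbridge\ precondition --- that is, that ``making a concave corner convex'' coincides here with ``creating a bridge''. This means matching which of $\{m^{\uparrow\nwarrow},m^{\uparrow\nearrow},m^{\uparrow\uparrow\nwarrow}\}$ is occupied against the sub-cases and degeneracies in the description of \infl\ and checking that the returned module lies on the $\ell$--$\overline\ell$ boundary in each. A secondary subtlety is confirming that a single \destr$(\ell')$ recursion suffices to unblock \infl, which I settle with Observation~\ref{obs:adj2cut}.
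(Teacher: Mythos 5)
Your Branch 2 analysis and your geometric derivation of why \infl\ produces a bridge in Branch 1 both track the paper's intent (the paper is terser and simply cites Lemma~\ref{lem:inflate}, but your expansion fills a real gap in the stated lemma, which only asserts move count, connectivity, and 2-connectivity, not bridging). However, your handling of the "both 2-cuts nontrivial" subcase in Branch 1 takes a different route from the paper, and that route has a hole.

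You claim that a single \destr$(\ell')$ on the lower 2-cut ``cannot leave a higher one,'' appealing to Observation~\ref{obs:adj2cut} and bounded degree. But Observation~\ref{obs:adj2cut} concerns uniqueness of 2-cuts \emph{through a shared movable vertex}; here the two offending 2-cuts $\{m^{\uparrow}, m^{\uparrow\nearrow}\}$ and $\{m^{\uparrow\uparrow}, m^{\uparrow\uparrow\nwarrow}\}$ share no vertex, so the observation gives you nothing about how resolving one affects the other. Geometrically they pinch off material on opposite sides (NE vs.\ NW of the column above $m$), and merging the lower child component into a block has no reason to kill the upper pinch. The paper resolves this differently: it observes that if the lower 2-cut $\{m^{\uparrow}, m^{\uparrow\nearrow}\}$ is nontrivial then the position $m^{\uparrow\nearrow}$ is occupied, which forces \infl\ into its case (a), and case (a) never moves or depends on the modules in the upper 2-cut; so the \infl\ precondition ``no nontrivial adjacent 2-cut at $\{m^{\uparrow\uparrow}, m^{\uparrow\uparrow\nwarrow}\}$'' is effectively moot in exactly the situation where you cannot discharge it. You need to replace your ``one recursion kills both'' argument with this case analysis on \infl, or else prove a separate structural lemma that the second 2-cut also dies --- which, as written, you have not done.
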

}

\later{
\begin{proof}
If all requirements of the called operations are satisfied, then \lbridge\ successfully returns a bridge.
We now check such requirements. 
The requirements for the case in line~\ref{lb:case2} are already guaranteed by the requirements of \lbridge.
We argue for the other case.
Note that we already require that the appropriate positions are full. 
The only thing we have to check is whether a forbidden 2-cut exists. 
\lbridge\ applies \destr\ to only one such a 2-cut $\{v_1, v_2\}$.
Note that, if both of these 2-cuts are nontrivial, then \infl\ will execute its case 1 and we don't care about the upper 2-cut.
If there is a 2-cut, the procedure calls \destr\ once on $\ell'$, and by Lemma~\ref{lem:inflate} a bridge can be created.
\end{proof}
}

\later{\ifabstract
\subsection{\inc}
\label{sec:app-inc}
\fi}
\medskip\noindent\textbf{\inc$(m)$.}
Whenever a local bridge was not possible\iffull in $m$'s ascending journey\fi , this operation either incorporates $m$ into $\rho_1$ or leaves $m$ attached to a module in $\overline{\ell}$ with the promise that some module will ascend in $\rho_0$ and bridge (Figure~\ref{fig:incorporate}~(d)).
There are four cases.
{\ifabstract Check Appendix~\ref{sec:app-inc} for the pseudocode (Algorithm~\ref{alg:inc}).\fi}
In case 1, we check if we can call \deflate\ at position $m^{\uparrow\nearrow}$. In the positive case, we move $m$ and a possible neighbor $m'$ in $\rho_0$ out of the way, call \deflate, and move $m$ and possibly $m'$ back (Figure~\ref{fig:incorporate}~(a)--(b)).
In case 2, $m^{\nearrow}$ is empty and $m^{\uparrow\nearrow}$ is full.
Then, we ``squeeze'' $m$ into $m^{\nearrow}$ by using \shift\ operations, similar to \bridge\ (Figure~\ref{fig:incorporate}~(c)).
In case 3, if we pivot $m$ cw, that brings $m$ to $\rho_1$ and makes its degree 1.
Then, we apply some local movements in order to incorporate $m$ into $\rho_1$ while maintaining 2-connectivity (Figures~\ref{fig:incorporate}~(e) and (g)).
In case 4, we are not in the previous cases and we simply pivot $m$ cw.
Note that $m$ might leave $\ell$ (Figure~\ref{fig:incorporate}~(d)).
We explore this case from now.
As shown in the proof of Lemma~\ref{lem:inc}, there is a guarantee that a subsequent module $s$ in $\rho_0$ will ascend.
There are three possible cases, either (i) $s$ creates a bridge using $m$ (as in Figure~\ref{fig:incorporate}~(d)), in which case nothing need to be done; (ii) $s$ calls \lbridge\ or \bridge.
Then, pivot $m$ twice counterclockwise before \bridge\ or after \lbridge;
or (iii) $s$ calls \inc. Then, there is either another module in $\rho_0$ or we can move $s$ back to $\ell$ and apply \lbridge.

\begin{figure}[ht]
    \centering
    \includegraphics[scale=0.25]{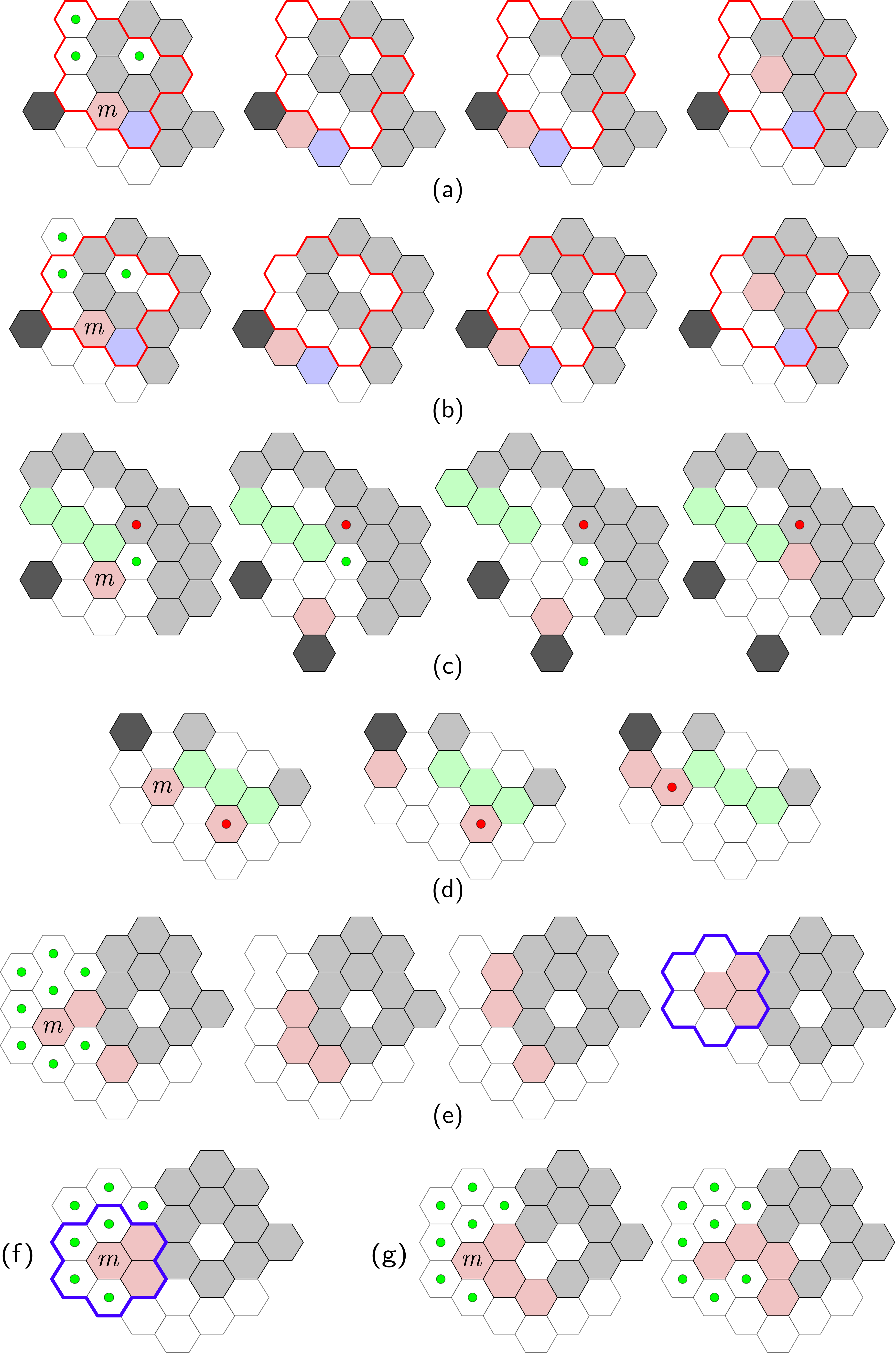}
    \caption{Operations used in \inc.}
    \label{fig:incorporate}
\end{figure}

\later{
\ifabstract
\medskip\noindent\textbf{Pseudocode for \inc.}
We provide here the pseudocode for \inc.
\fi

\renewcommand{\emph}{}
\smallskip
\begin{algorithm}[H]
\label{alg:inc}
\SetAlgoLined
\uIf{$m^{\uparrow\nearrow}$ is an empty position enclosed by $\ell$ and can be deflated after deleting $\rho_0$}{\label{inc:case1}
    Let $m'$ be the module at $m^\searrow$ if it is full\;
    Pivot $m$ (and $m'$ if it exists) ccw out of the way;
    \deflate$(m^{\uparrow\nearrow})$;
    \Comment{Leaving $m^{\uparrow}$ empty (Figure~\ref{fig:incorporate}~(a)--(b)).}\\
    Move $m'$ (if applicable) and $m$ cw to bring them back\;
}
\uElseIf{$m^{\nearrow}$ is empty and $m^{\uparrow\nearrow}$ is full}{\label{inc:case2}
    Let $M$ be the maximal descending path ending at $m^\uparrow$\;
    Pivot $m$ ccw twice, \shift$(M, \text{cw})$, pivot $m$ cw twice, then \shift$(M^{-1}, \text{ccw})$; \label{inc:noDef}
    \Comment{This integrates $m$ into $\rho_1$ (Figure~\ref{fig:incorporate}~(c)).}\\
}
\uElseIf{a cw pivot would leave $m$ degree-1 in $\rho_1$}
    {\label{inc:case3}
        \uIf{$m^{\searrow}$ is full}{\label{inc:def}
            Let $p = m^{\searrow\nearrow}$\;
            Move $m$ out of the way, \deflate$(p)$, move $m$ back;
            \Comment{Figure~\ref{fig:incorporate}~(g).}\\
        }
        Let $m'$ the module adjacent to $m$\; \label{inc:restore2-con}
        Pivot $m$ ccw, then pivot $m'$ and $m$ cw;
        \Comment{Figure~\ref{fig:incorporate}~(e).}\\
    }
\Else{
    Pivot $m$ cw\;
}
\caption{\inc($m$)}
\end{algorithm}
\smallskip
\renewcommand{\emph}{\textbf}
}

\both{
\begin{lemma}
\label{lem:inc}
\inc($m$) uses $O(|V(\ell)|)$ moves and brings $m$ to $\rho_1$ in every situation that $m$ would go to $\rho_{-1}$ by pivoting cw to which \lbridge\ does not apply.
It does not break connectivity and maintains 2-connectivity of $\ell$.
Any created degree-1 module outside $\ell$ can be reincorporated in $\ell$, thus, no new block is created.
\end{lemma}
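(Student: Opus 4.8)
The plan is a case analysis following the four branches of Algorithm~\ref{alg:inc}; in each branch I verify the four claims of the statement --- the $O(|V(\ell)|)$ move count, preservation of connectivity, preservation of 2-connectivity of $\ell$, and that any degree-1 module created outside $\ell$ can be folded back in --- and across the branches I check that Cases 1--2 cover every situation in which a cw pivot of $m$ descends to $\rho_{-1}$ while \lbridge\ is inapplicable (this is where the assertion that \inc\ brings $m$ to $\rho_1$ comes from), while Cases 3--4 cover the rest. The key connectivity observation I would use throughout is that every module \inc\ moves --- $m$, the at most one neighbor $m'$ at $m^\searrow$, and the ascending/descending boundary paths fed to \shift\ --- lies on the boundary of the outer face together with the boundary of the single pocket incident to $m^{\uparrow\nearrow}$ when one is involved; hence no 2-cut of $\ell$ consists solely of moved modules, so connectivity is never broken, and the only 2-cuts that can be transiently destroyed are adjacent ones, each repaired by pivoting the leaf of the resulting $\le 2$-module dangling path exactly as in \deflate\ and \infl.

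For the ``local'' Cases 1--3 I would argue as follows. In Case 1 the branch condition (an enclosed, deflatable empty $m^{\uparrow\nearrow}$) together with the two ccw pivots that vacate $\rho_0$ establishes the preconditions of \deflate, so by Lemma~\ref{lem:deflate} that call costs $O(1)$ moves and leaves $\ell$ 2-connected with $m^{\uparrow}\in\rho_1$ freed; the cw pivots then bring $m$ (and $m'$) back into that position, so $m$ lands in $\rho_1$ with degree $\ge 2$. In Case 2 ($m^\nearrow$ empty, $m^{\uparrow\nearrow}$ full) the maximal descending path $M$ ending at $m^{\uparrow}$ satisfies condition $(S)$, so by Lemma~\ref{lem:shift} shifting it down and back destroys no pocket except one at the far end of $M$, which the reverse shift closes again; the two extra pivots of $m$ seat it in $m^{\nearrow}\in\rho_1$, and the only nonconstant cost, $|M| = O(|V(\ell)|)$, gives the claimed bound. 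In Case 3 a cw pivot already brings $m$ to $\rho_1$ but with degree $1$; if $m^{\searrow}$ is full we first \deflate$(m^{\searrow\nearrow})$ (its preconditions again forced by the branch condition and the vacating pivots), and then ``pivot $m$ ccw, pivot $m'$ and $m$ cw'' re-seats $m$ in $\rho_1$ adjacent to two modules, in $O(1)$ moves, with 2-connectivity restored by the adjacent-2-cut repair. In all three cases no module outside $\ell$ is left dangling, so the last claim is vacuous there.

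The main obstacle is Case 4, where we merely pivot $m$ cw and $m$ may leave $\ell$. First I would show that failing Cases 1--3 and failing the \lbridge\ test together pin down the neighborhood of $m$ tightly enough to guarantee that $m^\searrow$ (or the next occupied position of $\rho_0$ to the SW) exists and, once $m$ pivots away, becomes the topmost module of an extreme SW path of $\ell$ while still being movable --- this is precisely the deferred claim that ``a subsequent module $s$ in $\rho_0$ will ascend''. Given that, the next iteration of \destr\ re-enters its Case~1 with this $s$, and I would case on $s$'s behaviour: (i) if $s$ pivots up $\rho_0$ and bridges using the displaced $m$, nothing more is needed and $m$ is incident to the new bridge, hence not a block of its own; (ii) if $s$ triggers \bridge\ or \lbridge, two ccw pivots of $m$ (before \bridge, after \lbridge) re-attach $m$ to $\ell$ with degree $\ge 2$; (iii) if $s$ triggers \inc, then either $\rho_0$ still contains a further module and the argument iterates, or we retract $s$ into $\ell$ and invoke \lbridge\ instead.

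The two delicate points are (a) the exhaustive geometric verification of the structural claim in Case 4 --- that the listed reasons for Cases 1--3 and \lbridge\ to fail force the existence and movability of $s$, and symmetrically that every situation in which a cw pivot descends to $\rho_{-1}$ and \lbridge\ fails falls into Cases 1--2 --- and (b) showing the chain of ``push $m$ out, let $s$ ascend, tuck $m$ back'' events terminates without leaving a permanent degree-1 module or spurious block. For (b) I would observe that each such event strictly advances the ascending module up the extreme SW path, so it recurs $O(|V(\ell)|)$ times with $O(1)$ extra moves each, and that when \destr\ finally returns, the most recently displaced $m$ is either part of the returned bridge or has been re-seated with degree $\ge 2$; this matches the invariant \destr\ must maintain after Phase~1 that no trivial leaf survives, giving the last assertion of the lemma.
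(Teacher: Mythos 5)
Your proposal follows essentially the same approach as the paper's proof: a case analysis on the branches of \inc, invoking Lemmas~\ref{lem:deflate} and \ref{lem:shift} for the local branches, arguing that Cases 1--2 exhaust the $\rho_{-1}$ situations not handled by \lbridge, and for the else-branch showing (exactly as the paper does, via the sub-cases where the next ascending module $s$ bridges with $m$, bridges elsewhere, or calls \inc\ again) that $m$ gets reincorporated so no new block survives. The one scope note is that your point (b) on termination of the ``push out, let $s$ ascend, tuck back'' chain is really carried by the charging argument $(\star)$ in the proof of Lemma~\ref{lem:destr}, not inside Lemma~\ref{lem:inc} itself; within \inc\ the paper only needs, as you do use, the existence of the next ascending $s$ forced by \lbridge's inapplicability.
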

}

\later{
\begin{proof}
The operation has $3$ main cases we will call cases 1, 2 and 3 (resp., lines~\ref{inc:case1}, \ref{inc:case2} and \ref{inc:case3}).
We show that cases 1 and 2 cover all situations not contemplated by \lbridge\ when $m$ would go to $\rho_{-1}$.
There are two possibilities.
Either all three positions in $\{m^{\uparrow\nearrow}, m^{\uparrow\nwarrow}, m^{\uparrow\uparrow\nwarrow}\}$ are empty or $m^\nearrow$ is empty.
In the former case, deflate is always possible as shown in Figure~\ref{fig:incorporate}~(a)--(b).
Note that $m$ is 2-free and once it moves away, $m'$ becomes free by the 2-connectivity of $\ell$.
In the latter case we apply case 2.
Then, case 3 includes the cases when $m$ would remain in $\rho_0$ or would go to $\rho_1$.
We first address the second case.
We have to show that the conditions for \deflate\ are met.
Then, by Lemma~\ref{lem:deflate}, connectivity is not broken and the result is 2-connected.
If $p$ is empty, then the empty spaces guaranteed by the definition of the case are enough for the requirements of \deflate\ (Figure~\ref{fig:incorporate}~(g)).
Note that $m^{\downarrow\searrow}$ is also empty because $m$ is in an extreme path.
If $p$ is full, then the flower centered in $m$ is valid (Figure~\ref{fig:incorporate}~(f)) and \destr\ would not have called \inc.
Now we address the case when $m$ would remain in $\rho_0$.
The operation just pivots $m$ to $m^*$. Note that that preserves 2-connectivity in $\ell$, but $m$ might have degree 1.
If there were no other modules in $\rho_0$, then the case fits the conditions in \lbridge\ and \destr\ would have called \lbridge\ instead of \inc.
Then, a new module $s$ ascends in $\rho_0$ attached to $m$ by a path in $\rho_1$.
If (i) $s$ bridges with $m$, the claim is clearly true.
If (ii) it bridges somewhere else in which case the operation returns $m$ to $\ell$. In particular, after \lbridge\ or before \bridge\ the vicinity of $m$ was not affected before $s$ is below the $m$'s initial position, or else $s$ would have reached $m$.
Then, we can pivot $m$ ccw at least twice maintaining $m$ in $\rho_0$.
In particular, at such position $m$ is adjacent to 2 modules in $\ell$ or else the component of $\ell$ induced by $\rho_1$ adjacent to $m$ is a single module and the conditions for \lbridge\ would have been satisfied.
Then, $m$ is in the same block as these 2 modules.
If (iii) $s$ calls \inc\ and is the last module in $\rho$.
Then the conditions of \lbridge\ are satisfied if we roll $m$ back to $\ell$ since $M$ in line~\ref{lb:squeeze} of \lbridge\ would not have any bottom neighbor.

The bottleneck of this sub-procedure is case 2 and the \shift\ operations make $O(|V(\ell)|)$ moves.
\shift$(M, \text{cw})$ satisfies $(S)$ by the maximality of $m$ in $\rho_0$.
So it and its reversal do not break connectivity by Lemma~\ref{lem:shift}.
Lemmas~\ref{lem:deflate} and \ref{lem:shift} complete the proof for cases 1 and 2.
In line~\ref{inc:def}, the conditions of deflate are met by the fact that $m$ would become degree-1 in $\rho_1$.
In line~\ref{inc:restore2-con}, 
Figure~\ref{fig:incorporate}~(e) shows the positions that must be empty because $m$ would become degree-1 in $\rho_1$. Then, $m'$ is 2-free and has the space to pivot down and $m$ takes its place. This restores 2-connectivity.
\end{proof}
}

\later{\ifabstract
\subsection{Analysis of \destr}
\label{sec:app-merge}
\fi}

\iffull\newpage\fi
\medskip\noindent\textbf{Analysis.}
\iffull
We now ready to analyse \destr. 
\fi
\ifabstract
The proof of the following lemma analysing \destr\ can be found in Appendix~\ref{sec:app-merge}. 
\fi

\both{
\begin{lemma}
\label{lem:destr}
If $\ell\neq G$ is a leaf block of $\mathcal{B}$, \destr$(\ell)$ performs $O(|V(\ell)|^2)$ moves merging $\ell$ and a subset of nodes of $\mathcal{B}$ into a single block while not creating any other new blocks.
\end{lemma}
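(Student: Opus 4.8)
The plan is to prove all three parts together — (i) \destr($\ell$) eventually forms a bridge from $\ell$, hence merges $\ell$ with the nodes of $\mathcal{B}$ it touches into a single block; (ii) it never disconnects $G$ and creates no new block other than that one; (iii) it performs $O(|V(\ell)|^2)$ moves — by strong induction on $|V(\ell)|$. Recall that after Phase~1 (Lemma~\ref{lem:phase1}) every leaf block has $|V(\ell)|\ge 3$, and that the description of \destr\ fixes the orientation so that the parent cut vertex or 2-cut sits opposite the extreme SW path $\rho_0$. The base cases are exactly the two special configurations named in the description ($|V(\ell)|=3$ for a split component, $|V(\ell)|=5$ for a 2-split component): for each one checks by inspection that a constant-length move sequence builds a bridge from $\ell$ without breaking connectivity. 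For the inductive step, note that every recursive invocation of \destr\ — in line~\ref{merge:recursion} of Algorithm~\ref{alg:destroy} and the one inside \lbridge\ — is on a child 2-split component, which by Observation~\ref{obs:adj2cut} (an adjacent nontrivial 2-cut at a movable vertex has only two 2-split components) has at most $|V(\ell)|-1$ vertices, so the inductive hypothesis applies to it.

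For the connectivity / 2-connectivity part I would carry a loop invariant over the iterations of the \texttt{while} loop: $G$ is connected, $\ell$ is 2-connected, and $\mathcal{B}$ has changed only by contracting $\ell$ with a set of nodes into one block. Each case preserves this. In Case~1 the ascending module $m$ is 2-free (or lies in an adjacent trivial 2-cut whose at-most-two stragglers are repaired in line~\ref{merg:triv}, which is legitimate by Observation~\ref{obs:adj2cut}), so sliding it up $\rho_0$ or out of $\ell$ cannot break the 2-connectivity of its block; the three terminating branches invoke \bridge, \lbridge, or \inc, and Lemmas~\ref{lem:bridge}, \ref{lem:lbridge}, and \ref{lem:inc} are precisely the invariant for those. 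In Case~2, Observation~\ref{obs:adj2cut} again limits the relevant 2-cut to two 2-split components; the inductive hypothesis on its unique child merges that child into one block and leaves $m$ 2-free, after which either the returned \crew\ already reaches $\overline{\ell}$ (done) or the invariant carries to the next iteration. In Case~3, Lemma~\ref{lem:deflate} gives the invariant for \deflate\ and \bbup. Since the only modules ever moved lie on the boundary of $\ell$ or form the temporarily detached \crew, and (by the sub‑procedure lemmas and Lemma~\ref{lem:shift}) no pocket is destroyed except where explicitly allowed, no stray block ever appears.

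Termination, and the guarantee that the loop actually exits via a \texttt{return}, I would get from a potential $\Phi(\ell)$, taken as a lexicographically ordered tuple: the number of grid cells enclosed by $\ell$; the sum over the modules of $\ell$ of the number of rows between the module and the topmost row of $\ell$; and the number of 2-split components of $\ell$. Every non-returning iteration strictly decreases $\Phi$: a Case-1 iteration moves $m$ strictly up or out (second coordinate down), and the \deflate\ inside an \inc\ call lowers the first coordinate; a Case-3 iteration either applies \deflate\ (first coordinate down) or applies \bbup, which moves an enclosed empty cell strictly upward and so can recur only finitely often before that cell reaches the boundary and the first coordinate drops; and a Case-2 iteration, terminating by the inductive hypothesis, merges the child component and lowers the third coordinate without raising the others. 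As $\Phi$ is a bounded nonnegative integer tuple, the loop halts, and it can halt only by returning a bridge from $\ell$, which joins $\ell$ to the block on its far side; this establishes (i).

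For the move bound write $k=|V(\ell)|$, and count all moves \emph{globally} across the whole recursion of \destr($\ell$) rather than per call. Adjacent 2-cuts inject into boundary edges, so there are $O(k)$ of them, and each recursive \destr\ call permanently destroys one (the new path routes around it) and is never repeated — so there are $O(k)$ recursive calls in all; \bridge, \lbridge, \infl\ are each invoked $O(1)$ times overall, contributing $O(k)$ moves by Lemmas~\ref{lem:bridge}, \ref{lem:lbridge}, \ref{lem:inflate}; \deflate/\bbup\ cost $O(1)$ moves each (Lemma~\ref{lem:deflate}) and are charged to the first two coordinates of $\Phi$, which are $O(k^2)$; and the \shift s invoked under condition $(S)$ are $O(k)$ in number, each of length $O(k)$, by Lemma~\ref{lem:shift}. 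Summing gives $O(k^2)$. The \textbf{main obstacle} is making the Case-1 accounting and the recursion fit inside this bound: a Case-1 ascension (and the \shift s in the \inc\ it triggers) can cost $\Theta(k)$ moves, so one must show the number of such ``expensive'' iterations is globally $O(k)$ — the natural route is an amortized argument that the total upward displacement ever performed on modules of $\ell$ is $O(k^2)$, but this is delicate because a single pocket may enclose $\Theta(k^2)$ cells and the recursion tree may be $\Theta(k)$ deep, so a naive per-call summation only yields $O(k^3)$. The proof must instead show that, measured globally, the expensive iterations and the recursive calls' work telescope; by contrast the connectivity part above is routine once the sub-procedure lemmas are taken as given.
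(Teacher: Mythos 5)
Your proof correctly identifies the structure (induction, loop invariant via the sub-procedure lemmas, and a counting argument for the move bound), and you correctly and honestly flag the crucial gap: bounding how many "expensive" Case-1 ascensions (each costing $\Theta(|V(\ell)|)$ moves, including the \shift{}s inside \inc) can occur before a bridge is formed. Without that bound your accounting does not close, and a naive per-call summation really would give $O(|V(\ell)|^3)$.

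The paper fills exactly this gap with a claim it calls $(\star)$: \emph{at most three 2-free ascending modules can pass through any given section of the perimeter of $\ell$ before \destr\ returns a bridge.} The argument case-splits on how an ascension ends (blocked in $\rho_{-1}$, exits $\ell$, or rises to $\rho_1$) and shows that once a module $m$ gets parked by \inc, the next one or two ascending modules that reach the same perimeter section necessarily trigger \lbridge\ or form a valid flower and trigger \bridge. This lets the paper charge every ascension move to a perimeter section that is charged $O(1)$ times, so the non-recursive work inside one invocation of \destr\ is $O(|V(\ell)|)$ — \emph{linear}, not quadratic. For the recursion itself, the paper does not count "one recursive call per adjacent 2-cut, $O(k)$ total" as you do; instead it argues that a single invocation of \destr\ makes \emph{at most one} recursive call (plus at most one inside \lbridge), both on strictly smaller 2-split components $\ell'$, $\ell''$. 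The resulting recurrence $T_m(|V(\ell)|) = O(|V(\ell)|) + T_m(|V(\ell')|) + T_m(|V(\ell'')|)$ with disjoint, smaller subproblems then solves to $O(|V(\ell)|^2)$. Your global count of $O(k)$ recursive calls, each potentially spending $\Theta(k)$ moves, would by itself give $O(k^2)$ only if the recursed component sizes sum to $O(k)$, which you do not establish. The potential-function termination argument you give is a reasonable alternative to the paper's (which bounds \deflate/\bbup\ by observing each can pass through a perimeter section only once per level), but neither that nor the rest of your accounting substitutes for $(\star)$; that amortized bound on ascensions is the missing idea.
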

}

\later{
\begin{proof}
We begin with the correctness proof by induction. 
The base case of the recursion is when $|V(\ell)| = 3$ if $\ell$ is a block and $|V(\ell)| = 5$ if $\ell$ is a 2-split component since we don't recurse on trivial children.
If $|V(\ell)| = 3$, $\ell$ induces a triangle with enough empty spaces around it for a valid flower. Then the algorithm calls \bridge\ and terminates.
If $|V(\ell)| = 5$ there are only two options.
Either there is a valid flower as the previous case, or the $3$ children modules are adjacent to the 2-cut and lie in the same row $\rho_0$.
Then, a normal run of the algorithm will either call \lbridge\ or from a valid flower since we have 3 2-free modules in $\rho_0$ that can potentially ascend.
We proceed with the induction step.
If $m$ is not 2-free because it is part of an adjacent 2-cut, then, by Observation~\ref{obs:adj2cut} and the induction hypothesis, $m$ becomes 2-free after the recursive call in line~\ref{merge:recursion}.
Else, \bbup\ brings an empty position up. 
All other operations either destroy empty spaces or creates bridges, in which case we return; never an empty space goes down.
Each position can only travel at most $|V(\ell)|$ times.
\deflate\ destroy an empty position and creates a 2-free module if the empty space is surrounded by modules.
There can be up to $|V(\ell)|$ empty spaces in each row.
Then, there will be a 2-free ascending module.

We show that: $(\star)$ only up to $3$ 2-free ascending modules can pass through the same section of perimeter of $\ell$ until we return a bridge.
For that, we need to look at the situations we apply \inc, since it's the only case in which we don't return.
We use the cases defined in the proof of Lemma\ref{lem:inc}.
First assume that an ascending module $m$ gets stopped by a module $m^*\in \overline{\ell}$ in $\rho_{-1}$ (cases 1 and 2).
If the next ascending module $s$ in $\rho_0$ reaches the initial position of $m$, then it will also be blocked by $m^*$. 
However, in each case, the presence of $m$ creates conditions sufficient for us to call \lbridge.
In both cases, when $s$ reaches its highest position, $s^\nearrow$ and $s^{\uparrow\nearrow}$ are full.
If there is no other ascending module in $\rho_0$, the next ascending 2-free module that passes through $m^*$ in $\rho_1$ will be $m$.
That is because $m$ was 2-free and it either remains 2-free, or it is between modules in $\rho_1$ but is not part of a 2-cut and its top neighbor is full.
In the first case, $m$ continues its ascension from where it left of.
In the second case, it will not be part of any \bbup\ or \deflate\ operations of preceding ascending modules in $\rho_1$ and, when it becomes an ascending module in $\rho_1$ it is 2-free and can continue its ascension from where it left of.

Now assume that $m$ would leave $\ell$ to a position in $\rho_0$ (case 3). 
Refer to Figure~\ref{fig:incorporate}~(d). 
Since we did not apply \lbridge, there is another module $s$ in $\rho_0$ that will become the next ascending module, then, if it reaches the same position once occupied by $m$, it will connect $m$ to the rest of $\ell$ as required.
Else, it would not reach $m$'s initial position and charge the same section of perimeter. 
By the proof of Lemma~\ref{lem:inc}, there is a guarantee that the algorithm will bridge while it processes $\rho_0$.

Finally, assume that $m$ rises to $\rho_1$ after its ascension by case 3. 
As before, a subsequent ascending module $s$ that reaches $m$'s initial position will also rise up to a position adjacent to the current position of $m$.
Then, it will either be the next module in $\rho_1$ beside $m$, or $m$ will move up by case 3.
In the first case, when the third ascending module comes along, it will pass through the position below both $m$ and $s$ forming a valid flower.
In the second case, it will rise to $\rho_1$ as in Figure~\ref{fig:incorporate}~(e).
This concludes the correctness proof. 

We now proceed with the analysis of the number of moves.
Every move in the ascension of $m$ when $m$ is 2-free can be charged for a section of the original perimeter of $\ell$.
By $(\star)$, a section of perimeter is never charged more than 3 times.
Note that \deflate\ and \bbup\ can only pass through a section of the perimeter once until a 2-free ascending module $m$ is created.
Similar as in the proof of $(\star)$, if a subsequent ascending module $s$ reaches the region where a \deflate\ and \bbup\ were performed before the creation of $m$, then it can also ascend to be adjacent to $m$.
The number of moves in a recursion level is then proportional to the perimeter of $\ell$ which is $O(|V(\ell)|)$.
Every operation called makes $O(|V(\ell)|)$ moves except for the recursive calls and \lbridge.
It is clear that we return after \lbridge\ is called.
We claim that \destr\ does at most one recursive call.
Fist, assume that the recursive call was because $\{m, m^\nearrow\}$ is a nontrivial 2-cut.
Then, $m^{\nearrow\nearrow}$ is empty and in the outer face of $\ell$.
It follows that $m^{\searrow}$ and $m^{\searrow\searrow}$ are full and also in adjacent 2-cuts because the boundary of $\ell$ is being ``pinched'' between $m^{\nearrow\nearrow}$ and $\rho_{-1}$.
After the recursive call returns, there is a new path from the child of $\{m, m^\nearrow\}$.
Then, either $m^{\searrow}$ and $m^{\searrow\searrow}$ are not in 2-cuts anymore, or they are in trivial 2-cuts.
By the proof of $(\star)$, the algorithm either returns before the ascension of $m^{\searrow\searrow}$ or the $3$ modules will form a valid flower and the algorithm calls \bridge\ and there are no more ascending modules.
Then the worst case bound in the number of total moves $T_m(V|(\ell)|)$ is given by:
\[T_m(V|(\ell)|) = O(V|(\ell)|) + T_m(V|(\ell')|) + T_m(V|(\ell'')|)\]
where $\ell'$ and $\ell''$ are the split components that get recursed on in \destr\ and \lbridge.
Since $\ell'$ and $\ell''$ are smaller than $\ell$, the recursion solves to $O(V|(\ell)|^2)$.
\end{proof}
}

\ifabstract
\begin{proof}[Proof sketch.]
A key observation 
is that each section of the perimeter can only be traversed by at most $3$ ascending modules until either a local bridge or a valid flower is formed.
Every time we use \inc\ to hide a module in $\rho_1$ we have the guarantee that, if either the next or the next two ascending modules reaches $m$, then \lbridge\ or \bridge\ will be called and the method terminates.
We can then charge the moves of a module to the perimeter. Hence, each level of recursion of \destr\ makes a linear number of moves.
Another key observation is that there are only a constant number of recursive calls.
Since we always recurse on a smaller problem, the upper bound on the number of moves is $O(|V(\ell)|^2)$.
\end{proof}
\fi

\both{
\begin{corollary}\label{lem:phase2-works}
$G$ can be made 2-connected in $O(n^3)$ moves.
\end{corollary}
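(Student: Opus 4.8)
The plan is to combine Lemma~\ref{lem:phase1} (Phase~1) with repeated applications of \destr\ (Lemma~\ref{lem:destr}), and to bound the total number of moves by a simple monotonicity argument on the block tree~$\mathcal{B}$. First I would dispose of the trivial cases $n\le 2$, where the contact graph is already a single block, and of Phase~1: by Lemma~\ref{lem:phase1}, in $O(n^2)$ moves and without breaking connectivity we reach a configuration whose contact graph $G$ has no degree-$1$ vertex, equivalently $\mathcal{B}$ has no trivial leaf. This cost is dominated by $O(n^3)$, so from now on I assume every leaf of $\mathcal{B}$ is a nontrivial block.

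For Phase~2, as long as $\mathcal{B}$ is not a single node I would pick a leaf of $\mathcal{B}$. Since $\mathcal{B}$ is a bipartite tree between blocks and cut vertices, and every cut vertex lies in at least two blocks (so has degree at least $2$ in $\mathcal{B}$), every leaf of $\mathcal{B}$ is a block node $\ell$; and because $\mathcal{B}$ has at least two nodes, $\ell\neq G$. I then apply \destr$(\ell)$. By Lemma~\ref{lem:destr} this performs $O(|V(\ell)|^2)=O(n^2)$ moves, preserves connectivity, merges $\ell$ together with its parent cut vertex (and possibly more nodes on the path toward $\overline{\ell}$) into a single block, and creates no new block. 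In particular the number of nodes of $\mathcal{B}$ strictly decreases with each call — by at least two, since at minimum the block $\ell$ and its parent cut vertex are absorbed into one block. Because $\mathcal{B}$ starts with $O(n)$ nodes (at most $n-1$ blocks and at most $n$ cut vertices), the loop terminates after $O(n)$ iterations; when it ends $\mathcal{B}$ is a single node, i.e.\ $G$ has no cut vertex and the configuration is $2$-connected.

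Summing the cost: Phase~1 contributes $O(n^2)$ moves, and Phase~2 makes $O(n)$ calls to \destr, each of cost $O(n^2)$, for a total of $O(n^3)$ moves, none of which breaks connectivity. The only delicate point — and it is exactly what Lemma~\ref{lem:destr} supplies — is that \destr\ never creates a new block, so that the node count of $\mathcal{B}$ is genuinely monotone decreasing and Phase~2 cannot cycle or stall; everything else is bookkeeping. I therefore expect no real obstacle here beyond carefully invoking the guarantees of Lemma~\ref{lem:destr} (monotone progress on $\mathcal{B}$ and the per-call move bound) and observing that the $O(n)$ factor from the number of leaf merges, multiplied by the $O(n^2)$ per-merge cost, is what gives the claimed $O(n^3)$.
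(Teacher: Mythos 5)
Your proof is correct and follows essentially the same route as the paper's: appeal to Lemma~\ref{lem:destr} to guarantee each call to \destr\ strictly decreases the number of blocks while making $O(n^2)$ moves, and observe that since $\mathcal{B}$ has $O(n)$ nodes, at most $O(n)$ calls suffice, giving $O(n^3)$ total. One small overclaim: you assert $|\mathcal{B}|$ drops by at least two per call because $\ell$ and its parent cut vertex are absorbed, but if that cut vertex has other children it can remain a cut vertex after the merge, so the guaranteed decrease per call is only one; this is immaterial, since a strict decrease already yields the $O(n)$ bound on the number of iterations that the argument needs.
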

}

\later{
\begin{proof}
While $|\mathcal{B}| > 1$, Phase 2 finds a leaf $\ell \in \mathcal{B}$ and performs \destr($\ell$). 
By Lemma~\ref{lem:destr}, this strictly decreases the number of blocks in $\mathcal{B}$. Since $|\mathcal{B}| \leq n$, after at most $n$ iterations $\mathcal{B}$ will contain a single block, i.e., $G$ is 2-connected. Since $\forall \ell \in G$, $|\ell| \leq n$, by Lemma~\ref{lem:destr}, each iteration takes at most $O(n^2)$ moves.
\end{proof}
}

\subsection{Phase 3: Building the canonical path $P$.}
\label{sec:phase3}

In the final phase, we will show that once the configuration is 2-connected, we can start moving modules onto the end of our path $P$ at a cost of $O(n^2)$ moves per module. 

\begin{lemma}
\label{lem:phase3}
If $G$ is 2-connected, in $O(n^2)$ moves we can produce a 2-free module on an extreme path of $G$ while maintaining the 2-connectivity of $G$.
\end{lemma}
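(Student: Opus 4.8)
The plan is to run essentially the body of a single call of \destr\ (Algorithm~\ref{alg:destroy}) on $G$ itself, the only change being that, since $G$ is already $2$-connected and hence a single block, there is no $\overline{G}$ to bridge to and we simply stop as soon as the ``ascending module'' has become $2$-free. After a suitable rotation/reflection, let $\rho_0$ be the row containing an SW extreme path of $G$ and let $m$ be its topmost module; as in line~\ref{destr:m} of \destr, $m$ is movable. We loop over the three cases of \destr, re-choosing $m$ as the topmost module of the current SW extreme path at each iteration. If $m$ is $2$-free (\emph{Case~1}) we return it after $O(1)$ moves. If $m$ lies in an adjacent trivial $2$-cut, its child $2$-split component has at most two modules, and $O(1)$ pivots of these around $m^\nearrow$ make $m$ $2$-free, as in the base case of \destr. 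If $m$ (or $m^\searrow$) lies in an adjacent nontrivial $2$-cut $\{v_1,v_2\}$ with $m=v_1$ (\emph{Case~2}), then by Observation~\ref{obs:adj2cut} this is the \emph{only} $2$-cut through $m$; we call \destr\ on its child $2$-split component $\ell'$, which satisfies $|V(\ell')|<n$, so by Lemma~\ref{lem:destr} this costs $O(|V(\ell')|^2)=O(n^2)$ moves, creates no new block, and produces a bridge from $\ell'$ that destroys $\{v_1,v_2\}$. By the direction choice in \bridge\ (Lemma~\ref{lem:bridge}) this bridge is not incident to $v_1$, so $m$ neither moves nor loses movability; since $\{v_1,v_2\}$ was its only $2$-cut and no new block is created, $m$ is now $2$-free and still on the (unchanged) SW extreme path, and we return it.

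The remaining case is $m^\nearrow$ empty (\emph{Case~3}), which by $2$-connectivity means $m^\nearrow$ is a pocket. If one of the nontrivial $2$-cuts forbidden by \deflate/\bbup\ (among $\{m^\searrow,m^{\searrow\nearrow}\}$ and the like) is present, we recurse with \destr\ as in Case~2 and finish there. Otherwise one of \deflate$(m^\nearrow)$ and \bbup$(m^\nearrow)$ applies -- they are complementary in the relevant local condition -- and by Lemma~\ref{lem:deflate} it uses $O(1)$ moves and preserves $2$-connectivity: \deflate\ strictly decreases the enclosed empty area (and sometimes already leaves a $2$-free module in $\rho_0$), while \bbup\ pushes an empty position one step toward the top and never downward. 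We then re-choose $m$ and iterate. Termination holds since the enclosed empty area of an $n$-module configuration is $O(n^2)$ and no empty position ever descends; the total move count is $O(n^2)$ by the same perimeter-charging argument as in the analysis of \destr, each boundary section of $G$ being traversed only $O(1)$ times by \deflate/\bbup\ moves before a $2$-free module appears along it (the count can only be smaller here, since we stop upon reaching such a module instead of continuing to bridge).

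The step I expect to be the main obstacle is Case~3: verifying that the \deflate/\bbup\ loop genuinely terminates with a $2$-free module on an extreme path and stays within the $O(n^2)$ budget. I would handle this by importing, essentially verbatim, the two invariants established in the analysis of \destr\ -- ``empty positions never descend'' and ``each perimeter section is traversed $O(1)$ times before a $2$-free ascending module appears'' -- together with the $O(n^2)$ bound on the enclosed area. A secondary point to check is that the single \destr$(\ell')$ call in Cases~2 and~3 introduces no \emph{new} $2$-cut through $m$; this is exactly the ``no new blocks'' guarantee of Lemma~\ref{lem:destr} combined with the ``$v_1$ remains movable'' clause of Lemma~\ref{lem:bridge}.
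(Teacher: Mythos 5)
Your proposal is essentially the same approach as the paper's: run (a truncated form of) \destr\ on $G$ itself and appeal to the complexity analysis of Lemma~\ref{lem:destr} to bound the moves by $O(n^2)$. The paper's actual proof is much terser -- it simply notes that with $\ell = G$ the set $\overline{\ell}$ is empty, so \lbridge\ can never fire and the ascending module always succeeds until a \crew\ forms at an extremal position -- whereas you unfold \destr's three cases explicitly, substituting ``stop once the ascending module is 2-free'' for ``continue until a valid flower holds a \crew.'' Both variants produce a 2-free module on an extreme path in $O(n^2)$ moves, so the proof is correct; the one substantive difference to flag is that the paper's version (which runs to a full \crew\ of three) is what Phase~3 actually consumes when moving modules onto $P$, so the paper's slightly stronger output is the more useful one downstream, even though the lemma as stated is satisfied by your earlier stopping point.

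Two secondary points you should tighten if you keep your formulation. First, in Case~2 the claim that $m$ is 2-free after \destr$(\ell')$ does not follow from ``$v_1$ remains movable'' in Lemma~\ref{lem:bridge} alone (movable is weaker than 2-free); it is established inside the inductive proof of Lemma~\ref{lem:destr} -- ``$m$ becomes 2-free after the recursive call'' -- via Observation~\ref{obs:adj2cut}, so cite that directly. Second, your termination bound for Case~3 via ``enclosed empty area is $O(n^2)$'' is a weaker invariant than what the paper actually charges (it charges \deflate/\bbup\ to perimeter sections, each hit $O(1)$ times per recursion level); using the area argument alone would not give the $O(|V(\ell)|)$ per-level cost needed for the $T_m(|V(\ell)|)=O(|V(\ell)|)+T_m(|V(\ell')|)+T_m(|V(\ell'')|)$ recursion to resolve to $O(n^2)$, so you should rely on the perimeter charge rather than the area bound.
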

\begin{proof}
We apply a subset of operation \destr\ to $G$.
Then, this proof becomes a special case of Lemma~\ref{lem:destr}, where $\ell=G$. 
In \destr, our goal is to bridge between $\ell$ and $\overline{\ell}$ maintaining $\ell$ 2-connected.
Here, $\overline{\ell}$ is empty and \lbridge\ will never be called since there are no obstacles for ascending modules.
Then, we are always able to produce a \crew\ in an extremal position.
Moving the \crew\ to $P$ does not affect 2-connectivity of $G$ by definition.
\end{proof}

Our main theorem as a direct consequence of Lemma~\ref{lem:phase1}, Corollary~\ref{lem:phase2-works} and Lemma~\ref{lem:phase3}.

\begin{theorem}
\label{thm:alg}
Any connected configuration of $n$ hexagonal modular robots can be reconfigured to any other with $O(n^3)$ pivoting moves in the Monkey model, while maintaining connectivity. 
\end{theorem}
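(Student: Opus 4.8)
The plan is to reduce the problem to reaching a single fixed target and then exploit reversibility. Since every pivoting move in the Monkey model is undone by another pivoting move, it suffices to show that an arbitrary connected configuration $C$ of $n$ modules can be brought to the \emph{canonical configuration} --- the configuration whose \cg{} is a path in which each module is adjacent only to modules directly above and/or below it --- using $O(n^3)$ connectivity-preserving moves. Granting that, to reconfigure from $C_1$ to $C_2$ I would concatenate the sequence from $C_1$ to the canonical configuration with the reversal of the sequence from $C_2$ to it; the total is still $O(n^3)$ moves and connectivity holds throughout, since it holds in each half.

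To reach the canonical configuration I would run the three phases in order. First, Phase~1 (Lemma~\ref{lem:phase1}) eliminates all trivial leaves of the \cg{} in $O(n^2)$ moves, after which every leaf block of the block tree $\mathcal{B}$ has at least three modules and --- as that phase promises --- no later phase reintroduces a degree-$1$ vertex. Next, Phase~2: while $\mathcal{B}$ is not a single node I would pick a leaf $\ell$ and call \destr$(\ell)$; by Lemma~\ref{lem:destr} and Corollary~\ref{lem:phase2-works} each call costs $O(n^2)$ moves and strictly decreases $|\mathcal{B}|$ without creating new blocks, so after $O(n)$ calls (since $|\mathcal{B}| \le n$) the configuration $G$ becomes $2$-connected, for a total of $O(n^3)$ moves.

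Finally, Phase~3 builds the canonical path $P$. Starting with $P$ empty and $G$ the whole $2$-connected configuration, I would repeatedly apply Lemma~\ref{lem:phase3} to produce, in $O(n^2)$ moves and while keeping $G$ $2$-connected, a $2$-free module on an extreme path of $G$; such a module can be detached (leaving $G$ connected and $2$-connected) and walked to the top of the configuration, where it is attached directly above the current top of $P$, matching the canonical shape. Each round transfers one module from $G$ to $P$ at cost $O(n^2)$, so after $n$ rounds the configuration is exactly the canonical configuration, at cost $O(n^3)$. Adding the three phases yields $O(n^3)$ moves in total, with connectivity maintained everywhere by the respective lemmas, which is the claim.

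The genuinely hard work is already packed into Lemma~\ref{lem:phase1}, Corollary~\ref{lem:phase2-works}, and Lemma~\ref{lem:phase3} (and, underneath the latter two, \destr\ with its sub-procedures), so the remaining obstacle is mostly bookkeeping. I would need to check carefully that in the Phase~3 loop the freshly produced $2$-free module can always be routed onto the growing path $P$ without ever breaking the $2$-connectivity of the shrinking $G$ or perturbing the already-placed part of $P$, and that the geometry at the $G$--$P$ interface always permits the required attachment. Once that invariant is in place, the $O(n^3)$ bound is immediate by summing $O(n)$ rounds of $O(n^2)$ moves on top of the $O(n^3)$ spent in Phase~2 and the $O(n^2)$ in Phase~1.
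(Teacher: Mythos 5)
Your proposal is correct and follows essentially the same approach as the paper: reduce to reaching the canonical path via reversibility, then chain Phase~1 (Lemma~\ref{lem:phase1}, $O(n^2)$ moves), Phase~2 (Corollary~\ref{lem:phase2-works}, $O(n^3)$ moves), and Phase~3 (Lemma~\ref{lem:phase3}, $n$ rounds of $O(n^2)$ moves each). The bookkeeping concern you flag about routing the $2$-free module onto $P$ without disturbing $2$-connectivity is exactly what Lemma~\ref{lem:phase3} and the \crew/\bridge machinery are designed to guarantee, so the paper's proof addresses it there rather than in the theorem's proof.
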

\iffull
\begin{proof}
A direct consequence of Lemma~\ref{lem:phase3} is that given a configuration inducing a 2-connected \cg, $O(m^3)$ moves in the Monkey model are sufficient to transform it into a canonical path $P$.
Then, by Lemma~\ref{lem:phase1} and Corollary~\ref{lem:phase2-works}, within $O(m^3)$ moves we can go from any configuration to $P$.
Since the moves are reversible, we can get reconfigurations between any two configuration.
\end{proof}
\fi

\section{PSPACE-hardness reductions}
\label{sec:PSPACE}
\ifabstract
\later{
\section{Appendix for Section~\ref{sec:PSPACE} (PSPACE-hardness reductions)}
}
\fi

In this section we show PSPACE-hardness for all other models. Our reduction follows the framework introduced in~\cite{motionplanning2}. We reduce from a reachability problem: given an agent that moves along a graph-like structure whose traversability changes in response to the agent's actions, is there a series of moves which takes the agent from a start to a target location.

\begin{theorem}\label{theo_hexrestrictedhard}
Given two configurations of $n$ hexagonal modules, it is PSPACE-hard to determine if we can reconfigure from one to the other using only restricted moves.
\end{theorem}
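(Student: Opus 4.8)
The plan is to build a reduction from the reachability problem for reversible, deterministic gadgets following the framework of~\cite{motionplanning2}, adapted in the form described in Section~\ref{sec:Balanced}. At a high level, I would encode a gadget-based motion planning instance into a configuration of hexagonal modules so that (i) almost all of the modules form a rigid ``wall'' that cannot move under restricted pivots, (ii) a small set of designated modules plays the role of the ``agent'' (robot) traveling along corridors carved out of the wall, and (iii) local sub-configurations of the wall act as \emph{gadgets} whose internal state is recorded by the positions of a few movable modules, and whose traversal by the agent is only possible in directions consistent with the gadget's current state. The target configuration is the one in which the agent modules have been delivered to the location corresponding to the target location of the motion-planning instance (with the rest of the configuration returned to a canonical arrangement), so that reconfiguration is possible iff the agent can reach its target.

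The key steps, in order, would be: first, fix the specific gadget set from the balanced/restricted framework of Section~\ref{sec:Balanced} that is already known to give PSPACE-hardness of reachability, and recall that this variant sharply limits which directions each edge can be traversed — this is exactly what makes a gadget implementation tractable in such a constrained model as restricted hexagon pivots. Second, design a \emph{wire gadget}: a corridor of empty cells in an otherwise solid block of hexagons, wide enough that the agent (a short chain of modules) can shuttle along it via restricted moves but narrow enough that no wall module can move and the agent cannot escape sideways or leave stray modules behind; I must check the free-space requirements of the restricted move (Figure~\ref{fig:hex_moves}) carefully against the corridor geometry. Third, design the \emph{branching/turn hardware} and then the state-holding gadgets: each gadget is a compact arrangement where one or two ``latch'' modules sit in one of two stable pockets, and the agent passing through pushes (via a sequence of restricted pivots) the latch from one pocket to the other, simultaneously opening one corridor and closing another — I would verify by explicit case analysis of the local moves that (a) the latch can only be toggled by an agent actually traversing the gadget, (b) traversal is blocked in exactly the disallowed directions, and (c) no move sequence corrupts the gadget or the surrounding wall. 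Fourth, argue the wall is globally rigid: the only modules that can ever move under restricted pivots are the agent modules inside corridors and the bounded latch modules inside gadget pockets; everything else has all neighbors filled in a way that forbids a restricted pivot while preserving connectivity. Fifth, show the reduction is polynomial: each gadget and wire uses $O(1)$ or $O(\text{length})$ modules and the whole construction has size polynomial in the motion-planning instance. Finally, prove the two directions of correctness: a traversal sequence in the gadget instance maps to a reconfiguration sequence move-by-move, and conversely any reconfiguration sequence, because the only degrees of freedom are agent-in-corridor and latch-in-pocket, projects to a legal traversal in the gadget world — here connectivity maintenance (the single-backbone condition) is what pins the agent to always remain a connected chain attached to the wall, preventing cheating.

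The main obstacle I expect is the gadget design under the severe local constraints of the restricted hexagon model: unlike models with light free-space requirements, a restricted pivot needs specific empty cells around both the pivoting module and its anchor, and connectivity must be preserved at \emph{every} intermediate step, so building a reliable, toggleable, direction-restricting gadget out of hexagons — and proving by exhaustive local case analysis that it has no unintended moves — is delicate. This is precisely why the strengthened, direction-limited version of the framework from Section~\ref{sec:Balanced} is valuable: by reducing the number of traversal directions each wire must support, it cuts down the number of gadget behaviors I need to realize and the number of spurious-move cases I must rule out. I would therefore lean on that strengthened framework to keep the gadget zoo as small as possible, likely implementing a single universal gadget (e.g., the locking-$2$-toggle or a similar self-closing door variant) together with crossover and merge hardware, and then spend most of the proof on the figures and the local move-by-move verification that the wall is rigid and the gadget behaves exactly as specified.
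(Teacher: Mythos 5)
Your high-level plan is sound and matches the paper's strategy in spirit---both reduce from the 1-toggle-protected motion planning framework of Section~\ref{sec:Balanced}---but your proposed geometric realization departs from the paper's and has a concrete problem you haven't addressed. You propose to carve corridors out of ``an otherwise solid block of hexagons'' and let an agent chain shuttle inside them. In the restricted hexagon model, a solid block cannot be rigid: every convex corner of the block's outer boundary is a module with only a few consecutive filled neighbors, enough empty surrounding cells to satisfy the restricted-move free-space requirement, and whose removal does not disconnect the configuration. Such modules would pivot freely, and the instance would have degrees of freedom completely unrelated to the encoded gadget state. The paper's construction avoids this by \emph{not} using a solid block: wires are thin one-module-wide segments, every end and bend is terminated by a small \emph{spiral} (a locked corner) whose tip module is immovable because its few neighbors occupy exactly the cells needed for a pivot, and the agent is exactly two modules traveling along the \emph{outside} of a wire rather than in a carved interior corridor.

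Two further mechanisms that the paper needs, and that your plan either omits or underestimates, are (i) the cycle-breaking and solo-agent machinery, and (ii) the precise agent size. Much of the paper's immovability analysis rests on modules being cut vertices of the contact graph; that argument collapses inside any cycle, so the construction inserts a \emph{wire cut} gadget on one edge of every cycle of the connection graph, and then has to reason about the ``global cycle'' temporarily created when the agent bridges across it, which can liberate a single spare module (a \emph{solo agent}) elsewhere. A per-wire $1$-gap is then needed to confine that solo agent. None of this is hinted at in your sketch, and in your solid-block picture the contact graph is massively 2-connected, so the cut-vertex argument that drives the paper's case analysis is unavailable---you would have to establish rigidity by purely local neighbor-packing arguments everywhere, including at the block's outer boundary, which is exactly where it fails. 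Finally, the agent must be exactly two modules: the $2$-gap primitive (three pairwise-adjacent empty cells) is designed so that two modules cannot cross but three can, which is what forces the two agent modules to cooperate and what prevents a single stray module from corrupting a gadget. Your phrasing ``a small set of designated modules'' leaves this essential parameter unspecified, and the correctness arguments depend on getting it right. In short: your reduction target and the overall ``agent + gadgets + rigid surroundings'' architecture are correct, but the solid-block realization does not work as stated, and the thin-wire/spiral/wire-cut/$2$-gap/two-module-agent machinery is where nearly all of the actual proof content lives.
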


In Section~\ref{sec:prelim} we describe the reachability problem introduced in \cite{motionplanning2} and the pieces we need to simulate to create the reduction. We introduce a few modifications to this problem and show it remains PSPACE-hard in Section~\ref{sec:Balanced}. In Section~\ref{hexmodel} we discuss how to simulate each of the gadgets with hexagonal modules. Reduction for other models are in Section~\ref{sec_squarereduc}.

\subsection{Preliminaries} \label{sec:prelim}

We reduce from a variation of the $1$-player motion planning with the locking $2$-toggle (L2T) \cite{motionplanning2}. This restricted variant is called 1-toggle-protected motion planning with the locking 2-toggle and described in Section~\ref{sec:Balanced}. In the \emph{1-player motion planning} problem we want to decide whether an agent has a series of moves which will take it to a target location. The constructs we use in this problem are \emph{gadgets} which have \emph{locations} (entrances and exits), \emph{states}, and \emph{transitions}. The agent is always at some location. Transitions are an ordered pair of state and location pairs. If an agent is at some specific gadget location and the gadget is in a state matching the first pair, then the agent can move to the location in the second pair which changes the state of the gadget to the state in the second pair (see Figure~\ref{fig:basic gadgets}). A \emph{system of gadgets} is a set of gadgets and \emph{connections} between locations in those gadgets. The agent can freely move between locations which have connections. Some gadgets transitions form a matching - we call these matched pairs \emph{tunnels}.


\begin{figure}
  \centering
  \begin{subfigure}{0.5\textwidth}
    \centering
    \includegraphics[scale=0.75]{PL2T_labeled}
    \caption{The locking 2-toggle gadget (L2T). \iffull In the top state 3, you can traverse either tunnel going down, which blocks off the other tunnel until you reverse the initial traversal.\fi}
    \label{fig:L2T}
  \end{subfigure}\hfil\hfil
  \begin{subfigure}{0.4\textwidth}
    \centering
    \includegraphics[scale=0.75]{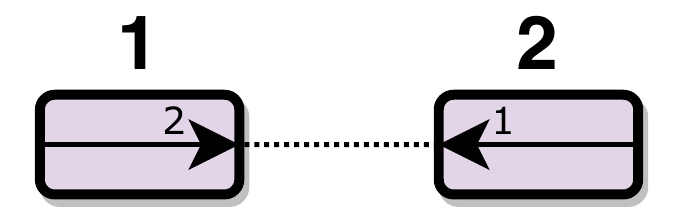}
    \caption{ The 1-toggle gadget. \iffull Traversing the tunnel reverses the direction that it can be traversed.\fi}
    \label{fig:1-toggle}
  \end{subfigure}
  \caption{Examples of reversible, deterministic gadgets. Purple boxes are states of the gadget, labeled with a number outside the box. Transitions are arrows from one location to another with a small number indicating the state it transitions to. Dotted lines help visualize which states are connected by transitions in the gadget.}
  \label{fig:basic gadgets}
\end{figure}



In order for us to reduce from this problem, we need to represent the agent, the gadgets (specifically a locking 2-toggle and a branching hallway gadget), connections between locations, and a goal location with modular robots. 
In order to reduce from 1-toggle-protected motion planning with the locking 2-toggle we need to create the following constructions: 

\begin{itemize}
    \item \emph{Wires} which allow the modules to travel between parts of the configuration simulating the connection graph edges which allow the agent to travel between locations.
    \item \emph{Branching hallways} which connect three wires together and allow the modules to travel down any of them.
    \item \emph{Locking 2-toggle} which is a 3 state, 4 location gadget shown in Figure~\ref{fig:L2T}. The gadget has two tunnels which are both traversable in state 3. After taking either transition, the only option is returning back and restoring the gadget to it's prior state.
    \item \emph{Win gadget} which can only be reconfigured if two additional modules reach it, simulating the goal location in the motion planning problem.
\end{itemize}


\iffull
In the following sections we explain how to build each of these constructions. Our four reductions show that deciding whether one can reconfigure between two module configurations is PSPACE-complete in the remaining four models. 

\fi

\subsection{1-toggle-protected motion planning}
\label{sec:Balanced}

\ifabstract
\later{
\subsection{1-toggle-protected motion planning proofs}
\label{sec:BalancedProofs}
}
\fi

In this section we strengthen the result from \cite{motionplanning2} to show that motion planning with reversible, deterministic gadgets with interacting tunnels is PSPACE-complete even when connections can only be traversed as though they are 1-toggles. We will consider only branchless systems of gadgets, but we will allow the branching hallway gadget. \iffull This coincides with the original model in \cite{Toggles_FUN2018}. \fi In a \emph{branchless} system of gadgets, the connections between locations form a matching \cite{iogadgets}. The \emph{branching hallway gadget} is a 1-state, 3-location gadget with traversals among all three pairs of locations.

An instance of \emph{1-toggle-protected motion planning} with a set of gadgets $G$ is an instance of branchless $1$-player motion planning with $G$ as well as the branching hallway gadget and the 1-toggle, where one end of every connection is a location on a 1-toggle. Intuitively, this requires that every edge in the connection graph acts as a 1-toggle.

\later{


We will now show that, with reversible deterministic gadgets, the motion planning problem remains hard even in the 1-toggle-protected case.

A  gadget is \emph{reversible} if for every transition there exists a transition from the destination state and location to the prior state and location. A gadget is \emph{deterministic} if every state and location pair has at most one transition from it.

First notice that a system of gadgets formed by connecting a locking 2-toggle or a the 1-toggle to another 1-toggle with a matching orientation behaves the same as the original 2-toggle or 1-toggle. In general we call gadgets which have the same behavior when connected to 1-toggles \emph{1-toggle agnostic}.

}

\both{
\begin{theorem}\label{thm:balanced_pspace}
1-toggle-protected planar 1-player motion planning problem with a reversible, deterministic, on-tunnels gadget with interacting tunnels is PSPACE-complete.
\end{theorem}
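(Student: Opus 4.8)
\textbf{Membership.} Every 1-toggle-protected instance is also an instance of unrestricted planar 1-player motion planning, and the latter is in PSPACE: a configuration is the agent's location together with the tuple of all gadget states, which has polynomial size, so a nondeterministic machine can guess the sequence of transitions one at a time while maintaining the current configuration, using polynomial space; hence the problem is in NPSPACE, which equals PSPACE by Savitch's theorem.

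\textbf{Hardness: overall strategy.} The plan is to re-run the reduction of \cite{motionplanning2} inside the 1-toggle-protected model. That reduction proves PSPACE-hardness of unrestricted motion planning with an arbitrary reversible, deterministic, on-tunnels gadget $G$ with interacting tunnels in two layers: (a) motion planning with the locking 2-toggle (L2T) together with branching hallways is PSPACE-hard via the framework of \cite{Toggles_FUN2018}; and (b) $G$ is used to simulate the L2T, after which layer (a) is invoked. We would carry out each layer while forcing every connection to have a 1-toggle end, and argue that the reachability relation is unchanged.

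\textbf{The L2T layer.} Here we use the fact recorded before the statement that the L2T is 1-toggle agnostic: onto each port of each L2T appearing in the \cite{Toggles_FUN2018}-style instance we splice a 1-toggle oriented to permit exactly the direction in which the incident L2T tunnel is currently traversable. Since traversing an L2T tunnel flips the 1-toggles at both of its ends in lockstep with the L2T's state transition, and since an L2T port that has no transition in the current state is insensitive to the orientation of its 1-toggle, this splicing preserves reachability. A 1-toggle inserted into any other connection can always be crossed back immediately after being crossed, so it is harmless except insofar as it is attached to a stateful gadget that might trap the agent on one side. The one genuinely delicate gadget is the branching hallway, which is \emph{not} 1-toggle agnostic: protecting its three ports makes the incident connections effectively one-way. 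We would resolve this either by inspecting the \cite{Toggles_FUN2018} reduction and observing that the agent traverses each branching hallway only in a fixed entrance/exit pattern, so that a fixed per-port orientation of the protecting 1-toggles suffices; or, more robustly, by substituting for each branching hallway a constant-size branchless sub-system whose external behavior, with the protecting 1-toggles in place, is exactly that of a fully symmetric branching hallway, which requires showing that no reachable internal state of the substitute ever blocks a branch.

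\textbf{The $G$ layer.} For the second layer we would argue that the L2T-simulation of \cite{motionplanning2} built from copies of $G$ can itself be taken 1-toggle agnostic: its externally observed behavior coincides with an L2T's, so attaching correctly oriented 1-toggles to its external ports does not change that behavior. It then remains to verify that the internal connections of the simulation can also be 1-toggle-protected without affecting its external behavior; internal plain connections are harmless as above, and any internal branching hallways are handled exactly as in the L2T layer (or one checks that the simulation of \cite{motionplanning2} uses none). Substituting this protected simulation for every L2T in the instance produced by the first layer yields, in polynomial time, a 1-toggle-protected instance built only from copies of $G$, branching hallways, and 1-toggles, whose target location is reachable iff the original instance of \cite{Toggles_FUN2018} is a yes-instance. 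Together with membership, this proves PSPACE-completeness.

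\textbf{Main obstacle.} The hard part is the branching hallway: one must either show that the existing reduction never uses it incompatibly with 1-toggle protection, or build a faithful 1-toggle-protected substitute for it, and in either case one must maintain globally, across all composite gadgets and every reachable configuration, an orientation of the protecting 1-toggles consistent with the current states and preserved by every legal transition. This global consistency check is where most of the technical effort lies.
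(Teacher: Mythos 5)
Your strategic skeleton matches the paper's: take the reduction of~\cite{motionplanning2}, insert 1-toggles on every connection (and branching hallways at every branch point), observe that connections whose far end is a locking 2-toggle are ``1-toggle agnostic,'' and identify the branching hallways / higher-degree connections as the crux. You also correctly supply the (routine) PSPACE membership argument that the paper leaves implicit. However, what you have written is a plan, and the part you defer is precisely the body of the paper's proof. The paper's argument rests on a one-directional simplification you do not exploit: inserting 1-toggles only \emph{removes} options from the agent, so one never needs to show that reachability is ``unchanged'' or that external behavior ``coincides exactly''---only that the intended solution sequence of the original reduction still exists. With that reduction-of-burden in hand, the paper then inspects every place in the NCL-based construction of~\cite{motionplanning2} where connections of degree greater than three occur and verifies the intended traversals case by case: the mutual simulations of the three planar locking-2-toggle variants (parallel, antiparallel, crossing), the A/BA crossover, the planar AND and OR vertex gadgets (the OR gadget requires tracking the orientations of four internal 1-toggles across unlock/lock cycles), the edge gadget, the rooted spanning-tree structure used to access edges (correct because the agent always returns to the root, resetting the 1-toggles on the path), and finally the simulation of the locking 2-toggle by the arbitrary reversible deterministic gadget. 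None of this appears in your proposal, and the theorem is not proved without it.

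A second, more pointed problem: your ``more robust'' fallback for the branching hallway---a constant-size 1-toggle-protected sub-system ``whose external behavior \ldots is exactly that of a fully symmetric branching hallway''---is impossible in principle. Once the three ports are protected by 1-toggles (as the definition of 1-toggle-protection demands), each port is one-way at any moment, so after a traversal from port $A$ to port $B$ the agent cannot re-enter at $A$; the protected branching hallway genuinely has six states (the paper's Figure~\ref{fig:1-toggle-hallway}) and cannot emulate the stateless symmetric gadget. So your only viable route is your first alternative, namely verifying that the reduction's actual usage patterns are compatible with some consistent orientation of the protecting 1-toggles---which is exactly the case analysis above that remains to be done. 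Also note a minor framing slip: the first layer of~\cite{motionplanning2} is a reduction from planar Nondeterministic Constraint Logic, not from the framework of~\cite{Toggles_FUN2018}; the planarity requirement in the theorem statement is why the three planar L2T variants and the crossover gadget must be handled at all, an issue your plan never mentions.
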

}
\ifabstract A proof is given in Appendix~\ref{sec:BalancedProofs}. \fi
 
\later{
\begin{proof}
To prove this result we will take the PSPACE-completeness construction from \cite{motionplanning2}, replace all branching connections with branching hallway gadgets, add 1-toggles between all connections, and then show that this new construction is still a correct reduction. Conveniently, this modification strictly reduces the agent's mobility with respect to the original reduction. Thus we only need to show that correct usage still exists after the modification.

The original reduction is from planar Non-deterministic Constraint Logic (NCL)~\cite{GPCBook09}. NCL is a reconfiguration problem on a weighted, directed graph where vertices have constraints which require a minimum amount of weight be pointed towards them at any given point in time. Moves consist of flipping the orientation of an edge, subject to the vertex constraints, and the problem asks if there exists a series of moves which allows a target edge to be flipped. This problem is PSPACE-complete and remains so even when the graph is planar, the edge weights are restricted to be either $1$ or $2$ (called ``red" and ``blue" edges respectively), the vertices all have the same constraint of at least $2$ total weight, and vertices have either exactly three adjacent weight $2$ edges or exactly one adjacent weight $2$ edge and two adjacent weight $1$ edges (called ``OR vertex" and ``AND vertex" respectively).

From the construction in \cite{motionplanning2} we will need to examine the crossover gadget, the edge gadget, the planar OR gadget, the planar AND gadget, and the structure for choosing which edge to flip.

First, observe that a branchless connection with one end connected to a locking 2-toggle will not change behavior if a 1-toggle is inserted into that connection. We align the 1-toggle with the tunnel of the locking 2-toggle to which it is connected. If we traverse that tunnel then both the direction of the 2-toggle and the 1-toggle flip, staying the same direction. If we cross the other tunnel in the locking 2-toggle, then we cannot cross the tunnel connected to the 1-toggle. When traversability is restored the 1-toggle and locking 2-toggle are once again pointed in the same direction.

If we have a degree 3 connection, this is replaced by a branching hallway connected to the three original locking 2-toggles. Again, these locking 2-toggles will each restrict the location they are connected to to either not be traversable or behave as a 1-toggle. Adding an aligned 1-toggle across these connections will not change the traversability.

Thus we only need to be concerned with higher degree connections between systems of gadgets where we must connect multiple branching hallways together. We will now analyze the parts of the reduction with connections of degree higher than three. 

\paragraph*{Locking 2-toggle self simulation}
In the planar setting, there are three types of locking 2-toggles based on whether the tunnels cross, run parallel, or run anti-parallel. One major aspect of planarizing is showing that the three planar types of locking 2-toggles---parallel, anti-parallel, and crossing---simulate each other planarly. These gadgets are shown in Figure~\ref{fig:L2T types}, and a cycle of simulations are show in Figures~\ref{fig:APL2T sim CL2T}, \ref{fig:CL2T sim PL2T}, and \ref{fig:PL2T sim APL2T 1-toggle}. The only constructions which need to be altered because of high degree connections are are Figure~\ref{fig:APL2T sim CL2T} and Figure~\ref{fig:PL2T sim APL2T 1-toggle}.

\begin{figure}
  \centering
  \begin{subfigure}{.3\textwidth}
      \centering
      \includegraphics[scale=.65]{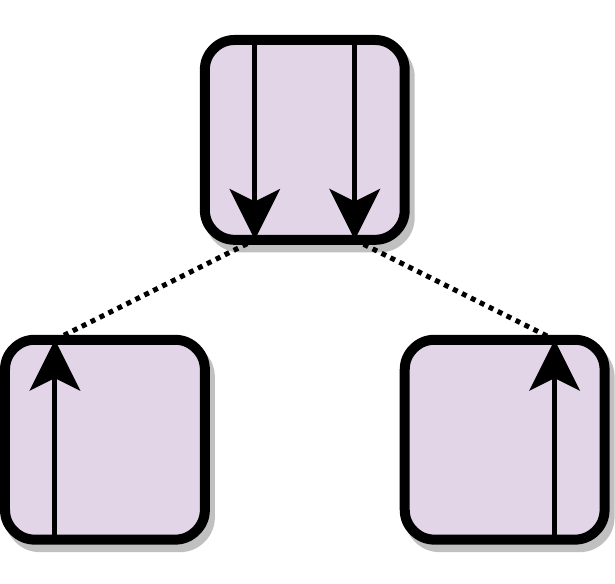}
      \caption{A parallel locking 2-toggle (PL2T).}
      \label{fig:PL2T}
  \end{subfigure}
  \begin{subfigure}{.3\textwidth}
      \centering
      \includegraphics[scale=.65]{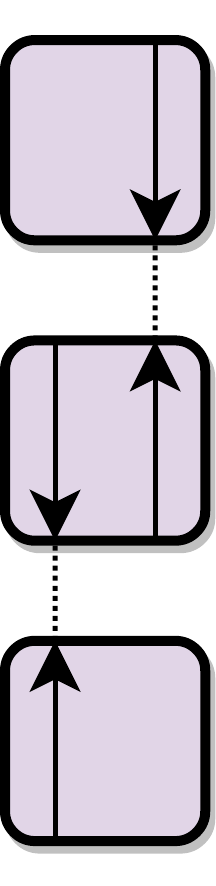}
      \caption{An antiparallel locking 2-toggle (APL2T).}
      \label{fig:APL2T}
  \end{subfigure}
  \begin{subfigure}{.3\textwidth}
      \centering
      \includegraphics[scale=.65]{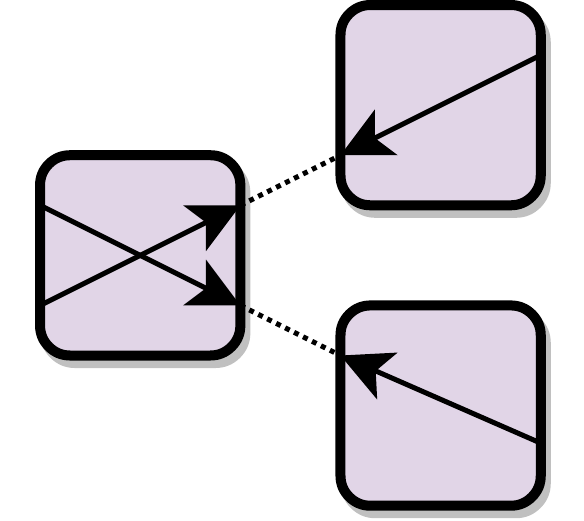}
      \caption{A crossing locking 2-toggle (CL2T).}
      \label{fig:CL2T}
  \end{subfigure}
  \caption{State diagrams of the three types of locking 2-toggles in planar mazes. Purple boxes are different states with arrows as transitions. Since the gadgets are reversible and deterministic, we can denote the state changes of the transitions with the dotted lines.}
  \label{fig:L2T types}
\end{figure}

\begin{figure}
  \centering
  \begin{minipage}{0.43\linewidth}
    \centering
    \includegraphics[scale=0.65]{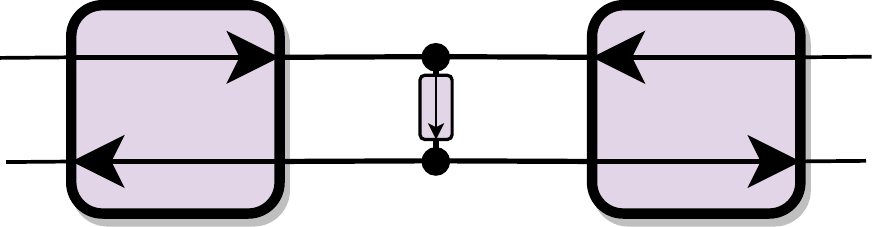}
    \caption{APL2T simulating CL2T.}
    \label{fig:APL2T sim CL2T}
  \end{minipage}\hfil\hfil
  \begin{minipage}{0.43\linewidth}
    \centering
    \includegraphics[scale=0.65]{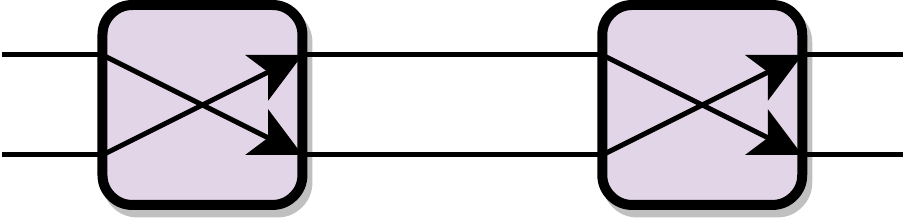}
    \caption{CL2T simulating PL2T. The single 1-toggle suffices for both of the upper to lower traversals in the simulation of CL2T by APL2T.  This figure shows the middle state with two transitions.}
    \label{fig:CL2T sim PL2T}
  \end{minipage}
\end{figure}



\begin{figure}
  \centering
  \includegraphics[scale=0.75]{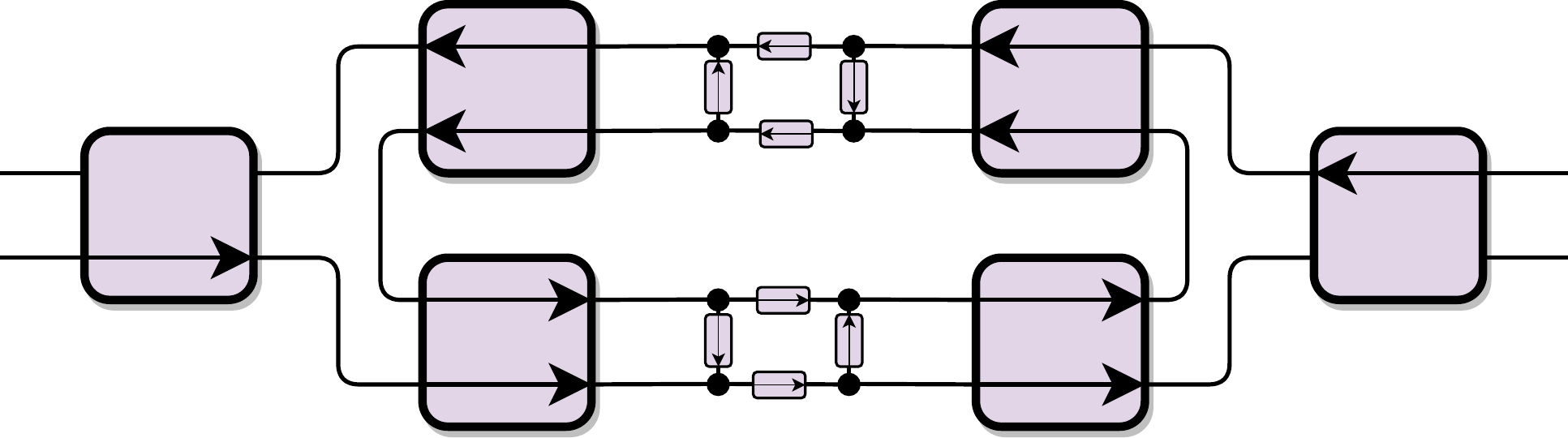}
  \caption{Inserting 1-toggles to separate the degree-4 connections in the simulation of CL2T by APL2T.  This figure shows the middle state with two transitions.}
  \label{fig:PL2T sim APL2T 1-toggle}
\end{figure}

\paragraph*{Crossover gadget}
Now that we have access to the three types of locking 2-toggle, the actual construction of the A/BA crossover does not require any connections higher than degree 3. The construction and behavior of an A/BA crossover is shown in Figure~\ref{fig:A/BA crossover}.

\begin{figure}
  \centering
  \begin{subfigure}{.45\textwidth}
    \centering
    \includegraphics[scale=.75]{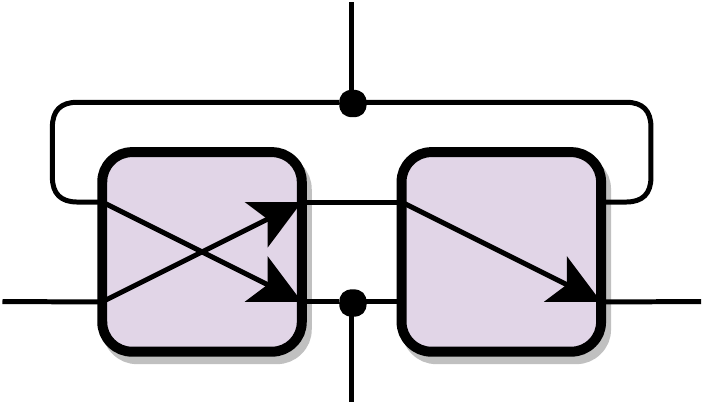}
    \caption{Simulating an A/BA crossover using CL2Ts.}
    \label{fig:CL2T sim A/BA}
  \end{subfigure}
  \begin{subfigure}{.45\textwidth}
    \centering
    \includegraphics[scale=.75]{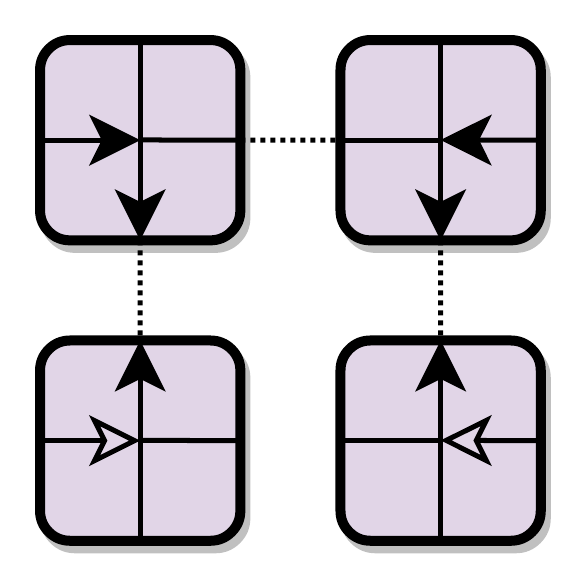}
    \caption{A state diagram and notation for the A/BA crossover.}
    \label{fig:A/BA notation}
  \end{subfigure}
  \caption{An A/BA crossover gadget: the agent can traverse top to bottom (A), or traverse left to right (B) and then top to bottom. Thinking of the gadget as a crossing pair of 1-toggles, the vertical 1-toggle is always traversable, and the horizontal 1-toggle is traversable when the vertical one is pointing down.}
  \label{fig:A/BA crossover}
\end{figure}

\paragraph*{Planar OR and AND gadgets}
The planar AND gadget, as seen in Figure~\ref{fig:planar NCL AND vertex} has connections of degree at most 3. 

The planar OR gadget, shown in Figure~\ref{fig:planar NCL OR vertex} is more complicated because we have multiple pairs of edges to choose to lock into place pointed into the vertex. However, at any point in time only one edge needs to be locked pointing towards the gadget giving us enough flexibility.

When A is locked, we have toggle $1$ pointed to the right and toggle $2$ pointed to the left and both toggles $3$ and $4$ pointed to the left. When C is locked, we still have toggles $3$ and $4$ pointed to the left but toggles $1$ and $2$ are reversed. The agent unlocking A and locking C will flip toggles $1$ and $2$ and will not effect toggles $3$ and $4$ giving us the configurations desired. Unlocking C and locking A is symmetric. 

When locking either A or C and unlocking B, we have a choice of exiting along the top path through toggles $1$ and $3$ or along the bottom path through toggles $2$ and $4$. Either pathway is fine, as the next time we visit the vertex we will begin by unlocking B by going over that same path. This leaves both $1$ and $2$ pointed outwards so exiting through either A or C is possible and leaves the system in the desired configuration.

\begin{figure}
  \centering
  \begin{subfigure}{.4\textwidth}
    \centering
    \includegraphics[scale=.74]{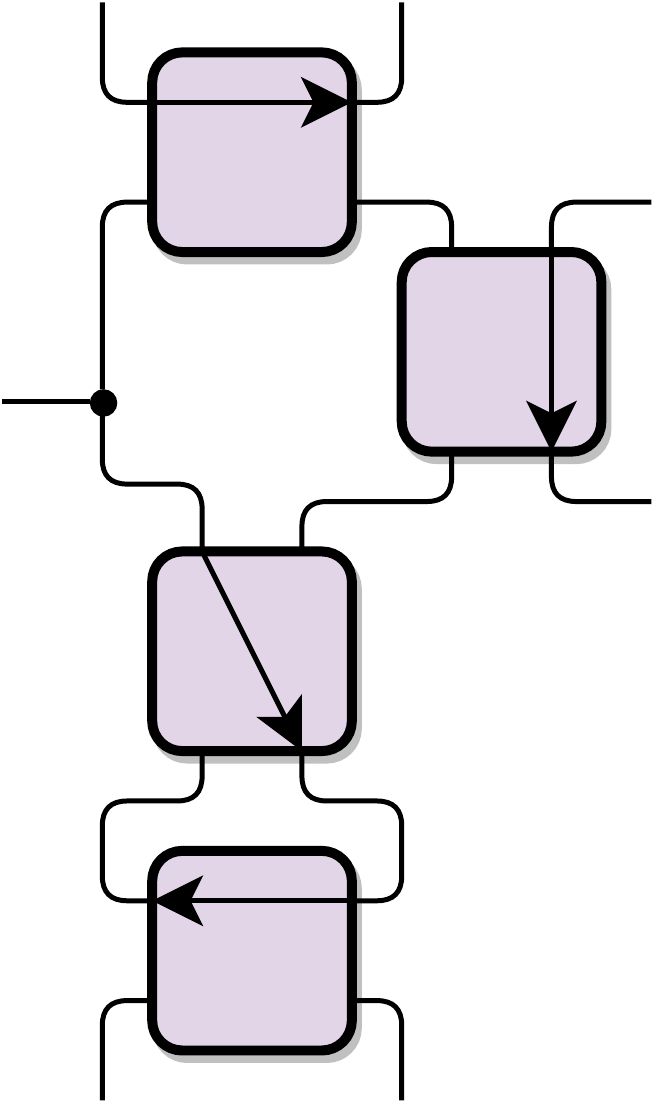}
    \caption{An AND vertex for planar graphs. Currently the weight-2 edge, connected at the bottom PL2T, is directed towards the vertex and locked, and both weight-1 edges are directed away. If the weight-1 edges become directed towards the vertex, the module can visit the vertex gadget and traverse a loop through all three PL2Ts, locking the weight-1 edges and unlocking the weight-2 edge.}
    \label{fig:planar NCL AND vertex}
  \end{subfigure}\hfil\hfil
  \begin{subfigure}{.45\textwidth}
    \centering
      \includegraphics[scale=0.6]{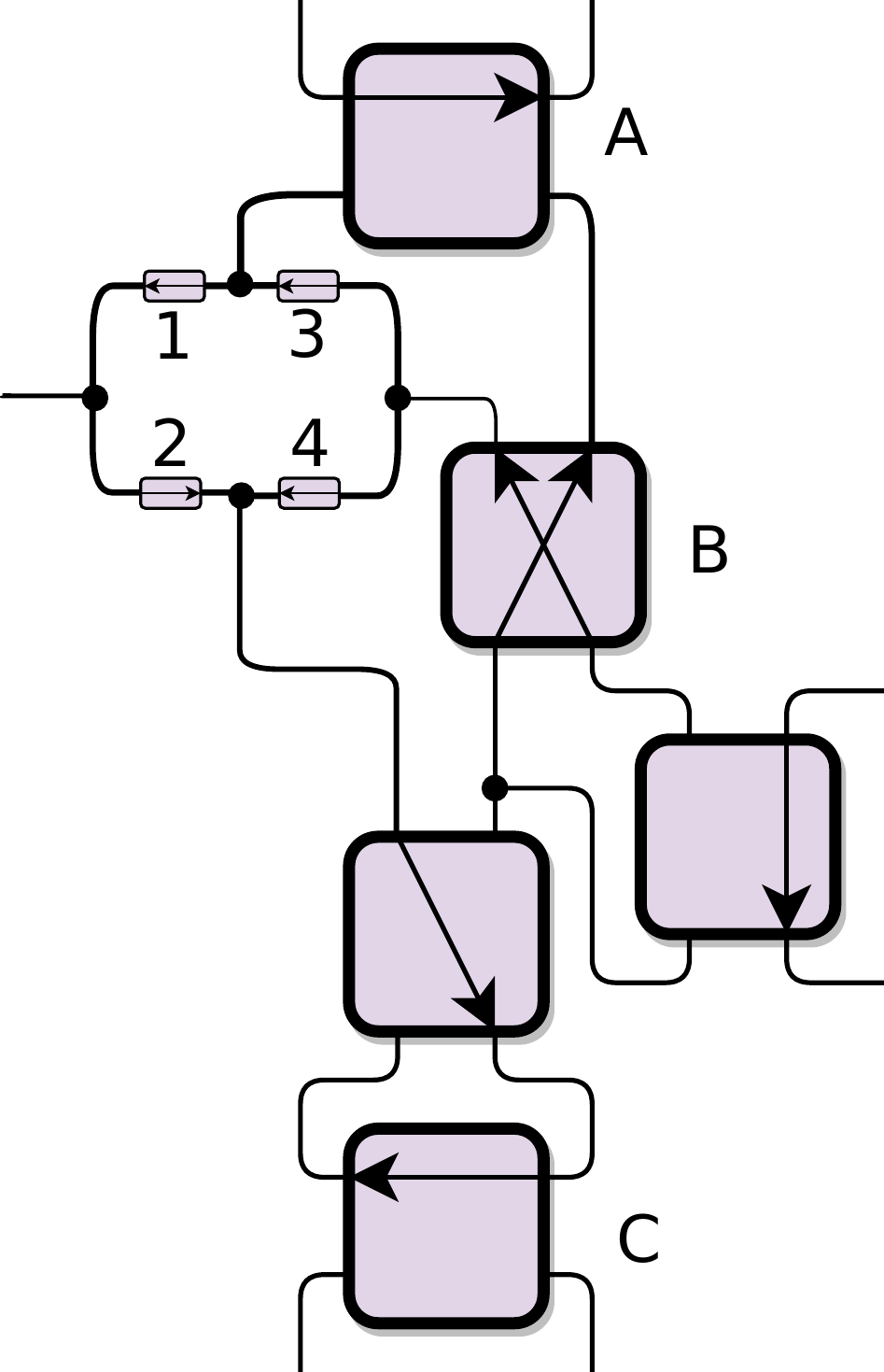}
  \caption{The augmented planar NCL OR gadget where edge C is currently locked pointed into the vertex, and the other edges are directed away. If multiple edges are ever directed towards the vertex, the robot can visit the vertex gadget, unlock the locked edge, and lock another edge.}
    \label{fig:planar NCL OR vertex}
  \end{subfigure}
  \caption{NCL vertex gadgets for planar graphs. In each gadget, each of the three PL2Ts is also part of an edge gadget. The agent enters at the line on the left, called the \emph{entrance}, traverses loops that enforce the NCL constraints, and then leaves at the entrance.}
  \label{fig:planar NCL vertex gadgets}
\end{figure}


\paragraph*{Accessing NCL edges}
In the original construction, we pick a rooted spanning tree of the dual of the planar NCL instance and use the edge-with-crossing gadget shown in Figure~\ref{fig:planar NCL edge gadget 2} for edges which are crossed by our spanning tree. The edge and vertex gadgets are accessible to at least one face, and are connected to that vertex of the spanning tree. When we use this tree to access an edge or vertex we can start at the root, flip the edge or vertex gadget, and return to the root ensuring that we reset any 1-toggles along the tree path. Since we don't need to worry about cycles, it is easy to replace the high degree nodes in the tree with multiple degree 3 nodes.


\begin{figure}
  \centering
  \includegraphics[scale=.75]{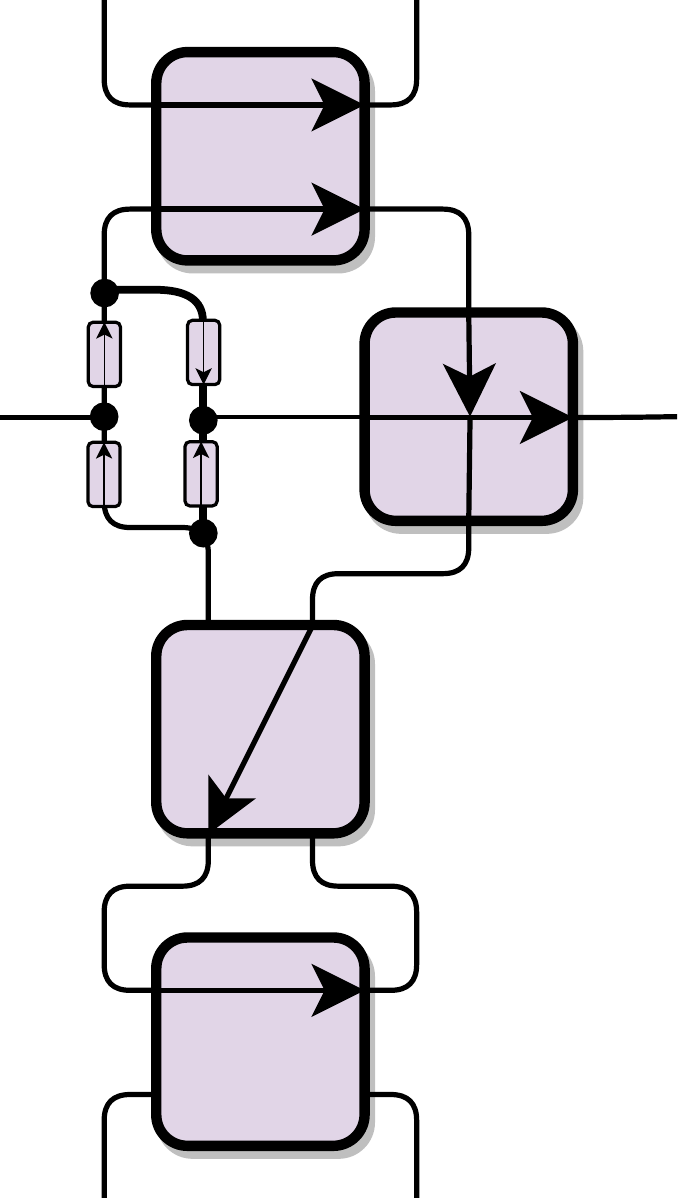}
  \caption{An edge gadget for planar graphs, currently unlocked and directed up. 
  We call the line on the left the \emph{entrance} and the line on the right, on the other side of the A/BA crossover, the \emph{exit}. Toggles have been added to the planar edge with crossing. Note that we will always cross back over the edge before flipping the edge, and thus it is not a problem that a single left to right crossing will put the gadget in a state in which the edge is unable to be flipped.}
  \label{fig:planar NCL edge gadget 2}
\end{figure}

\paragraph*{Simulating the locking 2-toggle}
Now that we've shown the reduction proving PSPACE-completeness of 1-player motion planning still holds with the added 1-toggles, all that remains is to show that the simulation of the locking 2-toggle by reversible, deterministic gadgets with interacting tunnels can also be augmented. 

First, all connections here are only degree $2$, which simplifies the matter, but since our gadgets might not be 1-toggle agnostic we cannot just use our prior observation about degree $2$ connections. 

The first part of the construction was to build a 1-toggle out of the gadgets if no 1-toggle was available. Since we are now inserting 1-toggles, we can jump straight to Figure~\ref{fig:arb gadget sim L2T} where the interacting tunnels are sandwiched together to ensure that each side either flips direction or closes when the opposite tunnel is crossed. Paired with the 1-toggles, this ensures that the crossed tunnel flips direction and the opposite tunnel closes. Given the presence of 1-toggles in the construction it is obvious that the addition of more around the gadgets does not change their function.

\begin{figure}
  \centering
  \includegraphics[scale=.75]{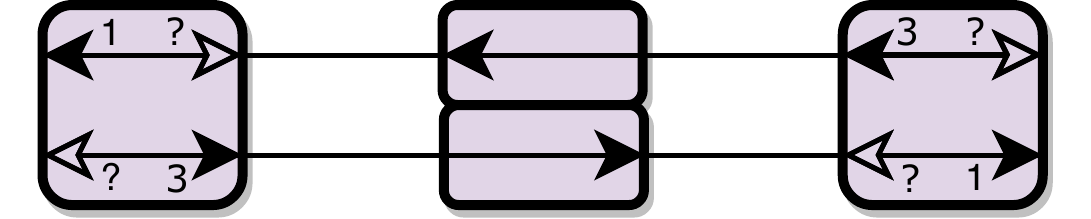}
  \caption{An arbitrary interacting-$k$-tunnel reversible deterministic gadget and a 1-toggle simulate a locking 2-toggle.}
  \label{fig:arb gadget sim L2T}
\end{figure}

\end{proof}

}

\subsection{Reduction for hexagonal modules}\label{hexmodel}

We now focus on describing how to simulate each of the pieces with hexagonal modules. The agent is represented by two modules and while these could go different ways, our instance contains several obstacles that can only be crossed by two modules working together. For simplicity we refer to the two modules that form the agent as {\em the agent modules}.

We simulate wires with sequences of modules in line segments. We also need to be able to turn without letting corner modules move. We simulate these turns using two types of corners: \emph{protected} which can be crossed, and \emph{blocked} which cannot be crossed - neither can move. Protected corners are drawn as green modules, while blocked corners are drawn as black. Wires and corners are discussed in detail in Section~\ref{wires}.

\ifabstract
A full list of gadgets needed to simulate all functionality as well as proofs of correctness are presented in Appendix~\ref{sec_propert}. The following is a sketch of the most important pieces.
\fi

\later{
\subsubsection{Properties of the restricted model} \label{sec_propert}

In our reduction, the agent modules will be represented by the two modules that will be able to move in the configuration. Before getting into the details of the gadgets we discuss some important properties of our model. These properties will allow us to rule out other modules from moving. 

Consider the case shown in Figure~\ref{fig:hexmoves} (left): we have a leaf module (i.e., a module that is adjacent to only one other module in the configuration). If two nearby positions (shown as red dots in the figure) are occupied, then the leaf module cannot move under the restricted model.

Our reduction will also create a few vertices of degree three whose neighbors are consecutively placed along the boundary (Figure~\ref{fig:hexmoves} right). Similar to the previous case, we can prevent that module from being able to move by placing two modules in its neighborhood (red dots in the Figure). Whenever a module is in either of these two configurations (possibly after rotation), we say that it is {\em locked}.

\begin{figure}[ht]
	\centering
    \includegraphics[width=0.4 \textwidth]{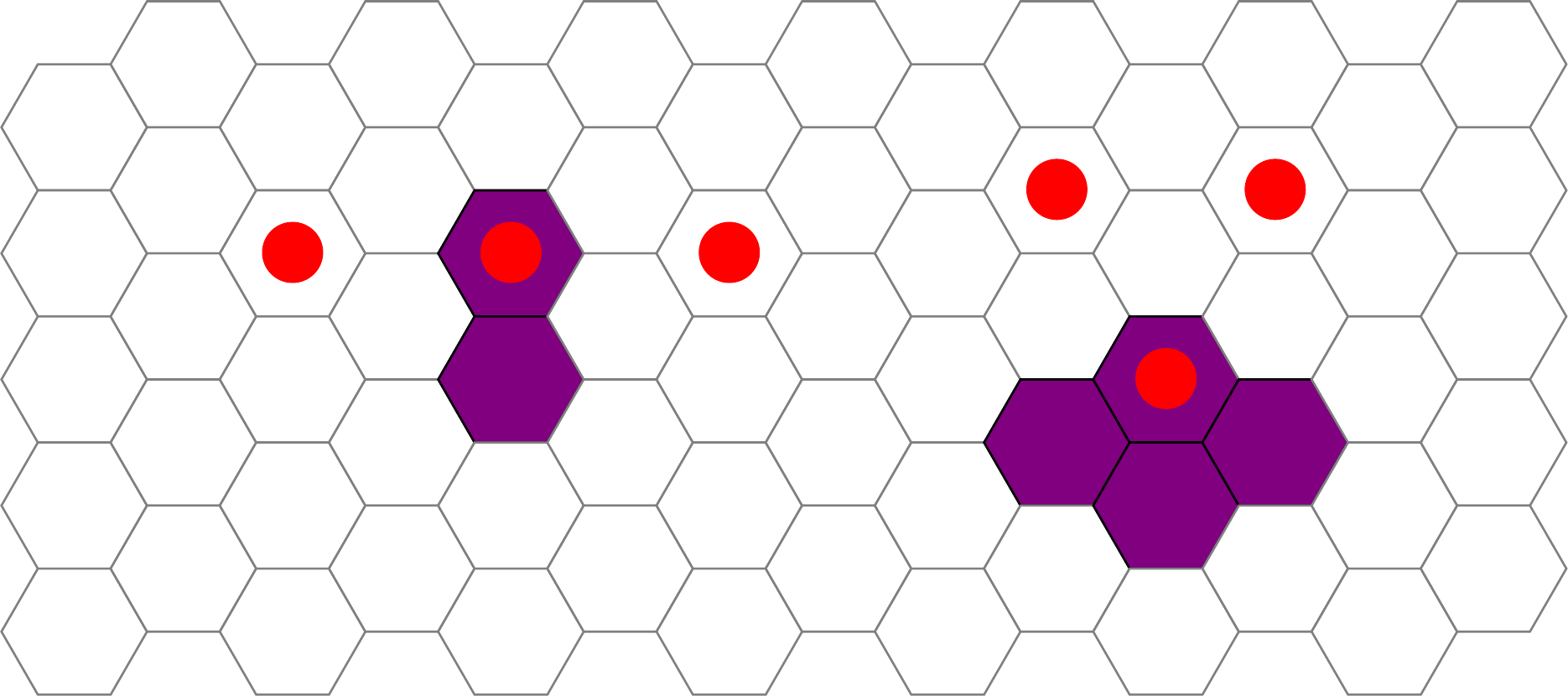}
    \caption{Two examples of modules (shown purple with the red dot) that cannot move because of the neighboring modules. The blocking modules are shown by red dots. Note that neither is needed for connectivity, but if nearby locations are occupied (red dotted positions) 
    they are not movable under the restricted model.
    In the right, the position immediately under the highlighted purple module need not be occupied to prevent movement.}
    \label{fig:hexmoves}
\end{figure}

In order to guarantee that the two agent modules move together, we introduce \textbf{gaps}. More specifically, we will be using $2$-gaps. A $2$-gap consists of three pairwise adjacent positions that are empty and appear along the boundary of a region that the agent is traversing (see Figure \ref{fig:hexgap}). The most important property of these $2$-gaps cannot be crossed by a single or even two modules.

This is because the first module can essentially only place itself in the the nearest empty space of the gap. Once there, the second module can place itself on top of the first one, but cannot cross without using the monkey move (a move that is not allowed under the restricted model). Note that three modules can potentially cross over a $2$-gap (or alternatively, having two modules on one side if there is a third module on the other side that can help them).

\begin{figure}[ht]
    \centering
    \includegraphics[width=0.4 \textwidth]{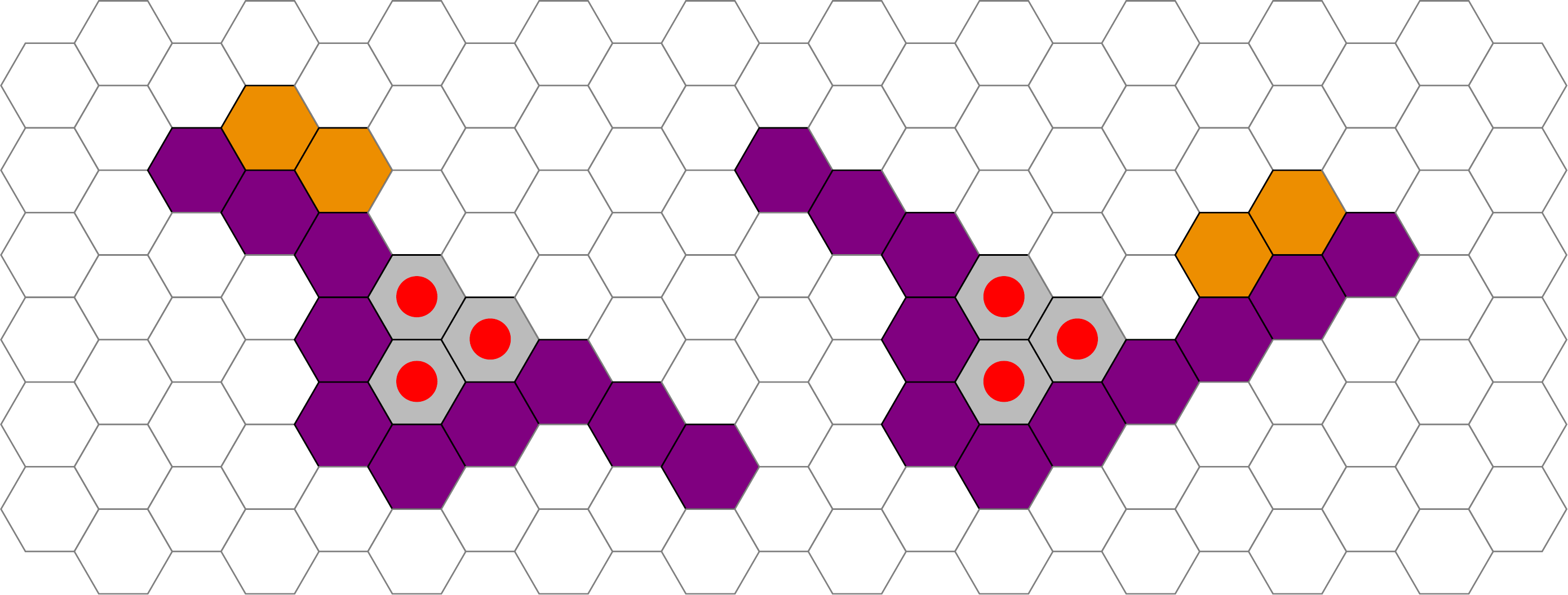}
    \caption{Examples of $2$-gaps (shown as gray dotted hexes). Note that two modules cannot cross the gap, but three modules together could cross both versions of the gap.}
    \label{fig:hexgap}
\end{figure}

The $2$-gap on the left in Figure \ref{fig:hexgap} can be crossed from both sides as long as there are $3$ modules there. It does not matter whether the modules are all on one side or split, they can all traverse the gap in either direction.

The $2$-gap on the right in Figure \ref{fig:hexgap} is slightly different. In this case, traversing the gap from left to right could require a large number of modules (the exact amount will depend on how many modules are placed in a row). But, if we have a single module on the other side of the gap (on the right), then we only need $2$ incoming modules (on the left). Traversing this same gap from right to left only requires $3$ modules, but only $2$ of them can cross and one must remain on the right side.

\subsubsection{Wires and wire corners} \label{wires}

We now turn our attention to wires: wires are simply represented by a sequence of modules forming a line segment. It is easy to see that no module in the middle of the segment cannot move (simply because the two neighbors prevent that). Although the gadget for wires is very simple, we need to add three minor tweaks: corners, wire side switches and wire cuts. The remainder of this section is dedicated to explaining why these tweaks are needed and how they work.

The simplest of the three is the corner: whenever the wire has to bend, we need to insert an additional junction module connecting the two wires. In order to prevent this module from moving we insert a small number modules forming a spiral to prevent the move from happening (see Figure \ref{fig:hexcorners}). If needed, the same spirals can be used to make sure the endpoints of a segment cannot move.

We introduce two different types of corners: \textbf{protected} and \textbf{blocked} (see Figure \ref{fig:hexcorners} - (left) and (right)). Both of them prevent the corner module from moving, but in addition they have extra properties. 

\begin{lemma}
None of the modules in the neighborhood of a blocked corner can move, even if the agent modules can reach the corner.
\end{lemma}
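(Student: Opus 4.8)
The plan is a finite case analysis over the modules in the neighborhood of a blocked corner, showing that each one is \emph{locked} in the sense introduced in Section~\ref{sec_propert}, or is otherwise unable to pivot because every cell it could move into is occupied. First I would pin down the objects involved: from Figure~\ref{fig:hexcorners} (right), a blocked corner is the junction module together with the short rigid spiral attached to it, and the ``neighborhood'' in the statement consists of that junction module, the spiral modules, and the first module of each of the two wires meeting at the corner. I would then recall the two immobilizing patterns of Figure~\ref{fig:hexmoves} — a leaf whose two relevant second-neighbor cells are full, and a degree-3 module whose three neighbors are consecutive and whose two relevant cells are full — and the free-space requirement of a restricted move from Figure~\ref{fig:hex_moves}(a), namely that the destination cell of a $60^\circ$ pivot (and the cell it sweeps over) must be empty.

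Next I would walk through the modules one at a time. The junction module has degree three with its neighbors consecutive (the two wire directions plus the head of the spiral), and by construction the spiral supplies the two cells that complete the locked pattern of Figure~\ref{fig:hexmoves} (right), so it cannot move. Each spiral module, because the coil is wound tightly, is either an interior module with two consecutive full neighbors whose two candidate destination cells are blocked by the next turn of the spiral, or it is the spiral tip, a leaf whose two destination cells are occupied; in every case one of the patterns of Figure~\ref{fig:hexmoves} applies. The first module of each incident wire lies in the interior of a straight segment, so its two collinear neighbors already forbid any $60^\circ$ pivot, exactly as for a middle-of-the-wire module. Since the spiral shape is fixed, this enumeration can be carried out once from the figure.

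The only subtle point — and the one I expect to be the crux — is the clause ``even if the agent modules can reach the corner.'' Here I would argue that the two agent modules cannot help in either of the two ways they could conceivably matter. They cannot \emph{displace} a wire or spiral module, because a restricted move moves only the pivoting module itself; and they cannot \emph{unlock} a module, because all of the immobilizing arguments above depend only on certain cells being occupied, never on cells being empty, so adding up to two more modules adjacent to the corner can only reinforce the locks. The cells that are structurally empty inside the spiral are unreachable by the agent, which is separated from them by a full turn of the coil — equivalently, a blocked corner contains no gap abutting free space that three or more cooperating modules could cross. I would close by contrasting this with a protected corner, where the geometry is deliberately chosen so the agent \emph{can} slip through, as a sanity check that the analysis is isolating the right phenomenon.
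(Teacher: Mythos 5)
Your proof is correct, but it takes a genuinely different route from the paper's. The paper's argument is short and connectivity-based: every module in the blocked-corner gadget except the spiral tip is a cut vertex (critical for connectivity), so none of them can pivot without disconnecting the configuration; the spiral tip, which is not a cut vertex, is immobile because it is \emph{locked} in the sense of Figure~\ref{fig:hexmoves}; and the ``even if the agent is present'' clause is dispatched by observing that the $2$-gap built into the blocked corner prevents the two agent modules from ever crossing. Your approach instead checks local free-space conditions for \emph{every} module in the gadget, never invoking connectivity: each module is immobile because, for every candidate pivot, the destination or sweep cell is already occupied.

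Two remarks on the comparison. First, your argument for the ``even if the agent reaches'' clause is cleaner and more robust than the paper's: you observe that the local immobility certificates are monotone in the set of occupied cells (they assert that certain cells are full, never that any cell is empty), so placing additional agent modules nearby can only reinforce them. The paper instead relies on the $2$-gap to keep the agent out entirely, which is a weaker statement about this specific construction. Second — and this is the one place where your proof is doing real work that the paper's avoids — by not using connectivity at all, you sidestep the subtlety the paper must handle elsewhere (Section~\ref{sec_wirecut}) about \emph{global cycles}: a cut-vertex argument can in principle be invalidated if the agent closes a cycle through the gadget, whereas your purely local argument is immune to that. (The paper handles this for spirals by noting they hang off a single attachment vertex and thus can never be on a cycle, but your proof doesn't need that observation.) The cost is that you must actually perform the module-by-module enumeration over the fixed spiral shape from Figure~\ref{fig:hexcorners}, which you gesture at but do not carry out; since the spiral has constant size this is fine, but it is more bookkeeping than the paper's two-line connectivity argument.
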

\begin{proof}
The only module of the gadget that is not critical for the connectivity is the one at the end of the spiral. Since that module is locked, we conclude that none of the modules involved in the corner can move. The second statement follows from the fact that the two agent modules cannot move past the $2$-gap (as discussed in Section~\ref{sec_propert}, at least three modules would be needed).
\end{proof}

The protected corner is very similar to the blocked one. This one is used when we want the agent to continue along the wire on the same side (but still prevent the corner module from moving). 

\begin{figure}[ht]
	\centering
	\includegraphics[width = 0.6 \linewidth]{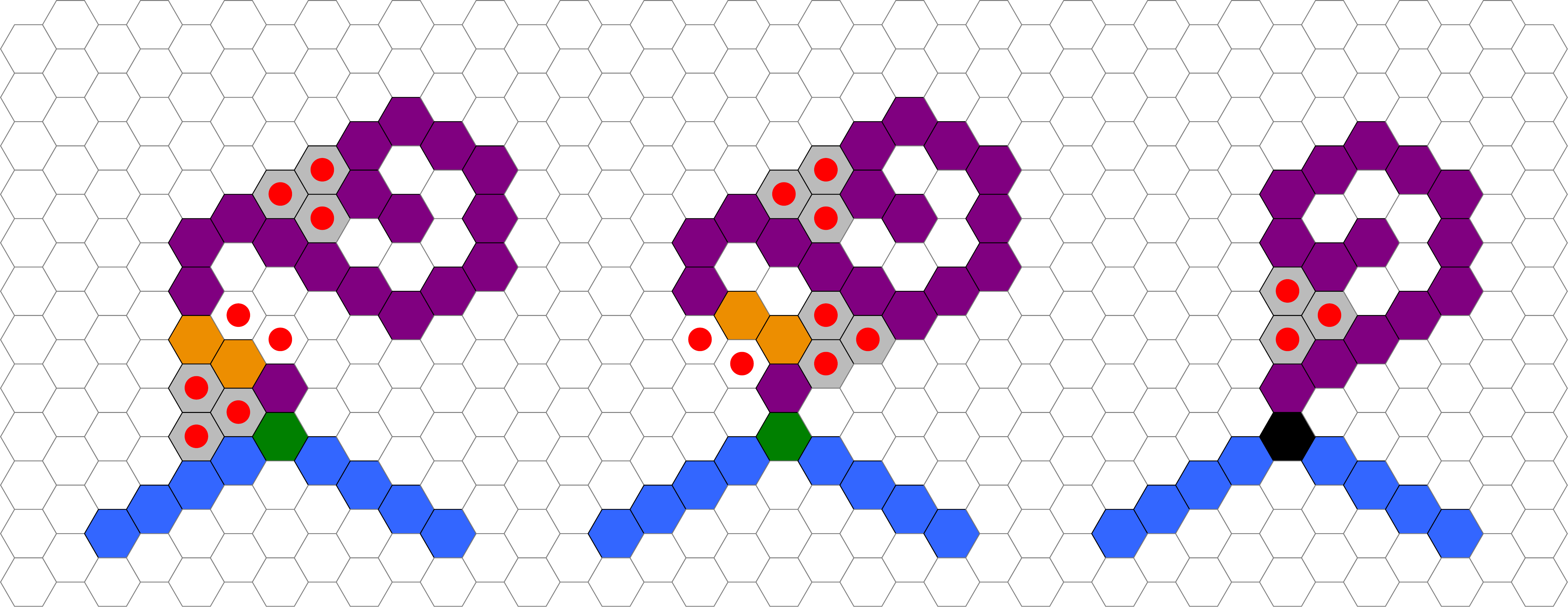}
	\caption{(left and center) Examples of left and right leaning protected corners (corner module shown in green). Note that no module in the spiral can move, nor can the agent modules pass through because of the $2$-gap. However, the agent modules can traverse through by placing themselves in the highlighted slots. By doing so, they will slightly alter the position of the spiral.
	(right) Similar construction for the blocked corner. In this case, nothing can move nor can the agent pass through.}
	\label{fig:hexcorners}
\end{figure}

\begin{lemma}
None of the modules in the neighborhood of a protected corner can move. Moreover, when an agent reaches the protected corner, it can cross the gadget by toggling the status from left to right leaning depending on the situation (or vice versa). Finally, if a gadget is left (resp. right) leaning, then it cannot be crossed by an agent that reaches it from the left (resp. right) side.
\end{lemma}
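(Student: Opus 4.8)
The plan is to follow the structure of the proof for the blocked corner and then add the extra bookkeeping needed for the toggling behavior. For the first sentence, I would observe that the protected‑corner gadget consists of the green corner module, a short spiral of modules, and a $2$-gap separating the two incident wire segments (Figure~\ref{fig:hexcorners}, left and center). Every module of the spiral except its tip is a cut vertex of the contact graph, hence cannot move without disconnecting the configuration. The spiral tip sits in one of the neighborhood patterns of Figure~\ref{fig:hexmoves}, so it is \emph{locked} in the sense of Section~\ref{sec_propert}. The green corner module has degree three with its three neighbors consecutive along the boundary, and the two blocking positions of Figure~\ref{fig:hexmoves} (right) are occupied by the spiral and an incident wire, so it too is locked. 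Finally, as recalled in Section~\ref{sec_propert}, the two agent modules alone cannot cross the $2$-gap (at least three cooperating modules would be required), so no agent module can reach the far side to help release a gadget module either. Together these observations give that no module in the neighborhood can move.

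For the crossing/toggling claim I would exhibit the explicit sequence of restricted pivots. Suppose the agent modules arrive along the wire on the \emph{open} side of the gadget (the side the gadget is not leaning toward); by the geometry of that state the two highlighted slots adjacent to the green corner on that side are empty. The leading agent module pivots into the nearer highlighted slot, the trailing module follows into the second slot, and then the leading module pivots onto the outgoing wire followed by the trailing one. I would check, using the free-space requirements of the restricted move and the empty positions guaranteed around the gadget, that each of these pivots is legal and that connectivity is maintained at every intermediate step. The only persistent side effect is that one spiral module is displaced by a single cell, which flips the gadget from its current lean to the opposite one (and symmetrically for a traversal in the reverse direction). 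I would also verify that no gadget module becomes movable during the sequence: the spiral tip remains locked, every other spiral module remains a cut vertex, and the corner module remains in a locked pattern throughout.

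For the blocking claim I would argue by contradiction. If the gadget is left-leaning and an agent approaches from the left, then in that state the two highlighted slots on the left are occupied by the (shifted) spiral, so the agent modules cannot enter them; the only remaining route across is through the $2$-gap, which two modules cannot traverse. Hence the agent modules are confined to the left side, as claimed, and the statement for a right-leaning gadget and a right approach follows by reflection.

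The step I expect to be the main obstacle is the case analysis inside the toggling argument: one must confirm, in every intermediate configuration of the four- or five-pivot sequence, that no spiral module or nearby wire module acquires enough free space to make a restricted move, and that the $2$-gap geometry really does force the agent modules to pass through the highlighted slots rather than some unanticipated shortcut created by the temporary presence of the agent modules. This is routine given the figures but is where all the verification effort goes; the immobility and blocking parts are immediate from the properties collected in Section~\ref{sec_propert} and the argument for the blocked corner.
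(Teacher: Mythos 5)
Your argument for the immobility claim matches the paper's reasoning (every module is either a cut vertex or locked, and the $2$-gap prevents the two agent modules from approaching from the far side), and the blocking claim is handled essentially the same way. The problem is in the middle part, where you describe the mechanism by which the agent crosses the gadget.

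You write that the two agent modules pivot into the highlighted slots and then "the leading module pivots onto the outgoing wire followed by the trailing one." This cannot happen: the outgoing wire lies on the far side of the $2$-gap, and you yourself (correctly) argued that two modules alone cannot cross a $2$-gap. The actual mechanism in the paper is a hand-off rather than a physical traversal: the agent modules lodge themselves permanently in the highlighted slots, which closes a cycle through the gadget and thereby makes two of the spiral (orange) modules on the \emph{other} side of the $2$-gap non-critical for connectivity; those two modules are then freed to pivot out along the opposite wire and become the new "agent." So two modules enter, two \emph{different} modules leave on the other wire, and the spiral's shape changes to the opposite lean — that state change is what records the $1$-toggle flip. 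Your version also gets the persistent side effect wrong ("one spiral module displaced by a single cell"); in reality the two entering modules are absorbed into the gadget and two formerly-gadget modules are emitted. Because this swap is exactly what lets the construction respect the $2$-gap constraint, the gap in your argument is substantive: the traversal sequence you propose is impossible under the restricted-move model, and the proof must be reworked around the attach-then-release mechanism to be correct.
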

\begin{proof}
Proof of the first claim is identical to the protected gadget: all modules are critical to the connectivity except a single leaf module that cannot move because it is locked. Thus we focus on explaining how an agent can cross the protected corner.

As in the blocked corner case, the agent can move around but will not be able to cross the $2$-gap. The main difference to the blocked corner is that there is a specific location that they can attach (shown as red dots in Figure~\ref{fig:hexcorners}), on the left and in the center. The left image is a left leaning protected corner and the center image is a right leaning one.

Once the agent modules have attached, the two orange modules in the spiral can move and continue along the other wire (the one that the agent did not enter the gadget on). Overall, what happened is that two modules entered the gadget from one wire, and two modules left the gadget on the opposite wire. The modules that exit the spiral are different, but conceptually we see it as the agent having successfully traversed through the protected corner. In either of these states of the corner, the agent cannot cross in the same direction a second time. The only way these agents can be traversed is backwards, resetting the corner to the other configuration.

Finally, we observe that the behavior of the protected corner is exactly the same as that of a $1$-toggle. This means that as long as we initially set up the corner in the correct direct, its behavior will be as intended. This completes the proof.
\end{proof}

We say that a protected corner is left or right leaning (depending on whether it can only be crossed from left to right or the reverse), like in Figure \ref{fig:hexcorners}. From now on, in each of our constructions, we use a black module to denote a blocked corner (the spiral is not shown for simplicity). Similarly, green modules represent protected corners, where we need to be careful about their initial state.

\subsubsection{Wire cut gadget}\label{sec_wirecut}
Next, we introduce the wire cut gadget. This gadget is not directly required by the 1-toggle-protected motion planning with the locking 2-toggle, but it is needed for the proper working of the reduction.

A key aspect that provides correctness of many parts our reduction is that many modules in our gadgets are critical to connectivity of the graph. This argument holds on a local level (when we look at each gadget independently). However, we must also rule out the presence of large cycles in the overall construction (i.e., a cycle involving more than one gadget). This is where the wire cut gadget comes into play.

\begin{figure}[ht]
	\centering
	\includegraphics[width=0.8\linewidth]{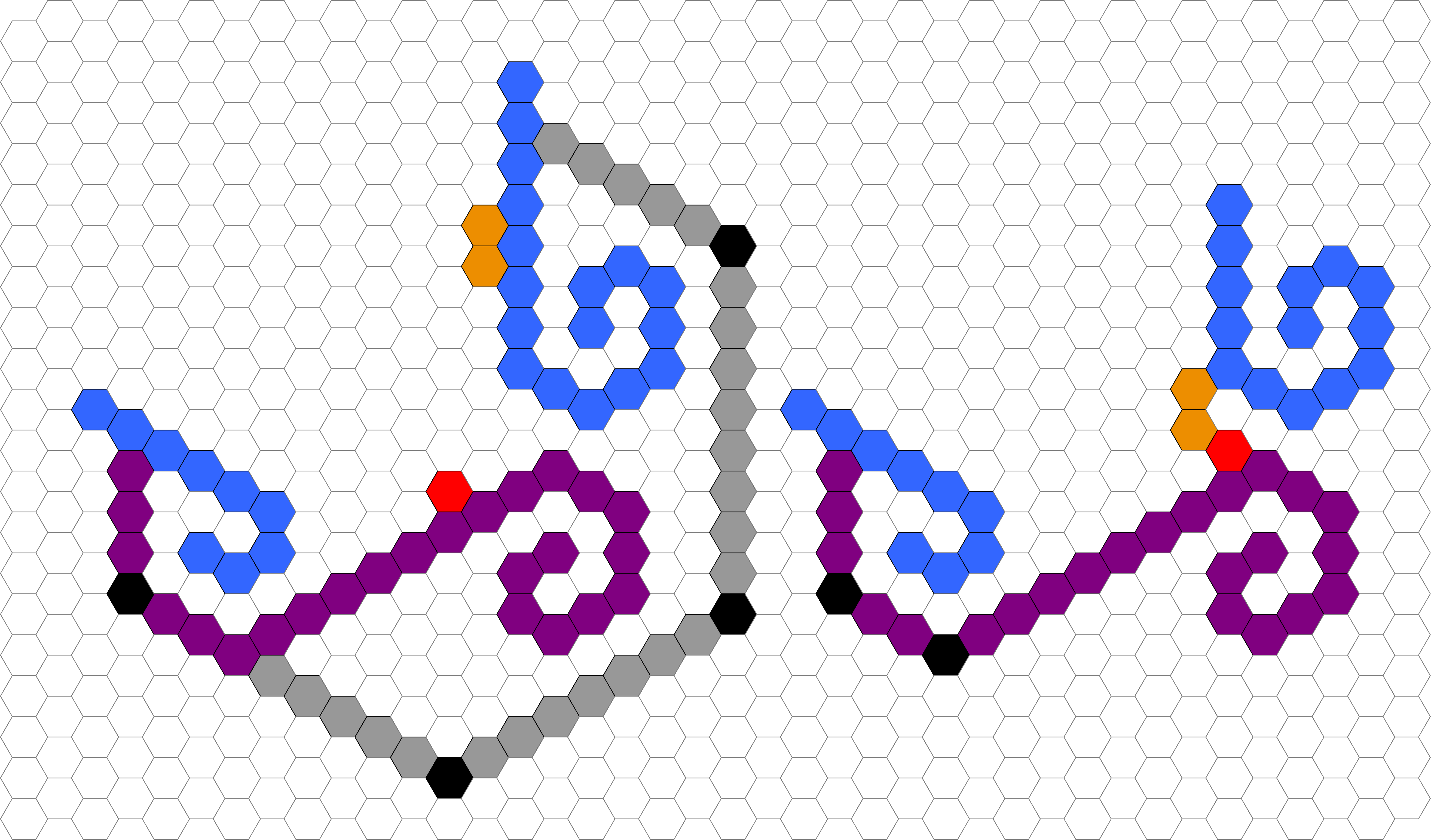}
	\caption{(left, without the gray modules connecting the wires) Wire cut gadget with the auxiliary module highlighted in red and the agent coming from the right side.
	(left, with the gray modules connecting the wires) The connected wire cut gadget acts as a $1$-gap. Note that a single module cannot traverse the gadget, but two modules together can traverse it in either direction.
	(right) The two agent modules and the auxiliary module can form a bridge so that the three modules get onto the wire cut gadget. Notice that, in the process a cycle is created. This cycle can be either local to the gadget (if the modules form a bridge on the left side), or it can be global, like in our example. The gray modules are omitted.}
	\label{fig:hexwirecut}
\end{figure}

For each cycle in the underlying connection graph, pick any of its edges and place a wire cut gadget on it (shown in Figure~\ref{fig:hexwirecut} left, without the gray modules). This will ensure that the only cycles present in the final problem instance are those that are contained within a single gadget.

We observe a few things about the wire cut gadget: first of all, the gadget is not connected. This is by design, as we know that the two portions of the wire gadget are still connected globally.

The wire cut gadget contains a specific module (that we call {\em auxiliary}) that can move. However, notice that it cannot leave the wire cut gadget because of the $2$-gaps present on both sides. No other module in the gadget can move because of the same reasons as in the protected and corners: each module is either critical for connectivity or locked (recall that the two vertices shown as black in Figure~\ref{fig:hexwirecut} contain a blocked spiral).

Further note that the wire cut gadget changes the direction of the original wire gadget (initially it was a segment, but now the two portions of the segment form an angle of 60 degrees). This can be fixed by adding wires and corners as discussed in Section~\ref{wires} so that the two wires are extensions of each other. Note that the agent must pass through those corners, so corners must be protected (effectively making them 1-toggles).

\begin{lemma}
An agent can pass through the wire cut gadget, leaving it disconnected. Moreover, after passing, an auxiliary module will still remain directly attached to the gadget.
\end{lemma}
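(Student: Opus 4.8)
The plan is to track the motion of the two agent modules together with the auxiliary module already sitting inside the gadget, and to show that these three are the only modules that ever move. First I would invoke the $2$-gap analysis of Section~\ref{sec_propert}: the two agent modules arriving on one side of the wire cut gadget cannot by themselves cross either of the flanking $2$-gaps, and every other module of the gadget is either a cut vertex of the contact graph or is locked by one of the two blocked spirals (the black modules in Figure~\ref{fig:hexwirecut}), hence immovable under restricted moves. The only remaining degree of freedom is the auxiliary module (the red module), which can pivot but cannot escape the gadget, since it too is boxed in by the $2$-gaps on both sides.

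Next I would describe the bridging maneuver explicitly. With the two agent modules on one side and the auxiliary module on the gadget, the three modules are maneuvered --- using only restricted pivots --- so that the auxiliary module and one agent module form a bridge spanning the intervening $2$-gap, exactly as in the right part of Figure~\ref{fig:hexwirecut}: one module reaches the far side, serves as an anchor for the second, and the third follows. At each pivot I would verify that (i) the configuration stays connected, because the three moving modules remain mutually connected and attached either to the static part of the gadget or to the rest of the instance, and (ii) no static module of the gadget becomes movable, using the locking criteria of Section~\ref{sec_propert}. A transient cycle appears during the bridge --- either confined to the gadget or routed through the rest of the construction --- but this is harmless for connectivity; handling (and ultimately forbidding) such global cycles is precisely the role this gadget plays elsewhere.

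Finally, once the bridge is built I would continue the maneuver so that two of the three modules walk off onto the outgoing wire, leaving the third behind, again confined by the $2$-gaps; this is the new auxiliary module, which proves the second claim. The two halves of the wire cut gadget are never physically joined in any intermediate configuration, so the gadget remains disconnected, as asserted. I would also note the orientation bookkeeping: the gadget bends the wire by $60^\circ$, so the incoming and outgoing wires are attached through protected corners (which behave as $1$-toggles by Section~\ref{wires}), and the agent necessarily traverses them; this is consistent with the $1$-toggle-protected framework and needs no separate argument.

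The main obstacle I expect is the explicit case analysis of the bridging maneuver: one must pin down the exact pivot sequence, confirm it is realizable with restricted moves only (so that no monkey move sneaks in), and check connectivity and immobility of all static gadget modules at every intermediate step, while also verifying that exactly one module is left behind as the auxiliary and that it cannot subsequently wander off. The geometry near the $60^\circ$ bend and the interaction of the two $2$-gaps is the delicate part; the remainder is routine bookkeeping.
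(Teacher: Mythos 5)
Your proposal takes essentially the same route as the paper's own proof: use the $2$-gap analysis to establish that the agent must recruit the auxiliary module, form a three-module bridge to cross onto and then off the gadget (as in Figure~\ref{fig:hexwirecut}), and argue that one of the three modules is necessarily left behind as the new auxiliary while two continue onto the far wire. The explicit pivot-by-pivot verification you flag as the main obstacle is equally elided in the paper, which defers to the figure and to a short argument that the module attached to the gadget cannot move first, so that gap is shared rather than introduced by you.
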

\begin{proof}
Consider the case in which the agent modules come from the right side (the left side is symmetric). The presence of the $2$-gap makes it so that the agents cannot cross alone. However, together with the auxiliary module they can form a bridge (shown in Figure~\ref{fig:hexwirecut} right). Once the bridge is built the agent modules can move onto the wire cut gadget.  By doing the same moves in reverse, they can form a bridge and allow {\em two} modules to cross to the left side. Indeed, out of the three modules, the one that is connected to the wire cut gadget will not be able to move (because of the location of its neighbors). Movement must start from the middle module, and once that module has moved the module on the wire will be isolated.

Thus, when two agent modules come from one side we can move them to the other side while at the same time guarantee that one module is confined to be within the wire cut gadget. This, together with the fact that the gadget is disconnected guarantees that it works as intended: allowing the agents to cross while at the same time reduce the cycles in the configuration.
\end{proof}

Although the gadget is good for breaking cycles and lowering the connectivity of the instance, there is one situation in which it cannot prevent a cycle from being formed: when the agent is crossing the wire cut cycle, they form a a bridge and at that bridge creates a cycle. In the following we explain how to handle this situation.

First notice that if the bridge is formed in the left side, it only involves vertices of the wire cut gadget. Given the small size, one can verify that only the modules that form the wire can move. A more interesting case happens when the bridge is in the right side. In this case, the cycle can span several gadgets (we call this situation a \emph{global cycle}). 

We design our gadgets so that very few modules can move during a global cycle. Specifically, in the gadgets we introduced up to now no module will be able to move because of a global cycle (the argument for moves not being possible in a wire gadget was based on the location of the neighbors, not connectivity. Also, spirals cannot be part of a global cycle because they are connected to a single vertex). 

In the upcoming gadgets, the modules that can move during a global cycle are those that we want to move when the agent interact with the gadget. The main difference is that when the agent comes close, two modules of the gadget can move whereas in a global cycle only one module can move. We call this situation the {\em solo agent}. 

We will show that a solo agent cannot move beyond its gadget. This is done by adding a connected wire cut that acts as a $1$-gap to every wire gadget (see Figure~\ref{fig:hexwirecut} left, with the gray modules). This $1$-gap can be easily crossed by two agents alone, but a single module will not be able to traverse it. These $1$-gaps along the construction will also ensure that the two modules that form the agent cannot go in different directions. In each of the remaining gadgets, we will argue that a solo agent will not be able to change the gadget's state in a meaningful way.


\subsubsection{Side switch gadget}\label{sec_sideswitch}

Now we turn our attention to the side switch gadget. In 1-toggle-protected motion planning with the locking 2-toggle the agent moves along the wire. However, in the module problem setting, our agent is represented by two modules that must be on one side or the other of a wire gadget. This change on its own is not of big significance: we just have to make sure the agent is always on the correct side of the incoming wire into every gadget (say, when the module approaches a wire cut gadget, it should do so on the side that allows it to interact with the corner gadget).

In the initial condition we determine the side on which the agent starts. From there on we keep track of the only side of the wire that the agent can be. Sometimes we will need to force the agent to switch sides. Whenever this is needed, we will introduce the side switch gadget, shown in Figure \ref{fig:hexwireside}.

\begin{figure}[ht]
    \begin{minipage}[t]{0.5\textwidth}
        \centering
        \includegraphics[width=\linewidth]{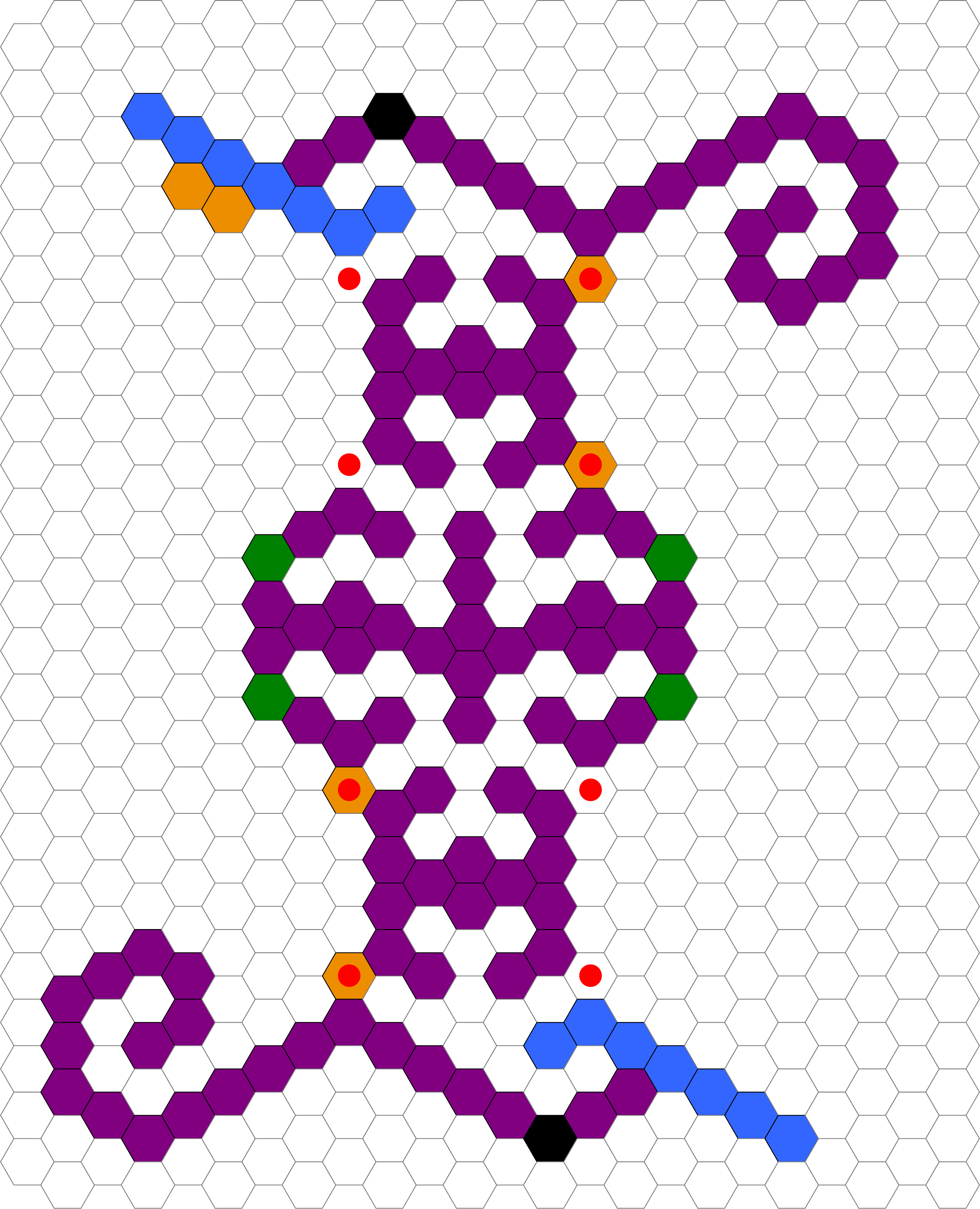}
    \end{minipage}
    \hfill
    \begin{minipage}[t]{0.5\textwidth}
        \centering
    	\includegraphics[width=\linewidth]{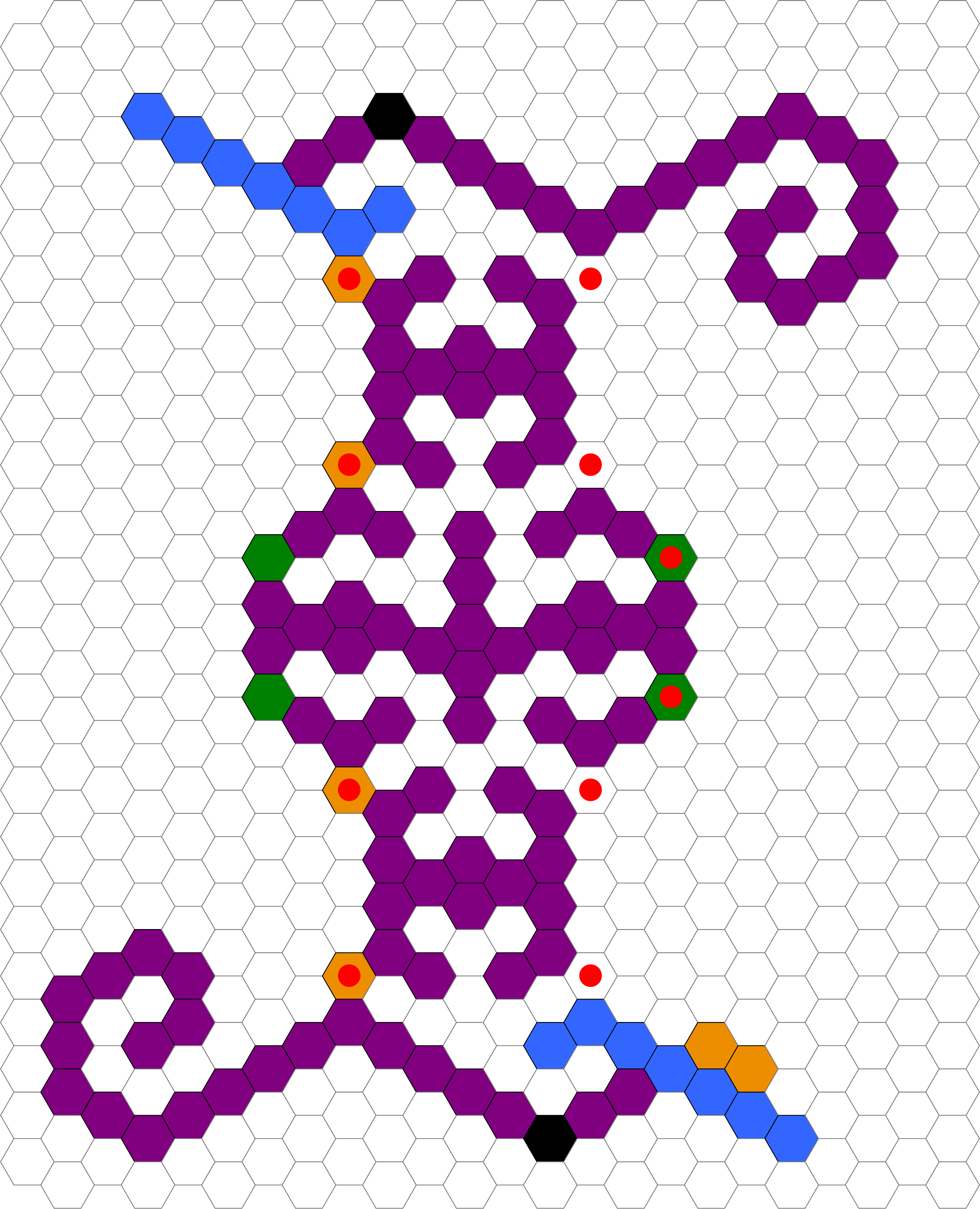}
    \end{minipage}
    \caption{(left) Wire side switch gadget. This gadget is added in the middle of a wire segment. If the agent (shown in orange) enters the wire below the upper half of the wire, then it can move to the nearest highlighted red dots. Recall that modules shown as green in the figure are protected corners, with the spiral not shown for clarity. The protected corners on the left can be traversed bottom up and the ones on the right can be traversed top down.
    (right) When the agent modules are placed on the designated locations, they create two cycles, which in turn allows for two other modules on the opposite side to move. These two modules can proceed down along the gadget, and exit on the other side of the wire gadget.}
\label{fig:hexwireside}
\end{figure}

\begin{lemma}
When the switch gadget is not part of a global cycle, none of its modules can move. Moreover, if the agent enters the gadget along one wire, two other modules can exit on the the other wire.
\end{lemma}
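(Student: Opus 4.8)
The plan is to mirror the structure of the proofs already given for the blocked corner, the protected corner, and the wire cut gadget. For the first claim I would classify every module of the side-switch gadget into two types: those that are cut vertices of the contact graph of the whole configuration, and those that are \emph{locked} in the sense of Section~\ref{sec_propert}. Concretely, the spiral attached to each protected corner has all but its tip critical for connectivity, and the tip is locked (a leaf with its two blocking positions occupied); the modules lying in the interior of the wire segments are locked by their two collinear neighbors; the junction and corner modules and the remaining skeleton modules are cut vertices. Since by hypothesis the gadget is not part of a global cycle, no such cut vertex has an alternative path around it, so removing any of them disconnects the configuration and it therefore cannot pivot. This leaves only the possibility that the two agent modules traverse through, but the $2$-gap built into the gadget (Figure~\ref{fig:hexwireside}, left) forbids this in the restricted model exactly as in Section~\ref{sec_propert}: the first agent module can only reach the near end of the gap, the second can only stack on top of it, and without a monkey move neither can cross. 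I would also, for use elsewhere in the reduction, dispatch the solo-agent case: a single module reaching the gadget can at most sit at the near side of a $2$-gap, and it cannot actuate either protected corner because that mechanism (as in the protected-corner lemma) needs two modules in the designated support slots; the $1$-gap wire cuts placed on the incident wires ensure a solo agent never reaches here, but even if it did it would be stuck.

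For the second claim I would exhibit the move sequence, following Figure~\ref{fig:hexwireside}. Suppose the agent enters along the lower-left wire; all other cases are symmetric under reflection and rotation. Using restricted moves along the boundary of the gadget — the same maneuver used to traverse a protected corner — the two agent modules walk up to the two highlighted positions on the left side. I would check that this path consists only of legal restricted moves: each agent module in turn is a movable end module with the three empty cells required around its pivot vertex, since those cells are interior to the pocket of the gadget and the construction keeps them empty until the agent arrives. Once the agent modules occupy the designated slots they close two small cycles together with the gadget skeleton (Figure~\ref{fig:hexwireside}, right); the effect is that the two gadget modules on the opposite (right) side, previously cut vertices, become non-cut, and the free space created next to them lets them pivot. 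I would then give their exit sequence: they pivot down the right side of the gadget and leave on the lower-right wire — two modules in, two modules out — which is precisely a side switch. Finally I would note that the two bounding protected corners are $1$-toggles oriented bottom-up on the left and top-down on the right, so in the resulting state the only available traversal is the reverse one; hence the gadget behaves as a reversible side switch (a pair of back-to-back $1$-toggles) whose state stays consistent with the rest of the reduction.

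The main obstacle, as in the other gadget lemmas, is the geometric bookkeeping: verifying that every intended restricted move has its free-space requirement satisfied — both when the agent climbs to the highlighted slots and when the two freed modules descend and exit — and, conversely, that no \emph{other} module ever has three consecutive empty neighbors around a shared pivot vertex, in particular after the two cycles are formed and while the freed modules exit, so that no unintended module escapes and the gadget cannot be traversed in the wrong direction. This is a finite check over the cells at distance at most two from each potentially movable module, carried out exactly as in the proofs of the protected-corner and wire-cut lemmas, and I would present it with explicit reference to Figure~\ref{fig:hexwireside}.
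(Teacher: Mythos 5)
Your approach matches the paper's: classify every module of the gadget as either critical for connectivity or locked, note that the two agent modules cannot make a meaningful change except by occupying the designated red-dot slots, observe that doing so creates local cycles that free exactly two modules on the opposite side, and then exhibit the exit maneuver while tracking the flipped states of the protected corners so the gadget acts as a pair of opposed $1$-toggles. One geometric inconsistency to fix: you have the agent enter below the left half of the wire and the freed modules leave on the "lower-right wire," but that is the \emph{same} side of the wire, which is not a side switch and contradicts your own conclusion. The paper's traversal has the two freed modules cross the right-hand protected corners (leaning down) and emerge on the \emph{upper} side of the right half of the wire segment; that is what realizes the side switch. Aside from that slip, and the fact that the solo-agent discussion you fold into the proof is handled by the paper in a separate paragraph after the lemma (since the statement already assumes no global cycle), the argument is the same as the paper's.
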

\begin{proof}
Consider first the gadget alone (i.e., without the agent). As in the previous gadgets, no module can move because either $(a)$ it is critical for connectivity or $(b)$ it is locked (as defined in Section~\ref{hexmodel}). This also applies to the two new endpoints that are created when we split the wire gadget into two (the leaves shown in blue). 

Now consider the case in which the agent enters the gadget (say, from under the left half of the wire gadget - the other case is symmetric). These modules can move along the top left side of the gadget, but will not be able to cross the protected corners on the left, since they are leaning up. Thus, no significant change can happen to the configuration.

The agent modules can do a more significant change by placing themselves on the top left positions marked with red dots. By doing so they each create a local cycle (i.e., constrained within the gadget). By design of the gadget, only two new modules will be able to move (shown as orange red dotted modules in the Figure): every other module in these cycles is either {\em locked} or its immediate neighbors prevents it from moving. These two modules can now go downward, cross the two protected corners (since they are leaning down) and reach the upper side of the right half wire segment (see the result after the traversal in Figure \ref{fig:hexwireside} right).

As in the case of the protected corner, two modules entered from the bottom half of the left wire, and two (different) modules have exited the other half of the wire but from the other side. In addition, the state of the two protected corners on the right have changed from leaning down to leaning up. On a macro scale, this corresponds to the agent switching side of the wire and changing the state of the gadget.

Notice that if the agent tries to traverse this gadget left to right again, it will not be able to. On the other hand, the agent can traverse it in the opposite direction (by simply undoing all steps).
\end{proof}

In order to complete the description, we need to discuss what happens when this gadget is part of a global cycle (see  Section~\ref{sec_wirecut}) and how to handle solo agents. In a global cycle we know that there is a path connecting the two halves of the segment wire outside the gadget (that path together with the path inside the gadget creates a cycle). Theoretically, this allows other modules to move (since none is critical for the connectivity anymore). However, by the way in which the gadget was designed, the only modules that will have space around themselves to move are the four modules highlighted in orange with a red dot in Figure \ref{fig:hexwireside}. Note that any of the four modules can move, but as soon as one moves, we have broken the cycle and none of the other three can move. The module that moves will become a solo agent. 

This single module can move up and down its side of the gadget and possibly exit through one of the halves of the wire segment. However, because of the $1$-gap that we add to each wire segment, it will not be capable of reaching other gadgets. Thus, a other than changing its location (and partially changing the state of protected corners), the only thing that a solo agent can do is return to its place and allow the module agent that are currently forming a cycle to break the cycle and continue traversing along the problem instance.

Overall, we conclude that the gadget works properly, and that with it we can keep track of the single side of the wire that the agent is allowed to be on. If needed, we can use this gadget to have the agent switch sides of the wire as well. 

}

\both{
\subsubsection{Branching hallway}


The branching hallway, shown in Figure~\ref{fig:hexbranch}, allows an agent that arrives on any of the three wires to leave on any of the other two. This construction actually acts as a branching hallway with 1-toggles on two of its wires.} \both{We can implement the toggle on the third wire by adding a restricted corner.

\begin{figure}[ht]
    \centering
	\includegraphics[width=0.4 \linewidth]{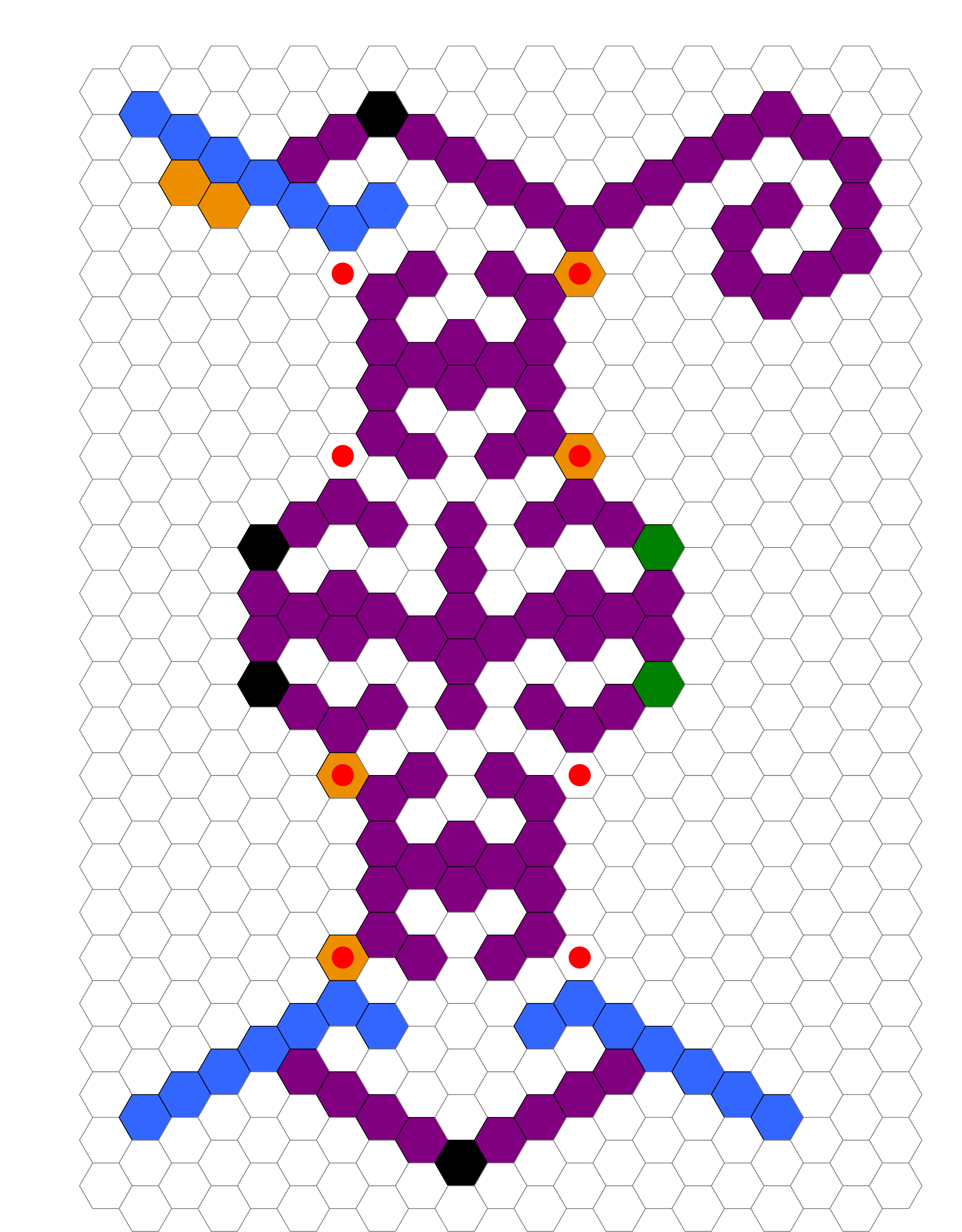}
	\caption{The branching hallway gadget simulated with hexagonal modules.}
	\label{fig:hexbranch}
\end{figure}

\begin{figure}[ht]
    \begin{minipage}[t]{0.3\textwidth}
        \centering
        \includegraphics[width=\linewidth]{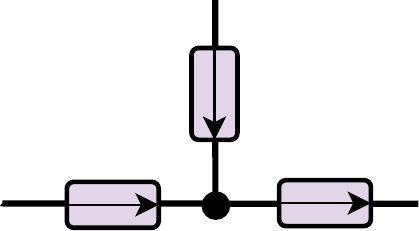}
    \end{minipage}
    \hfil
    \begin{minipage}[t]{0.3\textwidth}
        \centering
    	\includegraphics[width=\linewidth]{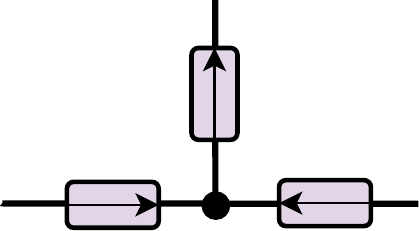}
    \end{minipage}
    \hfil
        \begin{minipage}[t]{0.3\textwidth}
        \centering
        \includegraphics[width=\linewidth]{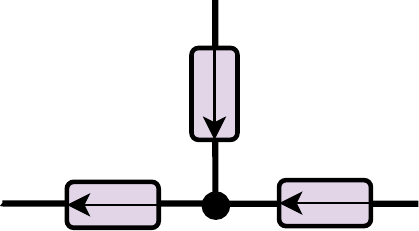}
    \end{minipage}
    
    \vspace{5mm}
    
    \begin{minipage}[t]{0.3\textwidth}
        \centering
    	\includegraphics[width=\linewidth]{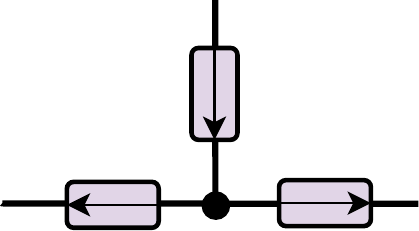}
    \end{minipage}
    \hfil
        \begin{minipage}[t]{0.3\textwidth}
        \centering
        \includegraphics[width=\linewidth]{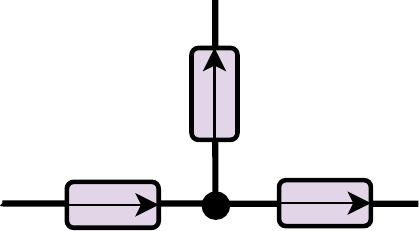}
    \end{minipage}
    \hfil
    \begin{minipage}[t]{0.3\textwidth}
        \centering
    	\includegraphics[width=\linewidth]{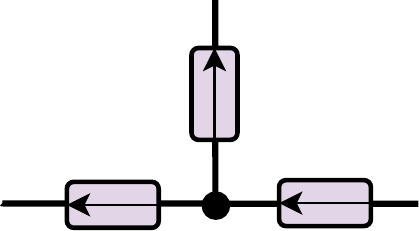}
    \end{minipage}
    \caption{The six configurations of a branching hallway with endpoints connected to 1-toggles.}
\label{fig:1-toggle-hallway}
\end{figure}
}

\later{
This along with the protected corners which are needed for turning wires, acting as 1-toggles are the reason for reducing from 1-toggle-protected motion planning with the locking 2-toggle rather than the unrestricted version of motion planning with locking 2-toggles. Note that the leaning of the protected corners is determined by the direction of the $1$-toggle on the top left wire.

This gadget works in a similar way as the side switch gadget. Again there are four pairs of locations that are critical for the working of the gadget (marked with red dots). Two of the four must be occupied for the gadget to be connected (one pair above and one pair below). As long as the agent does not come close to the gadget nothing can move because of connectivity issues.

\begin{lemma}
The construction shown in Figure~\ref{fig:hexbranch} correctly simulates the behavior of the 1-toggle protected branching hallway gadget.
\end{lemma}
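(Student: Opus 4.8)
The plan is to follow the template already used for the side switch gadget (Section~\ref{sec_sideswitch}) and the protected corner: first show the construction is inert on its own, then show the two agent modules can realize every intended traversal, and finally rule out unintended behavior, including behavior induced by global cycles and solo agents. The three wires and the three pairs of locations of the abstract branching hallway correspond to the three incoming wire segments of Figure~\ref{fig:hexbranch}; the ``state'' of the construction is the combined leaning of the protected corners on its wires, so that the target object is really a branching hallway with a $1$-toggle on each wire, exactly the six configurations enumerated in Figure~\ref{fig:1-toggle-hallway}.

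First I would argue that, when the agent is not adjacent to the gadget, no module can move. As in the earlier gadgets, every module of the construction is either critical for the connectivity of the contact graph or is \emph{locked} in the sense of Section~\ref{hexmodel} (a leaf, or a degree-$3$ module with three consecutive neighbors, with the required blocking positions occupied by the spirals). The four pairs of red-dotted locations are arranged so that one pair above and one pair below is always occupied, which keeps the gadget connected; all remaining modules (spiral modules of the protected and blocked corners, and the wire modules) are immovable for the usual reasons. In addition, the $2$-gap on each of the three incoming wires prevents the two agent modules from crossing without attaching at a designated position, so the agent cannot sneak through or reach the other side of any wire without interacting.

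Next I would show that an agent entering on one wire can leave on either of the other two. Fixing an entry wire, the two agent modules walk along the boundary of the gadget until they reach the designated attachment slots (the red dots), each closing a local cycle. By the construction, closing these cycles frees exactly the two gadget modules that will walk down the chosen exit wire, while every other module in the newly created cycles stays locked or blocked by its neighbors; those two freed modules then traverse the protected corner(s) toward the exit wire and leave, so that two modules entered on one wire and two (different) modules exited on another, with the traversed protected corners having flipped their leaning. I would spell this out for the ``straight'' exit and the ``turning'' exit from one fixed entry wire, and observe that the remaining entry wires are handled by the symmetry of the construction (up to the leanings), giving all $3\cdot 2$ ordered traversals and matching the transitions of the six configurations in Figure~\ref{fig:1-toggle-hallway}. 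An agent that reaches the gadget but does not attach at a designated slot can only wiggle along the boundary without crossing any $2$-gap, hence cannot change the state.

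Finally I would handle global cycles and solo agents exactly as in Section~\ref{sec_wirecut}: if the construction lies on a cycle of the overall connection graph the pure connectivity argument weakens, but the geometry still guarantees that the only modules with free space are the four red-dotted ones, and moving any one of them breaks the cycle and immediately re-locks the other three, producing a solo agent; the $1$-gaps added to every wire segment confine this solo module to the gadget, where it can at most partially toggle a protected corner before returning and restoring a valid state. Putting these three parts together yields that the construction faithfully simulates the $1$-toggle-protected branching hallway. The main obstacle will be the case analysis in the middle step: verifying, for each entry/exit pair and each of the relevant corner leanings, that the local cycles created by the attaching agent free \emph{precisely} the two modules that need to travel down the exit wire, that nothing else becomes movable, and that the final leanings of the protected corners are exactly those prescribed by the branching-hallway-with-$1$-toggles semantics.
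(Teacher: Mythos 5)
Your overall template — show the gadget is inert without the agent, then enumerate traversals, then dispose of global cycles and solo agents — is exactly the template the paper follows, and that much is right. But you have the internal structure of the gadget slightly wrong, in two ways that would trip you up during the "main obstacle" case analysis you flag at the end. First, the gadget is not a symmetric, three-fold construction: as the paper's proof explains, it is essentially two stacked side-switch-like halves. The upper pair of critical positions implements the $1$-toggle behavior for the upper-left wire, the lower pair implements it for the lower-left wire, and the right side does not have its own internal toggle mechanism at all — the blocked spirals on the right simply prevent an agent from crossing between the two halves, and the right wire's $1$-toggle is supplied by an \emph{externally added} protected corner (the paper says explicitly that the construction ``acts as a branching hallway with 1-toggles on two of its wires'' and that the third toggle is added separately). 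So an appeal to ``symmetry of the construction'' to get the $3\cdot 2$ traversals would not go through; you must argue the upper half, the lower half, and entry-from-the-right as distinct cases, as the paper does. Second, you identify the state with ``the combined leaning of the protected corners on its wires,'' but the gadget's state is carried by \emph{which} of the four pairs of critical positions are occupied by orange modules; the protected-corner leanings on the wires are a consequence of that, not the primary state variable. Keeping track of the occupied critical positions is what lets the paper argue that the agent coming from an ``outward''-pointing direction creates only meaningless cycles. Your handling of global cycles and solo agents is in line with the paper's.
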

\begin{proof}
We start by discussing the upper pair of critical positions. When the right positions are occupied (as shown in in Figure~\ref{fig:hexbranch}) the agent can come from the upper left wire and occupy the upper left empty slots. This frees the orange modules on the right which can traverse downwards. These two orange modules can directly exit the gadget through the lower right wire, or they can place themselves at the last critical pair of positions and allow the other two orange modules to exit from the lower left wire. Overall, the agent entered from the top wire and can exit from either of the two wires. A key aspect of our gadget is that out the two pairs of upper critical positions the right one was initially occupied, and thus the agent could do a meaningful change. This is the equivalent of the upper portion of the 3 branching hallway pointing down. 

A similar argument holds for the lower half of the reduction and the 2 critical positions in the lower half. This time, notice that the position of the orange modules is flipped in the upper and lower halves of Figure~\ref{fig:hexbranch}. This represents that the left portion of the 3 branching hallway is pointing outwards (and thus the agent should not be able to cross). Indeed, this is the case: if the agent were to come from the left side, it can go near the occupied critical positions, but no meaningful cycle is created (any cycle it creates will not allow new modules from moving). Thus, the agent is prevented from interacting with the gadget as intended.

Overall we have that the upper two pairs of critical positions model the direction of the upper 1-toggle and the lower pairs simulate the left 1-toggle. The blocked spirals in the right side prevent an agent that comes from either of the two wires to go onto the other half of the gadget.

If the agent were to come from the right, it can place itself in either of the two right critical positions (if either of them is empty) and exit from one of the two left wires. In the process we would also switch again the state of the corresponding portion of the gadget. 

Finally it remains to discuss the situation in which a global cycle is created. Recall that this can only happen when the agent is crossing through a wire cut gadget. In this case, any of the four orange modules will be able to move (but as soon as that module moves neither of the other ones nor the agent will be able to move or connectivity will be broken). The orange modules on the left cannot do any meaningful change because the blocked spirals prevent it from going upward. An orange module in the left side can traverse up/down along the gadget and move onto another empty critical position (if any), but this just again allows another orange module to move. The orange module could also potentially exit the gadget via one of the wire gadgets, but the $1$-gap that we add into each wire will prevent it from reaching another gadget. Thus, we conclude that the gadget behaves as desired.

\end{proof}
}

\both{
\subsubsection{Locking 2-toggle (L2T)}

The other main gadget needed for the reduction is the locking $2$-toggle shown in Figure~\ref{fig:hexl2tHex}. \iffull This gadget receives 4 wires and has three states: it can be crossed along either of the wires (from top to bottom) or it can be crossed by only one of the wires from bottom to top. We model this gadget with the configuration of modules shown in Figure~\ref{fig:hexl2tHex}. \fi

\begin{figure}[htb]
    \begin{minipage}[t]{0.48\textwidth}
        \centering
        \includegraphics[width=\linewidth]{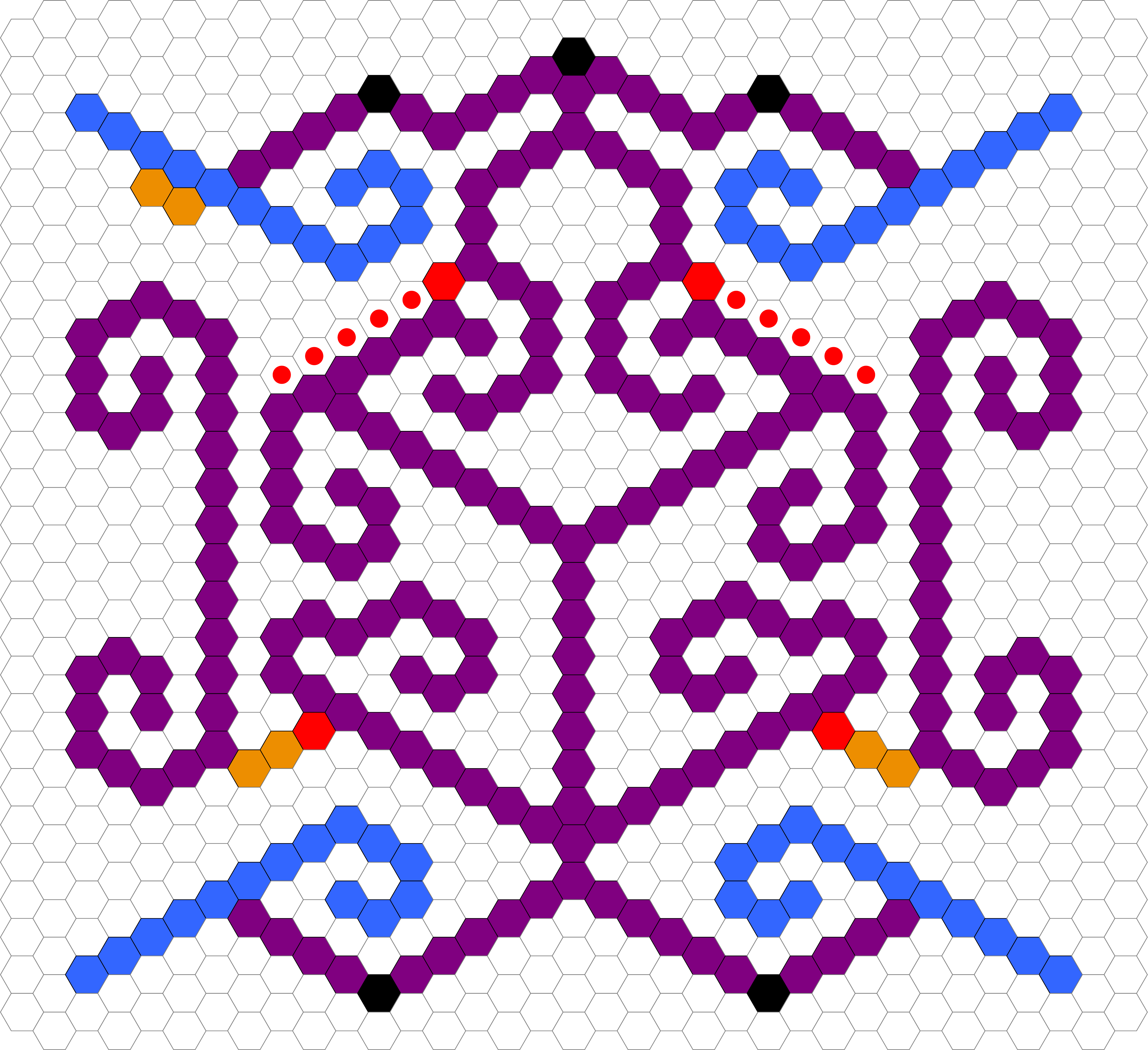}
    \end{minipage}
    \begin{minipage}[t]{0.48\textwidth}
        \centering
    	\includegraphics[width=\linewidth]{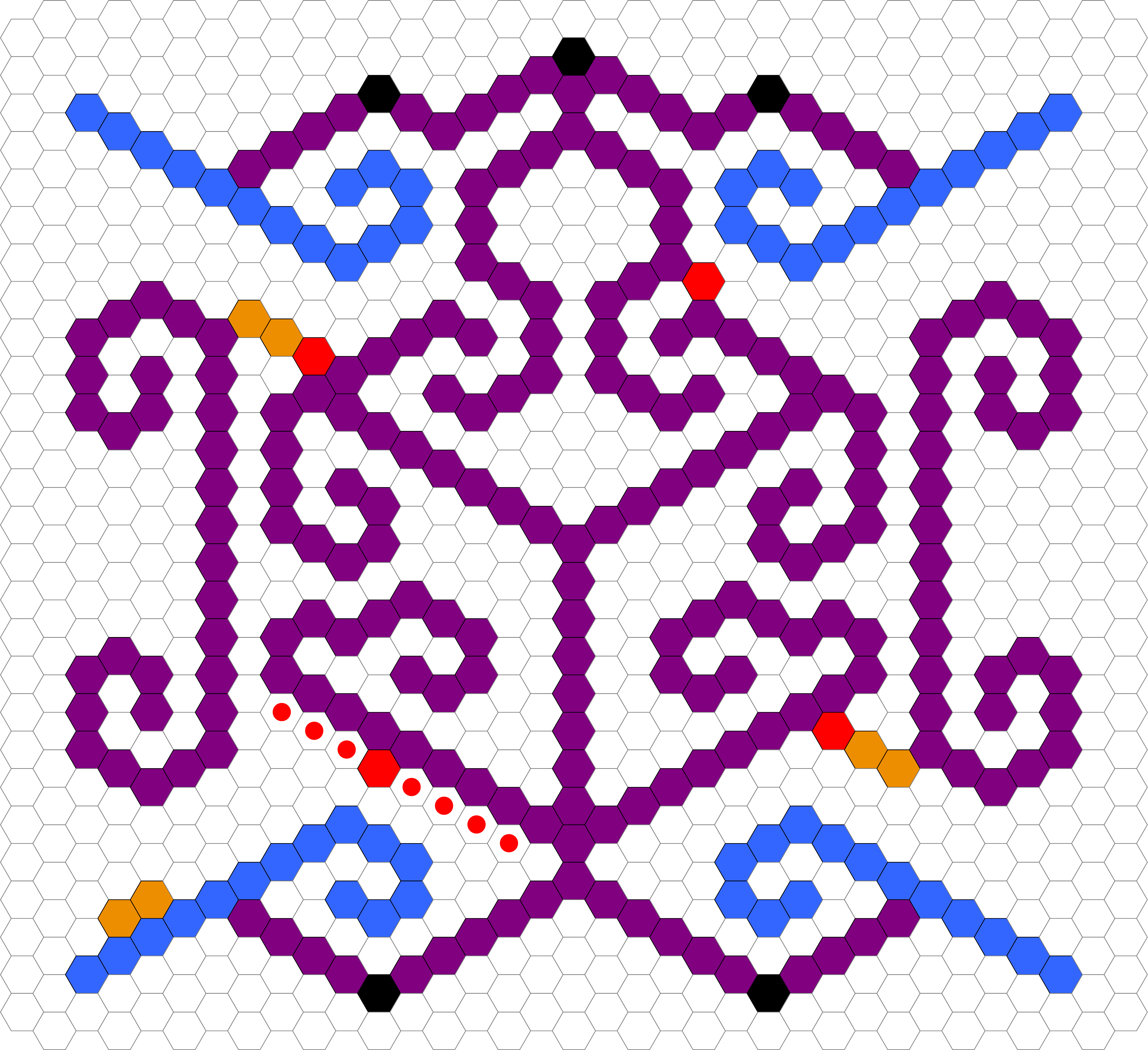}
    \end{minipage}
    \caption{(left) The locking $2$-toggle simulated with hexagonal modules (in the top state 3 in Figure \ref{fig:L2T}). Other than the agent, the modules that can move are the two red modules towards the top of the gadget (but only one at a time).
    (right) Once the agent forms a bridge with the auxiliary left module, they create a cycle allowing three modules at the bottom part of the gadget to move. Two of the three modules can exit the gadget along the bottom wire. Now only the lower left module can move. This changes the state of the $2$-toggle which can now only be crossed in reverse.}
\label{fig:hexl2tHex}
\end{figure}
}

\later{
As expected, this gadget is connected to 4 wire gadgets (shown in blue). The blue wires at the top left and top right are the two entrances and the blue wires at the bottom left and bottom right are the two exits. The L2T in Figure~\ref{fig:hexl2tHex} (left) is open and can be traversed on either side from the top to the bottom. 

The gadget contains 8 special modules (shown in orange and red in the Figure) split into 4 groups. The groups of 3 modules maintain connectivity with the spirals on the sides whereas the two single modules (which we call {\em auxiliary}, connect the upper and lower portions of the gadget. Note that, although two modules to perform the connection only one is needed. 

\begin{lemma}
The only modules that can move in a L2T gadget are the auxiliary modules, even in a global cycle. Moreover, this configuration of modules properly models the behavior of the L2T toggle.
\end{lemma}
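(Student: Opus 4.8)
The plan is to reuse the two-step template of the earlier gadget lemmas (Sections~\ref{wires}--\ref{sec_sideswitch}): first show that with the agent far away the only movable modules are the two auxiliary modules, then show this survives a global cycle and a solo agent, and finally match the induced behaviour to the three states and four transitions of the locking $2$-toggle of Figure~\ref{fig:L2T}. For the static part I would argue that every module other than the two auxiliary ones falls into one of the two now-standard categories. The leaves at the ends of all spirals (including the two blocked-corner spirals) are \emph{locked} in the sense of Section~\ref{sec_propert}; the straight wire modules cannot perform a restricted pivot because of their two collinear neighbours; and every remaining module --- the interior spiral modules, the blocked-corner junction modules, and the six modules of the two triples --- is a cut vertex all of whose restricted-pivot destinations lie on one side of the cut, hence is immovable exactly as in the blocked-corner proof. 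For the auxiliary modules I would observe that together they give two parallel top-to-bottom connections, so deleting either one leaves the other as a bridge; the freed auxiliary module does have room to pivot, but the $2$-gaps on the four wires together with the side spirals confine it to a bounded pocket inside the gadget, and as soon as it leaves its home cell the surviving auxiliary module becomes a genuine cut vertex and is frozen --- hence ``only one at a time''.

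Next I would rule out new moves in the presence of a global cycle (which, by Section~\ref{sec_wirecut}, can only arise while the agent is crossing some wire-cut gadget). Spirals are attached to a single vertex and so never lie on a global cycle; the modules that thereby cease to be critical for connectivity are only the junction modules and the triples, and I would check module by module that each of these is still \emph{physically} blocked by a neighbour or a $2$-gap --- this is exactly where the L2T differs from the branching hallway, since the triples here are packed against their spirals and have no empty neighbour to move into. Hence even under a global cycle the movable modules are still the auxiliary ones, and displacing one immediately destroys the cycle so nothing cascades; in particular the $1$-gaps added to every wire keep any single ``solo'' module from escaping to another gadget.

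It then remains to verify the transition table. In the open state (state~$3$), an agent arriving on the top-left wire can, with the left auxiliary module, build the local bridge of Figure~\ref{fig:hexl2tHex}~(right); this creates a cycle that frees the three bottom-left modules, two of which leave along the bottom-left wire --- the agent exiting the left tunnel downward --- while the third stays so that the gadget remains connected. The left connection is now ``spent'': the only movable module is a single lower-left module, whose one available move is the reverse of what was just done and restores state~$3$, while the right top-to-bottom traversal is now impossible because feeding the bottom-right triple needs a connection that the left-side rearrangement has removed; this is the ``locking'' of the opposite tunnel. The top-right tunnel is symmetric, producing the third state, and one checks that an agent arriving on a bottom wire in state~$3$, or a solo agent, creates no cycle freeing new modules and so cannot alter the state unintendedly; since entering and exiting a tunnel are mutually inverse, this matches Figure~\ref{fig:L2T} exactly. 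The static and global-cycle parts are routine given the earlier lemmas; the main obstacle is this last step --- a careful, figure-driven case analysis over the three states, all agent arrival wires, and the global-cycle/solo-agent variants, showing that the only reachable configurations are the intended transitions and their reverses and that the opposite tunnel is genuinely blocked after a traversal.
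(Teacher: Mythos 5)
Your proposal follows essentially the same route as the paper's proof: classify every non-auxiliary module as immovable (cut vertex, locked leaf, or blocked by collinear neighbours), observe that only one auxiliary module can move at a time, check that a global cycle adds nothing, then walk through the traversal from each of the four wires and each of the three states, noting that a bridge with the auxiliary module creates a local cycle freeing the bottom triple and that exactly two of them exit, leaving the gadget in the locked state that can only be undone by reversing. The solo-agent remark at the end also appears in the paper.

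One point is slightly off. You say the third module of the freed triple ``stays so that the gadget remains connected''. The paper's reason is geometric, not connectivity-based: it is the $2$-gap that is not on a line (described in Section~\ref{sec_propert}, Figure~\ref{fig:hexgap} right) that physically forces one module to remain when crossing from the gadget side. This distinction matters. A purely connectivity-based argument would not rule out three modules escaping one after another via intermediate positions, and if the triple could be entirely stripped the gadget's locking invariant would fail. Your closing admission that a careful figure-driven case analysis is the remaining work is fair; the paper carries that out against Figures~\ref{fig:hexl2tHex}~(left) and~(right). Apart from the $2$-gap justification, the proposal matches the paper's argument.
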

\begin{proof}
First we argue about which modules can move: when the agent is not present only the two auxiliary modules can move, but the range of positions is very limited (possible places are shown with red dots in shown in Figure~\ref{fig:hexl2tHex} left). Outside a global cycle only one of the two can move (since the other must remain to preserve connectivity). Observe that no other module can move, even during a global cycle (as in previous cases, neighboring positions prevent all modules that can be involved in a cycle from moving). Thus, as long as the agent is not present no significant change can happen to the gadget.

Now we consider the case in which the agent reaches the gadget (and the gadget is in the state of Figure~\ref{fig:hexl2tHex} left). If the agent comes from either of the lower wires, it will not be able to reach the middle of the gadget, nor alter significantly its state. 

If the agent reaches the gadget from either of the upper wires the situation changes: in this case, the auxiliary module of that side can move and help the agents form a bridge. Notice that the bridge creates a cycle contained within the gadget, but no new of the cycle can move (for the same reasons as when a global cycle is created).

Once on the gadget, the agent and the red auxiliary module can now bridge over to the ``almost disconnected`` configuration (modules on the side of the configuration that are connected to the rest of the configuration by the other three module group at the bottom of the gadget, on the same side). This creates a cycle and allows the bottom three modules to move. By doing the reverse steps, we can transfer two of the three modules onto the lower wire. Notice that the presence of the $2$-gap that is not on a line forces one of the modules to remain in the gadget (and thus only two modules move  onto the wire). See the traversed state, in Figure \ref{fig:hexl2tHex} right.

Overall, we have that: the gadget alone cannot change its state. Moreover, if the agent comes from any of the lower wires it cannot alter its state (nor cross the gadget). If the agent instead comes from either of the upper wires, it can cross to the same lower wire and the state of the gadget has changed. 

By using the same arguments, we can see that once the gadget is in the situation of Figure \ref{fig:hexl2tHex} right, the agent can only alter the state of the gadget by coming back from the same wire and undoing all operations. If the agent comes from any of the other three wires, the $2$-gap (and lack of auxiliary modules) will prevent the agent from interacting with the gadget. This matches the behavior of the 2-toggle as defined in Section~\ref{sec:prelim}.
\end{proof}

Notice that in the argument above it was critical that both agent modules interacted with the gadget. If a single module reached the target (say, because the agent modules decide to split or a global cycle allows one module from a wire switch gadget to move), that module will not be able to change the state of the L2T in any way.

\subsubsection{Win gadget}
As mentioned in Section \ref{sec:prelim}, the last gadget we need is the win gadget. This gadget is only used to mark our finished state. If the agent can reach it, we can change its state then reverse all other moves made to return to the agent's initial start location

\begin{figure}[ht]
    \centering
    \begin{minipage}[t]{0.4\textwidth}
        \centering
        \includegraphics[width=\linewidth]{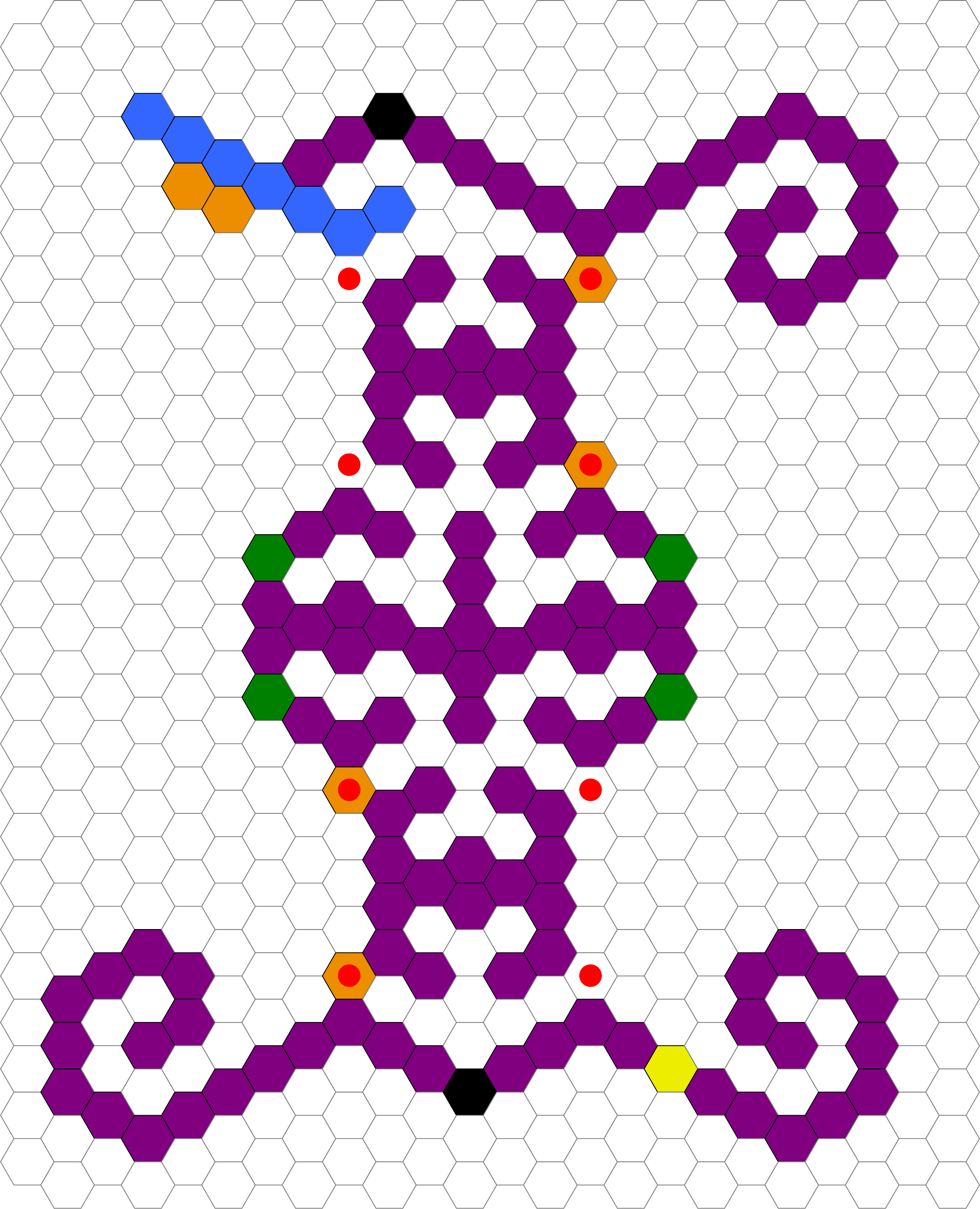}
    \end{minipage}
    \qquad
    \begin{minipage}[t]{0.4\textwidth}
        \centering
    	\includegraphics[width=\linewidth]{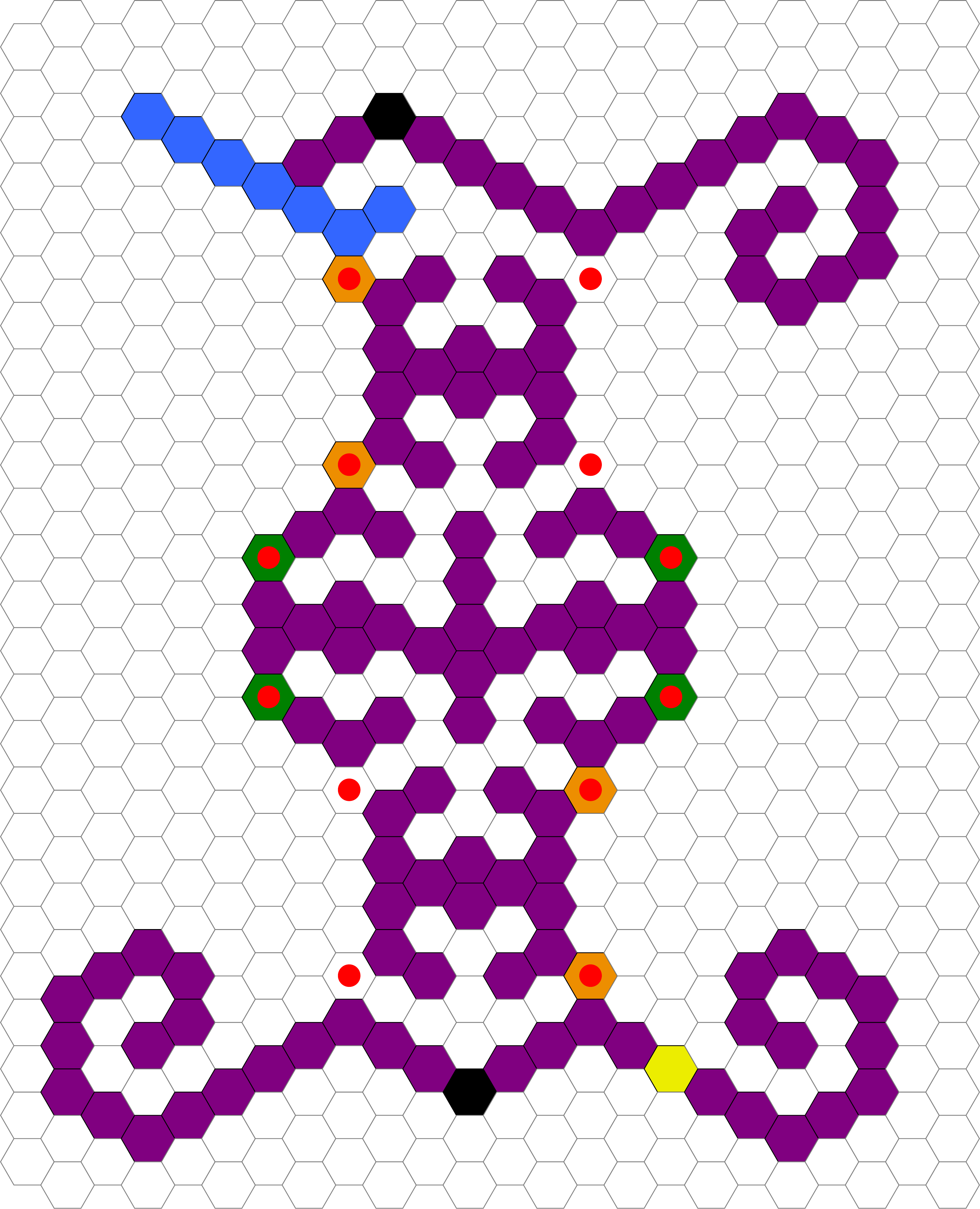}
    \end{minipage}
    \caption{(left) Win gadget in the inactivated state. The protected corners on the left are leaning up, while the ones on the right are leaning down.
    (right) Once the agent has placed itself in the highlighted positions, the partner modules in the right can move. They can traverse down, reach the goal destination, and allow the two other highlighted modules on the bottom left side to return to the wire. The overall result is that the location of 4 modules has changed, and the four corners have changed their state. We call this situation the finished state.}
\label{fig:hexwin}
\end{figure}

This gadget receives a single wire and must change its state if the agent can reach it. We represent this gadget with a slightly modified version of the side switch gadget (see Figure~\ref{fig:hexwin} left). In the gadget there is a highlighted module (shown in yellow) that conceptually is the position that must reached by the agent. Note that this module has no special meaning or purpose: we will track if the agent can reach that yellow module by monitoring the position of nearby modules. 

\begin{lemma}
The win gadget can change its state if and only if it is reached by the agent.
\end{lemma}
\begin{proof}
As usual in our constructions, no module can move until additional modules enter the gadget. Further note that, since the win gadget is connected to a single wire gadget, it cannot be involved in a global cycle. Recall that we add $1$-gaps in the middle of all wire gadgets, thus we conclude that solo agents cannot reach the win gadget. 

Since the status of the gadget cannot change without extra modules and a solo module cannot enter the gadget, the only way in which this gadget can change its state is when 2 modules (i.e., the agent) reaches the gadget. In this case we have a similar situation as in the side switch gadget: if the modules place themselves on the spots designated with a red dot, the orange modules highlighted with a red dot on the other side of the gadget can move. These modules can proceed down and place themselves in the lower part of the gadget (see Figure~\ref{fig:hexwin} right). In turn, this allows the two other orange modules to move. These two modules can move upwards and return to the wire gadget. 

Overall, if two modules reach the gadget, they can change the position of a few modules and afterwards two different modules will exit the gadget. This simulates the changing the state of the win gadget.
\end{proof}
}

\subsubsection{Finishing steps}

\iffull Now that we have all pieces we can proceed to prove Theorem~\ref{theo_hexrestrictedhard}.\fi

\begin{proof} \ifabstract (of Theorem~\ref{theo_hexrestrictedhard})\fi Our reduction follows the framework in~\cite{motionplanning2}. Given a problem instance for 1-toggle-protected motion planning with the locking 2-toggle, we embed in a way that all edges are drawn with polylines that are multiples of $60^{\circ}$, replacing gadgets with the corresponding module configurations (adding side switch and wire cut gadgets as needed, as well as $1$-gaps to all wire segments). Finally, we place two additional modules at the initial position to define the agent. Since each gadget takes constant space, the problem instance will have polynomial size. Our goal configuration is the same configuration with only one change (the state of the win gadget). 

If the problem instance is solvable, there is a way for the agent to reach the win gadget, change its state, and then return back to the initial position in the exact reverse path. By doing so we reset every gadget except the win gadget back to its original state and reaching the agent's original start position. If the problem instance is not solvable, the agent cannot to reach the win gadget and thus the reconfiguration problem will also be unfeasible.
\end{proof}

\subsection{Square modules with the restricted move model}\label{sec_squarereduc}

\iffull We now show the same hardness for square modules. \fi

\begin{theorem}\label{theo_squarerestrictedhard}
Given two configurations of $n$ square modules, it is PSPACE-hard to determine if we can reconfigure from one to the other using only restricted moves.
\end{theorem}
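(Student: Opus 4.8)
The plan is to follow the same framework as the proof of Theorem~\ref{theo_hexrestrictedhard}, reducing from 1-toggle-protected motion planning with the locking 2-toggle (Theorem~\ref{thm:balanced_pspace}). The agent is represented by a small constant number of square modules moving together, and each element of the motion-planning instance --- wires, protected and blocked corners, branching hallways, the locking 2-toggle, and the win gadget --- is realized by a constant-size configuration of square modules. The embedding uses axis-parallel polylines (turns are multiples of $90^\circ$), each gadget occupies constant space, and the goal configuration differs from the start only in the state of the win gadget; as in the hexagonal case, solvability of the motion-planning instance is then equivalent to reconfigurability.

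First I would establish the square analogues of the basic ``locking'' lemmas: local configurations in which a module, though not needed for connectivity, cannot perform any restricted move because occupied cells in its neighborhood obstruct every pivot (the square counterparts of Figure~\ref{fig:hexmoves}). Since restricted square moves have the free-space requirements shown in Figure~\ref{fig:sq_res}, surrounding a would-be-mobile module with two or three well-placed blockers suffices. Next I would define a square $k$-gap: a short stretch of empty cells along the boundary of the region the agent traverses that cannot be crossed by the two agent modules alone --- a restricted move cannot leap such a gap, which is exactly what distinguishes it from the leapfrog and monkey models --- but that can be bridged when an extra module is available. These gaps serve the same three purposes as in the hexagonal construction: forcing the two agent modules to travel together, confining any ``solo'' module freed by a global cycle to its own gadget, and (via wire-cut gadgets) breaking long cycles in the instance.

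With these primitives in hand, each gadget is built and verified in turn. Wires are straight segments whose interior modules are immobilized by their two neighbors; corners use a junction module held in place by a small attached spiral, a \emph{protected} corner allowing the agent to pass through (behaving as a $1$-toggle, which is why the $1$-toggle-protected version of motion planning is the right source problem) and a \emph{blocked} corner allowing no passage. The branching-hallway and locking-2-toggle gadgets are assembled from a constant number of ``critical'' module pairs whose occupancy encodes the gadget state: when the agent arrives on an entrance wire it can, together with an auxiliary module, form a local bridge that creates a cycle, temporarily freeing exactly the intended modules to relocate and carry out the state transition, while every other module remains pinned either by connectivity or by the locking lemmas. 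A wire-cut gadget placed on one edge of every cycle of the underlying connection graph keeps the whole instance cycle-free, validating the connectivity arguments, and the $1$-gaps added to every wire segment confine any solo agent produced during a global cycle.

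The main obstacle will be engineering the locking 2-toggle (and, to a lesser extent, the branching hallway) in the square grid so that its three states are faithfully reproduced and no stray module can move: the restricted square move has tighter free-space requirements than the hexagonal restricted move, and the weaker vertex-adjacency of the square grid (two modules can touch at a corner without sharing an edge) makes the connectivity bookkeeping more delicate. Hence the case analysis ruling out unintended moves --- in every state, from every approach direction, and in the presence of a global cycle --- is where most of the work lies. Once each gadget's lemma is in place, the reduction concludes exactly as for hexagons: a polynomial-size instance with constant space per gadget, and a valid reconfiguration exists if and only if the agent can reach the win gadget and then reverse its path to restore every other gadget.
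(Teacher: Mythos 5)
Your proposal matches the paper's proof in essence: the paper also reduces from 1-toggle-protected motion planning with the locking 2-toggle, represents the agent as two square modules, and builds square analogues of each hexagonal gadget (locking configurations, gaps, wires, protected/blocked corners, wire cuts, a 1-gap replacement, side switches, branching hallways, the L2T, and the win gadget), explicitly stating that ``the arguments are analogous to the hexagonal counterpart.'' One detail you gloss over that the paper highlights is that squares admit no direct 1-gap, so the paper substitutes a somewhat larger auxiliary-module construction resembling the wire-cut gadget to play that role, but this is consistent with your ``square $k$-gap'' framing and does not change the overall approach.
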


Our reduction is analogous to the hexagonal reduction. We quickly list the pieces and a small description for each, but for brevity the proof of correctness of each single gadget is removed. The arguments are analogous to the hexagonal counterpart and we present a full list of our gadgets in the Appendix in section~\ref{sec:sqgadgetsAppendix}.

\later{
\subsubsection{Gap, agents, wires, wire cuts and side switch gadgets} \label{sec:sqgadgetsAppendix}
As with the hexagonal reduction, the agent will be represented by two modules. The way we use to lock leaves is shown in Figure~\ref{fig:sqmoves}: the purple module marked with a dot is a leaf of the configuration, but will not be able to move if the other highlighted positions are occupied.

\begin{figure}[ht]
	\centering
	\includegraphics[width=0.4 \linewidth]{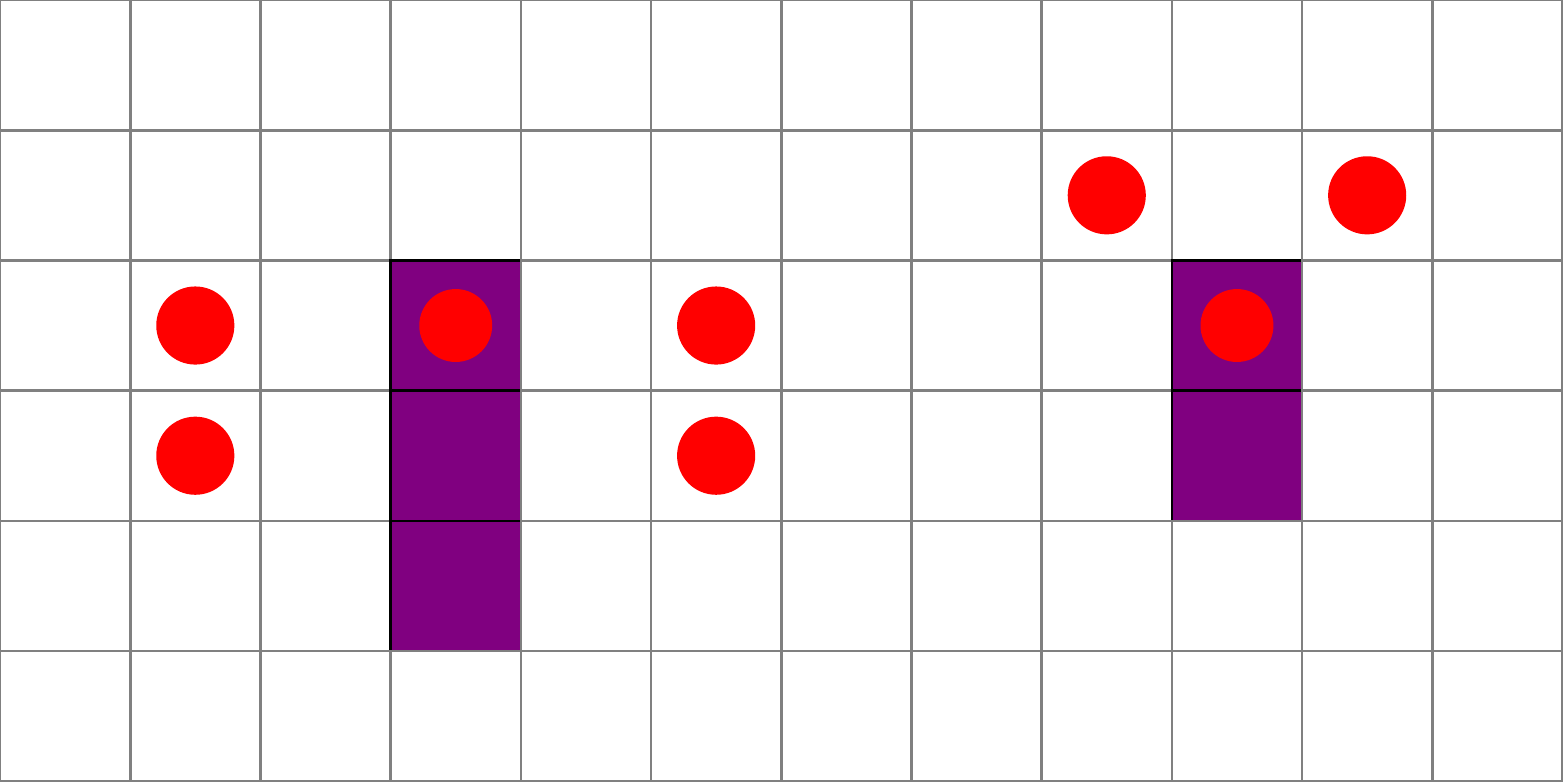}
	\caption{Two different ways in which we can prevent a square leaf module from moving under restricted model. (left) as long as two of the four positions are occupied (one on each side) the highlighted module will not be able to move. (right) We can also block a leaf from moving by occupying two diagonally adjacent locations.}
	\label{fig:sqmoves}
\end{figure}

As in the hexagonal case, we need some local way of preventing the agent from moving into certain locations. In only the restricted move is allowed, then a single hole suffices (see Figure~\ref{fig:sqgaps} left). As in the hexagonal case, we need a gap that can only be traversed by the two agents if there is anauxiliary module. This is shown in Figure~\ref{fig:sqgaps} right: two modules cannot cross the gap alone, but if we have an auxiliary module at the bottom, the agent can move from the top down (and in reverse direction), but one module must always remain attached to the lower edge (it is not possible to move all three modules to the upper portion).

\begin{figure}[ht]
	\centering
	\includegraphics[width=0.55 \linewidth]{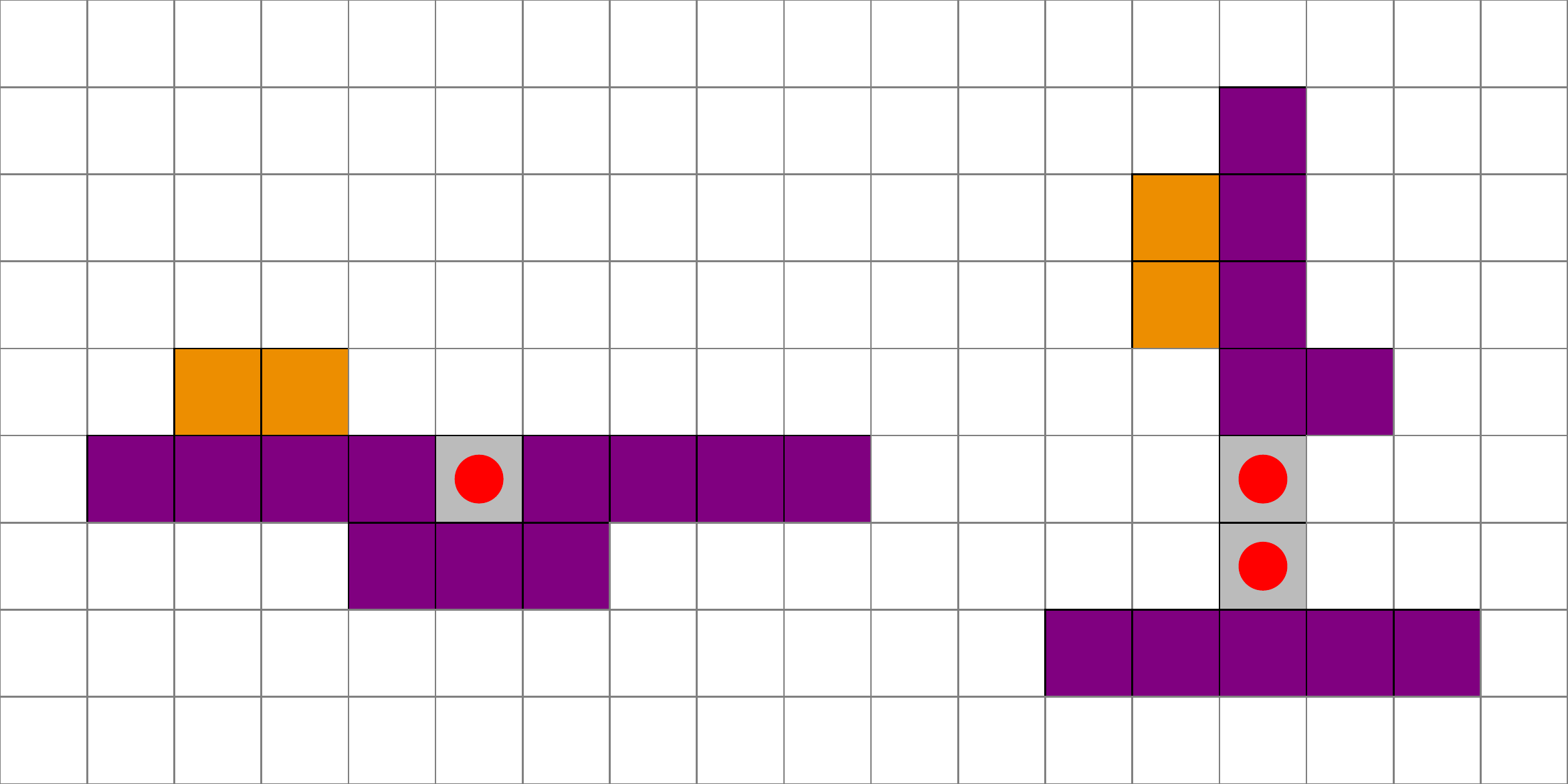}
	\caption{A possible way in which we can represent gaps in the square model. The left image shows a gap that cannot be crossed with 2 modules (in the restricted model). The right image shows a gap that cannot be crossed unless an auxiliary third module is present.}
	\label{fig:sqgaps}
\end{figure}

Wires are also represented by a sequence of modules in a line. This time, we will draw isothetic edges (i.e., either horizontal or vertical). Protected and blocked corners are shown in Figure~\ref{fig:sqcorners}. As it happens with the hexagonal case, blocked corners cannot be traversed by the agent, and protected ones can be traversed while at the same time enforce the fact that the gadget acts as a 1-toggle (see Figure~\ref{fig:sqcornerstates} for the possible states of the protected corner).

\begin{figure}[ht]
	\centering
	\includegraphics[width=0.65\linewidth]{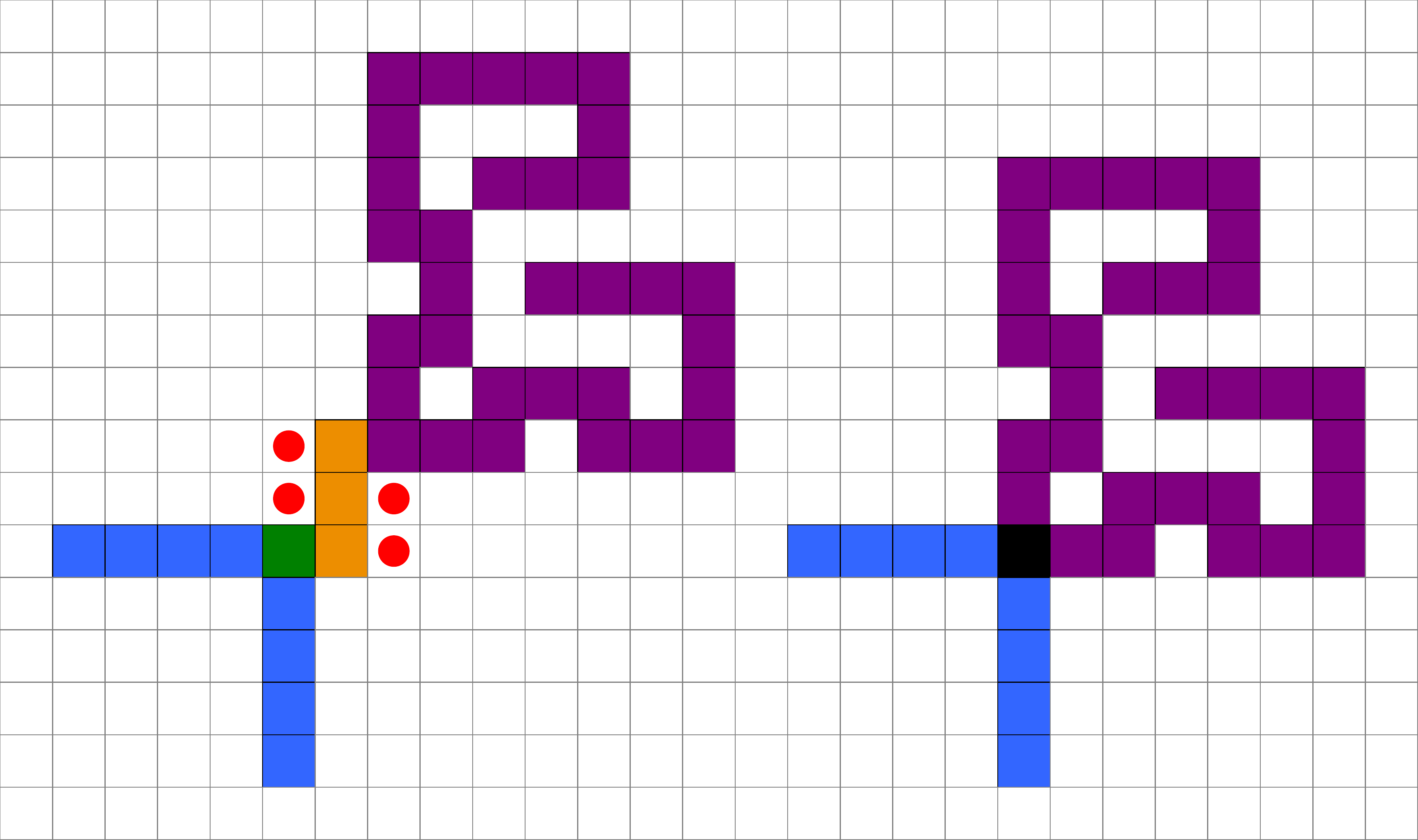}
	\caption{Examples of protected and blocked corners. The positions highlighted with red dots are the places in which an agent can place itself to traverse the gadget.}
	\label{fig:sqcorners}
\end{figure}

\begin{figure}[ht]
	\centering
	\includegraphics[width=0.65\linewidth]{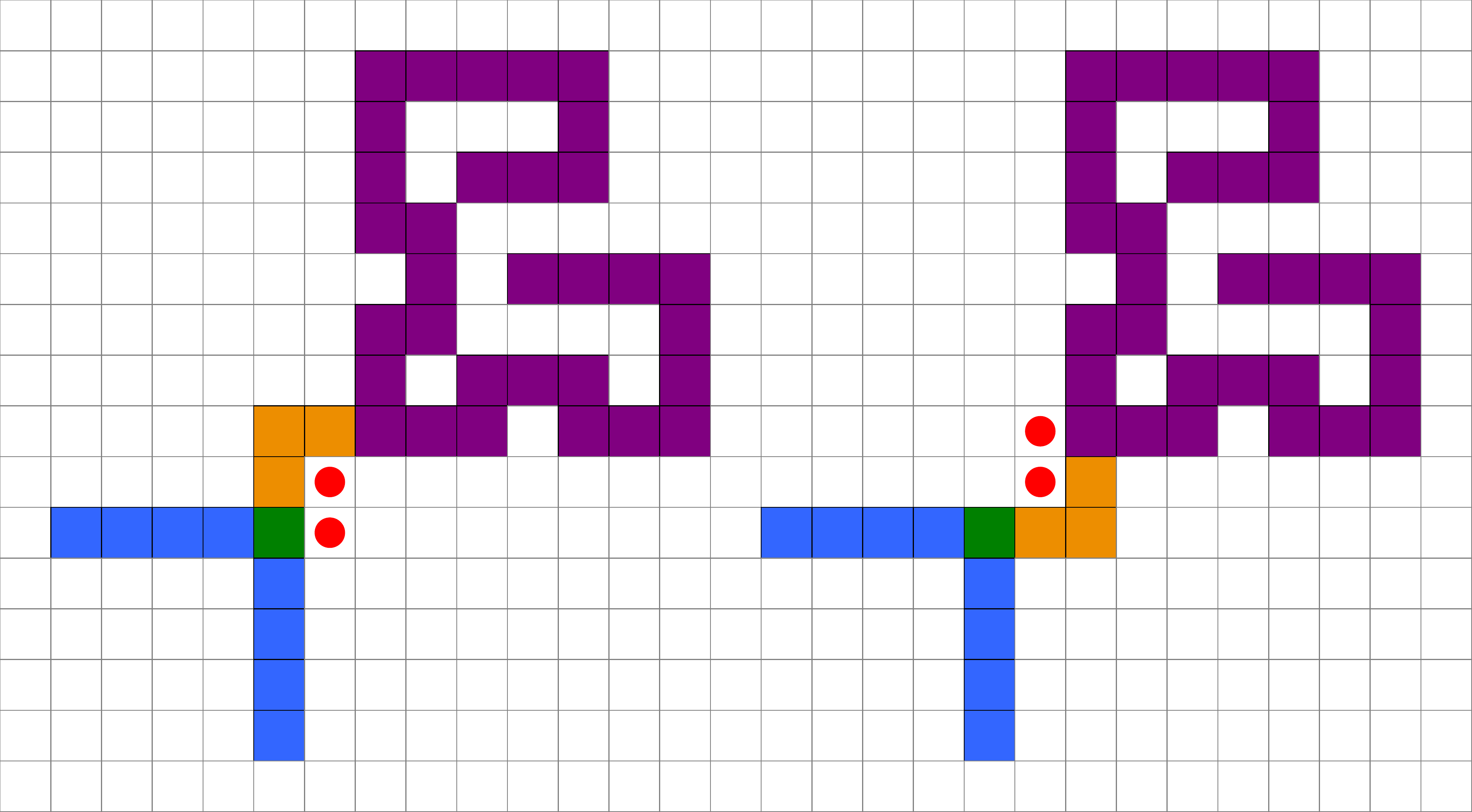}
	\caption{The two other possible states that the protected corner can be in (left and right leaning, respectively).}
	\label{fig:sqcornerstates}
\end{figure}

A possible way to represent wire cuts is shown in Figure~\ref{fig:sqwirecut}. As in the hexagonal case, when the agent and the auxiliary module make bridge, they also form a cycle. The cycle is local (if the modules form a bridge on the right side) or global (as done in our example). A global cycle can cause an auxiliary module at another location to move (a solo agent). 

\begin{figure}[ht]
	\centering
	\includegraphics[width=0.7 \linewidth]{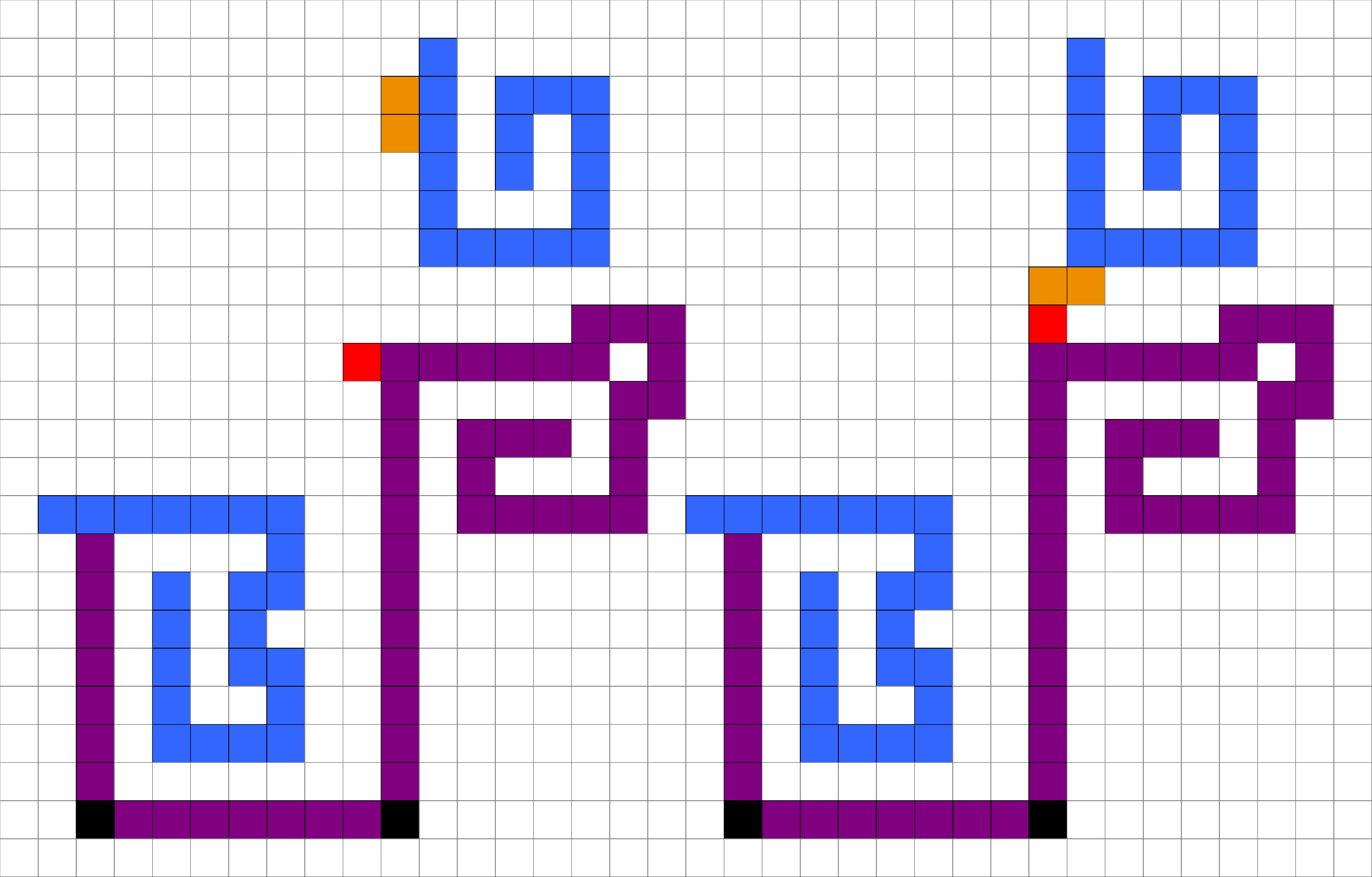}
	\caption{(left) Wire cut gadget with the auxiliary module highlighted in red and the agent coming from the right side. (right) the two agent modules and the auxiliary module can form a bridge so that the three modules get onto the wire cut gadget.}
	\label{fig:sqwirecut}
\end{figure}

As in the hexagonal case, we need a way to prevent this solo agent to move between gadgets. In the hexagonal model we used a $1$-gap, but there is no such equivalent when considering square modules. Instead, we need to provide a slightly larger and more complicated gadget similar to the wire cut gadget (shown in Figure~\ref{fig:sq1gap}). 

\begin{figure}[ht]
	\centering
	\includegraphics[width=0.45\linewidth]{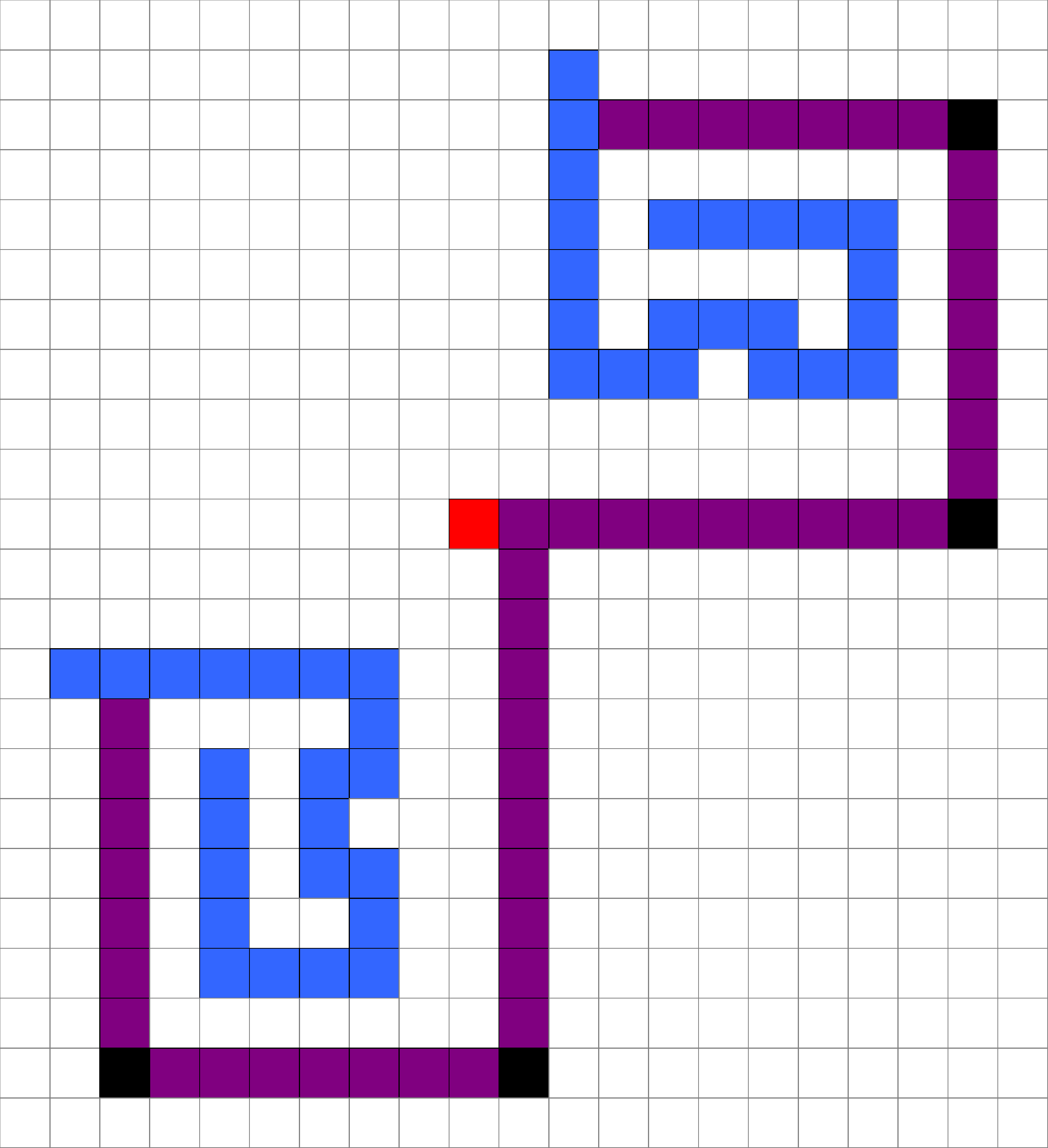}
	\caption{An example of a construct in the square model that is equivalent to a $1$-gap. A single module cannot cross it, but a $2$-module agent can. Notice how the gadget contains an auxiliary red module. One module must remain within the gadget at all times.}
    \label{fig:sq1gap}
\end{figure}

\begin{lemma}
The gadget shown in Figure~\ref{fig:sq1gap} acts as the equivalent of $1$-gap in the hexagonal model. That is, a solo agent cannot traverse it, but two agent modules will be able to. After the agent has traversed the gadget, a single auxiliary module must remain within.
\end{lemma}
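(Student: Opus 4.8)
The plan is to mirror the analysis carried out for the hexagonal wire-cut gadget (and the hexagonal $1$-gap), adapting it to the square locking rules of Figure~\ref{fig:sqmoves}. First I would establish a \emph{static} claim: in the gadget of Figure~\ref{fig:sq1gap}, as long as no agent module is adjacent to it, every module except the single red auxiliary module is immovable under the restricted model. Each such module is either a cut vertex of the contact graph or is \emph{locked} in the sense of Section~\ref{sec_propert} (a leaf or degree-$3$ module with two occupied neighboring/diagonal cells around it, as in Figure~\ref{fig:sqmoves}); this reduces to a finite inspection of the cells of the gadget. I would also observe that the auxiliary module, though free, can only reach a bounded set of positions strictly inside the gadget, since it is confined on both sides by the gap-like cuts of Figure~\ref{fig:sqgaps} (right), exactly as the auxiliary module of the square wire-cut gadget is confined.

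Next I would prove the \emph{solo-agent} direction. If a single module enters from one side, then, counting the built-in auxiliary module, at most two modules are ever available on the near side of the central gap. By the geometry of that gap (Figure~\ref{fig:sqgaps}, right), two modules cannot bridge it: the first can only settle in the nearest empty cell, the second can only stack on it, and no restricted move carries either across. Combined with the static claim (the solo module cannot dislodge any other module either), this shows that a lone module is trapped inside the gadget and can at best return to where it came from — which is precisely the property that prevents a ``solo agent'' spawned by a global cycle elsewhere from leaking between gadgets.

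Then I would prove the \emph{two-module} direction. When both agent modules arrive on one side, they have three modules at their disposal counting the auxiliary one, and I would exhibit the explicit short sequence of restricted moves — the same bridge-then-unbridge maneuver used in the hexagonal wire-cut lemma — that walks the pair across the gap. At each step I would check that connectivity is preserved, that no module outside the intended set becomes movable, and that the gadget returns to a state symmetric to its starting state, so the maneuver is reversible and can be run from either side. Finally I would argue that exactly one auxiliary module is stranded inside afterwards: because the relevant gap is not collinear (cf.\ the right picture of Figure~\ref{fig:hexgap} and the description of Figure~\ref{fig:sqgaps}, right), all three modules cannot be moved to the far side; the module attached directly to the far edge of the gadget is locked, movement must begin from the middle module, and once it leaves, the third module is isolated on a single-cell stub — exactly the claimed behavior.

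The main obstacle is the geometric bookkeeping in the two-module case: one must choose the gap widths and the surrounding spiral/locking cells so that \emph{simultaneously} (a) two modules provably cannot cross, (b) three modules provably can, and (c) after crossing exactly one remains while the configuration is reset, all the while no stray module ever gains a legal restricted move or a new cycle. This is the same delicate balance already achieved for the hexagonal $1$-gap and wire-cut gadgets, so I would reuse those arguments wherever possible and re-verify only the finitely many new local pictures introduced by the square lattice.
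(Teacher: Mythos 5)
Your plan follows the same overall template the paper uses: treat this gadget as a near-copy of the square wire-cut gadget, establish a static immobility claim, argue that one module cannot cross while two can, and then account for the auxiliary module left behind. That much is on target. However, you miss the single observation that the paper treats as the crux of the argument, and it is not a cosmetic omission.

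The paper's proof hinges on the fact that, unlike the wire-cut gadget, this gadget is \emph{connected}: the two walls of the wire are joined by a path lying entirely inside the gadget. Consequently, when the agent modules and the auxiliary module form a bridge to traverse the gap, the cycles they create are \emph{local} to the gadget, and so no module elsewhere in the construction becomes movable. In the wire-cut gadget the analogous bridge closes a \emph{global} cycle (that is its job — the gadget is deliberately disconnected), and global cycles are exactly what spawns ``solo agents'' in other gadgets. The $1$-gap gadget exists to stop those solo agents from leaking between gadgets; if it could itself create new global cycles during a legitimate two-module traversal, it would manufacture the very problem it is meant to contain, and the whole bookkeeping in Sections~\ref{sec_wirecut}--\ref{sec_sideswitch} would collapse. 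Your draft asks the reader to ``check that no stray module ever gains a legal restricted move or a new cycle,'' but it never identifies \emph{why} this holds here when it manifestly fails for the wire-cut gadget; the reason is the connectedness of the gadget, and that needs to be stated and used.

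Two smaller points. First, your solo-agent argument (``at most two modules on the near side, and two cannot cross the gap'') differs in emphasis from the paper's, which argues that the indentations prevent the lone module from even reaching the corner positions where it could interact with the auxiliary module; either framing can be made to work for a correctly drawn gadget, but you should pin down which empty-cell pattern you are actually relying on. Second, the paper also notes that because this gadget bends the wire, three protected-corner bends must be added so the wire remains a straight segment; if you are reproducing the construction from scratch, that detail belongs in the proof as well.
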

\begin{proof}
Recall that this gadget is added to each segment. Thus, in addition to what is shown in the figure, we must add three bends (with protected corners) to make sure that the wire gadget is a straight segment.

The behavior is very similar to the wire cut gadget. Indeed, the two gadgets are almost identical. The only difference is that this gadget is connected whereas the wire gadget is not. This construction contains an auxiliary module that can move (shown in red). Because of the model limitations, this module can be diagonally adjacent to the blue modules (that are part of the wire gadget), but cannot attach to them. A solo agent arriving from either of the blue wires is stopped from reaching the corners because of the indentations. Thus, no significant change can happen.

The situation changes if two modules (rather than one) reach the gadget either of the wires. Akin to the process shown in the wire gadget (see Figure~\ref{fig:sqwirecut}), the auxiliary module together with the agent can form a bridges and leave the wire gadget. By doing the same operations in reverse, they can move onto the other half of the wire gadget and traverse through this gadget. Note that when the three modules form a bridge, a cycle is created. Unlike in the wire cut gadget, both cycles are local and no module other than the bridge can move.
\end{proof}

The wire switch gadget is shown in Figure~\ref{fig:sqwireside} and is very similar to the hexagonal case. The agent can enter from either side, attach itself to the highlighted red dots and allow the partner modules (shown in orange with a red dot) on the other side of the gadget to move. In the process we change the state of protected corners, thus enforcing that the gadget acts as a 1-toggle.

\begin{figure}[ht]
    \begin{minipage}[t]{0.5\textwidth}
        \centering
        \includegraphics[width=0.8\linewidth]{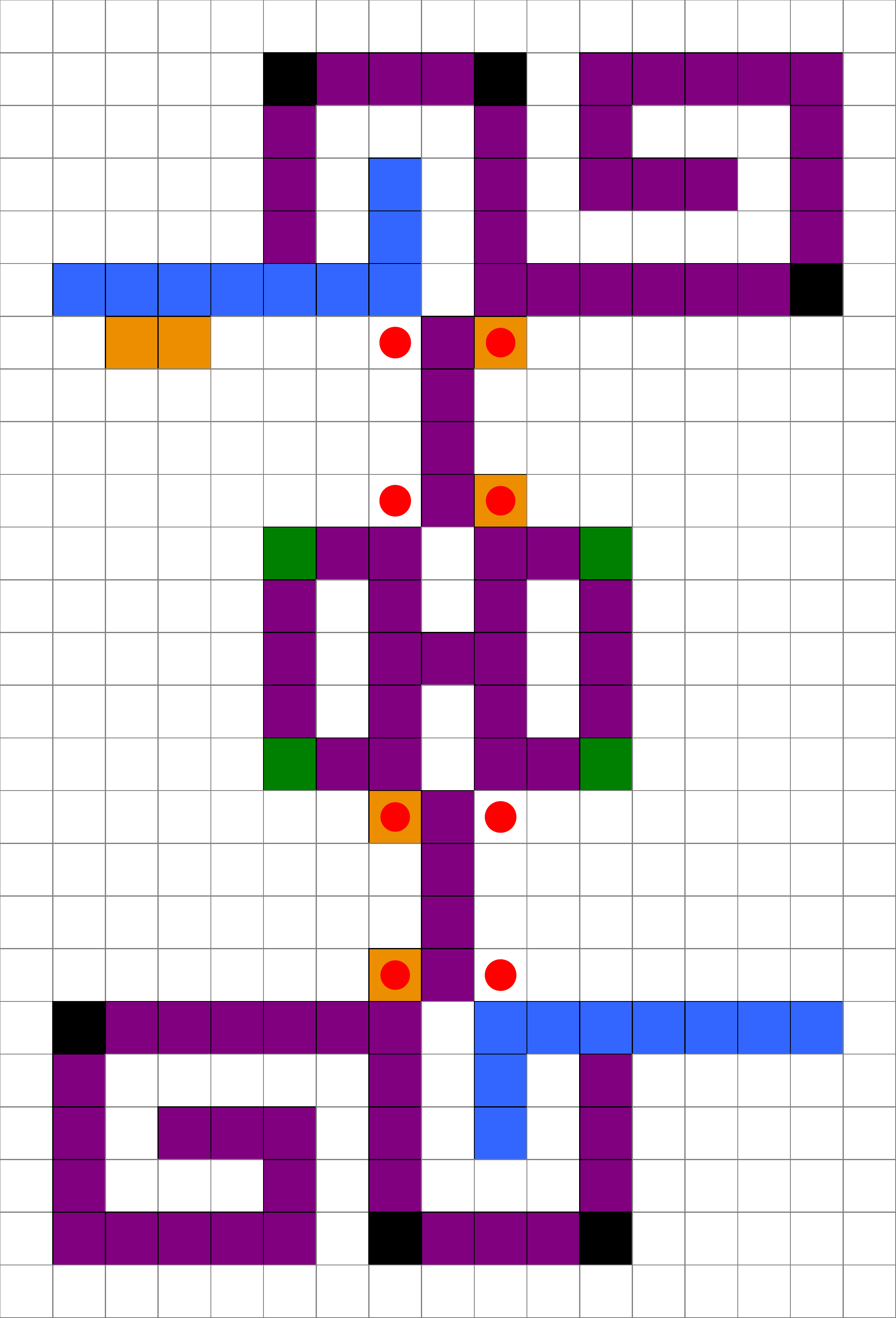}
    \end{minipage}
    \begin{minipage}[t]{0.5\textwidth}
        \centering
    	\includegraphics[width=0.8\linewidth]{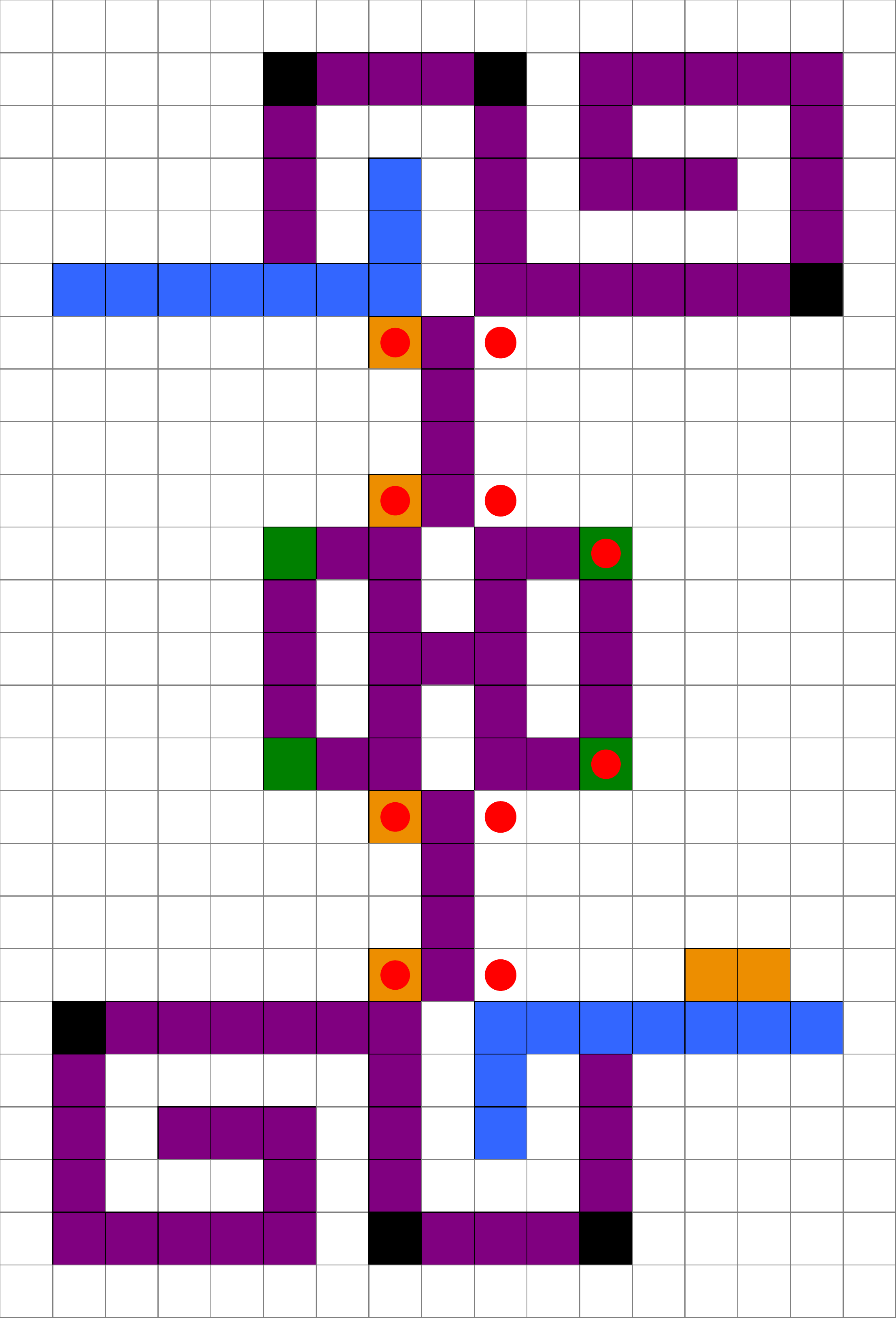}
    \end{minipage}
    \caption{(left) Wire side switch gadget. If the agent (shown in orange) enters the wire below the upper half of the wire, then it can move to the nearest highlighted red dots. This allows the orange modules (highlighted with a dot) on the other side of the gadget to move.
    (right) these two modules can move down and exit the gadget on the other side of the wire. As in the hexagonal case, we see it globally as the gadget changing state and the agent changing sides.}
\label{fig:sqwireside}
\end{figure}
}

\later{\subsubsection{Branching hallway and 2-toggle gadgets}}

\both{ The branching hallway gadget is show in Figure \ref{fig:sqbranch}} and works like the hexagonal version.

\later{
It works in a similar way as the wire side switch gadget and the hex branching hallway. There are four pairs of critical positions, and two must always remain occupied for connectivity. When the agent enters we have the option of swapping the positions that are occupied, provided that it can reach an empty position. As in the hexagonal case, this simulates a 1-toggle in both of the left wires. We can add a 1-toggle in the left wire by adding a protected spiral (not shown in the figure).}

\begin{figure}[hbt]
	\centering
	\includegraphics[width=0.3\linewidth]{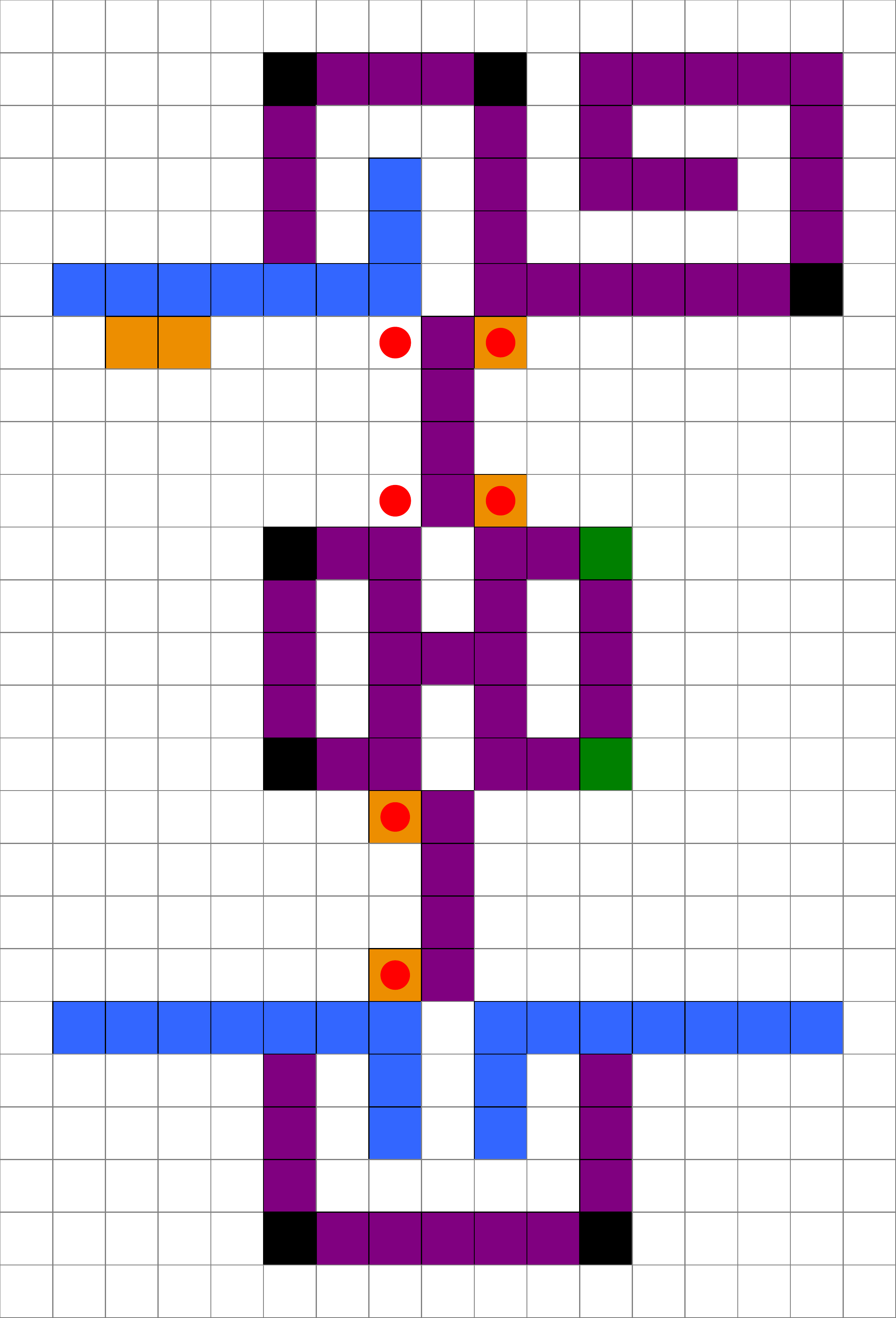}
	\caption{The branching hallway gadget for squares under the restricted model.}
	\label{fig:sqbranch}
\end{figure}

\both{The L2T gadget in the open state can be see in Figure \ref{fig:sql2t}.} Again, this gadget has the same exact functionality as its hexagonal counterpart.
The reduction works the same and the proof for Theorem~\ref{theo_squarerestrictedhard} follows a similar format as Theorem~\ref{theo_hexrestrictedhard}. The gadgets presented here together with the details in the Appendix complete the proof of Theorem~\ref{theo_squarerestrictedhard}.

\later{
Conceptually, it is identical to its hexagonal counterpart: 8 modules within the gadget are split into 4 groups and can conceptually move (but not all at a time). When the agent enters the gadget from one of the top edges, it can use the auxiliary red module on the same side to bridge over to the gadget, then bridge over to the almost disconnected component on the side. This allows the 3 modules on the bottom left (the two orange and the red) to move and then bridge over to the bottom left wire. Two of the three modules will proceed downwards.
}

\begin{figure}[ht]
    \centering
    \includegraphics[width=\linewidth]{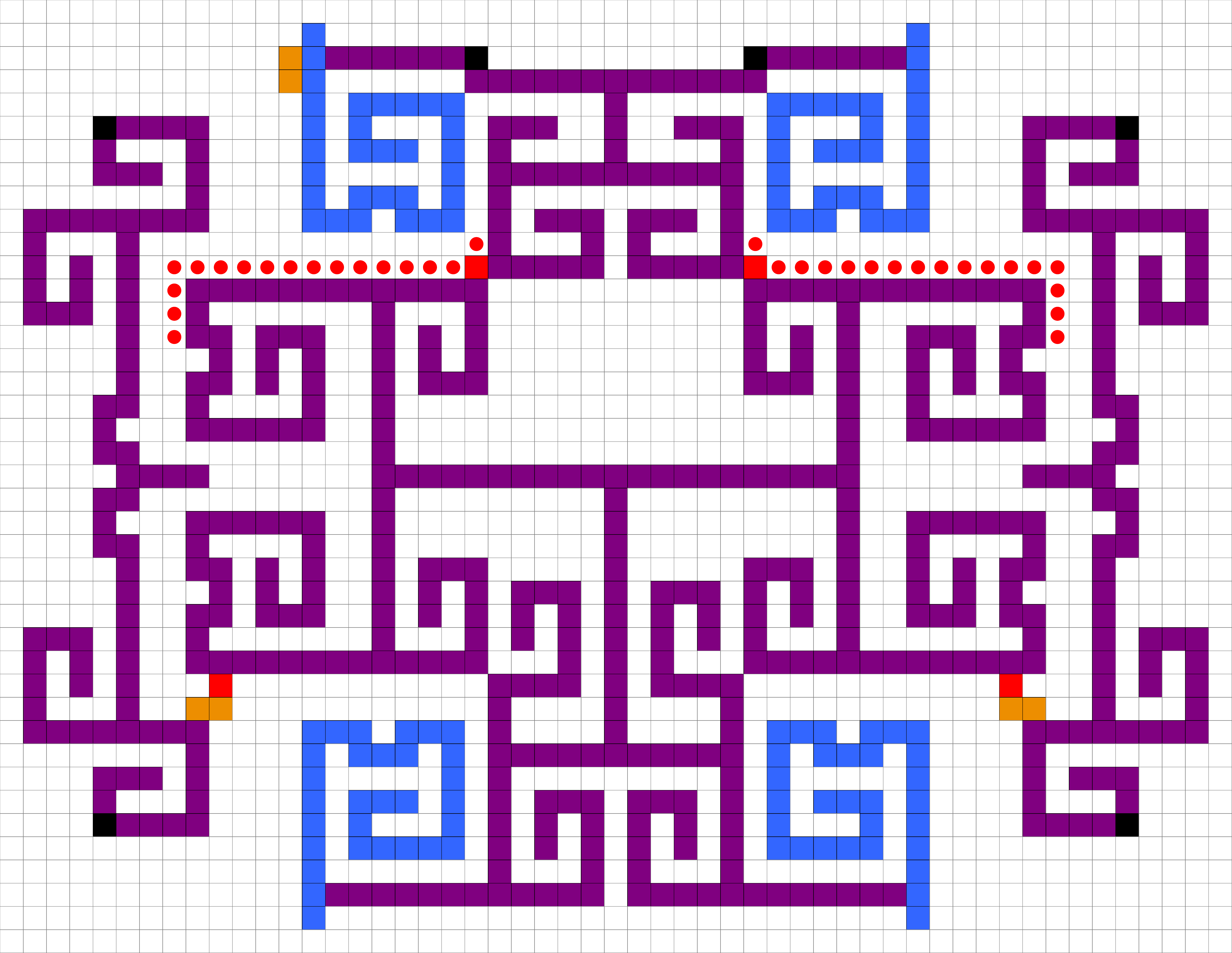}
    \caption{L2T in the open state. At this point the gadget can come from either of the top wires.}
    \label{fig:sql2t}
\end{figure}

\later{
After this traversal, the L2T will be in the state seen in \ref{fig:sql2t2}. Due to the same reasons as in the hexagonal case, this state of the L2T can only be traversed from the bottom left by undoing all of our previous moves. It cannot be crossed from any other wire.

\begin{figure}[ht]
    \centering
	\includegraphics[width=\linewidth]{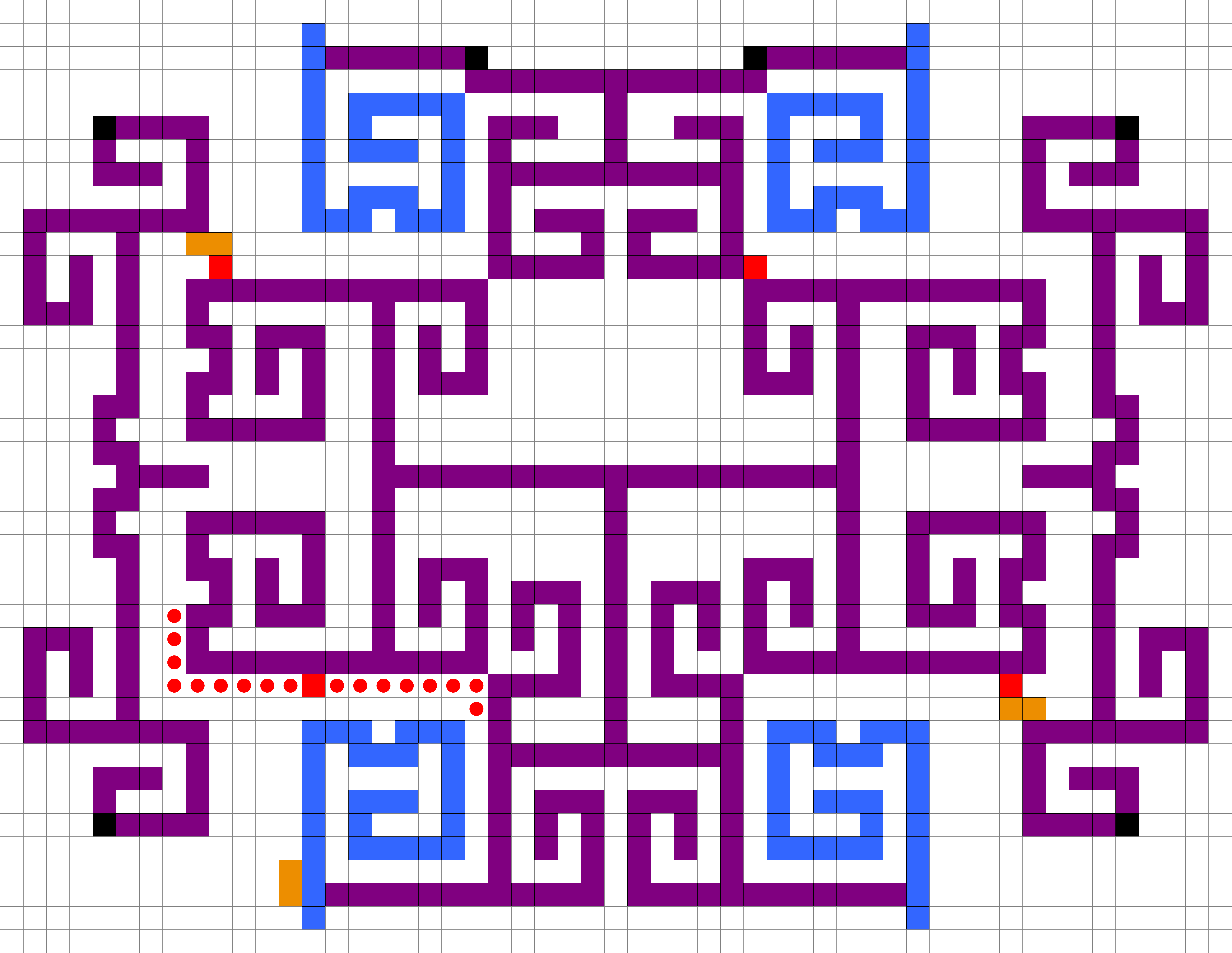}
	\caption{L2T in the closed state, traversable only from the bottom left upwards.}
    \label{fig:sql2t2}
\end{figure}

\subsubsection{Win gadget}

Finally, the win gadget is shown in Figure~\ref{fig:sqwin}. Just like in the hexagonal case, it is a slightly modified side switch gadget. The gadget is attached to a single wire and can only change its state when an the agent enters it. The traversal is done by the agent placing itself in the two red dotted positions at the top left of gadget, freeing the two orange red dotted modules at the top right. These two modules can now move down, cross the two protected corners and reach the target yellow modules. After that, these place themselves in the two red dotted positions at the bottom right, freeing the orange red dotted modules at the bottom left. These modules can now cross the other two protected corners and leave along the same wire the agent entered. The end state of the win gadget is shown in Figure \ref{fig:sqwin}, right.

\begin{figure}
    \centering
    \begin{minipage}[t]{0.4\textwidth}
        \centering
        \includegraphics[width=\linewidth]{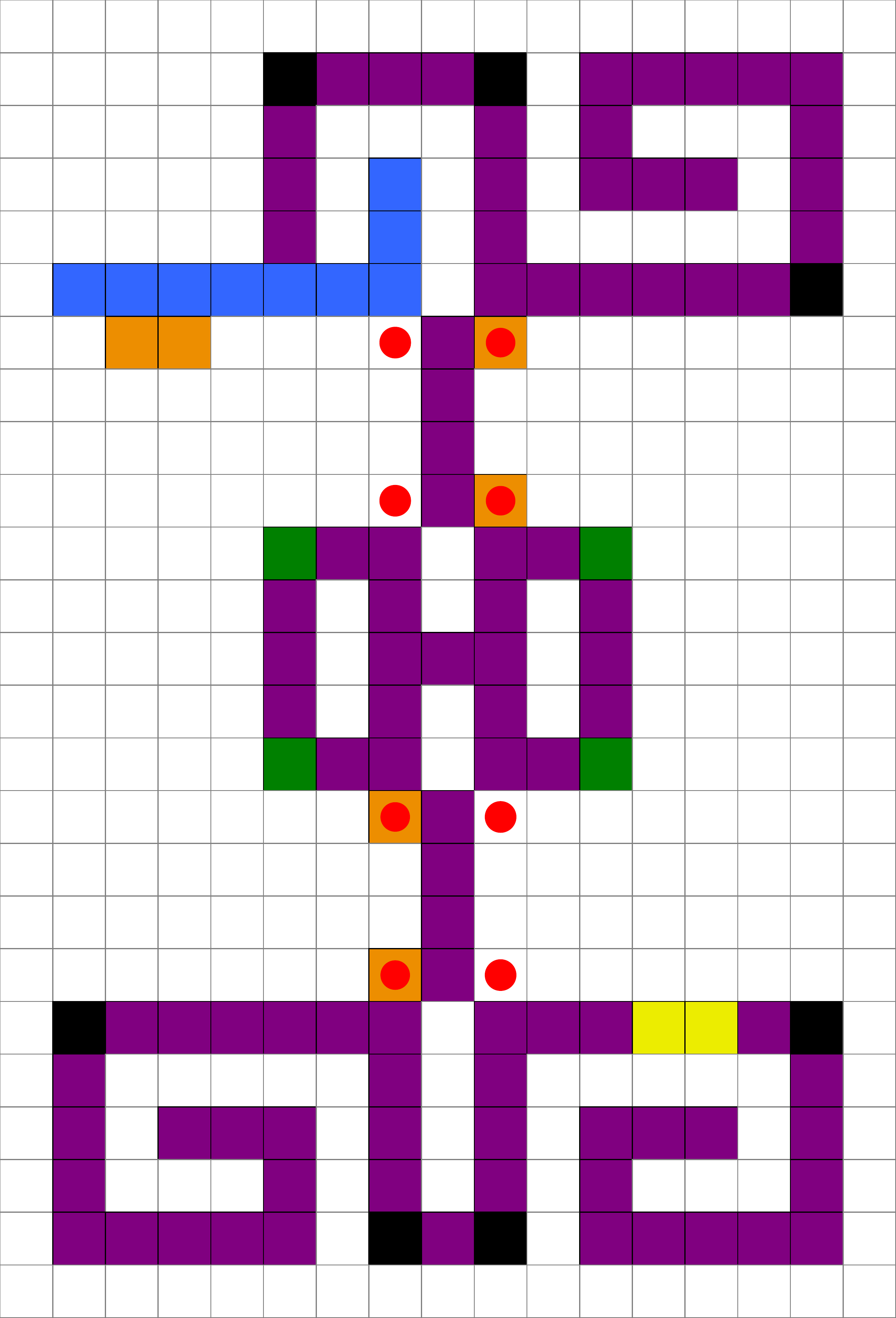}
    \end{minipage}
    \qquad
    \begin{minipage}[t]{0.4\textwidth}
        \centering
    	\includegraphics[width=\linewidth]{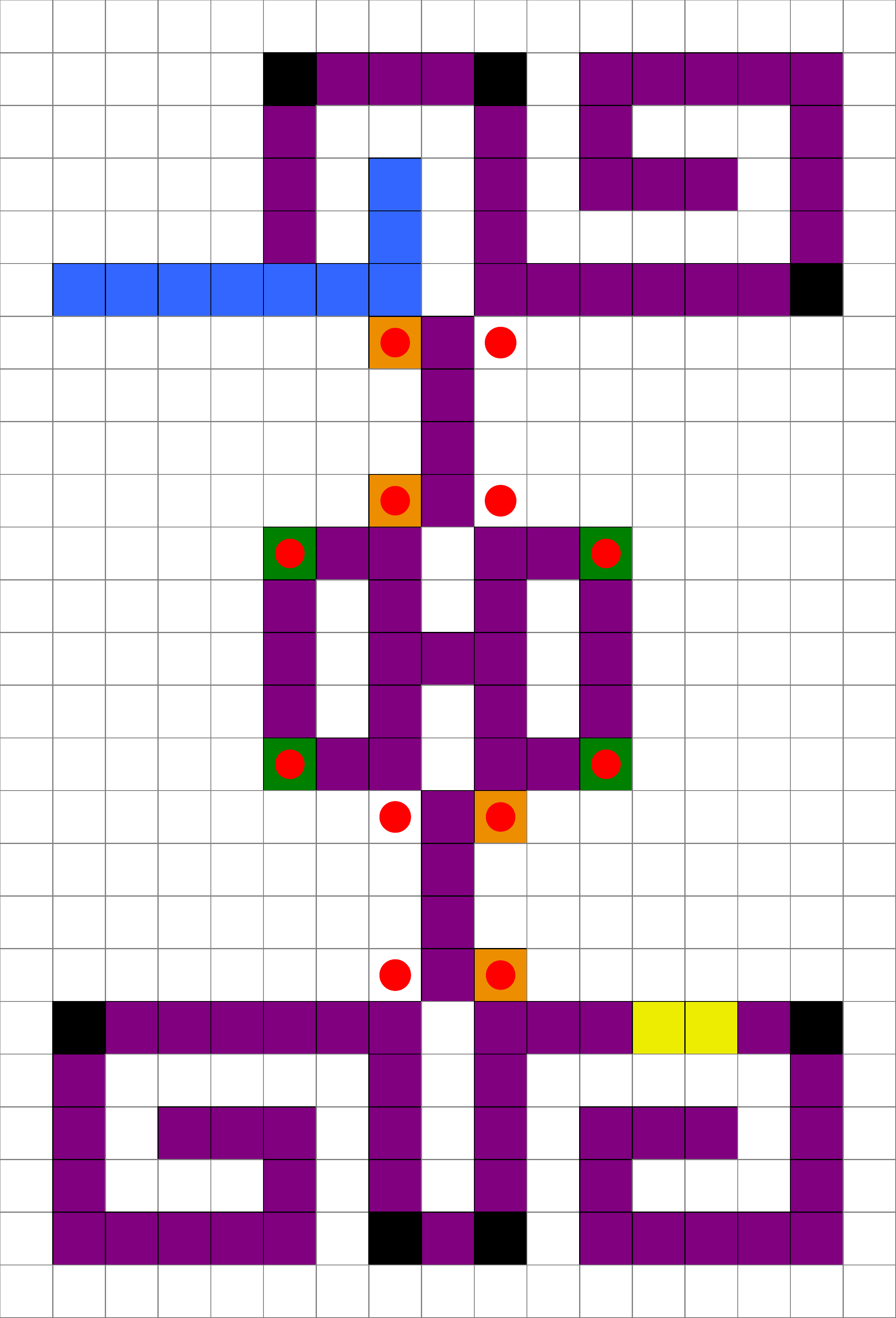}
    \end{minipage}
    \caption{(left) Win state gadget in the inactivated state. The yellow modules denote the goal position that must be reached by the agent. As in the hexagonal case, rather than tracking whether or not the agent has reached this position we check the position of nearby modules.
    (right) Once the agent has placed itself in the highlighted positions, we can move around the modules marked with a red dot. The overall result is that the location of 4 modules has changed, and four spiral gadgets have changed the state. We call this situation the finished state.}
    \label{fig:sqwin}
\end{figure}
}

\subsection{Hardness for the square model for monkey and leapfrog models}
\label{sec:PSPCAE-sq-lf-monkey}

\ifabstract
\later{\subsection{Hardness for the square model for monkey and leapfrog models}
\label{sec:PSPCAE-sq-lf-monkeyAppendix}}
\fi

Our final reduction applies to both remaining models for square modules.

\begin{theorem}
\label{thm:square-monkeyleap}
Given two configurations of $n$ square modules, it is PSPACE-hard to determine if we can reconfigure from one to the other. This reduction holds both under the the monkey and the leapfrog model.
\end{theorem}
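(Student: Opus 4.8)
The plan is to reuse the reduction from 1-toggle-protected motion planning with the locking 2-toggle (Theorem~\ref{thm:balanced_pspace}) with exactly the gadget layout of the proof of Theorem~\ref{theo_squarerestrictedhard}, and then re-audit each gadget so that it still behaves correctly when modules are allowed leapfrog moves and (straight/diagonal) monkey moves. Since the leapfrog model is a submodel of the monkey model, it suffices to design gadgets robust to monkey moves; the leapfrog case is then immediate, and where convenient we note which arguments already go through verbatim for leapfrog. The agent is again two adjacent modules, and every gadget still works on the same principle: almost all of its modules are cut vertices of the contact graph, the few that are not are \emph{locked} by occupied neighboring cells, and the only way to change a gadget's state is for the agent to form a short bridge with a designated auxiliary module.

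First I would strengthen the notion of a \emph{locked} module. In the restricted model a leaf (or degree-$\le 3$) module was immobilized by a couple of occupied cells; under leapfrog and monkey moves a module reaches strictly more cells, so each locked configuration must be reinforced with a constant number of extra blocking modules, chosen so that no leapfrog jump and no straight monkey move lands the module in empty space. Because leapfrog and monkey moves only reach cells within a bounded neighborhood, a constant-size shield suffices; none of the shield modules are needed for connectivity, and each gadget stays constant-sized, so the reduction remains polynomial.

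Next I would widen every \emph{gap}. In the restricted square reduction a single empty cell stopped two modules, and the $1$-gap and auxiliary-gap constructions stopped a solo module; with leapfrog and monkey moves a module (and a stack of two) reaches farther, so a gap of width $k$ for a suitable constant $k$, together with a carefully chosen local arrangement of the occupied cells bounding it, is needed so that the two-module agent still cannot cross — even stacking one module on the other and using leapfrog and monkey jumps — while the agent together with the resident auxiliary module can cross, with that auxiliary module remaining confined to the gadget afterwards. This single change carries through to all the composite gadgets: the wire-cut gadget, the $1$-gap replacement, the side-switch gadget, the branching hallway, the locking 2-toggle, and the win gadget are obtained from their restricted-model counterparts in Section~\ref{sec_squarereduc} by widening the gaps and adding shielding, and the auxiliary-module bridge remains the unique way to move material across. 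Hence wires stay $1$-toggle-like, protected corners act as $1$-toggles, the branching hallway acts as a $1$-toggle-protected branching hallway, the L2T has exactly its three states, and the win gadget changes state iff the two-module agent reaches it. The global-cycle analysis is unchanged: the only module freed by a global cycle through a gadget is a single auxiliary module, which the widened gaps confine to its own gadget.

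The main obstacle will be the interaction of gap widening with the monkey move: a monkey move lets a module keep pivoting around a second contact vertex, so a lone module walking along a wall of modules on one side of a gap might in principle hop over a narrow strip of empty cells. I expect the crux to be a case analysis bounding how far a single module, and a stack of two modules, can travel across an empty strip under the monkey rule, and choosing $k$ and the exact geometry of the cells bounding each gap so that crossing is impossible for the agent alone but possible for the agent plus the resident auxiliary module. Once the gadgets are verified, the reduction and the start/goal configurations are exactly as in Theorem~\ref{theo_squarerestrictedhard}: the instance is solvable iff the agent can reach the win gadget, flip its state, and retrace its path, thereby resetting every other gadget, and this argument holds identically in the monkey and the leapfrog models.
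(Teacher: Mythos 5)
There is a genuine gap in your approach, and it sits earlier than the ``main obstacle'' you anticipate.

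Your plan keeps the restricted-model wire intact: a single line of modules, with the two-module agent walking along one side of it, and you propose to preserve correctness by reinforcing the locking patterns and widening the gaps. But the leapfrog move (Figure~\ref{fig:sq_leap}) already lets a module hop over a one-module-thick wall, so a single-line wire does not confine the agent to a side at all. The whole machinery of $1$-gaps, side-switch gadgets, and protected corners in Section~\ref{sec_squarereduc} depends on the agent being stuck on a particular side of a particular wall; once the agent can cross that wall freely, widening the in-wire gap is irrelevant --- the agent simply goes around it on the other face of the wire, or further, strolls along the outer boundary of the entire configuration (a far more mobile place in the monkey/leapfrog models than in the restricted one) and approaches gadgets from directions your case analysis never considers. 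Shielding the locked modules is likewise local and does not produce any enclosure for the agent. Pushed to its logical conclusion, fixing this would force you to thicken every wire into an enclosed tube, at which point you have abandoned the restricted construction and rebuilt something else.

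That ``something else'' is what the paper actually does in Section~\ref{sec:PSPCAE-sq-lf-monkey}, and it turns out to be a simplification rather than a reinforcement. The wire becomes a corridor bounded by two parallel walls (five units apart); the agent becomes a \emph{single} module confined to the corridor's interior; every gadget is designed so interior cells only reach interior cells. This solves confinement at the root and also short-circuits your entire global-cycle worry: because a leapfrog/monkey module can slip past bends and obstacles without bridging, every cycle the agent creates is local (size at most $8$), so the wire-cut, solo-agent, and two-module-coordination arguments from the restricted reduction are not needed. The $1$-toggle is a five-module path between the walls, the branching hallway and L2T are similarly re-engineered, and the side-tracking problem disappears because the agent can sit on either wall of its corridor. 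Your outline correctly identifies the top-level reduction (from $1$-toggle-protected motion planning via Theorem~\ref{thm:balanced_pspace}) and the start/goal setup, but the gadget layer cannot be a patched version of Section~\ref{sec_squarereduc}; it must be the corridor-based design.
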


The reduction is also from 1-toggle-protected motion planning with the locking 2-toggle, but simpler. The main differences are as follows:

\begin{itemize}
    \item A leapfrog move can pass trough obstacles or bends without creating global cycles. All the cycles created by the agent module are local, with size at most $8$, which allow us to have purely local arguments. 
    \item Because of this change, we can now represent the agent with a single module. This eliminates the need to prove that multiple modules have to work together (and all other intricacies related to the case of a $2$-module agent). 
    \item Another interesting advantage is that we can represent a wire with two parallel sequences of modules (5 units apart). The agent will move between the two lines, which reduces the need of worrying about which side the agent is on.
    \item Finally, the reduction works for the leapfrog model, but even if we allow monkey moves the result holds. Thus, a single reduction will work for both models.
\end{itemize}

\ifabstract 
The gadgets we use are shown in Figure~\ref{fig:monkey-gadgets-square}. Due to space constraints, we defer the description and proof of correctness to Appendix~\ref{sec:PSPCAE-sq-lf-monkeyAppendix} 
\fi

\begin{figure}[htb]
	\centering
	\includegraphics[width=0.8\linewidth]{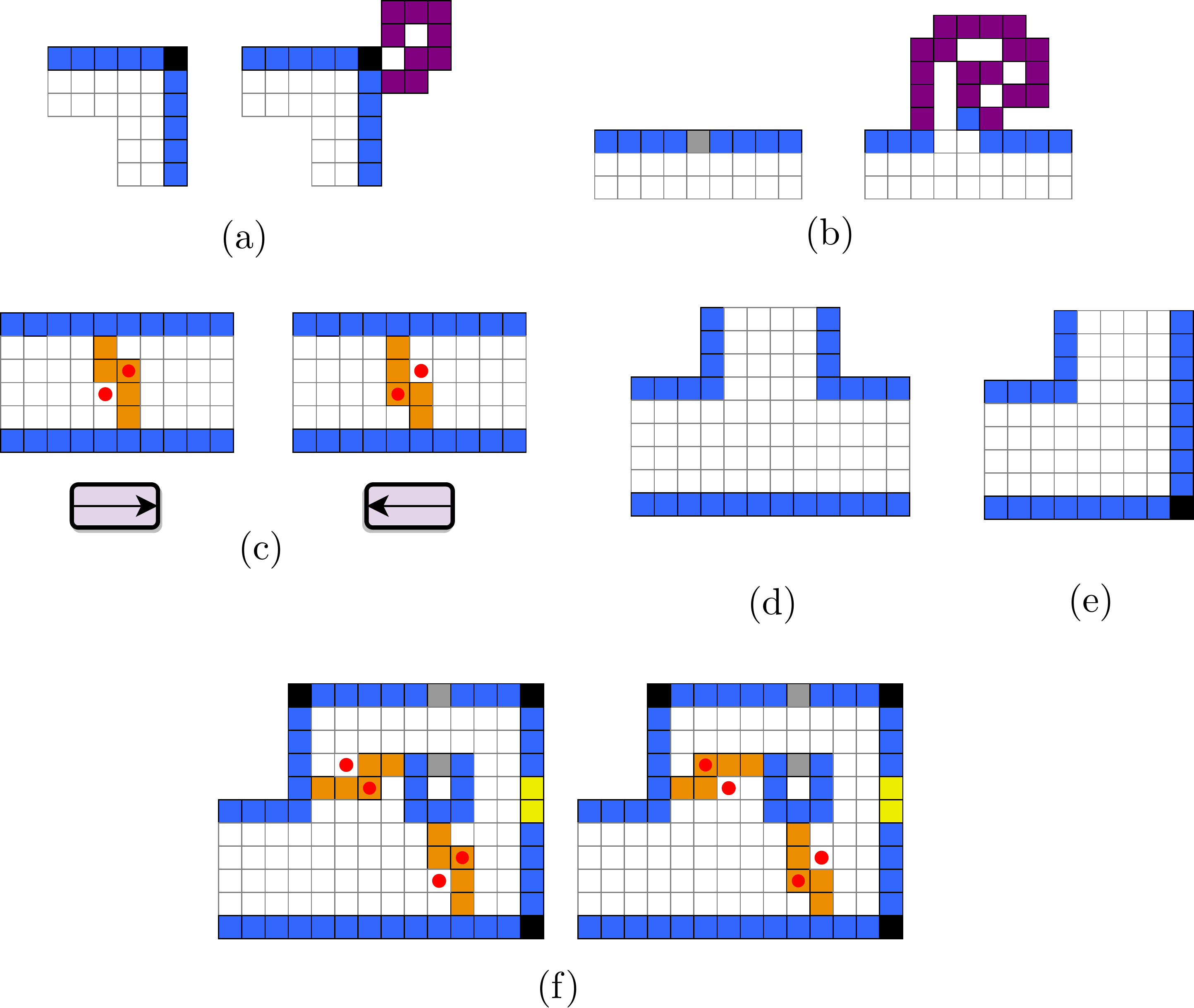}
	\caption{Gadgets used in PSPACE reduction (for leapfrog and monkey models).}
	\label{fig:monkey-gadgets-square}
\end{figure}

\later{As illustrated in Figure~\ref{fig:square_moves}~(c), in order for a module to be movable under the leapfrog/monkey model it must have one of its vertices incident to 3 empty positions. We also use the fact that modules that are cut vertices cannot move or they would break the connectivity of the configuration. Those two properties will serve to show that most modules cannot move. Whenever this happens we say that the module is \emph{blocked}.

We now describe our gadgets. We start with the wire to give some intuition about how the agent will move in our construction.
The wire gadget is shown in Figure~\ref{fig:monkey-gadgets-square}~(c) and consists of two parallel paths of modules at distance $5$ apart. We call those paths \emph{walls}.
The agent will move between the walls of the wire. We distinguish empty positions in our gadgets between \emph{interior} and \emph{exterior}. In our figures, the interior positions are shown as white squares (exterior positions are not shown).

We will also use {\em blocked corners} and {\em wire cuts} (shown in Figure~\ref{fig:monkey-gadgets-square}~(a) and Figure~\ref{fig:monkey-gadgets-square}~(b), respectively). These constructions work like in previous reductions to either prevent a corner from moving or to prevent cycles, respectively. Note that both gadgets insert modules in the exterior that cannot move. Moreover, if the agent were to pass through the interior part of either construction it does not create any global or local cycle. In the remainder of the construction we depict blocked corners with black and wire cuts in gray.

In each wire gadget, we will place a \emph{1-toggle} gadget, shown in Figure~\ref{fig:monkey-gadgets-square}~(c).
It consists of a path of $5$ orange modules connecting the walls of a wire. Note that this modules will create a cycle, which we break using wire cuts. Each of the toggles is shown in the figure. 

The \emph{branching hallway} and \emph{turn} gadget are shown in Figure~\ref{fig:monkey-gadgets-square}~(d) and (e), respectively. The branching hallway allows the agent to continue along either of the two other directions whereas the turn forces the agent to change its direction.

\begin{lemma}
The wire together with the branching hallway and turn gadgets properly simulate the movement options of an agent in 1-toggle-protected motion planning with the locking 2-toggle for both the monkey and leapfrog models.
\end{lemma}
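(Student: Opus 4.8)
The plan is to prove the lemma by a purely local case analysis, using the two facts recorded above: in the leapfrog and the monkey models a module is movable only if one of its four corners is incident to three empty cells, and a cut vertex of the contact graph can never move without disconnecting the configuration. Accordingly, for each of the three gadgets I would (i) show that in every reachable configuration the \emph{only} movable module is the agent module, and (ii) show that the set of positions the agent can reach inside the gadget, together with the way wires connect it to its neighbours, induces exactly the transition graph of the corresponding abstract gadget (an always-traversable corridor for the bare wire, a 1-toggle when the 1-toggle sub-gadget is present, a branching hallway for the branch, and a forced turn for the turn gadget). Because a leapfrog move can pass an obstacle or a bend without closing a cycle, and the only cycles the agent can create are the constant-size ones inside a single gadget (size at most $8$), all of this reasoning is genuinely local, so it suffices to check each gadget in isolation.

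For the wire (Figure~\ref{fig:monkey-gadgets-square}~(c)) I would first argue that every wall module is blocked: an interior module of a wall has occupied left and right neighbours along the path, so each of its corners touches at most two empty cells; the extremal wall modules are capped by blocked corners or wire cuts (Figure~\ref{fig:monkey-gadgets-square}~(a)--(b)), whose modules are by construction either cut vertices or corner-locked, hence blocked, and whose interiors contain no obstacle that forces the agent's path to close a cycle. Since the two walls are five units apart, the single agent module between them always has a corner incident to three empty interior cells, so it can leapfrog (hence also monkey) from one end of the corridor to the other, and it never creates a cycle while merely traversing a wire. For the 1-toggle I would check that from the state where the path of five orange modules lies flush against one side of the corridor, an agent arriving from that side can push the path one cell at a time to the opposite side (each orange module performing one move, taking its predecessor's place), after which the agent is on the far side and the path now blocks return from that side but permits it from the other — precisely the 1-toggle transition; the short cycle formed by the orange path together with the two wall segments it joins is broken by the wire cut placed on it, so no orange module ever gets three empty corner-neighbours, and the only movable module remains the agent. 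The branching hallway and the turn gadget are handled the same way: enumerate their constant-size module set and the constant-size (hence local) cycles the agent can close while inside, verify that each structural module is a cut vertex or corner-locked in every such configuration, and check that the agent's reachable set connects all three corridor-ends in the branching hallway and the two ends in the turn.

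The main obstacle is the exhaustive but routine local verification underlying step (i): one must confirm that while the agent sits inside a gadget — where it can transiently close a local cycle of size up to $8$ — no structural module suddenly acquires a corner incident to three empty cells, and that the agent can never leapfrog or monkey-move \emph{outside} the walls of a corridor. Both are finite checks on the small neighbourhoods drawn in Figure~\ref{fig:monkey-gadgets-square}, and this is exactly where the precise geometry is used: the wall spacing of five units, the wire cut placed on every wire and on every orange cycle, and the exact shape of the blocked corners. Since the excerpt's movability criterion is stated identically for the leapfrog and monkey models, the analysis applies verbatim to both. Once these checks are complete, the lemma follows: the moves available to the agent module are in bijection with the transitions of the abstract wire, branching hallway, and turn gadgets, and no other module ever moves, so the agent-module system faithfully simulates the movement options of the abstract agent.
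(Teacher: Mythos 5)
Your high-level plan (a local case analysis: show only the agent can move, show the agent's reachable positions match the abstract gadget) is in the same spirit as the paper's proof, and your observation that the constant-size local cycles make the analysis genuinely local is correct and central. However, your description of the mechanism inside the 1-toggle sub-gadget is wrong, and this is not a cosmetic error — it is the key move of the whole reduction.

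You describe the orange path as ``lying flush against one side of the corridor'' and claim the agent ``pushes the path one cell at a time to the opposite side,'' asserting throughout that ``the only movable module remains the agent.'' Neither is what happens. In the paper's gadget the five orange modules span from one wall to the other and close a cycle with the walls (broken by a wire cut), and traversal works by a \emph{relay}: the agent moves to the one interior position marked with a red dot, which closes a local cycle containing three orange modules; in that configuration a specific orange module (also marked with a red dot) acquires three empty corner-neighbours and \emph{becomes movable}, while the old agent module becomes a cut vertex of the new cycle and is frozen. Moving the newly freed orange module breaks the cycle and makes it the new agent, now on the opposite wall of the wire. So the invariant ``only the agent is movable'' is deliberately violated during the transition — that violation is precisely the hand-off that transmits the agent from one side of the gadget to the other, and it is what makes the gadget a 1-toggle (the relay only works in one direction; from the wrong side, only the two cycles of length $4$ are reachable, in which the agent is the unique movable module and no progress is possible). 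A ``push the path'' argument would not give the one-directionality, and would also not have the cycle-breaking wire cut doing any work.

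A related omission: you claim that for the branching hallway it suffices to ``check that the agent's reachable set connects all three corridor-ends.'' It does not, because the agent is pinned to one wall of the wire at a time and from the wrong wall a given turn is not directly available. The paper closes this gap explicitly by noting that every incident wire carries a 1-toggle, which the agent can use to switch walls before taking the turn. Without this, your step (ii) ``the reachable set induces exactly the transition graph'' is not actually established for the branching hallway. You should fix the 1-toggle mechanism (relay via local cycle, not a multi-step push) and add the wall-switching argument for the branching hallway; the remainder of your outline (blocked walls, blocked corners and wire cuts, local cycles of size $\le 8$, identical movability criterion for leapfrog and monkey) then matches the paper's proof.
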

\begin{proof}
For each gadget, it is easy to observe the following properties: (i) an agent module in an interior position can only move to another interior position; (ii) without the agent module, the construction has no cycles using only modules of the gadget, and (iii) every degree-1 module is blocked.

Consider the case when the agent needs to traverse through a 1-toggle. There are 3 positions in which the agent coming from the left could go to create local cycles. Two of them place the agent adjacent to an orange and a blue module. In these positions the agent creates a local cycle of length 4, in which the only movable module is the agent. 

The third option is the one marked with a red dot. In it the cycle includes $3$ orange modules and the orange module with a red dot becomes movable and become the new agent. As in the other constructions, moving it breaks the cycle and renders the previous agent module immovable. Note how now the agent can be on either boundary of the wire. If the agent were to enter the gadget from the right, only the two first options are available and the agent can't pass through the gadget (as desired, since the state of the 1-toggle only allows traversals from left to right). 

Similarly, we can show the same properties for the branching hallway and the turn gadget. the only aspect worth mentioning is that if the agent comes from the bottom side of the wire and wants to go upwards in a branching path this action is not directly possible. In this case, we recall that all of the 3 adjacent wire gadgets are equipped with a 1-toggle gadget. The agent can return to that toggle and use it to cross to the other side.
\end{proof}



The \emph{win} gadget is shown in  Figure~\ref{fig:monkey-gadgets-square}~(f) (left side shows inactivated state and right side shows the activated state). Note that the gadget is not shown to scale because the details of the wire cut gadgets that are omitted. 

\begin{lemma}
The win gadget can change its state if and only if it is reached by the agent.
\end{lemma}
\begin{proof}
As in previous cases, it is easy to verify properties (i)--(iii). In particular, no module can move until the agent reaches the win gadget. In order to transition from the left to the right state of the gadget, the agent module must arrive from the left, place itself in the vertical 1-toggle gadget, move around the center part of the gadget an place itself on the top of the vertical 1-toggle gadget.
\end{proof}
}

\begin{figure}[htb]
	\centering
	\includegraphics[width=0.9\linewidth]{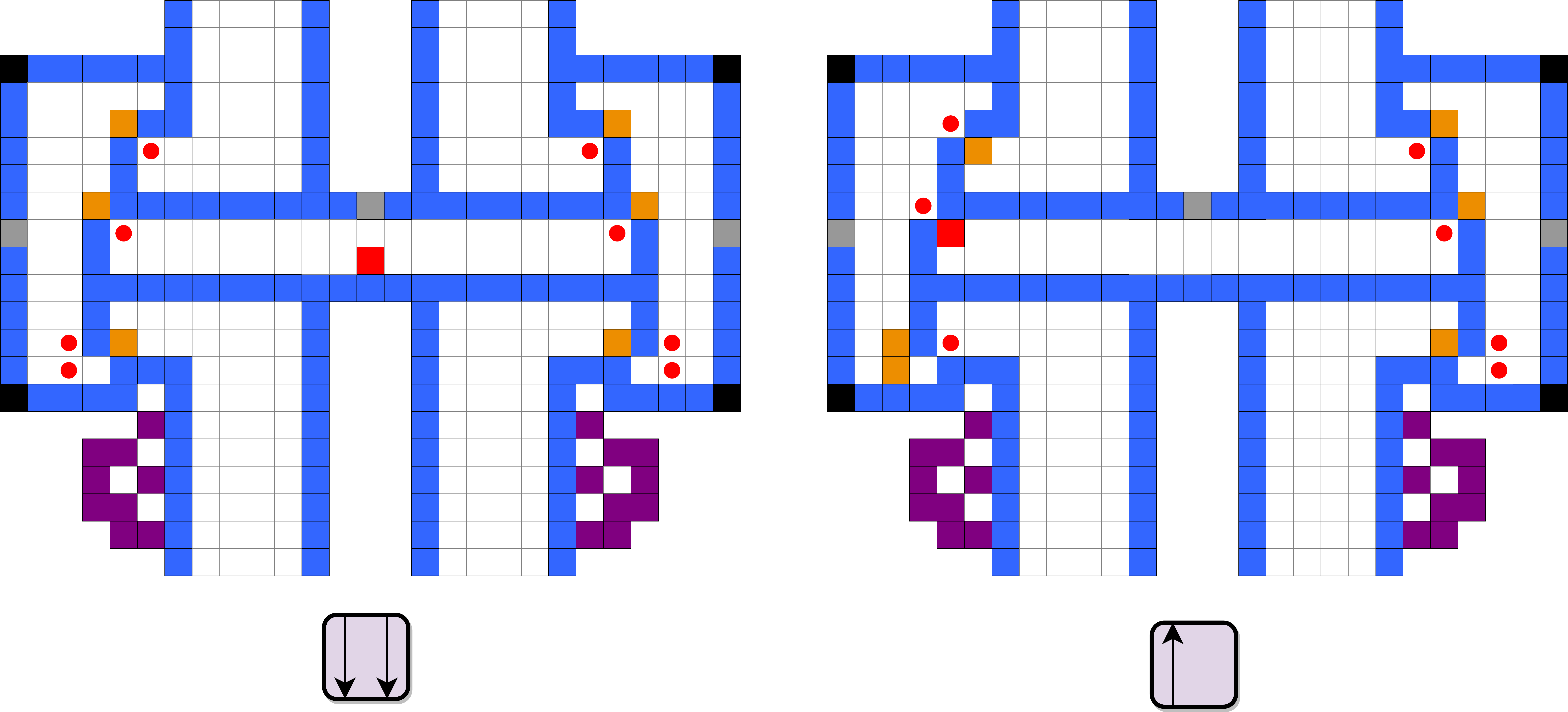}
	\caption{L2T gadget with square modules for  the Monkey model. In the figure two of the three possible states are shown (third one is symmetric).}
	\label{fig:l2t-monkey-square}
\end{figure}

\later{

Finally we present the \emph{L2T} gadget (Figure~\ref{fig:l2t-monkey-square}). Notice how all modules in this position are locked except a single auxiliary module (shown as red in the figure). We split the interior of the gadget into $7$ parts.
Four of them are connected to the four incident wires (where the agent can come from). Then, there is the central pocket containing the auxiliary module, and two side pockets containing two orange modules each.

\begin{lemma}
The construction presented in Figure~\ref{fig:l2t-monkey-square} properly simulates a L2T for square modules under both the leapfrog and monkey models.
\end{lemma}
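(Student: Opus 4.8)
The plan is to follow the same three-part template used for the hexagonal locking $2$-toggle in Section~\ref{hexmodel}: (1) show that, with the agent module away, the configuration of Figure~\ref{fig:l2t-monkey-square} is essentially rigid except for the single auxiliary module, which is confined to the central pocket; (2) analyze what the agent can and cannot do when it enters from each of the four incident wires, and read off from this the three states and the locking transitions of the L2T; and (3) rule out any unintended state change, both by a stray single module and by the extra moves available in the monkey model over the leapfrog model. Throughout we use properties (i)--(iii) already established for the wire, branching, and turn gadgets — an interior module can only move to interior positions, the gadget modules alone form no cycle, and every degree-one module is blocked — together with the fact that a module that is a cut vertex cannot move.

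First I would verify that in the open state (left of Figure~\ref{fig:l2t-monkey-square}) every module other than the red auxiliary one is immovable: each wall module and each of the two orange modules in the side pockets is a cut vertex, and no module has a vertex incident to three empty cells except inside the central pocket. The auxiliary module can shuffle only among a constant set of positions bounded by the pocket walls and the exterior blocking modules of the adjacent wire cuts, so it can never leave the pocket and the state is preserved while the agent is elsewhere. This is the analogue of the ``only the auxiliary modules can move'' claim for the hexagonal L2T.

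Next I would handle the agent. If the agent arrives from either bottom wire while the gadget is open, the indentations separating that wire's region from the side and central pockets prevent it from reaching any pocket, so nothing changes — matching the L2T forbidding bottom-to-top traversal in state $3$. If it arrives from a top wire, I would exhibit the intended move sequence: the agent passes into the central pocket (using a leapfrog over the auxiliary module / a wall module as needed, as in the hexagonal construction), then into the corresponding side pocket, which releases its two orange modules — one exits down that side's bottom wire while the other must remain behind to preserve connectivity, now playing the role of the locking bit. I would then check that the resulting configuration is exactly the mirror of the original but in a locked state from which only the reverse sequence (re-entering from that same bottom wire) is available, and that entering from any of the other three wires now does nothing; together with symmetry this gives precisely the three states and the transition function of the L2T. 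Finally I would note that, because the agent is one module and every cycle it can form is local of size at most $8$, no global cycle can free an auxiliary module elsewhere, so there is no ``solo agent'' to dispose of as in the restricted reductions; and I would confirm that enlarging the move set from leapfrog to monkey creates no shortcut — the walls are five units apart with the stated indentations, so even a straight monkey move cannot carry the agent from a wire region into a pocket or across the locking module.

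The main obstacle I expect is step~(2)/(3): the exhaustive local case analysis over all intermediate positions the agent module can occupy while probing the gadget from each of the four wires in each of the three states, checking via the three-empty-vertex and cut-vertex criteria that exactly the intended modules become movable, so that no shortcut transition and no unintended disconnection can occur. This is the same bookkeeping as in the hexagonal L2T lemma, but it must be redone for the square move set, and the monkey-versus-leapfrog robustness check adds a handful of extra subcases.
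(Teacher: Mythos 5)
Your plan follows essentially the same template as the paper's proof: establish that with the agent away only the auxiliary module can move and is confined to the central pocket; do a state-by-state, wire-by-wire case analysis of what the agent can unlock; read off the three L2T states and their transitions; and note that no new shortcut is created by upgrading from leapfrog to monkey moves. That matches the paper's structure.

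One thing to be careful about is the specific traversal mechanism. You describe the agent ``passing into the central pocket, then into the side pocket,'' at which point the two orange modules are released and one of them exits as the new agent. The paper's account is a tighter relay of cycle creations: the agent, entering from a top wire, parks in the one cycle-creating position that frees a single orange module; then the agent and the auxiliary both must sit in the corners on that side, which is what frees \emph{two} orange modules; those two orange modules form a small cycle at the bottom of the pocket, and it is that cycle which frees a \emph{third} module already sitting in the bottom wire, which becomes the new agent. In other words, each stage of the unlocking requires the previously freed modules to remain parked in cycle positions, and the module that exits is not one of the side-pocket orange modules but the one in the bottom wire. This distinction matters here because the agent is a single module: it cannot ``pass into'' a pocket and then continue onward, since it must stay put to sustain the cycle that makes the next module movable. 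Your sketch is correct at the level of ``a cascade of unlockings traverses the gadget and changes its state,'' but when you do the bookkeeping you should make the relay explicit and verify against the figure which module is stranded at each stage, rather than assuming the agent physically navigates the pockets.

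Beyond that, your points about the bottom-wire entrances producing no meaningful change, about the locked state having only one movable orange module (which cannot form a cycle alone), and about the absence of solo-agent/global-cycle issues in this model are all consistent with what the paper argues, and the monkey-move robustness check at the end is exactly the closing note the paper makes.
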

\begin{proof}
First, let's analyze the configuration before the agent reaches it. Say we start in the configuration of Figure~\ref{fig:l2t-monkey-square}, left. The auxiliary module can only move within the central pocket, and among those positions only 4 create a local cycle. Two of them are useless since it is the only movable module in such position, the other two are marked with a red dot.

If the auxiliary module places itself at either of the two positions, then the diagonally adjacent module (shown as orange can move). Note that if orange moves, then the auxiliary module must remain in place. This single orange module has a similar situation, it can move around but it will only create meaningless cycles (where it is the single module of the cycle that can move).

The situation changes when the agent comes from either of the upper wires. Regardless of which of the upper wires it comes from it has three positions that can create cycles, but only one is meaningful (creates a cycle that allows the orange module to move). 

This orange module has the same limitations as before unless both the agent and the red auxiliary modules place themselves in the corners of the same side. In this case we have two orange modules that can move. In this case, the two together can form a small cycle towards the bottom of that pocket. This cycle allows the orange module in that in the bottom wire to move and proceed as the agent. Note that if the agent were to initially come from one of the bottom wires it would only produce meaningless cycles.

In short, if the agent were to come from either of the lower wires it would not change anything. If it instead comes from one of the top wires and the auxiliary module is free to move, together then can free two orange modules that allow a third module to continue in the matching wire. that third module becomes the new agent and we have successfully traversed the L2T and changed its state. 

Say that we changed the state of the gadget to the situation shown in Figure~\ref{fig:l2t-monkey-square} right. If the agent comes from either of the two upper wires it can only create a meaningful change in the upper right position (marked with a red dot). That will free one orange modules, but that single module cannot do a meaningful change (as before, both orange modules need to be around to make a cycle). However, if the agent comes from the lower left wire, it can undo all operations and reverse the configuration to the initial step.

Note how all of these arguments hold even if we allow the monkey move operation. Thus, the lemma holds for both models.
\end{proof}

With the L2T we have all of the gadgets needed for the reduction.
Combining all the gadgets, we obtain a configuration that might have global cycles. We can always add wire break gadgets to wires in order to break such cycles such that the module configuration forms a tree.

As in the other reductions, the only difference between the initial and target configurations is the state of the win gadget. If the agent can traverse through the instance and toggle the state of the win gadget, then it can reverse all moves and return all other gadgets to its initial state. This completes the proof of Theorem~\ref{thm:square-monkeyleap}.
}

\section{Conclusions}\label{sec:conclusion}
Although this paper settles the question of whether universal reconfiguration is possible for all models, it also spans several interesting problems. First, for hexagonal modules under the monkey model (where universal reconfiguration is possible), there is a gap between the  upper bound of our algorithm (Theorem~\ref{thm:alg}) and the naive  $\Omega(n^2)$ lower bound (number of moves needed to transform a horizontal strip into a compact hexagon). Even if the gap is closed (possibly with a completely different algorithm), then the interest would be to design a distributed algorithm and/or to consider a strategy that does many moves in parallel.

For the other models universal reconfiguration is not possible, but it would be nice to find a local property that would allow reconfiguration between many configurations. For example, if we allow monkey moves with square modules, reconfiguration is possible as long as both configurations have 5 modules on the outer shell that can move~\cite{musketeers}. We wonder if the concept of musketeers or crew can be extended to other models. 
We remark that for the hexagonal monkey model this technique cannot be directly applied. 
Indeed, a key step of that approach was that whenever there is no module that can freely pivot on the external boundary, 
we bridge such that 
a well-defined module $m$ (extremal for a potential function defined on the coordinates of the modules) 
can pivot along the external boundary without disconnecting the configuration. 
Moreover, 
as many modules as helpers used for bridging can be liberated locally around $m$. 
However, this is not always possible for hexagonal modules, as shown in Figure~\ref{fig:no_musket}: Three helper modules are be necessary for bridging, but only two can be liberated locally around $m$. 

\begin{figure}[ht]
	\centering
	\includegraphics[page=19,scale=0.85]{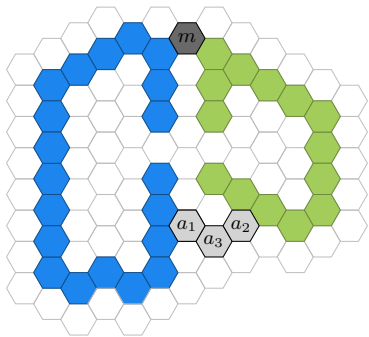}
	\caption{The strategy in~\cite{musketeers} cannot be directly translated to pivoting hexagons in the monkey model.}
	\label{fig:no_musket}
\end{figure}

\iffull
\paragraph{Acknowledgements.}
\ackn
\fi

\clearpage
\bibliographystyle{plainurl}
\bibliography{Pivoting-hex}

\clearpage
\appendix
\magicappendix

\end{document}